\newif\ifabstract
\newif\iffull
\newcommand{\myparskip}{2pt}
\newenvironment{proofof}[1]{\noindent{\bf Proof of #1.}}%
        {\hfill\square}
\newcommand{\NDP}{{\sf NDP}\xspace}
\newcommand{\EDP}{{\sf EDP}\xspace}
\newcommand{\NDPwC}{{\sf NDPwC}\xspace}
\newcommand{\EDPwC}{{\sf EDPwC}\xspace}
\newcommand{\OR}{{\sc OR}\xspace}
\newcommand{\true}{{\sc True}\xspace}
\newcommand{\false}{{\sc False}\xspace}
\newcommand{\extra}{{\sc Extra}\xspace}
\newcommand{\yis}{{\sc Yes-Instances}\xspace}
\newcommand{\nis}{{\sc No-Instances}\xspace}
\newcommand{\yi}{{\sc Yes-Instance}\xspace}
\renewcommand{\ni}{{\sc No-Instance}\xspace}
\newcommand{\vargadgetwidth}{\ensuremath{80h+2}}
\newcommand{\Y}{\Upsilon}
\newcommand{\ceil}[1]{\ensuremath{\left\lceil#1\right\rceil}}
\newcommand{\bor}{\vee}
\newcommand{\event}{{\cal{E}}}
\renewcommand{\P}{\mbox{\sf P}\xspace}
\newcommand{\NP}{\mbox{\sf NP}\xspace}
\newcommand{\DTIME}{\mbox{\sf DTIME}}
\newcommand{\ZPTIME}{\mbox{\sf ZPTIME}}
\newcommand{\opt}{\mathsf{OPT}}
\newcommand{\set}[1]{\left\{ #1 \right\}}
\newcommand{\tset}{{\mathcal T}}
\newcommand{\iset}{{\mathcal{I}}}
\newcommand{\pset}{{\mathcal{P}}}
\newcommand{\oset}{{\mathcal{O}}}
\newcommand{\qset}{{\mathcal{Q}}}
\newcommand{\bset}{{\mathcal{B}}}
\newcommand{\aset}{{\mathcal{A}}}
\newcommand{\cset}{{\mathcal{C}}}
\newcommand{\fset}{{\mathcal{F}}}
\newcommand{\mset}{{\mathcal M}}
\newcommand{\hmset}{\hat{\mathcal M}}
\newcommand{\tmset}{\tilde{\mathcal M}}
\newcommand{\tpset}{\tilde{\mathcal P}}
\newcommand{\hpset}{\hat{\mathcal P}}
\newcommand{\wset}{{\mathcal{W}}}
\newcommand{\yset}{{\mathcal{Y}}}
\newcommand{\rset}{{\mathcal{R}}}
\newcommand{\sset}{{\mathcal{S}}}
\newcommand{\row}{\operatorname{row}}
\newcommand{\col}{\operatorname{col}}
\newcommand{\be}{\begin{enumerate}}
\newcommand{\ee}{\end{enumerate}}
\newcommand{\bd}{\begin{description}}
\newcommand{\ed}{\end{description}}
\newcommand{\bi}{\begin{itemize}}
\newcommand{\ei}{\end{itemize}}
\newtheorem{theorem}{Theorem}[section]
\newtheorem{lemma}[theorem]{Lemma}
\newtheorem{observation}[theorem]{Observation}
\newtheorem{corollary}[theorem]{Corollary}
\newtheorem{claim}[theorem]{Claim}
\newtheorem{definition}{Definition}[section]
\newenvironment{proof}{\par \smallskip{\bf Proof:}}{\hfill\stopproof}
\def\stopproof{\square}
\def\square{\vbox{\hrule height.2pt\hbox{\vrule width.2pt height5pt \kern5pt
\vrule width.2pt} \hrule height.2pt}}
\renewcommand{\phi}{\varphi}
\newcommand{\eps}{\epsilon}
\newcommand{\half}{\ensuremath{\frac{1}{2}}}
\newcommand{\poly}{\operatorname{poly}}
\newenvironment{properties}[2][0]
{
\begin{enumerate} \setcounter{enumi}{#1}}{\end{enumerate}}
\newcommand{\evtop}{e_v^{\mbox{\textup{\footnotesize{top}}}}}
\newcommand{\etop}[1]{e_{#1}^{\mbox{\textup{\footnotesize{top}}}}}
\newcommand{\evbot}{e_v^{\mbox{\textup{\footnotesize{bot}}}}}
\newcommand{\ebot}[1]{e_{#1}^{\mbox{\textup{\footnotesize{bot}}}}}
\newcommand{\evleft}{e_v^{\mbox{\textup{\footnotesize{left}}}}}
\newcommand{\eleft}[1]{e_{#1}^{\mbox{\textup{\footnotesize{left}}}}}
\newcommand{\evright}{e_v^{\mbox{\textup{\footnotesize{right}}}}}
\newcommand{\eright}[1]{e_{#1}^{\mbox{\textup{\footnotesize{right}}}}}
\begin{document}

\title{New Hardness Results for Routing on Disjoint Paths}
\author{Julia Chuzhoy\thanks{Toyota Technological Institute at Chicago. Email: {\tt cjulia@ttic.edu}. Supported in part by NSF grants CCF-1318242 and CCF-1616584.}\and David H. K. Kim\thanks{Computer Science Department, University of Chicago. Email: {\tt hongk@cs.uchicago.edu}. Supported in part by NSF grant CCF-1318242.} \and Rachit Nimavat\thanks{Toyota Technological Institute at Chicago. Email: {\tt nimavat@ttic.edu}. Supported in part by NSF grant CCF-1318242.}}

%------------------------------------------------------------
%------------------------------------------------------------
%------------------------------------------------------------
%------------------------------------------------------------
%------------------------------------------------------------
%------------------------------------------------------------
\begin{titlepage}
\maketitle

\thispagestyle{empty}

\begin{abstract}
In the classical Node-Disjoint Paths (\NDP) problem, the input consists of an undirected $n$-vertex graph $G$, and a collection $\mset=\set{(s_1,t_1),\ldots,(s_k,t_k)}$ of pairs of its vertices, called source-destination, or demand, pairs. The goal is to route the largest possible number of the demand pairs via node-disjoint paths. The best current approximation for the problem is achieved by a simple greedy algorithm, whose approximation factor is $O(\sqrt n)$, while the best current negative result is an $\Omega(\log^{1/2-\delta}n)$-hardness of approximation for any constant $\delta$, under standard complexity assumptions. Even seemingly simple special cases of the problem are still poorly understood: when the input graph is a grid, the best current algorithm achieves an $\tilde O(n^{1/4})$-approximation, and when it is a general planar graph, the best current approximation ratio of an efficient algorithm is $\tilde O(n^{9/19})$. The best currently known lower bound on the approximability of both these versions of the problem is APX-hardness.

In this paper we prove that \NDP is $2^{\Omega(\sqrt{\log n})}$-hard to approximate, unless all problems in \NP have algorithms with running time $n^{O(\log n)}$. Our result holds even when the underlying graph is a planar graph with maximum vertex degree $3$, and all source vertices lie on the boundary of a single face (but the destination vertices may lie anywhere in the graph). We extend this result to the closely related Edge-Disjoint Paths problem, showing the same hardness of approximation ratio even for sub-cubic planar graphs with all sources lying on the boundary of a single face.
\end{abstract}

\end{titlepage}
%------------------------------------------------------------
%------------------------------------------------------------
%------------------------------------------------------------
%------------------------------------------------------------
%------------------------------------------------------------
%------------------------------------------------------------
\label{------------------------------------------sec: intro-------------------------}
\section{Introduction}\label{sec: intro}
%------------------------------------------------------------
%------------------------------------------------------------
%------------------------------------------------------------
%------------------------------------------------------------
%------------------------------------------------------------
%------------------------------------------------------------
The main focus of this paper is the Node-Disjoint Paths (\NDP) problem: given an undirected $n$-vertex graph $G$, together with a collection $\mset=\set{(s_1,t_1),\ldots,(s_k,t_k)}$ of pairs of its vertices, called \emph{source-destination}, or \emph{demand} pairs, route the largest possible number of the demand pairs via node-disjoint paths. In other words, a solution to the problem is a collection $\pset$ of node-disjoint paths, with each path connecting a distinct source-destination pair, and the goal is to maximize $|\pset|$.  The vertices participating in the demand pairs of $\mset$ are called \emph{terminals}. \NDP is a classical routing problem, that has been extensively studied in both Graph Theory and Theoretical Computer Science communities. One of the key parts of Robertson and Seymour's Graph Minors series is an efficient algorithm for the special case of the problem, where the number $k$ of the demand pairs is bounded by a constant~\cite{RobertsonS,flat-wall-RS}; the running time of their algorithm is $f(k)\cdot \poly(n)$ for some large function $f$. However, when $k$ is a part of input, the problem becomes \NP-hard~\cite{Karp,EDP-hardness}, even on planar graphs~\cite{npc_planar}, and even on grid graphs~\cite{npc_grid}. The following simple greedy algorithm provides an $O(\sqrt n)$-approximation for \NDP~\cite{KolliopoulosS}: Start with $\pset=\emptyset$. While $G$ contains a path connecting any demand pair, select the shortest such path $P$, add it to $\pset$, and delete all vertices of $P$ from $G$. Surprisingly, despite the extensive amount of work on the problem and its variations, this elementary algorithm remains the best currently known approximation algorithm for the problem, and until recently, this was true even for the special cases where $G$ is a planar graph, or a grid graph. The latter two special cases have slightly better algorithms now: a recent result of Chuzhoy and Kim~\cite{NDP-grids} gives a $\tilde O(n^{1/4})$-approximation for \NDP on grid graphs, and Chuzhoy, Kim and Li~\cite{NDP-planar} provide a $\tilde O(n^{9/19})$-approximation algorithm for the problem on planar graphs. The best current negative result shows that \NDP has no $O(\log^{1/2-\delta}n)$-approximation algorithms for any constant $\delta$, unless $\NP \subseteq \ZPTIME(n^{\poly \log n})$~\cite{AZ-undir-EDP,ACGKTZ}. For the special case of grids and planar graphs only APX-hardness is currently known on the negative side~\cite{NDP-grids}. 

The main result of this paper is that \NDP is $2^{\Omega(\sqrt{\log n})}$-hard to approximate unless $\NP\subseteq \DTIME(n^{O(\log n)})$, even if the underlying graph is a planar graph with maximum vertex degree at most $3$, and all source vertices $\set{s_1,\ldots,s_k}$ lie on the boundary of a single face. We note that \NDP can be solved efficiently on graphs whose maximum vertex degree is $2$\footnote{A graph whose maximum vertex degree is $2$ is a collection of disjoint paths and cycles. It is enough to solve the problem on each such path and cycle separately. The problem is then equivalent to computing a maximum independent set of intervals on a line or on a circle, and can be solved efficiently by standard methods.}.

A problem closely related to \NDP is Edge-Disjoint Paths (\EDP). The input to this problem is the same as to \NDP, and the goal is again to route the largest possible number of the demand pairs. However, the routing paths are now allowed to share vertices, as long as they remain disjoint in their edges. The two problems are closely related: it is easy to see that \EDP is a special case of \NDP, by using the line graph of the \EDP instance to obtain an equivalent \NDP instance (but this transformation may inflate the number of vertices, and so approximation factors depending on $n$ may not be preserved). This relationship is not known for planar graphs, as the line graph of a planar graph is not necessarily planar. 
The current approximability status of \EDP is similar to that of \NDP: the best current approximation algorithm achieves an $O(\sqrt n)$-approximation factor~\cite{EDP-alg}, and the best current negative result is an $\Omega(\log^{1/2-\delta}n)$-hardness of approximation for any constant $\delta$, unless $\NP \subseteq \ZPTIME(n^{\poly \log n})$~\cite{AZ-undir-EDP,ACGKTZ}.  An analogue of the special case of \NDP on grid graphs for the \EDP problem is when the input graph is a wall, and the work of~\cite{NDP-grids} gives an $\tilde O(n^{1/4})$-approximation algorithm for \EDP on wall graphs. However, for planar graphs, no better than $O(\sqrt n)$-approximation is currently known for \EDP, and it is not clear whether the algorithm of~\cite{NDP-planar} can be adapted to this setting in order to break the $O(\sqrt n)$-approximation barrier for \EDP in planar graphs. 
Our hardness result extends to \EDP on planar sub-cubic graphs, where all source vertices lie on the boundary of a single face.

We say that two instances $(G,\mset)$ and $(G',\mset')$ of the \NDP problem are \emph{equivalent} iff the sets of terminals in both instances are the same, $\mset=\mset'$, and for every subset $\tmset\subseteq \mset$ of demand pairs, the pairs in $\tmset$ are routable via node-disjoint paths in $G$ iff they are routable in $G'$. Equivalence of \EDP problem instances is defined similarly, with respect to edge-disjoint routing. We show in Section~\ref{sec: large-degree-NDP} of the Appendix that for every integer $d>3$, there is an instance $(G,\mset)$ of \NDP, where $G$ is a planar graph, such that for every instance $(G',\mset')$ of \NDP that is equivalent to $(G,\mset)$, some vertex of $G'$ has degree at least $d$. Therefore, the class of all planar graphs is strictly more general than the class of all planar graphs with maximum vertex degree at most $3$ for \NDP. In contrast, it is well known that for any instance $(G,\mset)$ of \EDP, there is an equivalent instance $(G',\mset')$ of the problem, with maximum vertex degree at most $4$\footnote{We use a slightly different definition of equivalence for \EDP: namely, we only require that for every subset $\tmset$ of demand pairs, such that every terminal participates in at most one demand pair in $\tmset$, set $\tmset$ is routable in $G$ iff it is routable in $G'$. This definition is more appropriate for the \EDP problem, as we discuss in Section~\ref{sec: EDP-degree-reduction}.}. Moreover, if $G$ is planar then $G'$ can also be made planar.  Informally, graph $G'$ is obtained from $G$ by replacing every large-degree vertex with a grid: if the degree of a vertex $v$ is $d_v>4$, then we replace $v$ with the $(d_v\times d_v)$-grid, and connect the edges incident to $v$ to the vertices on the first row of the grid. This transformation inflates the number of vertices, but can be performed in a way that preserves planarity. However, as we show in Section~\ref{sec: EDP-degree-reduction},  there is an instance $(G,\mset)$ of \EDP, where $G$ is a planar graph, such that for every instance $(G',\mset')$ of \EDP that is equivalent to $(G,\mset)$, such that $G'$ is planar, the maximum vertex degree in $G'$ is at least $4$. This shows that we cannot reduce the degree all the way to $3$.

Interestingly, better algorithms are known for several special cases of \EDP on planar graphs. Kleinberg~\cite{Kleinberg-planar}, building on the work of Chekuri, Khanna and Shepherd~\cite{CKS,CKS-planar1}, has shown an $O(\log^2n)$-approximation algorithm for even-degree planar graphs. Aumann and Rabani~\cite{grids1} showed an $O(\log^2 n)$-approximation algorithm for \EDP on grids, and Kleinberg and Tardos~\cite{grids3,grids4} showed $O(\log n)$-approximation algorithms for broader classes of nearly-Eulerian uniformly high-diameter planar graphs, and nearly-Eulerian densely embedded graphs. Recently, Kawarabayashi and Kobayashi~\cite{KK-planar} gave an $O(\log n)$-approximation algorithm for \EDP on 4-edge-connected planar graphs and on Eulerian planar graphs.  It seems that the restriction of the graph $G$ to be planar Eulerian, or nearly-Eulerian, makes the \EDP problem significantly more tractable. In contrast, the graphs we construct in our hardness of approximation proof are sub-cubic, and far from being Eulerian.

A variation of the \NDP and \EDP problems, where small congestion is allowed, has been a subject of extensive study. In the \NDP with congestion (NDPwC) problem, the input is the same as in the \NDP problem, and we are additionally given an integer $c\geq 1$. The goal is to route as many of the demand pairs as possible with congestion at most $c$: that is, every vertex may participate in at most $c$ paths in the solution. \EDP with Congestion (EDPwC) is defined similarly, except that now the congestion bound is imposed on the graph edges and not vertices. The classical randomized rounding technique of Raghavan and Thompson~\cite{RaghavanT} gives a constant-factor approximation for both problems, if the congestion $c$ is allowed to be as high as $\Theta(\log n/\log\log n)$. A long line of work~\cite{CKS,Raecke,Andrews,RaoZhou,Chuzhoy11,  ChuzhoyL12,ChekuriE13,NDPwC2} has lead to an $O(\poly\log k)$-approximation for both \NDPwC and \EDPwC problems, with congestion bound $c=2$. For planar graphs, a constant-factor approximation with congestion 2 is known for \EDP~\cite{EDP-planar-c2}. Our new hardness results demonstrate that there is a dramatic difference in the approximability of routing problems with congestion $1$ and $2$.

\paragraph{ Our Results and Techniques.}
Our main result is the proof of the following two theorems.

\begin{theorem}\label{thm: main}
There is a constant $c$, such that no efficient algorithm achieves a factor $2^{c\sqrt{\log n}}$-approximation for \NDP, unless $\NP\subseteq \DTIME(n^{O(\log n)})$. This result holds even for planar graphs with maximum vertex degree $3$, where all source vertices lie on the boundary of a single face.
\end{theorem}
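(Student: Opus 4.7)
The plan is to reduce from a constant-gap NP-hard problem (for instance the PCP-hard gap version of 3SAT-5, or equivalently a constant-gap Label Cover instance) to \NDP on planar sub-cubic graphs with the single-face-source restriction, and then amplify the gap via repeated self-composition. If each round of composition multiplies the instance size by roughly the base size $m$ and multiplies the gap by a constant factor $c > 1$, then after $\ell = \Theta(\log m)$ rounds the final instance has size $N = m^{O(\log m)}$ and gap $c^{\log m} = 2^{\Omega(\sqrt{\log N})}$. Since the base problem has no algorithm running in time $m^{O(\log m)}$ under the assumption $\NP \not\subseteq \DTIME(n^{O(\log n)})$, this yields the claimed hardness.

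First I would design the base gadgets. Each variable is realized by a \emph{switch} gadget containing one source-destination pair whose two possible node-disjoint routes encode the two Boolean values; each clause is realized by a \emph{check} gadget whose source-destination pairs are all simultaneously routable within it if and only if the switch routings feeding into it correspond to a satisfying assignment. The gadgets are then assembled into a grid-like global planar layout, with variable switches along horizontal strips, clause checkers at their intersections, and long stubs running from every source out to the outer boundary so that all sources lie on a single face. A careful local layout keeps the maximum degree at $3$. In the yes case nearly all demand pairs are routable by simulating a satisfying assignment; in the no case the PCP gap forces any node-disjoint routing to miss a constant fraction of demand pairs, which gives the base constant-factor hardness.

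To amplify the gap I would recursively substitute each switch gadget at the outer level by a scaled copy of the entire construction, so that any routing through an outer switch is forced to route through an inner instance. The main technical obstacle is twofold. Geometrically, the substitution must preserve planarity, the max-degree-$3$ restriction, and the single-face-source property at every level: the outer boundary of an inner instance must plug into the switch slot of the host instance without introducing crossings or high-degree vertices, which constrains how the interface of each gadget is drawn. Analytically, soundness must compose multiplicatively: any large node-disjoint routing in the composed instance must induce, level by level, a strategy whose value at the base level is essentially the product of per-level fractions, so that exceeding the target gap contradicts the base PCP gap. Tracking both the geometry and the soundness analysis through $\Theta(\log m)$ levels of substitution, with per-level losses small enough to yield exactly $2^{\Omega(\sqrt{\log n})}$ rather than a weaker polylogarithmic or stronger polynomial bound, is where I expect the bulk of the technical work to lie.
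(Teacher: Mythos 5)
Your high-level strategy matches the paper: reduce from gap-3SAT(5), build a constant-gap base \NDP instance on a planar bounded-degree graph with all sources on one face, then recursively compose for $\Theta(\log n)$ levels so the gap multiplies by a constant each round while the instance size stays quasi-polynomial, giving $2^{\Omega(\sqrt{\log n})}$ under $\NP \not\subseteq \DTIME(n^{O(\log n)})$. Your bookkeeping ($c^{\log m}$ against $N = m^{O(\log m)}$) is also correct.

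However, there is a genuine gap in the composition step, and it is precisely the technical obstacle the paper's construction is built to overcome. You propose to substitute each outer switch gadget by a \emph{single} scaled copy of the whole construction, and you note that soundness must ``compose multiplicatively, level by level.'' It does not compose that cleanly. The destination vertices of an inner level-$i$ instance must sit inside a cut-out box that leaves $\Theta(N_i)$ rows and columns of slack between the destinations and the box boundary --- this slack is indispensable for the YES-case routing of the inner instance. But that same slack is available to \emph{other} paths in the composed level-$(i+1)$ instance, which can slip through the inner boxes and circumvent the topological blocking argument that your outer gadget's soundness relies on. (For example, paths could simultaneously populate both the \true and \false branches of a variable gadget by threading around the inner boxes.) With one inner instance per outer pair, the number of such cheating paths is comparable to the number of legitimately routed pairs, and the per-level gap does not survive.

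The paper's fix, and the reason its instance size does not simply grow by a constant factor $m$ per level, is to replace each outer demand pair with a collection of $c_{i+1} = \Theta(g_i)$ disjoint level-$i$ instances placed side by side, where $g_i = N_i/N'_i$ is the gap accumulated so far. Then the number of pairs one tries to route in each gadget is on the order of $c_{i+1} N'_i \gg N_i = \Theta(H_i)$, so the budget of cheating paths (each of which must touch the $O(H_i)$-vertex strip under some destination) becomes a negligible fraction, and the accounting in the NO-case goes through (this is the role of ``interesting'' vs.\ ``excess'' pairs and the encircling argument in the paper). Because $c_{i+1}$ grows geometrically with the level, the instance size grows as $(\rho n)^i 2^{\Theta(i^2)}$ rather than $m^i$, which is why stopping at $i^* = \log n$ levels gives $2^{\Theta(\log^2 n)}$ size and $2^{\Theta(\log n)}$ gap. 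Your proposal, without the level-dependent replication factor, would need an explicit mechanism to neutralize boundary-cheating paths, and a single scaled copy per switch does not provide one. A secondary but also substantive difference: the paper's variable gadget is not a single-pair two-route switch but three families of demand pairs (\extra, \true, \false), together with a cylinder-topology argument that at most two of the three families can be simultaneously routed; this is what lets the assignment be read off robustly even when only a fraction of pairs is routed, and your single-pair switch would need a comparable robustness argument in the NO case.
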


\begin{theorem}\label{thm: main-EDP}
There is a constant $c$, such that no efficient algorithm achieves a factor $2^{c\sqrt{\log n}}$-approximation for \EDP, unless $\NP\subseteq \DTIME(n^{O(\log n)})$. This result holds even for planar graphs with maximum vertex degree $3$, where all source vertices lie on the boundary of a single face.
\end{theorem}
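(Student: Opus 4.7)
The plan is to derive Theorem~\ref{thm: main-EDP} from Theorem~\ref{thm: main} by a simple local modification of the hard \NDP instances, exploiting the fact that on a sub-cubic graph whose terminals all have degree $1$, node-disjoint and edge-disjoint routings coincide. Concretely, suppose $(G,\mset)$ is a sub-cubic graph in which every terminal has degree exactly $1$. On the one hand, any node-disjoint routing is automatically edge-disjoint. On the other hand, given any collection $\pset$ of edge-disjoint paths routing a sub-collection $\tmset\subseteq\mset$, I claim the paths in $\pset$ are already node-disjoint: a non-terminal vertex $v$ has degree at most $3$, and a path passing through $v$ uses exactly two of its incident edges, so two paths sharing $v$ would require $4$ edge-incidences at $v$, a contradiction; and a terminal vertex has only a single incident edge, which only the path of its own demand pair can use.

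Given this equivalence, it suffices to verify that the hard instances of Theorem~\ref{thm: main} can be taken to have every terminal of degree exactly $1$. I would achieve this by a post-processing step on the hard instance $(G,\mset)$: for every demand pair $(s_i,t_i)$, introduce two fresh pendant vertices $s_i'$ and $t_i'$, connect them by single edges to $s_i$ and $t_i$ respectively, and replace the demand pair by $(s_i',t_i')$. In the resulting instance $(G',\mset')$, every new terminal has degree exactly $1$ in $G'$. A straightforward check shows that the optimal \NDP (and hence, by the equivalence above, \EDP) value of $(G',\mset')$ equals the optimal \NDP value of $(G,\mset)$, since every routing extends or restricts along the appended pendant edges. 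Because the sources $s_i$ lie on the boundary of a single face $F$ of $G$, I would draw each pendant edge $s_is_i'$ inside $F$, keeping all new sources $s_i'$ on the boundary of a single face while preserving planarity.

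The main obstacle is keeping $G'$ sub-cubic: if some terminal of $G$ already has degree $3$, attaching a pendant would push its degree to $4$. To avoid this, I would arrange, already in the construction underlying Theorem~\ref{thm: main}, that every terminal of $G$ has degree at most $2$; this is easy to impose by subdividing the edges incident to terminals (each subdivided vertex being a degree-$2$ replacement for the terminal), or simply by designing the terminal gadgets so that each terminal is introduced as a pendant to begin with. Since the modification adds only one vertex and one edge per demand pair, $|V(G')|$ exceeds $|V(G)|$ by at most a constant factor, so the $2^{\Omega(\sqrt{\log n})}$-hardness transfers directly from \NDP to \EDP under the same complexity assumption $\NP\not\subseteq\DTIME(n^{O(\log n)})$, yielding Theorem~\ref{thm: main-EDP}.
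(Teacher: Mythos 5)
Your core observation --- that on a sub-cubic graph whose terminals have degree at most two and participate in at most one demand pair each, any collection of edge-disjoint paths is automatically node-disjoint --- is correct and is exactly the paper's Observation~\ref{obs: EDP-NI}. But your plan to obtain Theorem~\ref{thm: main-EDP} by a black-box reduction from Theorem~\ref{thm: main} has a gap, because in this paper the degree-$3$ part of Theorem~\ref{thm: main} is itself proved \emph{as a corollary} of the EDP argument: the hard degree-$3$ planar NDP instances are precisely the wall-graph instances built in Section~\ref{sec: EDP}, obtained by deleting edges from the degree-$4$ grid instances of Sections~\ref{sec: level i}--\ref{sec: NI} and pruning degree-$1$ vertices and terminal-incident horizontal edges. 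There is no independent supply of degree-$3$ sub-cubic NDP instances to start from; the substantive step is showing that the yes-instance routing survives the edge deletions, which is the content of Lemma~\ref{lem: EDP-YI} and which your proposal does not address at all.

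Even taking Theorem~\ref{thm: main} as a black box, the patch you offer does not obviously hold. The theorem statement only guarantees maximum degree $3$, not degree $\le 2$ at terminals, so attaching a pendant would push a degree-$3$ terminal to degree $4$. Your alternative --- subdividing an incident edge of a degree-$3$ terminal $s$ and moving the terminal to the subdivision vertex $s'$ --- can strictly \emph{increase} the optimum: a path from $s'$ that exits towards the original neighbor of $s$ rather than towards $s$ itself leaves $s$ free as a non-terminal vertex, so other paths may now route through $s$, which could break the no-instance bound. (And your other alternative, ``designing the terminal gadgets so each terminal is a pendant,'' is precisely the kind of construction-level change that requires re-verifying the yes-instance routing, i.e.\ is the paper's wall argument.) Finally, once the wall instances and Lemma~\ref{lem: EDP-YI} are in place, the terminals already have degree $\le 2$ and the pendant step is superfluous: your own degree-counting argument already rules out two edge-disjoint paths meeting at a degree-$2$ terminal.
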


We now provide an informal high-level overview of the proof of Theorem~\ref{thm: main}. 
It is somewhat easier to describe the proof of the theorem for the case where the maximum vertex degree is allowed to be $4$ instead of $3$. This proof can then be easily modified to ensure that the maximum vertex degree in the instances we obtain does not exceed $3$, and also extended to the \EDP problem.
We perform a reduction from the 3SAT(5) problem. In this problem, we are given a SAT formula $\phi$ defined over a set of $n$ Boolean variables. The formula consists of $m$ clauses, each of which is an \OR of three literals, where every literal is either a variable or its negation. Every variable of $\phi$ participates in exactly $5$ distinct clauses, and the literals of every clause correspond to three distinct variables. We say that $\phi$ is a \yi, if there is an assignment to its variables that satisfies all clauses, and we say that it is a \ni, if no assignment satisfies more than a $(1-\eps)$-fraction of the clauses, for some fixed constant $0<\eps<\half$. The famous PCP theorem~\cite{AS98,ALMSS} shows that, unless $\P=\NP$, no efficient algorithm can distinguish between the {\sc Yes-} and the \nis of 3SAT(5).

We perform $\Theta(\log n)$ iterations, where in iteration $i$ we construct what we call a \emph{level-$i$} instance. We use two parameters, $N_i$ and $N'_i$, and ensure that, if the reduction is performed from a formula $\phi$ which is a \yi, then there is a solution to the level-$i$  instance of \NDP that routes $N_i$ demand pairs, while if $\phi$ is a \ni, then no solution routes more than $N'_i$ demand pairs. We let $g_i=N_i/N'_i$ be the gap achieved by the level-$i$ instance. Our construction ensures that the gap grows by a small constant factor in every iteration, so $g_i=2^{\Theta(i)}$, while the instance size grows by roughly factor-$\Theta(n\cdot g_{i-1})$ in iteration $i$. Therefore, after $\Theta(\log n)$ iterations, the gap becomes $2^{\Omega(\log n)}$, while the instance size becomes $n'=2^{O(\log^2n)}$, giving us the desired $2^{\Omega(\sqrt{\log n'})}$-hardness of approximation, unless $\NP\subseteq\DTIME(n^{O(\log n)})$. 

In all our instances of \NDP, the underlying graph is a subgraph of a grid, with all sources lying on the top boundary of the grid; all vertices participating in the demand pairs are distinct.
 In the first iteration, a level-$1$ instance is constructed by a simple reduction from 3SAT(5), achieving a small constant gap $g_1$. Intuitively, once we construct a level-$i$ instance, in order to construct a level-$(i+1)$ instance, we replace every demand pair from a level-$1$ instance with a collection of level-$i$ instances. %Sometimes we need to carefully ``interlace'' these new copies of the level-$i$ instances, in order to ensure that the gap grows. 
 In order to be able to do so, we need the instances to be ``flexible'', so that, for example, we have some freedom in choosing the locations of the source  and the destination vertices of a given level-$i$ instance in the grid.
 
 We achieve this flexibility by defining, for each level $i$, a family of level-$i$ instances. 
 The graph associated with a level-$i$ instance $\iset$ is a subgraph of a large enough grid $G_i$. The construction of the instance consists of two parts. First, we construct a path $Z(\iset)$, and place all source vertices on this path. Second, we construct a vertex-induced subgraph $B(\iset)$ of a relatively small grid $G'_i$, and we call $B(\iset)$ a \emph{box}. We place all the destination vertices inside the box $B(\iset)$. Graphs $Z(\iset)$ and $G'_i$ are completely disjoint from the grid $G_i$ and from each other. 
 In order to construct a specific level-$i$ instance, we select a placement of the path $Z(\iset)$ on the first row of the grid $G_i$, and a placement of the box $B(\iset)$ in $G_i$, far enough from its boundaries (see Figure~\ref{fig: level-i-schematic}). In other words, we choose a sub-path $P$ of the first row of the grid, of the same length as $Z(\iset)$, and map the vertices of $Z(\iset)$ to $P$ in a natural way. We also choose a sub-grid $G''_i$ of $G_i$, of the same dimensions as $G'_i$, and map the vertices of $G'_i$ to the vertices of $G''_i$ in a natural way. Since $B(\iset)\subseteq G'_i$, this also defines a mapping of the vertices of $B(\iset)$ to the vertices of $G_i$. Once these placements are selected, the mapping of the vertices of $Z(\iset)$ to the vertices of $P$  determines the identities of the source vertices, and the mapping of the vertices of $B(\iset)$ to the vertices of $G''_i$ determines the identities of the destination vertices. We delete from $G_i$ all vertices to which the vertices of $G'_i\setminus B(\iset)$ are mapped. In other words, all vertices that were removed from $G'_i$ to construct $B(\iset)$, are also removed from $G_i''$, and hence from $G_i$.
We note that box $B(\iset)$ may be constructed recursively, for example, by placing several boxes $B(\iset')$ corresponding to lower-level instances $\iset'$ inside it. The mapping of the vertices of $B(\iset')$ to the vertices of $B(\iset)$, the placement of the destination vertices, and the removal of the vertices of $B(\iset)$ corresponding to the grid vertices removed from $B(\iset')$ is done similarly. In order to reduce the maximum vertex degree to $3$, we can use wall graphs instead of grid graphs and employ a similar proof. Alternatively, a simple modification of the final instance we obtain can directly reduce its maximum vertex degree to $3$.

\begin{figure}[h]
\begin{center}
\scalebox{0.4}{\includegraphics{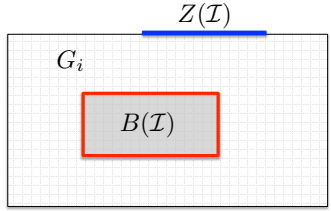}}
\caption{A schematic view of a level-$i$ instance $\iset$. All source vertices lie on $Z(\iset)$, whose location can be chosen arbitrarily on the first row of $G_i$. All destination vertices belong to $B(\iset)$, that can be located anywhere in $G_i$, far enough from its boundaries.\label{fig: level-i-schematic}}
\end{center}
\end{figure}

The most natural intuitive way to think about our construction is the one described above.  An equivalent, and somewhat easier way to define our construction is slightly different: we let a level-$0$ instance be an instance consisting of a single demand pair $(s,t)$, with $s$ lying on the first row of the grid and $t$ lying far from the grid boundary. We then show, for each $i>0$, a procedure that constructs a level-$i$ instance by combining a number of level-$(i-1)$ instances. The latter definition is somewhat more convenient, because it saves us the need to provide a separate correctness proof for level-$1$ instances, which is essentially identical to the proof for higher-level instances. However, we still feel that defining level-$1$ instances explicitly is useful for the sake of intuition. Therefore, we start with preliminaries in Section~\ref{sec: prelims} and describe our construction of level-$1$ instances in Section~\ref{sec: level 1}, together with an intuition for constructing higher-level instances. We only provide a sketch of the correctness proof, as the complete correctness proof appears in the following sections. In Section~\ref{sec: level i}, we define our construction in two steps: by first defining level-$0$ instances, and then showing how to construct level-$i$ instances from level-$(i-1)$ instances. The resulting level-$1$ instances will be similar to those defined in Section~\ref{sec: level 1}. We provide a complete proof of correctness in Sections~\ref{sec: level i}--\ref{sec: NI}. This provides a proof of Theorem~\ref{thm: main} for the case where the maximum vertex degree is allowed to be $4$.
 Section~\ref{sec: EDP} of the Appendix extends our result to \EDP in planar graphs, and show how to reduce the degree of the hard \NDP instances to $3$, completing the proofs of Theorems~\ref{thm: main} and \ref{thm: main-EDP}. Finally, sections~\ref{sec: large-degree-NDP} and~\ref{sec: EDP-degree-reduction} discuss degree reduction in \NDP and \EDP instances, respectively.

\label{-----------------------------------------------sec: prelims--------------------------------------------}
%----------------------------------------------------
%----------------------------------------------------
%----------------------------------------------------
\section{Preliminaries}\label{sec: prelims}
%----------------------------------------------------
%----------------------------------------------------
%----------------------------------------------------
For a pair  $\ell,h> 0$ of integers, we let $G^{\ell,h}$ denote a grid of length $\ell$ and height $h$.
The set of vertices of $G^{\ell,h}$ is $V(G^{\ell,h})=\set{v(i,j)\mid 1\leq i\leq h, 1\leq j\leq \ell}$, and the set of its edges is the union of two subsets: the set of horizontal edges $E^H=\set{(v_{i,j},v_{i,j+1})\mid 1\leq i\leq h, 1\leq j<\ell}$ and the set of vertical edges $E^{V}=\set{(v_{i,j},v_{i+1,j})\mid 1\leq i< h, 1\leq j\leq\ell}$. The subgraph $G^{\ell,h}$ induced by the edges of $E^H$ consists of $h$ paths, that we call the \emph{rows} of the grid; for $1\leq i\leq h$, the $i$th row $R_i$ is the row containing the vertex $v(i,1)$. Similarly, the subgraph induced by the edges of $E^V$ consists of $\ell$ paths that we call the \emph{columns} of the grid, and for $1\leq j\leq \ell$, the $j$th column $W_j$ is the column containing $v(1,j)$. We think of the rows  as ordered from top to bottom and the columns as ordered from left to right. Row $R_{\ceil{h/2}}$ is called the \emph{middle row} of the grid $G^{\ell,h}$. %, column $W_{\floor{h/2}}$ is called its \emph{middle column}, and vertex $v(\floor{h/2},\floor{h/2})$ is called the \emph{middle vertex} of the grid.
Given two vertices $u=v(i,j),u'=v(i',j')$ of the grid, the distance between them is $d(u,u')=|i-i'|+|j-j'|$. Given two vertex subsets $X,Y\subseteq V(G^{\ell,h})$, the distance between them is $d(X,Y)=\min_{u\in X,u'\in Y}\set{d(u,u')}$. Given a vertex $v=v(i,j)$ of the grid, we denote by $\row(v)$ and $\col(v)$ the row and the column, respectively, that contain $v$.

Given a set $\rset$ of consecutive rows of a grid $G=G^{\ell,h}$ and a set $\wset$ of consecutive columns of $G$, we let $\Y(\rset,\wset)$ be the subgraph of $G$ induced by the set $\set{v(j,j')\mid R_j\in \rset, W_{j'}\in \wset}$ of vertices. We say that $\Y=\Y(\rset,\wset)$ is the \emph{sub-grid of $G$ spanned by the set $\rset$ of rows and the set $\wset$ of columns}. %If $\rset$ is the set of all rows of $G$, then we simply say that $\Y$ is the sub-grid of $G$ spanned by the set $\wset$ of its columns; similarly if $\wset$ is the set of all columns of $G$, then we say that $\Y$ is the sub-grid of $G$ spanned by the set $\rset$ of its rows. %If $G'\subseteq G$ is a sub-grid of $G$ spanned by some subset of rows and columns of $G$, and the middle vertices of $G$ and $G'$ coincide, then we say that $G'$ is located in the middle of $G$. In particular, if $G=G^{\ell,h}$, then for all $1\leq \ell'<\ell$, $1\leq h'<h$, we can define a sub-grid $G'$ of $G$ of length $\ell'$ and height $h'$ that is located in the middle of $G$ as the sub-grid of $G$ spanned by rows $R_{\floor{h/2}-\floor{h'/2}+1},\ldots,R_{\floor{h/2}+\floor{h'/2}}$ and columns $W_{\floor{h/2}-\floor{h'/2}+1},\ldots,W_{\floor{h/2}+\floor{h'/2}}$ of $G$.

Assume now that we are given a grid $G$, a sequence $\sset=(G_1,\ldots,G_r)$ of disjoint sub-grids of $G$, and an integer $N$. We say that the grids of $\sset$ are \emph{aligned and $N$-separated} iff the middle row of each grid $G_i$ is a sub-path of the middle row of $G$; the grids in $\set{G_1,\ldots,G_r}$ appear in this left-to-right order inside $G$; every pair of consecutive grids $G_i$ is separated by at least $N$ columns of $G$; and every grid in $\sset$ is separated by at least $N$ columns from the right and the left boundaries of $G$.

Let $R',R''$ be two distinct rows of some grid $G$, and let $\pset$ be a set of node-disjoint paths, such that every path in $\pset$ has one endpoint  (called a source) on row $R'$ and another (called a destination) on row $R''$. We say that the set $\pset$ of paths is \emph{order-preserving} iff the source vertices of the paths in $\pset$ appear on row $R'$ in exactly the same left-to-right order as their destination vertices on $R''$.

Throughout our construction, we use the notion of a box. A box $B$ of length $\ell$ and height $h$ is a vertex-induced subgraph of $G^{\ell,h}$. We denote $U(B)=V(G^{\ell,h})\setminus V(B)$, and we sometimes think of set $U(B)$ as the ``set of vertices deleted from B''. We say that $B$ is a \emph{cut-out box} iff $U(B)$ contains all vertices lying on the left, right, and bottom boundaries of $G^{\ell,h}$; note that $U(B)$ may contain additional vertices of $G^{\ell,h}$. The remaining vertices of the top boundary of $G^{\ell,h}$ that belong to $V(B)$ are called \emph{the opening of $B$}. We sometimes say that the vertices of $B$ that belong to row $R_{\ceil{h/2}}$ of $G^{\ell,h}$ lie on the middle row of $B$.

Given any set  $\mset$ of demand pairs, we let $S(\mset)$ denote the set of all source vertices participating in $\mset$ and $T(\mset)$ the set of all destination vertices. Given a path $P$, the length of the path is the number of vertices on it. We say that two paths $P,P'$ are \emph{internally disjoint} iff every vertex in $P\cap P'$ is an endpoint of both paths. Given a set $\pset$ of paths and some set $U$ of vertices, we say that the paths in $\pset$ are internally disjoint from $U$ iff for every path $P\in \pset$, every vertex in $P\cap U$ is an endpoint of $P$. Given a subgraph $G'$ of a graph $G$ and a set $\pset$ of paths in $G$, we say that the paths in $\pset$ are internally disjoint from $G'$ iff they are internally disjoint from $V(G')$.

As already described in the introduction, for every level $0\leq i\leq \Theta(\log n)$, we construct a level-$i$ instance $\iset$. In fact, it is a family of instances, but it is more convenient to think of it as one instance with different instantiations. A definition of a level-$i$ instance $\iset$ consists of the following ingredients:

\begin{itemize}
\item integral parameters $L_i,L'_i$ and an even integer $H_i$;
\item a path $Z(\iset)$ of length $L_i$;
\item a grid $G'_i$ of length $L'_i$ and height $H_i$, together with a cut-out box $B(\iset)\subseteq G_i'$; and
\item a set $\mset$ of demand pairs, together with a mapping of the vertices of $S(\mset)$ to distinct vertices of $Z(\iset)$ and a mapping of the vertices of $T(\mset)$ to distinct vertices on the middle row of $B(\iset)$.
\end{itemize}

In order to instantiate a level-$i$ instance $\iset$, we select a grid $G_i$ of length at least $2L_i+2L'_i+4H_i$ and height at least $3H_i$, a sub-path $P$ of the first row of $G_i$ of length $L_i$, and a sub-grid $G''_i$ of $G_i$ of height $H_i$ and length $L'_i$, so that the distance from the vertices of $G''_i$ to the vertices lying on the boundary of $G_i$ is at least $H_i$. We map every vertex of $Z(\iset)$ to the corresponding vertex of $P$ in a natural way, and this determines the identities of the source vertices in the instance we construct. We also map every vertex of $G_i'$ to the corresponding vertex of $G_i''$, and this determines the identities of the destination vertices. Finally, for every vertex $u\in U(B(\iset))$, we delete the vertex of $G_i''$ to which $u$ is mapped from $G_i$. This defines an instance of \NDP on a subgraph of $G_i$, where all the sources lie on the top boundary of $G_i$ and all source and destination vertices are distinct.

Assume now that we are given an instantiation of a level-$i$ instance $\iset$ and a set $\pset$ of node-disjoint paths routing a subset $\mset'$ of the demand pairs in that instance. Assume for convenience that $\mset'=\set{(s_1,t_1),\ldots,(s_r,t_r)}$, that the vertices $s_1,\ldots,s_r$ appear in this left-to-right order on $Z(\iset)$, and that $\pset=\set{P_1,\ldots,P_r}$, where path $P_j$ connects $s_j$ to $t_j$. Let $A$ be the set of all vertices of the top row of the grid $G''_i$ that were not deleted (that is, these are the vertices lying on the opening of $B(\iset)$). We say that the set $\pset$ of paths \emph{respects} the box $B(\iset)$ iff for all $1\leq j\leq r$, $P_j\cap A$ is a single vertex, that we denote by $u_j$, and $u_j$ is the $j$th vertex of $A$ from the left. Intuitively, the paths in $\pset$ connect the sources to a set of consecutive vertices on the opening of $B(\iset)$ in a straightforward manner, and the actual routing occurs inside the box $B(\iset)$.

We perform a reduction from the 3SAT(5) problem. In this problem, we are given a SAT formula $\phi$ on a set $\set{x_1,\ldots,x_n}$ of $n$ Boolean variables and a set $\cset=\set{C_1,\ldots,C_m}$ of $m=5n/3$ clauses. Each clause contains $3$ literals, each of which is either a variable or its negation. The literals of each clause correspond to $3$ distinct variables, and each variable participates in exactly $5$ clauses. We denote the literals of the clause $C_q$ by $\ell_{q_1},\ell_{q_2}$ and $\ell_{q_3}$.
A clause is satisfied by an assignment to the variables iff at least one of its literals evaluates to \true. We say that $\phi$ is a \yi if there is an assignment to its variables satisfying all its clauses. We say that it is a \ni with respect to some parameter $\epsilon$, if no assignment satisfies more than $(1-\epsilon)m$ clauses. The following well-known theorem follows from the PCP theorem~\cite{AS98,ALMSS}.

\begin{theorem} \label{thm: PCP}
There is a constant $\epsilon:0<\epsilon<\half$, such that it is NP-hard to distinguish between the \yis and the \nis (defined with respect to $\epsilon$) of the 3SAT(5) problem.
\end{theorem}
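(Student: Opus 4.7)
The plan is to derive this standard gap-preserving regularization of 3SAT from the classical PCP theorem via a well-known expander-based reduction. The proof proceeds in three stages.

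First, I would invoke the PCP theorem of \cite{AS98,ALMSS} in its gap form: there exists a constant $\epsilon_0 > 0$ such that it is \NP-hard to distinguish, given a 3CNF formula $\psi$, between (i) $\psi$ is satisfiable, and (ii) no assignment satisfies more than a $(1-\epsilon_0)$-fraction of its clauses. Call this the gap-3SAT problem. However, gap-3SAT says nothing about the number of occurrences per variable, so it is not yet in the $3$SAT$(5)$ form we need.

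Second, I would apply the expander-based ``occurrence reduction'' due to Papadimitriou--Yannakakis (and refined by Trevisan, Feige, and others): for each variable $x$ of $\psi$ that appears in $d$ clauses, introduce $d$ fresh copies $x^{(1)},\ldots,x^{(d)}$, one per occurrence. Fix an absolute constant-degree expander $H_d$ on the vertex set $\{1,\ldots,d\}$, and for each edge $(i,j)$ of $H_d$ add the two equality clauses $x^{(i)} \vee \neg x^{(j)}$ and $\neg x^{(i)} \vee x^{(j)}$. The completeness case is preserved: a satisfying assignment lifts to one that satisfies all old and new clauses. For soundness, the spectral gap of $H_d$ guarantees that if an assignment to the copies deviates from any consistent ``majority'' assignment for $x$ on an $\eta$-fraction of copies, then a constant fraction of the consistency clauses around $x$ are violated. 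Choosing the expander degree appropriately yields an instance in which every variable appears a bounded constant number of times and the gap shrinks only by a constant factor, giving a new constant $\epsilon_1 > 0$.

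Third, I would perform a syntactic cleanup to meet the exact requirements of $3$SAT$(5)$: namely, every clause has three literals on three \emph{distinct} variables and every variable appears in \emph{exactly} $5$ clauses. The 2-clauses introduced above are padded to 3-clauses using a fresh variable per clause; clauses with repeated variables are split and padded similarly. To equalize occurrences at exactly $5$ per variable, I would use a standard padding step: add dummy variables and trivially satisfiable 3-clauses that bring every occurrence count up (or down, by a duplication trick) to $5$, while again losing only a constant factor in the gap. The final gap $\epsilon$ is a suitably small constant in $(0,\tfrac12)$. Correctness of each reduction step is routine; the main subtlety is the soundness analysis in stage two, where one must invoke the expander mixing lemma and carefully argue that any assignment to the copies that violates few consistency clauses must be close to a globally consistent one, so that rounding to a $\psi$-assignment loses only a constant fraction of satisfied clauses.
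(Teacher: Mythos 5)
The paper does not actually prove this theorem; it cites it as ``well-known'' and attributes it directly to the PCP theorem \cite{AS98,ALMSS}, so there is no in-paper argument to compare against. Your outline is the correct and standard derivation: PCP gives gap-3SAT, the Papadimitriou--Yannakakis expander-replacement gadget bounds occurrences per variable while preserving a constant gap, and a final padding/normalization step enforces the exact $3$SAT$(5)$ syntax (three distinct variables per clause, exactly five occurrences per variable). The only place that deserves a little more care than your sketch suggests is the final normalization to \emph{exactly} five occurrences: duplicating a clause raises the count of all three of its variables simultaneously, so the usual route is to first get a uniform upper bound on occurrences via the expander step, then pad each variable up to the target count with fresh dummy variables and always-satisfiable clauses, and handle any variable appearing only once or twice by adding forced clauses --- all of which costs only a constant factor in the gap. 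With that caveat, your proposal is a faithful reconstruction of the proof the paper is implicitly invoking.
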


%We denote the set of variables of the input 3SAT(5) formula $\phi$ by $X=\set{x_1,\ldots,x_n}$ and the set of its clauses by $\cset=\set{C_1,\ldots,C_m}$. For $1\leq q\leq m$, we denote the literals of clause $C_q$ by $\ell_{q_1},\ell_{q_2}$ and $\ell_{q_3}$; the variable corresponding to literal $\ell_{q_z}$ is denoted by $x_{q_z}$. 

Given an input formula $\phi$, we will construct an instance $(G,\mset)$ of the \NDP problem with $|V(G)|=n'=n^{O(\log n)}$, that has the following properties: if $\phi$ is a \yi, then there is a solution to the \NDP instance routing $N$ demand pairs, for some parameter $N$; if $\phi$ is a \ni, then at most $N/g$ demand pairs can be routed, where $g=2^{\Omega(\log n)}=2^{\Omega(\sqrt{\log n'})}$. This will prove that no efficient algorithm can achieve a better than factor $2^{O(\sqrt{\log n})}$-approximation for \NDP, unless $\NP\subseteq \DTIME(n^{O(\log n)})$. The instance we construct is a subgraph of a grid with all source vertices lying on its top boundary, so the hardness result holds for planar graphs with maximum vertex degree $4$, with all sources lying on the boundary of a single face. In Section~\ref{sec: EDP} of the Appendix, we modify this instance to reduce its maximum vertex degree to $3$.
%-------------------------------------

%-------------------------------------

\label{--------------------------------------------------------sec: level 1--------------------------------------------------}
\section{The Level-1 Instance}\label{sec: level 1}
In this section we define our level-$1$ instance $\iset$ and provide intuition for generalizing it to higher-level instances. Since Sections~\ref{sec: level i}--\ref{sec: NI} contain all formal definitions and proofs, including those for the level-1 instance, the description here is informal, and we only provide proof sketches.

 We assume that we are given a 3SAT(5) formula $\phi$ defined over a set $\set{x_1,\ldots,x_n}$ of variables and a set $\cset=\set{C_1,\ldots,C_m}$ of clauses, so $m=5n/3$.  For every variable $x_j$ of $\phi$, we will define a set $\mset(x_j)$ of demand pairs that represent that variable, and similarly, for every clause $C_q\in \cset$ we will define a set $\mset(C_q)$ of demand pairs representing it. We call the demand pairs in set $\mset^V=\bigcup_{j=1}^n\mset(x_j)$ \emph{variable-pairs} and the demand pairs in set $\mset^C=\bigcup_{C_q\in \cset}\mset(C_q)$ \emph{clause-pairs}.

Let $h=1000/\eps$ and $\delta=8\eps^2/10^{12}$, where $\eps$ is the parameter from Theorem~\ref{thm: PCP}. We set $N_1=(200h/3+1)n$ and $N'_1=(1-\delta)N_1$. Our construction will ensure that, if the input formula $\phi$ is a \yi, then for every instantiation of $\iset$, there is a collection $\pset$ of node-disjoint paths that respects the box $B(\iset)$ and routes $N_1$ demand pairs. On the other hand, if $\phi$ is a \ni, then no solution can route more than $N_1'$ demand pairs  in any instantiation of $\iset$. This gives a gap of $1/(1-\delta)$ between the {\sc Yes-} and \ni solution costs. In the following levels we gradually amplify this gap.

We set $L_1=(80h+2)n$, $L'_1=20N_1^3$ and $H_1=20N_1$. 
In order to construct a level-$1$ instance $\iset$, we start with a path $Z(\iset)$ of length $L_1$ and a grid $G_1'$ of length $L'_1$ and height $H_1$. We delete all vertices lying on the bottom, left and right boundaries of $G_1'$ to obtain the initial cut-out box $B(\iset)$; we will later delete some additional vertices from $B(\iset)$.

We define two sub-grids of $G_1'$: grid $B^V$, that will contain all vertices of $T(\mset^V)$ (the destination vertices of the demand pairs in $\mset^V$), and grid $B^C$, that will contain all vertices of $T(\mset^C)$. Both grids have sufficiently large length and height: length $9N_1^3$ and height $16N_1$ for each grid are sufficient. We place both grids inside $G_1'$, so that the middle row of each grid is contained in the middle row of $G_1'$,  there is a horizontal spacing of at least $2N_1$ between the two grids, and both grids are disjoint from the left and the right boundaries of $G_1'$. It is easy to see that at least $2N_1$ rows of $G_1'$ lie above and below both grids (see Figure~\ref{level-1}).

Next, we define smaller sub-grids of the grids $B^V$ and $B^C$. For every variable $x_j$ of $\phi$, we select a sub-grid $B(x_j)$ of $B^V$ of height $H^V=4N_1+3$ and length $L^V=4N_1+(70h+2)$. This is done so that the sequence $B(x_1),\ldots,B(x_n)$ of grids is aligned and $(2N_1)$-separated
in $B^V$ (see Figure~\ref{level-1}). Recall that this means that the middle row of each grid coincides with the middle row of $B^V$, and the horizontal distance between every pair of these grids, and between each grid and the right and the left boundaries of $B^V$ is at least $2N_1$. 
It is easy to verify that grid $B^V$ is large enough to allow this. For each variable $x_j$ of $\phi$, the vertices of $T(\mset(x_j))$ will lie in $B(x_j)$. 
Note that there are at least $2N_1$ rows of $B^V$ above and below each box $B(x_1),\ldots,B(x_n)$.

 %Let $\rset'$ be the set of all rows of $B^V$, excluding the top and the bottom $2N_1$ rows. We select the grids $B(x_1),\ldots,B(x_n)$, so that each grid is spanned by the set $\rset'$ of rows, and some consecutive set $\wset_j$ of columns of $B^V$, such that  
Similarly, for every clause $C_q\in \cset$, we define a sub-grid $B(C_q)$ of $B^C$ of length $L^C=3h$ and height $H^C=3$. We select the sub-grids $B(C_1),\ldots,B(C_q)$ of $B^C$ so that they are aligned and $(4N_1)$-separated. As before, box $B^C$ is sufficiently large to allow this, and there are at least $2N_1$ rows of $B^C$ both above and below each such grid $B(C_q)$ (see Figure~\ref{level-1}). 

\begin{figure}[h]
\scalebox{0.6}{\includegraphics{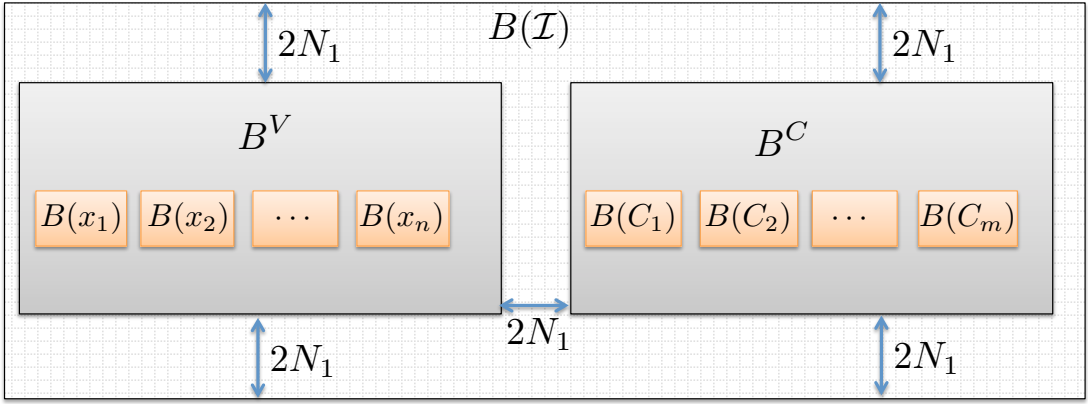}}
\caption{High-level view of the level-1 construction. Boxes $B(x_1),\ldots,B(x_n)$ are at a distance at least $2N_1$ from each other and from  the boundaries of $B^V$, and boxes $B(C_1),\ldots,B(C_m)$ are at a distance at least $4N_1$ from each other and from the left and right boundaries of $B^C$.\label{level-1}}
\end{figure}

%We let $\rset''$ be the set of the middle three rows of $B^C$, and we ensure that each grid $B(C_q)$ is spanned by the set $\rset''$ of rows of $B^C$ and a consecutive set $\wset'_q$ of $3h$ columns of $B^C$, so that there are at least $2N_1$ columns separating every pair of these new grids, and each such grid from the left and the right boundaries of $B^C$.

Recall that the length of the path $Z(\iset)$ is $L_1=(80h+2)n$. We partition $Z(\iset)$ into $n$ disjoint sub-paths $I(x_1),I(x_2),\ldots,I(x_n)$ of length $80h+2$ each, that we refer to as intervals. For each $1\leq j\leq n$, vertices of $S(\mset(x_j))$ will lie on $I(x_j)$. Additionally, for every clause $C_q$ in which variable $x_j$ participates, path $I(x_j)$ will contain some vertices of $S(\mset(C_q))$.
The remainder of the construction consists of two parts --- variable gadgets and clause gadgets, that we define next, starting with the variable gadgets.

%Consider now the grid $G_1'$. We define several sub-grids of $G_1'$, as follows. Let $\rset$ contain all rows of $G_1'$ except for the top $2N_1$ and the bottom $2N_1$ rows, so $|\rset|=16N_1$. Let $\wset^V$ contain $2nN_1$ consecutive columns of $G_1'$, starting from column $W_{2N_1+1}$. Let $\wset^C$ contain $64nN_1$ consecutive columns of $G_1'$, starting from column $W_{4N_1+2nN_1}$.
%Finally, we let $B^V$ be the sub-grid of $G_1'$ spanned by rows in $\rset$ and columns of $W^V$, and similarly we let $B^C$ be the sub-grid of $G_1'$ spanned by rows in $\rset$ and columns of $W^V$. Notice that $B^V$ has width $2nN_1$ and height $16N_1$, while $B^C$ has width $64nN_1$ and height $16N_1$. Moreover, there is a horizontal distance of $2N_1$ between the two grids, and between each of the two grids and the left and right boundaries of $B(\iset)$. All destination vertices of all demand pairs that represent the variables of $\phi$ will appear inside $B^V$, and all destination vertices of the demand pairs  representing the clauses of $\phi$ will appear inside $B^C$. We note that eventually we will delete some vertices inside $B^C$ and $B^V$, but neither box is a cut-out box, that is, the vertices on their boundaries will not be deleted.

%We now define smaller boxes inside $B^V$ and $B^C$, inside which we will place the destination vertices. Consider first the box $B^V$. Let $\rset$ be the middle three rows of $B^V$, so $\rset$ contains rows $R_{8N_1-1},R_{8N_1},R_{8N_1+1}$ of $B^V$.

\paragraph{ Variable gadgets.}
Consider some variable $x$ of the formula $\phi$ and its corresponding interval $I(x)$ of $Z(\iset)$. We partition $I(x)$ as follows. Let $I^T(x), I^F(x)\subseteq I(x)$ denote the sub-intervals of $I(x)$ containing the first and the last $(10h+1)$ consecutive vertices of $I(x)$, respectively, and let $I^X(x)$ be the interval containing the remaining $60h$ vertices.
Consider the box $B(x)$, and recall that it has length $4N_1+70h+2$ and height $4N_1+3$. Let $B'(x)\subseteq B(x)$ be a sub-grid of $B(x)$ of length $70h+2$ and height $3$, so that there are exactly $2N_1$ rows of $B(x)$ above and below $B'(x)$, and $2N_1$ columns of $B(x)$ to the left and to the right of $B'(x)$. Notice that the middle row of $B'(x)$, that we denote by $R'(x)$, is aligned with the middle row of $B(x)$ and hence of $B(\iset)$. We delete all vertices of $B'(x)$ that lie on its bottom row, and we will place all destination vertices of the demand pairs in $\mset(x)$ on $R'(x)$. In order to do so, we further partition $R'(x)$ into three intervals: interval $\hat I^T(x)$ contains the first $5h+1$ vertices of $R'(x)$; interval $\hat I^F(x)$ contains the following $5h+1$ vertices of $R'(x)$; and interval  $\hat I^X(x)$ contains the remaining $60h$ vertices of $R'(x)$ (see Figure~\ref{fig: level 1 var gadget}). The set $\mset(x)$ of demand pairs consists of three subsets:

 \begin{itemize}

 \item {\bf (Extra Pairs).} Let $\mset^X(x)=\set{(s_y^X(x),t_y^X(x))}_{y=1}^{60h}$ be a set of $60h$ demand pairs that we call the \extra pairs for $x$.  The vertices $s_1^X(x),\ldots,s_{60h}^X(x)$ appear on $I^X(x)$ in this order, and the vertices $t_1^X(x),\ldots,t_{60h}^X(x)$ appear on $\hat I^X(x)$ in this order.

 \item  {\bf (\true Pairs).} 
We denote the vertices appearing on $I^T(x)$ by $a^T_1,b^T_1,a^T_2,b^T_2,\ldots, b^T_{5h},a^T_{5h+1}$ in this left-to-right order. Let $\mset^T(x)=\set{(s_y^T(x),t_y^T(x))}_{y=1}^{5h+1}$ be a set of $(5h+1)$ demand pairs that we call the \true demand pairs for $x$. For each $1\leq y\leq 5h+1$, we identify $s_y^T(x)$ with the vertex $a^T_y$ of $I^T(x)$, and we let $t_y^T(x)$ be the $y$th vertex on $\hat I^T(x)$.

  \item  {\bf (\false Pairs).} 
Similarly, we denote the vertices appearing on $I^F(x)$ by $a^F_1,b^F_1,a^F_2,b^F_2,\ldots, b^F_{5h},a^F_{5h+1}$ in this left-to-right order. Let $\mset^F(x)= \set{(s_y^F(x),t_y^F(x))}_{y=1}^{5h+1}$ be a set of $(5h+1)$ demand pairs that we call the \false demand pairs for $x$. For each $1\leq y\leq 5h+1$, we identify $s_y^F(x)$ with the vertex $a^F_y$, and we let $t_y^F(x)$ be the $y$th vertex on $\hat I^F(x)$. 
 \end{itemize}

We let $\mset(x)=\mset^X(x)\cup \mset^T(x)\cup \mset^F(x)$ be the set of demand pairs representing $x$.
  
 %\begin{figure}[h]
%\centering
%\subfigure[caption subfigure 1]{\scalebox{0.2}{\includegraphics{file1..pdf}}\label{fig: label1}}
%\hspace{1cm}
%\subfigure[caption subfigure 2]{
%\scalebox{0.2}{\includegraphics{file2.pdf}}\label{fig: label2}}
%\hspace{1cm}
%\subfigure[caption subfigure 3]{\scalebox{0.2}{\includegraphics{file3.pdf}}\label{fig: label3}}
%\caption{caption for the whole figure\label{fig: whole figure}}
%\end{figure} 
\begin{figure}[h]
\centering
\subfigure[A variable gadget. The $(5h)$ green vertices of $I^F(x)$ are partitioned into $5$ groups of $h$ consecutive vertices; each group is used by a different clause that contains the literal $x$. The green intervals of $I^T$ are dealt with similarly, but they are used by clauses containing $\neg x$. The vertices on the bottom boundary of $B'(x)$ are deleted.]
{\scalebox{0.35}{\includegraphics{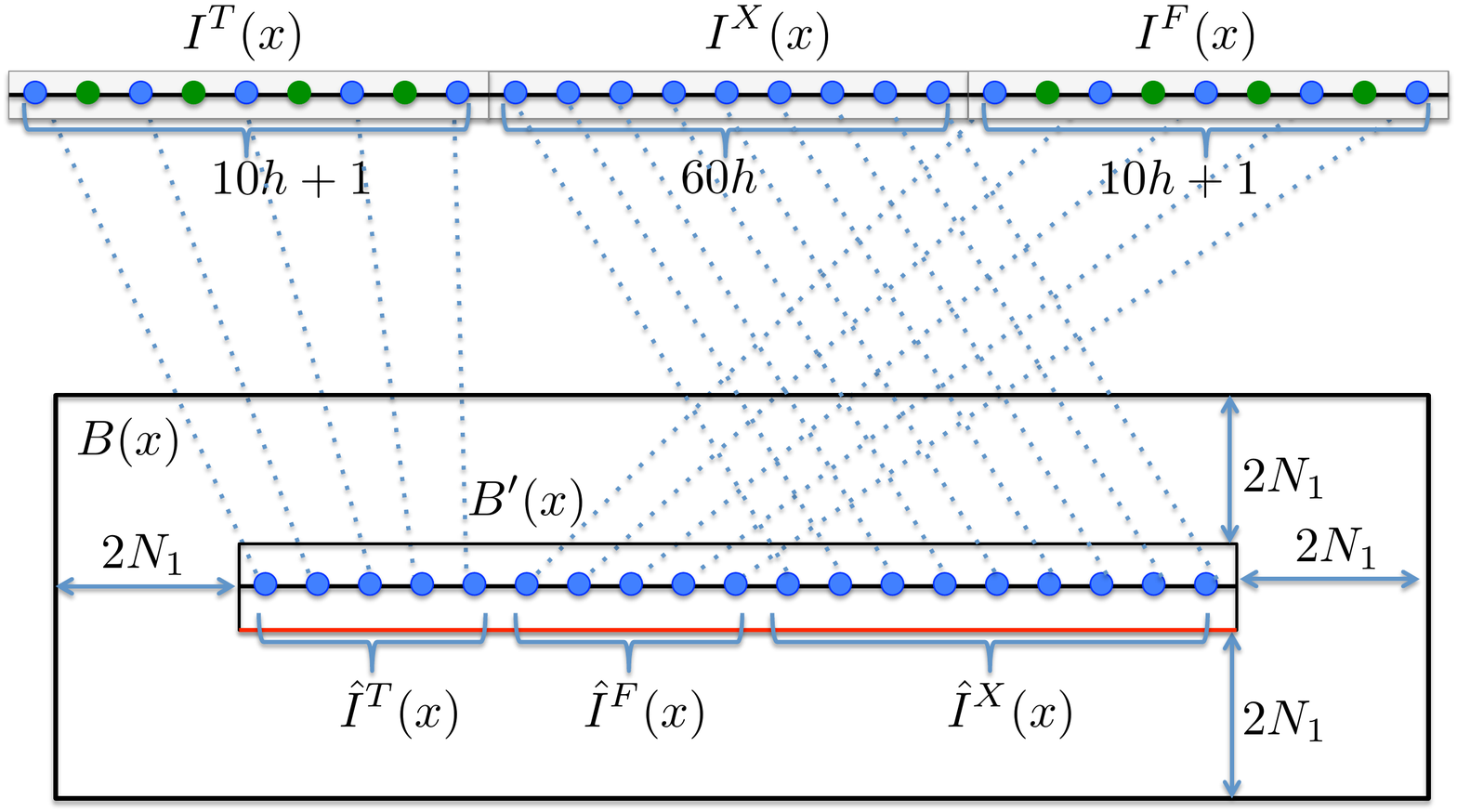}}\label{fig: level 1 var gadget}}
\hspace{0.5cm}
\subfigure[A clause gadget. Vertices of different colors correspond to different literals. The vertices on the bottom boundary of $B(C_q)$ are deleted.]{\scalebox{0.35}{\includegraphics{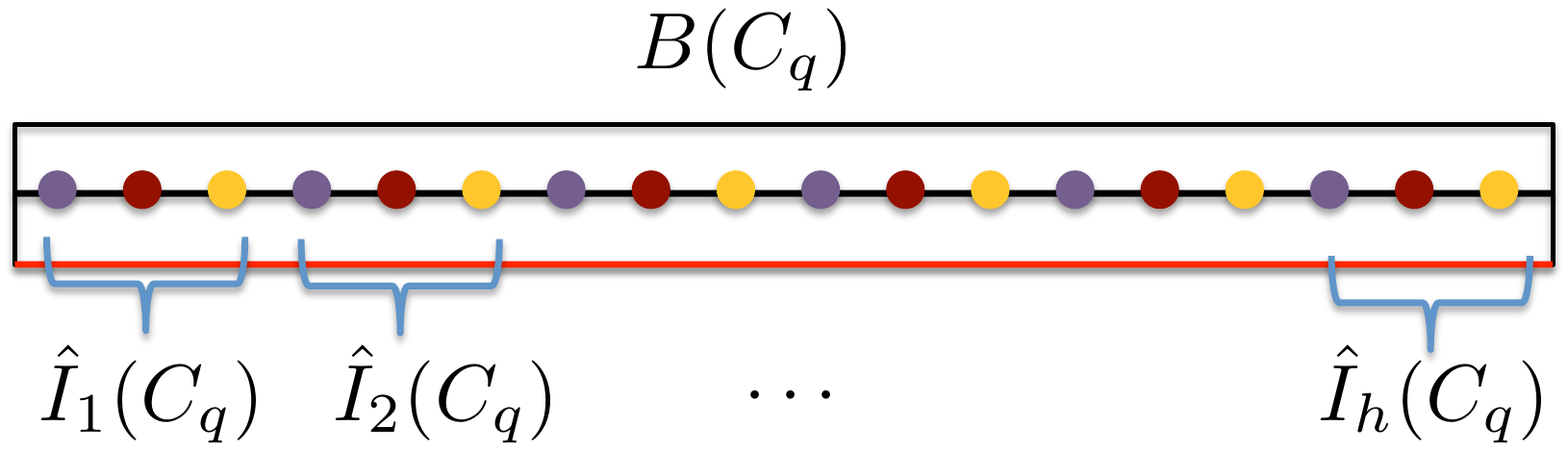}}
\label{fig: level 1 clause gadget}}
\caption{A variable gadget and a clause gadget \label{fig: level-1 var clause gadgets}}
\end{figure}

 Consider the set $\cset(x)\subseteq \cset$ of clauses in which variable $x$ appears without negation. Assume w.l.o.g. that $\cset(x)=\set{C_1,\ldots,C_r}$, where $r\leq 5$. For each $1\leq r'\leq r$, we will create $h$ demand pairs $\set{(s_j(C_{r'},x),t_j(C_{r'},x))}_{j=1}^h$, representing the literal $x$ of $C_{r'}$. Consider the interval $I^F(x)$. We will use its vertices $b_j^F$ as the sources of these demand pairs, where, intuitively, sources corresponding to the same clause-literal pair appear consecutively. Formally, for each $1\leq r'\leq r$ and $1\leq j\leq h$, we identify the vertex $s_j(C_{r'},x)$ with the vertex $b^F_{(r'-1)h+j}$  of $I^F(x)$. Intuitively,  if $x$ is assigned the value \false, then we will route all demand pairs in $\mset^F(x)$. The paths routing these pairs will ``block'' the vertices $b^F_j$, thus preventing us from routing demand pairs that represent clause-literal pairs $(C_{r'},x)$.
We treat the subset $\cset'(x)\subseteq \cset$ of clauses containing the literal $\neg x$ similarly, except that we identify their source vertices with the vertices of $\set{b_1^T,\ldots,b_{5h}^T}$ of $I^T(x)$. 

%For convenience, we now create one box $B^V$ that holds all variable gadgets  (it is {\bf not} a cut-out box). Let $L^V=2N_{i+1}+(70h+2)cL'_i$ be the length of a single variable box, and $H^V=H_i+2N_{i+1}$ its height. Box $B^V$ has length $nL^V+(n+1)N_{i+1}$ and height $H^V+4N_{i+1}$. Let $R^*$ be the row of $B^V$ at distance exactly $2N_{i+1}-1$ from its bottom boundary. We place the boxes $B(x_1),B(x_2),\ldots,B(x_n)$ in this left-to-right order inside $B^V$, so that their bottom boundaries are aligned with row $R^*$, and there is a horizontal spacing of $N_{i+1}$ between every pair of these boxes, between $B(x_1)$ and the left boundary of $B^V$, and between $B(x_n)$ and the right boundary of $B^V$ (see Figure~\ref{fig: big var box}). We notice that all destination vertices that are contained in $B^V$ lie in the middle row of $B^V$. We let $\mset^V=\bigcup_{h=1}^n\mset(x_h)$ be the set of all variable-pairs.

 \paragraph{ Clause Gadgets.}
 Consider some clause $C_q=(\ell_{q_1}\bor \ell_{q_2}\bor \ell_{q_3})$. For each one of the three literals $\ell\in \set{\ell_{q_1},\ell_{q_2},\ell_{q_3}}$ of $C_q$, let $\mset(C_q,\ell)=\set{(s_j(C_q,\ell),t_j(C_q,\ell))\mid 1\leq j\leq h}$ be a set of $h$ demand pairs representing the literal $\ell$ for clause $C_q$. 
 Recall that $B(C_q)$ is a grid of height $3$ and length $3h$. We delete all vertices that appear on the bottom row of this grid, and we let $R(C_q)$ be the middle row of $B(C_q)$. Partition $R(C_q)$ into $h$ intervals $\hat I_1(C_q),\ldots,\hat I_h(C_q)$, each containing three consecutive vertices (see Figure~\ref{fig: level 1 clause gadget}). 
% 
 \iffalse
\begin{figure}[h]
\scalebox{0.5}{\includegraphics{level-1-clause.pdf}}
\caption{A clause gadget. Vertices of different colors correspond to different literals. The vertices on the bottom boundary of $B(C_q)$ are deleted.
\label{fig: level 1 clause gadget}}
\end{figure}
\fi
% 
Fix some $1\leq j\leq h$. We identify the three vertices of $\hat I_j(C_q)$ with the destination vertices $t_j(C_q,\ell_{q_1})$, $t_j(C_q,\ell_{q_2})$, and $t_j(C_q,\ell_{q_3})$ in this order. For each $1\leq z\leq 3$, the corresponding source vertex $s_j(C_q,\ell_{q_z})$ has already been defined as part of the definition of the variable gadget corresponding to the literal $\ell_{q_z}$. Let $\mset(C_q)=\bigcup_{z=1}^{3}\mset(C_q,\ell_{q_z})$ be the set of all demand pairs representing $C_q$, so $|\mset(C_q)|=3h$. Let $\mset^C=\bigcup_{q=1}^m\mset(C_q)$ be the set of all clause-pairs, and let $\mset^V=\bigcup_{j=1}^n\mset(x_j)$ be the set of all variable-pairs. Our final set of demand pairs is $\mset=\mset^V\cup \mset^C$. This concludes the definition of the level-$1$ instance. We now proceed to analyze it.

%-------------------------------------------
%-------------------------------------------
%-------------------------------------------
\paragraph{ Yes-Instance Analysis.}
%
%-------------------------------------------
%-------------------------------------------
%-------------------------------------------
Assume that $\phi$ is a \yi. We show that for every instantiation of the level-$1$ instance $\iset$, there is a set $\pset$ of node-disjoint paths routing $N_1=(200h/3+1)n$ demand pairs, that respect the box $B(\iset)$. Assume that we are given some instantiation of $\iset$. We first select the set $\hat\mset\subseteq\mset$  of demand pairs to route, and then define the routing. Fix some assignment $\aset$ to the variables of $\phi$ that satisfies all the clauses. For every variable $x$, if $\aset$ assigns the value \true to $x$, then we let $\hat\mset(x)=\mset^T(x)\cup\mset^X(x)$, and otherwise, we let $\hat\mset(x)=\mset^F(x)\cup\mset^X(x)$. Notice that in either case, $|\hmset(x)|=65h+1$. For each clause $C_q=(\ell_{q_1}\vee\ell_{q_2}\vee\ell_{q_3})\in\cset$, let $\ell_{q_z}$ be a literal which evaluates to \true by $\aset$ (if there are several such literals, select one of them arbitrarily). We let $\hat\mset(C_q)=\mset(C_q,\ell_{q_z})$. Let $\hat\mset^V=\bigcup_{j=1}^n\hat\mset(x_j)$ and $\hat\mset^C=\bigcup_{q=1}^m\hat\mset(C_q)$. We then set $\hmset=\hmset^V\cup \hmset^C$, so $|\hmset|=(65h+1)n+mh=(200h/3+1)n=N_1$, as $m=5n/3$. 
We now show that all demand pairs in $\hmset$ can be routed by a set $\pset$ of paths that respects the box $B(\iset)$. We only provide an intuitive description of the routing here; a formal proof appears in Section~\ref{sec: YI}.

 Consider some variable $x$, and assume that it is assigned the value \true. Let $\hmset'(x)\subseteq \hmset^C$ be the set of all clause-pairs, whose source vertices lie on the interval $I(x)$. Since the current assignment to $x$ must satisfy their corresponding clauses, all vertices of $S(\hmset'(x))$ lie on $I^F(x)$. Similarly, if $x$ is assigned the value \false, then all vertices of $S(\hmset'(x))$ lie on $I^T(x)$. Therefore, for every variable $x$, the sources of the demand pairs in $\hmset(x)$ appear consecutively on $Z(\iset)$ (relatively to the vertices of $S(\hmset)$), and the same holds for the sources of the demand pairs in $\hmset(C_q)$, for every clause $C_q$.
 
 We build the paths in $\pset$ gradually, growing them from the source vertices.
 We start by selecting, for every variable $x$, a set $A(x)$ of $|\hmset(x)|$ vertices on the top boundary of $B(x)$, and similarly, for every clause $C_q$, a set $A(C_q)$ of $|\hmset(C_q)|$ vertices on the top boundary of $B(C_q)$. We discuss this selection later.
  In the first step, we route the paths from their source vertices to these newly selected vertices, so that for every variable $x$, all paths originating from the  vertices of $S(\hmset(x))$ terminate at the vertices of $A(x)$ and they are order-preserving, and similarly  for every clause $C_q\in \cset$, all paths originating from the vertices of $S(\hmset(C_q))$ terminate at the vertices of $A(C_q)$ and they are order-preserving.
 In order to execute this step, we carefully select a set $\Gamma$ of $|\hmset|$ vertices on the top boundary of $B^V$; a set $\Gamma''$ of  $|\hmset^C|$ vertices on the bottom boundary of $B^V$; and the set $\Gamma'''$ of $|\hmset^C|$ left-most vertices on the top boundary of $B^C$.  We first connect all vertices in $S(\hmset)$ to the $|\hmset|$ leftmost vertices on the opening of $B(\iset)$ via a set of order-preserving node-disjoint paths, and then extend them to the vertices of $\Gamma$ via node-disjoint order-preserving paths inside $B(\iset)$. This part of the routing is straightforward. For every variable $x$, the paths originating at the vertices of $S(\hmset(x))$ then continue to their corresponding boxes $B(x)$, while for each clause $C_q$, the paths originating at the vertices of $S(\hmset(C_q))$ continue to the bottom boundary of $B^V$ and terminate at a consecutive set of vertices of $\Gamma''$ (see Section~\ref{sec: YI} and Figure~\ref{fig: yi-routing} for more details). We select the vertices of $\Gamma$ and $\Gamma''$ in a way that ensures that every path that we route inside the box $B^V$ is a sub-path of a column of $B^V$. We then connect the vertices of $\Gamma''$ to the vertices of $\Gamma'''$ via a set of node-disjoint order-preserving paths, that are internally disjoint from the boxes $B^V$ and $B^C$; these paths exploit the spacing between the two boxes. Finally, we complete the routing inside the box $B^C$, ensuring that for every clause $C_q$, the paths originating at the vertices of $S(\hmset(C_q))$ terminate at the vertices of $A(C_q)$. This is done via a standard snake-like routing. This routing critically uses the fact that the endpoints of the paths that we have constructed so far, which originate at the vertices of $S(\hmset(C_q))$, appear consecutively on $\Gamma'''$, for every clause $C_q$.
 
 By appropriately choosing, for every clause $C_q\in \cset$, the set $A(C_q)$ of vertices on the top boundary of $B(C_q)$, it is easy to extend the paths originating at the vertices of $S(\hmset(C_q))$, so that they terminate at the vertices in $T(\hmset(C_q))$. Since the resulting paths are order-preserving, we will route all demand pairs in $\hmset(C_q)$.
 
 We now consider some variable $x$ and show how to complete the routing of the demand pairs in $\hmset(x)$ inside the box $B(x)$. Assume first that $x$ is assigned the value \true.  Then the paths routing the demand pairs in $\hmset(x)$ arrive at the top boundary of $B(x)$ in the same order as the ordering of their source vertices on $Z(\iset)$. The order of their corresponding destination vertices on the second row of $B'(x)$ is identical, and so it is immediate to extend the paths inside $B(x)$ to complete the routing.
 
 Assume now that $x$ is assigned the value \false. Let $J$ and $J'$ be the intervals of the top boundary of $B(x)$ where the paths routing the pairs in $\hmset^F(x)$ and $\hmset^X(x)$ arrive, respectively. Unfortunately, $J$ lies to the right of $J'$, while interval $\hat I^F(x)$ lies to the left of the interval $\hat I^X(x)$ on the second row of $B'(x)$. Therefore, we need to ``switch'' the ordering of these two sets of paths before we can complete the routing. It is easy to do so by exploiting the ample spacing between the box $B'(x)$ and the boundaries of the box $B(x)$ (see Figure~\ref{fig: var-routing}).

\noindent {\bf No-Instance Analysis.}
%-------------------------------------------
%-------------------------------------------
%-------------------------------------------
Assume now that $\phi$ is a \ni, and that we are given some instantiation of the level-$1$ instance $\iset$ and a set $\tpset$ of node-disjoint paths routing some subset $\tmset\subseteq \mset$ of demand pairs. Our goal is to prove that $|\tmset|\leq N'_1=(1-\delta)N_1=(1-\delta)(200h/3+1)n$. Assume for contradiction that $|\tmset|>N'_1$. 
In order to analyze the \ni, it is convenient to view the construction slightly differently. Let $\cset'$ be the set of clauses obtained by adding, for each clause $C_q\in \cset$, $h$ copies $C^1_q,\ldots,C^h_q$ of $C_q$ to $\cset'$. We will refer to the clauses in $\cset$ as the \emph{original clauses}, and the clauses in $\cset'$ as the \emph{new clauses}. Notice that $|\cset'|=mh$, and it is easy to verify that no assignment to the variables of $\phi$ can satisfy more than $(1-\eps)hm$ clauses of $\cset'$. We will reach a contradiction by defining an assignment to the variables of $\phi$ that  satisfies more than $(1-\eps)hm$ clauses of $\cset'$. For each new clause $C^j_q\in \cset'$, we let $\mset(C^j_q)\subseteq \mset$ be  the set of all demand pair whose destinations lie on interval $\hat I_j(C_q)$; we view these demand pairs as representing the clause $C^j_q$.

For every variable $x$ of $\phi$, let $\tilde\mset(x)=\tilde\mset\cap\mset(x)$, and let $\tilde\mset^T(x),\tilde\mset^F(x),\tilde\mset^X(x)$ denote  $\mset^T(x)\cap \tmset,\mset^F(x)\cap \tmset$ and $\mset^X(x)\cap \tmset$, respectively. For every new clause $C^j_q\in \cset'$, let $\tilde\mset(C_q^j)=\tmset\cap \mset(C^j_q)$. We also denote by $\tilde\mset^V=\bigcup_{j=1}^n\tilde\mset(x_j)$ and by $\tilde\mset^C=\bigcup_{C^j_q\in \cset'}\tilde\mset(C^j_q)$, the sets of the variable-pairs and the clause-pairs, respectively, that are routed by $\tpset$. %Lastly, we let $\tilde\mset_i$ denote the set of pairs in $\tilde\mset$ whose source lies in $I(x_i)$. Note that $\tilde\mset_i$ includes $\tilde\mset(x_i)$ and also the clause-pairs routed whose corresponding variable is $x_i$.
We use the following claim.

\begin{claim}\label{claim: lvl1 var gadget} For each variable $x$ of $\phi$, at least one of the sets $\tilde\mset^T(x),\tilde\mset^F(x),\tilde\mset^X(x)$ is empty.
\end{claim}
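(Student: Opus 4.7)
The plan is to argue by contradiction via a Jordan-curve argument in the planar embedding of $G_1$. Suppose toward contradiction that all three sets $\tmset^T(x),\tmset^F(x),\tmset^X(x)$ are nonempty, and fix routed pairs $(s^T,t^T), (s^F,t^F), (s^X,t^X)$ with corresponding node-disjoint paths $P_T, P_F, P_X \in \tpset$. Because the intervals $I^T(x), I^X(x), I^F(x)$ appear in this left-to-right order along $Z(\iset)$, the sources appear on the first row of $G_1$ in the order $s^T, s^X, s^F$; because the intervals $\hat I^T(x), \hat I^F(x), \hat I^X(x)$ appear in this left-to-right order along $R'(x)$, the destinations appear on $R'(x)$ in the order $t^T, t^F, t^X$. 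Crucially, since the bottom row of $B'(x)$ is deleted, all of $R'(x)$—and in particular all three destinations—lies on the boundary of a single ``bay'' face $F_B$ in the planar embedding of $G_1$.

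Next I will construct a simple closed curve $\gamma$ in the plane by concatenating, in this order: (i) the path $P_T$ from $s^T$ to $t^T$; (ii) a topological arc $\alpha_B$ drawn inside the interior of $F_B$ from $t^T$ to $t^X$; (iii) the reverse of $P_X$ from $t^X$ to $s^X$; (iv) a topological arc $\alpha_O$ drawn inside the interior of the outer face of $G_1$ from $s^X$ back to $s^T$ (for instance, arching above the first row). Since $P_T, P_X$ are node-disjoint and $\alpha_B, \alpha_O$ lie in the interiors of two distinct faces, each of which is disjoint from the $1$-skeleton, $\gamma$ is a simple closed curve. Traversed in the listed order, $\gamma$ is a CCW loop, so its bounded interior $\Omega$ is the region lying east of $P_T$, north of $\alpha_B$, west of $P_X$, and south of $\alpha_O$. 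I will then verify: (a) $t^F$ lies in $\Omega$—it sits on $R'(x)$ strictly between $t^T$ and $t^X$, and small neighborhoods of $t^F$ meet $\Omega$ on both sides of $R'(x)$ (the strip between $R'(x)$ and $\alpha_B$ inside $F_B$, and the region just above $R'(x)$ in the annulus between the outer face and $F_B$); (b) $s^F$ lies outside $\Omega$—it sits on the first row strictly to the right of $s^X$, and $\alpha_O$ arches \emph{leftward} from $s^X$ to $s^T$, so the interior of $G_1$ just below $s^F$ is strictly east of $P_X$, and the outer face directly above $s^F$ is not covered by $\alpha_O$.

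Since $s^F$ and $t^F$ lie on opposite sides of $\gamma$, the path $P_F$ must cross $\gamma$. However, $P_F$ is node-disjoint from $P_T$ and $P_X$, so by the crossing-freeness of the planar embedding it cannot cross their portions of $\gamma$; and $P_F$ is entirely contained in the $1$-skeleton of $G_1$, whereas $\alpha_B$ and $\alpha_O$ lie in the interiors of two distinct faces, so $P_F$ cannot cross those arcs either. This is the desired contradiction. The main obstacle will be the careful inside/outside verification in parts (a)–(b), because the embedding of $G_1$ has many other bay faces (one per other $B'(x_j)$ and per $B(C_q)$) and the cut-out structure of $B(\iset)$ itself; one must check that $\alpha_O$ and $\alpha_B$ can be drawn through their respective faces without interacting with these other features. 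The generous $2N_1$-spacing placed around each box in the construction makes both the existence of such arcs and the inside/outside assignment of $s^F$ and $t^F$ a routine check.
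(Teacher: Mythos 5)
Your argument is correct and is essentially the paper's own proof: the paper handles this claim by invoking its Observation~\ref{obs: cylinder} (three curves on a cylinder cannot realize the given boundary orders disjointly), and your Jordan-curve construction --- closing $P_T$ and $P_X$ via arcs through the outer face and the bay face, then showing the resulting curve separates $s^F$ from $t^F$ --- is exactly the standard proof of that observation. One small pushback on the final paragraph: the inside/outside assignment in (a)--(b) is a purely topological consequence of the fact that the left-to-right order of $s^T,s^X,s^F$ on the first row and of $t^T,t^F,t^X$ on $R'(x)$ are consistently oriented in the planar embedding (so the arc of the outer boundary from $s^T$ to $s^X$ and the arc of the bay boundary from $t^T$ to $t^X$ through $t^F$ lie on the same side of $P_T\cup P_X$), and it has nothing to do with the $2N_1$-spacing; also, the assertion that the cell just below $s^F$ is ``strictly east of $P_X$'' is not needed and not immediate since $P_X$ may wander --- it suffices that $s^F$ is incident to the unbounded component of the outer face after removing $\alpha_O$, which is necessarily exterior to $\gamma$.
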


The proof is omitted here; we prove a somewhat stronger claim in Section~\ref{sec: NI} (see also Observation~\ref{obs: cylinder} from which the claim follows immediately). Notice that from the above claim, $|\tmset^V|\leq (65h+1)n$.

\iffalse
\begin{proof}
Assume for contradiction that $(s_1,t_1),(s_2,t_2),(s_3,t_3)$ are in $\tilde\mset^T(x),\tilde\mset^F(x),\tilde\mset^X(x)$, respectively. We simply apply Observation~\ref{obs: cylinder} - we think of $G$ as being embedded on a sphere $\Sigma$, where $D$ is the disc whose boundary is $G$ and excludes $G$ from its interior, and $D'$ the disc whose boundary is equal to that of $B_T(x_i)\cup B_F(x_i)\cup B_X(x_i)$ and contains $B_T(x_i)\cup B_F(x_i)\cup B_X(x_i)$. Let $a_1=s_1,a_2=s_3,a_3=s_2$. Clearly, $P(s_1,t_1),P(s_2,t_3),P(s_3,t_3)$ can be thought of as three simple curves $\gamma_1,\gamma_2,\gamma_3$, respectively, which cannot be pairwise disjoint, a contradiction.
\end{proof}
\fi

Consider now some variable $x$. Assume first that $\tmset^F(x)=\emptyset$. We then assign $x$ the value \true. We say that an index $1\leq j\leq 5h+1$ is \emph{bad} for $x$, iff the pair $(s_j^T(x),t_j^T(x))\not \in \tmset(x)$. Otherwise, if $\tmset^F(x)\neq \emptyset$, then we assign $x$ the value \false. In this case, we say that an index $1\leq j\leq 5h+1$ is bad for $x$, iff $(s_j^F(x),t_j^F(x))\not \in \tmset(x)$. We later show that the total number of pairs $(x,j)$, where $j$ is a bad index for variable $x$, is small.

Consider some new clause $C_q^j$. We say that it is an \emph{interesting} clause if $|\tmset(C_q^j)|\geq 1$ (in other words, at least one pair in the set $\set{(s_j(C_q,\ell_{q_z}),t_j(C_q,\ell_{q_z}))}_{z=1}^3$ is in $\tmset$), and we say that it is uninteresting otherwise. We say that clause $C_q^j$ is \emph{troublesome} iff $|\tmset(C_q^j)|>1$. The proof of the following simple observation is omitted here; we prove a more general statement in Section~\ref{sec: NI}.

\begin{observation}\label{obs: problematic level 1}
For each clause $C_q$, at most three of its copies are troublesome.
\end{observation}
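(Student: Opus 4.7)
The plan is to combine two structural facts—an order-preserving property of node-disjoint paths inside the two-row strip $B(C_q)$, and a planarity-based ``block'' property of paths entering $B(C_q)$ from $Z(\iset)$—and then close with a short counting argument. First I would note that, because the bottom row of the $3\times 3h$ grid defining $B(C_q)$ is deleted, $B(C_q)$ is effectively a two-row strip whose top row is the opening and whose middle row $R(C_q)$ carries all destinations; any set of node-disjoint paths in such a strip going from the top row to the bottom row is order-preserving (if two paths enter at columns $e<e'$, they terminate at columns $d\leq d'$, since any attempted crossing in a two-row strip would violate node-disjointness). I apply this to the paths in $\tpset$ routing pairs of $\tmset\cap\mset(C_q)$, partitioned as $\pset_1,\pset_2,\pset_3$, where $\pset_z$ routes the pairs in $\tmset\cap\mset(C_q,\ell_{q_z})$.

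Next I would use planarity of the grid together with the fact that $B(C_q)$ is accessible only through its top row (the other three sides of its enclosing $3\times 3h$ grid are deleted). Viewing the paths from $Z(\iset)$ to the top of $B(C_q)$ as non-crossing paths in a planar graph from the outer boundary to a single interior island, their source order along $Z(\iset)$ must match their entry order along the top of $B(C_q)$. Since the three sets $S_z=\{s_j(C_q,\ell_{q_z}):j\}$ form three disjoint consecutive intervals of $Z(\iset)$, the entry points of $\pset_z$ form three consecutive blocks on top of $B(C_q)$; composing with the two-row strip order-preservation, the destinations of $\pset_z$ form three contiguous blocks on $R(C_q)$, appearing in the same left-to-right order as $S_1,S_2,S_3$ do on $Z(\iset)$.

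For the counting step, I let $J_z\subseteq\{1,\dots,h\}$ be the set of copy indices $j$ for which the color-$z$ position of copy $j$ (column $3(j-1)+z$ of $R(C_q)$) is hit by some path in $\pset_z$. Fixing $z_1\neq z_2$ and assuming WLOG that $S_{z_1}$ precedes $S_{z_2}$ on $Z(\iset)$, the $z_1$-destination block precedes the $z_2$-destination block on $R(C_q)$, which gives
\[
3(\max J_{z_1}-1)+z_1 \;<\; 3(\min J_{z_2}-1)+z_2,
\]
or equivalently $3(\max J_{z_1}-\min J_{z_2})<z_2-z_1$; since $|z_2-z_1|\leq 2$ this forces $\max J_{z_1}\leq\min J_{z_2}$ and hence $|J_{z_1}\cap J_{z_2}|\leq 1$. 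Every troublesome copy $C_q^j$ has at least two positions hit and therefore contributes at least one pair $\{z_1,z_2\}$ with $j\in J_{z_1}\cap J_{z_2}$, so the number of troublesome copies is at most $\sum_{\{z_1,z_2\}}|J_{z_1}\cap J_{z_2}|\leq\binom{3}{2}=3$. The hardest part will be the rigorous justification of the planarity step—that sources occupying a contiguous stretch of $Z(\iset)$ must enter $B(C_q)$ through a contiguous stretch of its top row in the same relative order—which uses essentially that $B(C_q)$ is cut out on three sides and that all sources lie on the outer face of the grid, and is exactly the kind of ingredient the authors defer to the more general argument in Section~\ref{sec: NI}.
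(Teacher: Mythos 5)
There is a genuine gap in the planarity step, which you rightly flag as the hard part of the argument. To begin with, the stated premise is wrong: in the level-1 construction only the bottom row of the $3\times 3h$ grid defining $B(C_q)$ is deleted; the left and right columns remain, so $B(C_q)$ is reachable from its left, right, and top. More importantly, the order-preservation conclusion fails even granting a stronger premise: a routing path may descend past $B(C_q)$, traverse the rows of $B^C$ lying below it, come back up on the far side of the face created by the deleted bottom row, and terminate at a column of $R(C_q)$ that is ``out of order'' with respect to its source (the paper's encircling-resolution machinery of Lemma~\ref{lem: enc resolution} exists precisely to tame such windings at higher levels). Hence $\max J_{z_1}\leq\min J_{z_2}$ does not follow. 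Indeed one can have $J_{z_1}=\{3,4\}$ and $J_{z_2}=\{1,2\}$ with $S_{z_1}$ to the left of $S_{z_2}$: the four sources are then cyclically ordered $(s_3^{z_1},s_4^{z_1},s_1^{z_2},s_2^{z_2})$ while the four destinations are cyclically ordered $(t_1^{z_2},t_2^{z_2},t_3^{z_1},t_4^{z_1})$, and these two cyclic orders are rotations of each other, so four pairwise-disjoint curves from the outer face into the face below $R(C_q)$ realizing this pattern do exist.

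What is true---and is all that your counting step actually needs---is the weaker bound $|J_{z_1}\cap J_{z_2}|\leq 1$, and the correct argument for it is local rather than a global block-precedence. If $j<j'$ both belonged to $J_{z_1}\cap J_{z_2}$, the four sources would appear on $Z(\iset)$ in the cyclic order $(s_j^{z_1},s_{j'}^{z_1},s_j^{z_2},s_{j'}^{z_2})$ while the four destinations would appear on $R(C_q)$ in the cyclic order $(t_j^{z_1},t_j^{z_2},t_{j'}^{z_1},t_{j'}^{z_2})$ (or the analogous order if $z_1>z_2$); no rotation or reflection identifies these two cyclic orders, yet the four extended curves $P(s,t)\cup Q_t$ would have to run pairwise disjointly from the outer face of the grid into the single face below the deleted bottom row of $B(C_q)$, which is impossible. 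This is exactly the paper's four-curve contradiction in the proof of Lemma~\ref{lemma: troublesome clauses}, reached there by a pigeonhole that picks the two offending colors $z,z'$ and two offending copies $j,j'$. Your closing step, bounding the number of troublesome copies by $\sum_{\{z_1,z_2\}}|J_{z_1}\cap J_{z_2}|\leq\binom{3}{2}=3$, is correct and a clean way to conclude once $|J_{z_1}\cap J_{z_2}|\leq 1$ is established---you just need the right argument for that inequality.
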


 We conclude that $|\tmset^C|\leq m(h+6)= 5n(h+6)/3$. 
Let $\cset_1'\subseteq \cset'$ be the set of all interesting new clauses. A simple accounting shows that, if $|\tmset|\geq (1-\delta)(200h/3+1)n$, then $|\cset_1'|\geq (1-\eps/10) hm$ must hold. Notice that for each new clause $C_q^j\in \cset_1'$, at least one demand pair from the set $\set{(s_j(C_q,\ell_{q_z}),t_j(C_q,\ell_{q_z}))}_{z=1}^3$ is in $\tmset$. We select any literal $\ell\in \set{\ell_{q_1},\ell_{q_2},\ell_{q_3}}$ such that 
$(s_j(C_q,\ell),t_j(C_q,\ell))\in \tmset$, and we say that clause $C_q^j$ \emph{chooses} the literal $\ell$. Let $x$ be the variable corresponding to the literal $\ell$. We say that $C_q^j$ is a \emph{cheating} clause iff the assignment that we chose for $x$ is not consistent with the literal $\ell$: that is, if $\ell=x$, then $\aset(x)=\mbox{\false}$, and if $\ell=\neg x$, then $\aset(x)=\mbox{\true}$. Notice that, if $C_q^j$ is an interesting and a non-cheating clause, then the current assignment satisfies $C_q^j$. Therefore, in order to compete the analysis, it is enough to prove the following claim.

\begin{claim}\label{claim: cheating level 1}
The number of cheating clauses in $\cset'_1$ is bounded by $\eps mh/2$.
\end{claim}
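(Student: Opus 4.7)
The plan is to bound the number of cheating clauses in $\cset'_1$ by a two-step argument. The first step is a topological bound: for each variable $x$, I will show that the number of routed cheating pairs associated with $x$ is at most $2\bad(x)$, via a Jordan-curve argument analogous to the one underlying Claim~\ref{claim: lvl1 var gadget}. The second step is a counting bound: summing over $x$, I will show that $\sum_x\bad(x) = O(n + \delta N_1)$, which is far smaller than $\eps mh/2$ for the chosen parameters.

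For the topological step, observe that every cheating clause $C_q^j$ contributes a distinct routed pair $(s_j(C_q,\ell), t_j(C_q,\ell)) \in \tmset$ with $\ell$ inconsistent with $\aset$. Group these \emph{cheating pairs} by the variable $x$ of $\ell$: when $\aset(x)=\true$ (so $\tmset^F(x)=\emptyset$), the cheating pairs for $x$ have sources among $\set{b_1^T,\ldots,b_{5h}^T}\subset I^T(x)$, and the case $\aset(x)=\false$ is symmetric on $I^F(x)$. Fix $x$ with $\aset(x)=\true$, set $Y=\set{y : (a_y^T,t_y^T)\in\tmset}$ (so $|Y|=(5h+1)-\bad(x)$), and let $P_y\in\tpset$ denote the corresponding routing path for $y\in Y$. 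I claim that $b_y^T$ can be a cheating source only if at least one of $y$ or $y+1$ lies outside $Y$. Indeed, if both $y, y+1\in Y$, then $P_y$ and $P_{y+1}$, together with the three-vertex boundary arc $(a_y^T, b_y^T, a_{y+1}^T)$ on the top of $G_i$ and the single edge $(t_y^T,t_{y+1}^T)\subseteq\hat I^T$, form a simple closed curve $\gamma$ in the planar embedding of $G_i$. The vertex $b_y^T$ lies on $\gamma$, and its only grid-neighbor not on $\gamma$ is its downward neighbor, which lies in the bounded region $R$ enclosed by $\gamma$. Thus any path in $\tpset$ starting at $b_y^T$ must stay inside $R$. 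Because $P_y, P_{y+1}$ enter $B(x)$ through its top opening and terminate at two adjacent vertices of $\hat I^T$, and because $B(x)$ is $2N_1$-separated from every other $B(x_j)$ inside $B^V$ while $B^V$ is $2N_1$-separated from $B^C$, one shows that no vertex of any $B(C_q)$ lies in $R$ (this is precisely the cylinder/disc style argument referenced after Claim~\ref{claim: lvl1 var gadget}, whose formal form is developed in Section~\ref{sec: NI}). Hence $b_y^T$ has no cheating destination reachable inside $R$. Summing, the cheating pairs for $x$ number at most $|\set{y'\in[1,5h] : y'\notin Y \text{ or } y'+1\notin Y}| \leq 2\bad(x)$.

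For the counting step, define $\mathsf{loss}(x) = (65h+1) - |\tmset(x)|$, which is nonnegative by Claim~\ref{claim: lvl1 var gadget}. A brief case analysis on which of $\tmset^T(x), \tmset^F(x), \tmset^X(x)$ is empty, using $|\mset^X(x)|=60h$ and $|\mset^T(x)|=|\mset^F(x)|=5h+1$, yields $\bad(x) \leq \mathsf{loss}(x)$ in every case. Summing, and applying Observation~\ref{obs: problematic level 1} to get $|\tmset^C|\leq m(h+6)$ together with $|\tmset|\geq(1-\delta)N_1$, I obtain
\[
\sum_x \mathsf{loss}(x) \;=\; (65h+1)n - |\tmset^V| \;\leq\; (65h+1)n - (1-\delta)N_1 + m(h+6) \;=\; 10n + \delta N_1,
\]
where the last equality follows by direct substitution of $N_1 = (200h/3+1)n$ and $m=5n/3$.

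Combining the two steps gives that the number of cheating clauses is at most $2\sum_x \bad(x) \leq 2(10n + \delta N_1) = 20n + 2\delta N_1$. Substituting $h = 1000/\eps$, $\delta = 8\eps^2/10^{12}$, and $N_1 \leq 70hn$, both $20n$ and $2\delta N_1$ are bounded by $\eps mh/4$, so the total is at most $\eps mh/2$, as required. The main obstacle will be the rigorous justification of the ``enclosure'' step in the Jordan-curve argument, namely ruling out that $P_y$ or $P_{y+1}$ winds through $B^C$ in a way that places some $B(C_q)$ inside $R$; this requires careful use of the $2N_1$-buffers separating $B(x)$, $B^V$, and $B^C$ inside $B(\iset)$, and is the natural quantitative strengthening of the topological observation underlying Claim~\ref{claim: lvl1 var gadget}.
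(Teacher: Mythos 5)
Your counting step is sound, and in fact a bit sharper than what the paper needs: the chain $\bad(x)\le\mathsf{loss}(x)$, a case analysis on which of $\tmset^T(x),\tmset^F(x),\tmset^X(x)$ vanishes, and the bound $\sum_x\mathsf{loss}(x)\le 10n+\delta N_1$ are all correct and line up with the accounting the paper does for ``bad indices.'' The gap is in the topological step, and I think it is a real one, not a matter of missing details.

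Your claim is that whenever both $y$ and $y+1$ lie in $Y$, the vertex $b_y^T$ cannot be a cheating source at all. This would imply that \emph{every} cheating clause is (in the paper's terminology) a \emph{bad} cheating clause, i.e., that there are no good cheating clauses. But the paper does not prove this, and for good reason: its argument (Claim~\ref{claim: good cheating}, sketched already after Claim~\ref{claim: cheating level 1}) only rules out \emph{two} good cheating copies of $C_q$ choosing the same literal $\ell$, and it needs \emph{three} routing paths $P_j, P_{j+1}, P_{j'+1}$ together with the deleted bottom boundary of $B'(x)$ to produce the necessary three-face ``cylinder'' topology. With that, the two cheating sources fall into two different internal faces while the bottom boundary of $B(C_q)$ lies in a single face, so at most one of the two copies can actually route. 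The argument deliberately tolerates one good cheating clause per literal per original clause, for a total of $3m$, and that slack is unavoidable with this kind of two-boundary-plus-paths drawing.

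Your curve $\gamma$, built from only $P_y$, $P_{y+1}$, the boundary arc, and the single edge $(t_y^T,t_{y+1}^T)$, does separate the plane into two regions $R, R'$ with $b_y^T$'s downward neighbor in $R$. But the claim that no $B(C_q)$ lies in $R$ does not follow from the $2N_1$-buffers. Those buffers guarantee routing capacity, not topology: $P_y$ (say) is an arbitrary simple path, and nothing stops it from entering $B(\iset)$, detouring below and around $B^C$, and only then approaching $B'(x)$. In such a drawing the vertical ray from a point $q$ inside $B^C$ upward to the top boundary can cross $\gamma$ an odd number of times, which places $q$ in $R$; then the path from $b_y^T$ to a destination inside $B(C_q)$ is perfectly consistent with your Jordan-curve constraint, and $b_y^T$ can indeed be a good cheating source. (Also, note that $B(x)$ and $B'(x)$ are not cut-out boxes — only $B'(x)$'s bottom row is deleted — so there is no ``top opening'' of $B(x)$ forcing the paths to approach from above; they can enter from any side, which is exactly what makes the winding scenario possible.) This is why the paper does not even attempt the single-source exclusion you are after; instead it compares two candidate cheating copies, adds the deleted bottom boundary of $B'(x)$ as a second disc boundary, uses three routed paths to get three faces, and accepts the residual $3m$ good cheating clauses, which is still comfortably below $\eps mh/4$. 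To repair your proof you would need to graft on this second half of the paper's argument; the ``enclosure'' step as written cannot be salvaged by the separation parameters alone.
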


We prove a stronger claim in Section~\ref{sec: NI} (see Lemma~\ref{lem: cheating clauses}), and provide a proof sketch here. Let $C^j_q\in \cset'$ be a cheating clause, and suppose it has chosen the literal $\ell$, whose corresponding variable is $x$. We say that $C^j_q$ is a \emph{bad} cheating clause, iff at least one of the indices $j,j+1$ is a bad index for variable $x$ (recall that $j$ is a bad index for $x$ if $\aset(x)=\mbox{\true}$ and $(s_j^T(x),t_j^T(x))\not \in \tmset$, or $\aset(x)= \mbox{\false}$ and $(s_j^F(x),t_j^F(x))\not \in \tmset$). Otherwise, we say that $C_q^j$ is a good cheating clause. A simple accounting shows that the number of pairs $(x,j)$, where $j$ is a bad index for $x$ is bounded by $\eps mh/16$. Each such pair $(x,j)$ may contribute to at most two bad cheating clauses, and so there are at most   $\eps mh/8$ bad cheating clauses.

Our final step is to show that the number of good cheating clauses is bounded by $\eps mh/4$. We show that for each original clause $C_q$, at most $3$ copies of $C_q$ are good cheating clauses. It then follows that the total number of good cheating clauses is at most $3m<\eps mh/4$, since $h=1000/\eps$. Consider some original clause $C_q$. It is enough to show that for each literal $\ell$ of $C_q$, the number of copies of $C_q$ that choose $\ell$ and are good cheating clauses is at most $1$. Assume for contradiction that there are two such copies $C_q^j$ and $C_q^{j'}$. Assume w.l.o.g. that the variable $x$ that corresponds to $\ell$ is assigned the value \true, so $\ell=\neg x$. Then vertex $s_j(C_q,\ell)$ lies on interval $I^T(x)$, between $s_j^T(x)$ and $s_{j+1}^T(x)$, while vertex $s_{j'}(C_q,\ell)$ lies on interval $I^T(x)$, between $s_{j'}^T(x)$ and $s_{j'+1}^T(x)$. Assume w.l.o.g. that $j<j'$. Consider the plane with only the top boundary of the grid $G_1$, the bottom boundary of the box $B'(x)$, and the images of the paths of $\tpset$ routing the pairs $(s_j^T(x),t_j^T(x)),(s_{j+1}^T(x),t_{j+1}^T(x))$, and $(s_{j'+1}^T(x),t_{j'+1}^T(x))$ present. Let $f,f'$ be the two faces of this drawing that differ from the outer face, such that $f$ has $s_j(C_q,\ell)$ on its boundary and $f'$ has $s_{j'}(C_q,\ell)$ on its boundary. Then $f\neq f'$, and the bottom boundary of $B(C_q)$ must belong to a single face of the resulting drawing. Assume w.l.o.g. that this face is $f^*\neq f$. Then $t_j(C_q,\ell)$ lies on $f^*$, and so it is impossible that a path of $\tpset$ connects $s_j(C_q,\ell)$ to $t_j(C_q,\ell)$. We conclude that the current assignment satisfies at least $(1-\eps/10)hm-\eps hm/2>(1-\eps)hm$ clauses of $\cset'$, a contradiction.

\paragraph{ Generalization to Higher Levels and the Hardness Gap.}
Assume now that we are given a construction of a level-$i$ instance, and we would like to construct a level-$(i+1)$ instance. Intuitively, we would like to start with the level-$1$ instance described above, and then replace each source-destination pair $(s,t)$ with a distinct copy of a level-$i$ instance $\iset'$. So we would replace the vertex $s$ with the path $Z(\iset')$, and the vertex $t$ with the cut-out box $B(\iset')$. We say that a level-$i$ instance $\iset'$ is routed by a solution $\pset$ to this resulting instance, iff $\pset$ routes a significant number of the demand pairs in $\iset'$.
The idea is that, due to the level-$1$ instance analysis, the number of such level-$i$ instances that can be routed in the {\sc Yes-} and the \ni cases differ by a constant factor, while within each such instance we already have some gap $g_i$ between the {\sc Yes-} and the \ni solutions, and so the gap grows by a constant factor in every level. Unfortunately this idea does not quite work. If we consider, for example, a level-$1$ instance $\iset'$, then its destination vertices appear quite far -- at distance $\Theta(N_1)$ -- from the bottom boundary of the box $B(\iset')$. In general, in a level-$i$ instance, this distance needs to be roughly $\Theta(N_i)$, to allow the routing in the \yi case (recall that $N_i$ is the number of the demand pairs that can be routed in the \yi case). Therefore, if we replace each level-$1$ demand pair by a level-$i$ instance, some of the paths in the new level-$(i+1)$ instance may ``cheat'' by passing through the boxes $B(\iset')$ of level-$i$ instances $\iset'$, and exploiting the spacing between the destination vertices and the bottom boundary of each such box. For example, it is now possible that in a variable gadget, we will be able to route many demand pairs from each set $\mset^X(x),\mset^T(x)$ and $\mset^F(x)$ simultaneously. A simple way to get around this problem is to create more level-$i$ instances, namely: we replace each source-destination pair from a level-$1$ instance by a collection of $c_{i+1}$ level-$i$ instances. The idea is that, if the number of the demand pairs we try to route in many such $c_{i+1}$-tuples of level-$i$ instances is large enough, then on average only a small fraction of the routing paths may cheat by exploiting the spacing between the destination vertices and their corresponding box boundaries, and this will not affect the overall accounting by too much. However, if the formula $\phi$ is a \ni, then we will only attempt to route $N'_i$ demand pairs from each level-$i$ instance, and therefore we need to ensure that $c_{i+1}N'_i\gg N_i$ in order for the gap to grow in the current level. In other words, the number of copies of the level-$i$ instances that we use in the level-$(i+1)$ instance construction should be proportional to the gap between the {\sc Yes-} and the \ni cost at level $i$ (times $n$). A simple calculation shows that, if we follow this approach, we will obtain a gap of $2^{\Omega(i)}$ in level-$i$ instances, with construction size roughly $n^{\Theta(i)}\cdot 2^{\Theta(i^2)}$. Therefore, after $i^*=\Theta(\log n)$ iterations, we obtain a gap of $2^{\Omega(\sqrt{\log n'})}$, where $n'$ is the size of the level-$i^*$ instance. This rapid growth in the instance size is the main obstacle to obtaining a  stronger hardness of approximation factor using this approach.

\label{------------------------------------------------sec: level i--------------------------------------}
%-----------------------------------------------------------
%-----------------------------------------------------------
%-----------------------------------------------------------
%-----------------------------------------------------------
\section{The Full Construction}\label{sec: level i}
%-----------------------------------------------------------
%-----------------------------------------------------------
%-----------------------------------------------------------
%-----------------------------------------------------------

In this section we provide a full description of our construction. The resulting graphs will have maximum vertex degree $4$. We show in Section~\ref{sec: EDP} of the Appendix how to modify the resulting instances in order to obtain the proof of Theorem~\ref{thm: main} for sub-cubic graphs.
We start with setting the parameters.

\paragraph{Parameters.}
The two main parameters that we use are $h=1000/\eps$ and $\delta=8\eps^2/10^{12}$, where $\eps$ is the constant from Theorem~\ref{thm: PCP}.
We define the remaining parameters in terms of these two parameters.

For every level $i\geq 0$ of our construction, we use two parameters, $N_i$ and $N'_i$. We will ensure that for every instantiation of the level-$i$ instance $\iset$, if the initial 3SAT(5) formula $\phi$ is a \yi, then  there is a solution to $\iset$ routing $N_i$ demand pairs, that respects the box $B(\iset)$, and if $\phi$ is a \ni, then no solution to $\iset$ can route more than $N'_i$ demand pairs. We define the parameters $N_i,N'_i$ inductively, starting with $N_0=N'_0=1$. Assume now that for some $i\geq 0$, we are given the values of $N_i$ and $N_{i}'$. Let $g_i=N_i/N_i'$ be the gap between the {\sc Yes}- and the \ni solution values at level $i$, and let $c_{i+1}=10^8h^2g_i=O(g_i)$. Parameter $c_{i+1}$ will be used in our construction of the level-$(i+1)$ instance. We then set $N_{i+1}=n c_{i+1} (200h/3+1)N_i$ and $N_{i+1}'=(1-\delta)n c_{i+1} (200h/3+1)N'_i$. It is immediate to verify that $g_{i+1}=\frac{g_i}{1-\delta}$, and so for all $i\geq 0$, $g_i=\left(\frac{1}{1-\delta}\right)^i$, and $N_i=O(n\cdot g_{i-1})\cdot N_{i-1}=(\rho n)^i\cdot 2^{O(i^2)}$, for some absolute constant $\rho$. We set the parameters $L_i,L'_i$ and $H_i$ below, but we will ensure that each of these parameters is bounded by $20N_i^3$. Our construction has $i^*=\log n$ levels, giving us a gap of $2^{\Omega(\log n)}$ between the {\sc Yes}- and the \ni solution values. For our final level-$i^*$ instance, we can choose the grid $G_{i^*}$ to be of size $(Q\times Q)$, where $Q=2L_{i^*}+2L'_{i^*}+4H_{i^*}=O(N_{i^*}^3)$, and so the instance size is bounded by $n'=O(N_{i^*}^6)=n^{O(\log n)}\cdot 2^{O(\log^2 n)}=2^{O(\log^2n)}$. Overall, we obtain a factor $2^{\Omega(\sqrt{\log n'})}$-hardness of approximation, unless all problems in \NP have deterministic algorithms running in time $n^{O(\log n)}$.

For $i\geq 0$, we set the parameter $H_i=20N_i$. The following bound on $H_i$ follows immediately from the definitions of our parameters, and we use it several times in our analysis:

\begin{equation}
H_i=20N_i=20 g_i N'_i=\frac{20 c_{i+1}N'_i}{10^8h^2}=\frac{2c_{i+1}\eps^2N'_i}{10^{13}} \label{bound on H_i}
\end{equation}

For all $i\geq 0$, we set $L'_i=20N_i^3$. Parameter $L_i$ is defined as follows: $L_0=1$, and for $i>0$, $L_i=(80h+2)c_iL_{i-1}n\leq  (80h+2)c_{i}\cdot 20N_{i-1}^3n\leq 20N_i^3$.

\paragraph{Level-$0$ Instance.}
A level-$0$ instance $\iset$ consists of a single demand pair $(s,t)$. In order to be consistent with our definitions, we let $Z(\iset)$ be a path of length $L_0=1$, with $s$ mapped to the unique vertex of $Z(\iset)$. Recall that $N_0=N'_0=1$, $H_0=20N_0=20$, and $L'_0=20N_0^3=20$. Let $G'_0$ be a grid of length $L'_0=20$ and height $H_0=20$. We obtain the box $B(\iset)$ from $G'_0$ by deleting all vertices lying on the left, bottom, and right boundaries of $G'_0$.  Let $R'$ be the middle row of $G'_0$. We map $t$ to any vertex of $B(\iset)$ that belongs to $R'$. It is immediate to verify that for every instantiation of this level-$0$ instance, there is a solution that routes one demand pair and respects $B(\iset)$, regardless of whether we are in the {\sc Yes} or the \ni. 

From now on we focus on constructing instances of levels $i>0$. As already mentioned, the construction we obtain for level $i=1$ is similar to that described in Section~\ref{sec: level 1}.

\subsection{Level-$(i+1)$ Construction}

A level-$(i+1)$ instance is obtained by combining a number of level-$i$ instances.
We start with a quick overview of the level-$i$ construction.

\paragraph{Level-$i$ Construction Overview.}
Recall that a definition of a level-$i$ instance $\iset$ consists of a path $Z(\iset)$ of length $L_i$, a grid $G'_i$ of height $H_i$ and length $L'_i$, together with a cut-out box $B(\iset)\subseteq G'_i$, and a collection $\mset$ of demand pairs, such that all vertices of $S(\mset)$ are mapped to vertices of $Z(\iset)$, while all vertices of $T(\mset)$ are mapped to distinct vertices of $B(\iset)\cap R'$, where $R'$ is the middle row of $G'_i$. Path $Z(\iset)$ will eventually become a sub-path of the first row of the larger grid $G_i$, and box $B(\iset)$ will be placed inside $G_i$, within distance at least $H_i$ from its boundaries.% Let $R^*$ be the row of $G'_i$ at a distance exactly $H_i/2=5N_i$ from its bottom boundary (so it is roughly the middle row of $G'_i$). Our construction will ensure that all destination vertices are located on row $R^*$. Note that we have already ensured this for level $i=0$.

For every destination vertex $t$, we draw a straight line $Q_t$ from $t$ to the bottom of $B(\iset)$. This line contains at most $H_i/2$ vertices of the graph. We will use these lines in the analysis of the \ni case of the level-$(i+1)$ construction. 

It will sometimes be useful to place several level-$i$ instances side-by-side. For an integer $c>0$, a \emph{$c$-wide level-$i$ instance} $\iset$ is constructed as follows. Intuitively, we construct $c$ disjoint level-$i$ instances $\iset_1,\ldots,\iset_c$, placing their intervals $Z(\iset_j)$ side-by-side on $Z(\iset)$ and placing their boxes $B(\iset_j)$ side-by-side inside $B(\iset)$. Formally, for each $1\leq j\leq c$, let $\mset_j$ be the set of the demand pairs of the level-$i$ instance $\iset_j$, and let $G^j$ be the corresponding grid $G'_i$ for that instance. The set of the demand pairs of instance $\iset$ is $\mset=\bigcup_{j=1}^c\mset_j$.
We let $Z(\iset)$ be a path of length $c\cdot L_i$, partitioned into $c$ equal-length intervals $A_1,\ldots,A_c$. We let $G'$ be a grid of length $cL'_i$ and height $H_i$, that we partition into $c$ sub-grids of length $L'_i$ and height $H_i$ each. For $1\leq j\leq c$, we map the vertices of $Z(\iset_j)$ to the vertices of $A_j$ in a natural way. This defines the mapping of the vertices of $S(\mset)$ to the vertices of $Z(\iset)$. For each $1\leq j\leq c$, we map the vertices of $G^j$ to the $j$th sub-grid of $G'$, and delete from $G'$ all vertices to which the vertices of $G^j\setminus B(\iset_j)$ are mapped. The resulting subgraph of $G'$ becomes the box $B(\iset)$, and the above mapping defines the mapping of the destination vertices in $T(\mset)$ to the vertices of $B(\iset)$. Note that if $R'$ denotes the  middle row of $G'$, then all vertices of $T(\mset)$ lie on $R'$. 
In order to instantiate this instance, we need to select a grid $G$ of length at least $c(2L_i+2L'_i+4H_i)$ and height at least $3cH_i$, a sub-path $P$ of the first row of $G$ of length $cL_i$, to which $Z(\iset)$ will be mapped, and a sub-grid $G''$ of $G$ of the same dimensions as $G'$, to which the vertices of $G'$ will be mapped. The vertices of $G''$ must be at a distance at least $cH_i$ from the boundaries of $G$.
Clearly, for any instantiation of this instance, in the \yi case, there is a solution $\pset$ routing $cN_i$ demand pairs, such that, if we denote, for each $1\leq j\leq c$, by $\pset_j\subseteq \pset$ the set of paths routing demand pairs in $\mset_j$, then $|\pset_j|=N_i$ and $\pset_j$ respects the box $B(\iset_j)$. On the other hand, in the \ni case, no solution to $\iset$ can route more than $cN'_i$ demand pairs.

We now assume that we are given a construction of a level-$i$ instance, for $i\geq 0$, and describe a construction of a level-$(i+1)$ instance $\iset$. 
For convenience, we denote $c_{i+1}$ by $c$. 
 We use parameters $L_{i+1},H_{i+1}, L'_{i+1}$ described above, so $H_{i+1}=20N_{i+1}$, $L'_{i+1}=20N_{i+1}^3$, and $L_{i+1}=(\vargadgetwidth)cL_in$. 

 In order to construct the box $B(\iset)$, we start with a grid $G'_{i+1}$ of length $L'_{i+1}$ and height $H_{i+1}$. 
 We define two sub-grids of $B(\iset)$, of length $9N_{i+1}^3$ and height $16N_{i+1}$ each: grid $B^V$ that will contain all destination vertices of the demand pairs representing the variables of the formula $\phi$, and grid $B^C$ that will contain all destination vertices of the demand pairs representing the clauses of the formula $\phi$. In order to construct both grids, let $\rset$ be the set of all rows of $G'_{i+1}$, excluding the top $2N_{i+1}$ and the bottom $2N_{i+1}$ rows, so that $|\rset|=H_{i+1}-4N_{i+1}=16N_{i+1}$. Let $\wset$ be a consecutive set of $9N_{i+1}^3$ columns of $G'_{i+1}$, starting from the second column, and let $\wset'$ be a consecutive set of $9N_{i+1}^3$  columns of $G'_{i+1}$, terminating at the penultimate column. We then let $B^V$ be the sub-grid of $G_{i+1}'$ spanned by the rows in $\rset$ and the columns in $\wset$, and we let $B^C$ be the sub-grid of $G_{i+1}'$ spanned by the rows in $\rset$ and the columns in $\wset'$ (see Figure~\ref{fig: whole box}). Notice that at least $2N_{i+1}$ columns of $G_{i+1}'$ separate the two grids. We delete the bottom, left, and right boundaries of $G_{i+1}'$ to turn it into a cut-out box, that we refer to as $B(\iset)$ from now on. We will later delete some additional vertices from $B(\iset)$.
 
\begin{figure}[h]
\scalebox{0.6}{\includegraphics{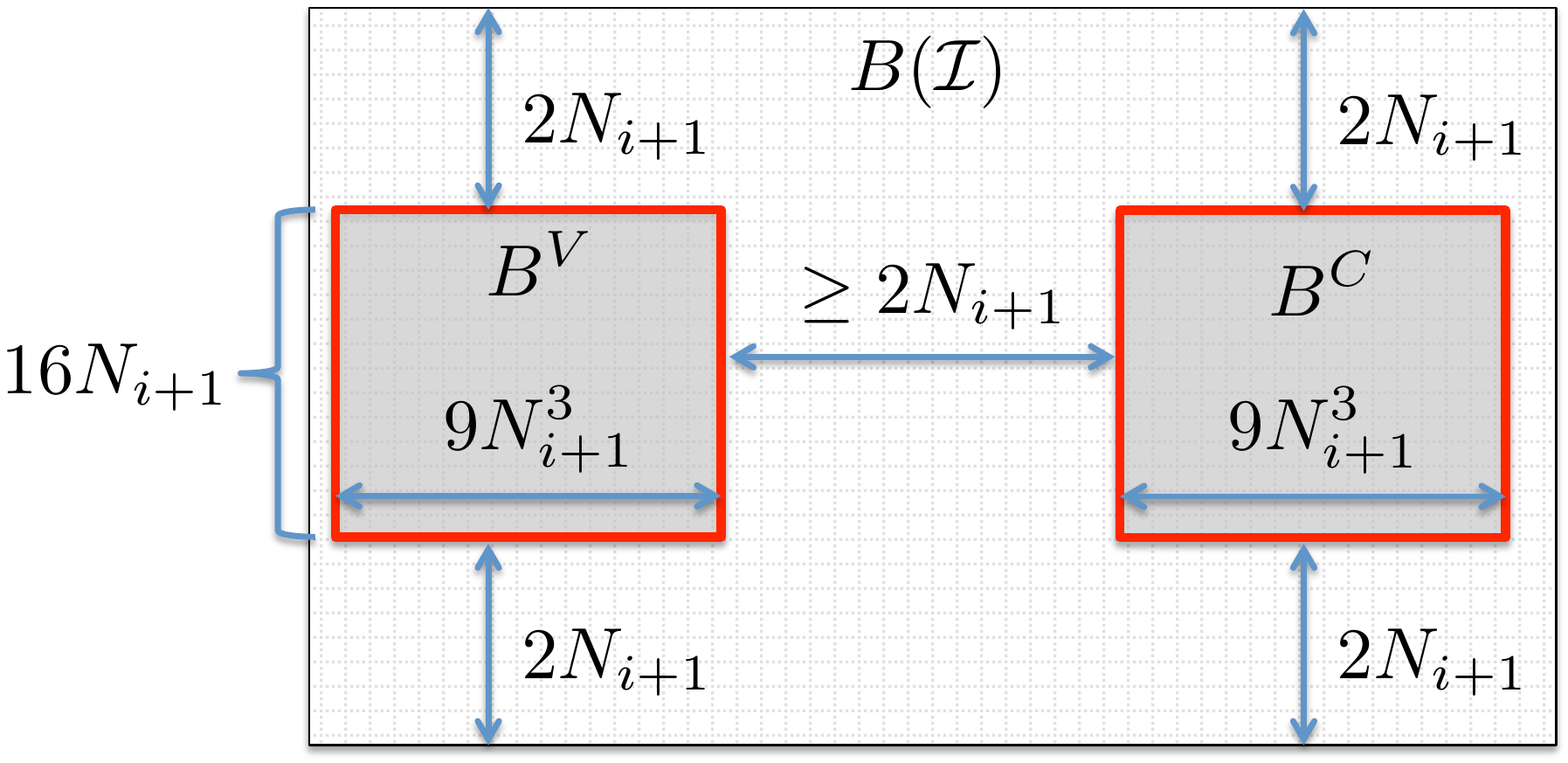}}
\caption{High-level view of box $B(\iset)$\label{fig: whole box}}
\end{figure} 
 
Next, we define smaller sub-grids of the two grids $B^V$ and $B^C$. For every variable $x_j$ of $\phi$, we define a sub-grid $B(x_j)$ of $B^V$, of length $L^V=4N_{i+1}+(70h+2)cL'_i$ and height $H^V=H_i+2N_{i+1}$. This box will contain all destination vertices of the demand pairs that represent the variable $x_j$. We place the boxes $B(x_1),\ldots,B(x_n)$ inside grid $B^V$, so that they are aligned and $2N_{i+1}$-separated. In other words, the middle row of each box is contained in the middle row of $B^V$, and the horizontal distance between every pair of these boxes, and between each box and the left and right boundaries of $B^V$ is at least $2N_{i+1}$. % We partition the columns of $B^V$ into consecutive sets $\wset'_1,\wset_1,\wset'_2,\wset_2,\ldots,\wset_n,\wset'_{n+1}$, such that for all $1\leq j\leq n$, $|\wset_j|=L^V$, and for all $1\leq j\leq n+1$, $|\wset_j'|\geq 2N_{i+1}$ and for all $1\leq j\leq n$, $|\wset_j|=L^V$. 
Since $n\cdot L^V+(n+1)\cdot 2N_{i+1}\leq 7nN_{i+1}+(70h+2)cL'_in\leq 7nN_{i+1}+1500hc N_i^3n<9N_{i+1}^3$, we can find such grids $B(x_1),\ldots,B(x_n)$. Since $H^V=H_i+2N_{i+1}=20N_i+2N_{i+1}<3N_{i+1}$, there are at least $2N_{i+1}$ rows of $B^V$ above and below these new grids (see Figure~\ref{fig: big var box}).

\begin{figure}[h]
\scalebox{0.6}{\includegraphics{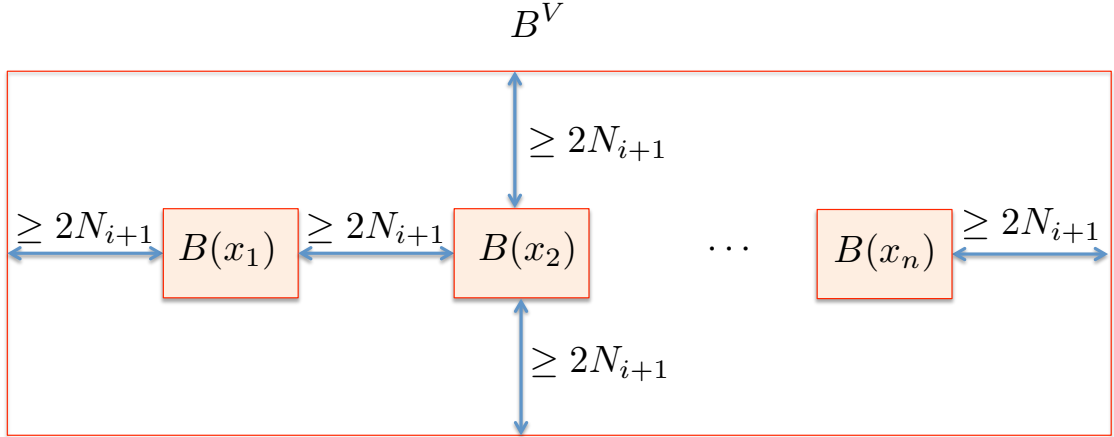}}
\caption{Box $B^V$. Each box $B(x_j)$ has length $L^V=4N_{i+1}+(70h+2)cL'_i$ and height $H^V=H_i+2N_{i+1}$.  \label{fig: big var box}}
\end{figure}

We similarly define sub-grids $B(C_1),\ldots,B(C_m)$ of $B^C$. Each such sub-grid has height $H^C=H_i$ and width $L^C=3chL'_i$. For each clause $C_q\in \cset$, box $B(C_q)$ will contain all destination vertices of the demand pairs that represent this clause. 
We let $B(C_1),\ldots,B(C_m)$ be sub-grids of $H^C$ that are aligned and $4N_{i+1}$-separated. Since $m\cdot L^C+(m+1)\cdot 4N_{i+1}\leq 20nN_{i+1}+15hcL'_in\leq 20nN_{i+1}+300hc N_i^3n<9N_{i+1}^3$, we can find such grids (see Figure~\ref{fig: big clause box}). 
 %Let $\rset''$ be a subset of $H^C$ rows of $B^C$, so that the middle row of $\rset''$ (row $\floor{|\rset''|/2}$) coincides with the middle row of $B^C$. 
 Since $H^C=H_i=20N_i<N_{i+1}$, there are at least $2N_{i+1}$ rows of $B^C$ above and below these new grids. %We partition the columns of $B^C$ into consecutive sets $\hwset'_1,\hwset_1,\hwset'_2,\hwset_2,\ldots,\hwset_m,\hwset'_{m+1}$, such that for all $1\leq q\leq m$, $|\hwset_q|=L^C$, and for all $1\leq q\leq m+1$, $|\hwset_j'|\geq 2N_{i+1}$ and for all $1\leq q\leq m$, $|\hwset_q|=L^C$. 
% For each clause $C_q\in \cset$, we then let $B(C_q)$ be the sub-grid of $B^C$ spanned by the rows of $\rset''$ and the columns of $\hwset_q$. Notice that there is a horizontal distance of at least $2N_{i+1}$ between every consecutive pair of the new boxes, and between every new box and the left and right boundaries of $B^C$ (see Figure~\ref{fig: big var box}).

\begin{figure}[h]
\scalebox{0.6}{\includegraphics{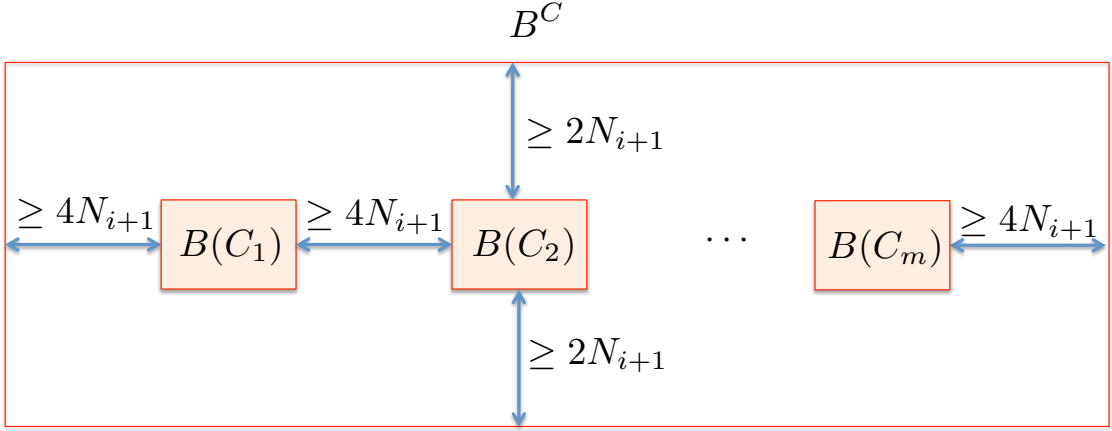}}
\caption{Box $B^C$. Each box $B(C_q)$ has length $L^C=3chL'_i$ height $H^C=H_i$. \label{fig: big clause box}}
\end{figure}

Our construction consists of two parts, called variable gadgets and clause gadgets. For each variable $x_j$, we construct a number of level-$i$ instances $\iset'$, whose corresponding boxes $B(\iset')$ are placed inside $B(x_j)$. Whenever we do so, we delete the corresponding vertices of $B(x_j)$ as described in the preliminaries. We also construct clause gadgets similarly.

\paragraph{Variable Gadgets}
Let $Z(\iset)$ be a path of length $L_{i+1}$, and let $\Pi$ be a partition of $Z(\iset)$ into disjoint contiguous sub-paths (that we sometimes refer to as \emph{intervals}) of length $cL_i$ each. For each $1\leq j\leq n$, we let $I(x_j)$ be a sub-path of $Z(\iset)$, containing exactly $\vargadgetwidth$ consecutive intervals of $\Pi$, so that $I(x_1),I(x_2),\ldots,I(x_n)$ appear on $Z(\iset)$ in this left-to-right order.

Consider some variable $x$ of the 3SAT(5) formula $\phi$ and the corresponding interval $I(x)$ of $Z(\iset)$, containing $\vargadgetwidth$ consecutive intervals of $\Pi$. We further partition $I(x)$ as follows. Let $I^T(x), I^F(x)\subseteq I(x)$ denote the subpaths of $I(x)$ containing the first $(10h+1)$ and the last $(10h+1)$ consecutive intervals of $\Pi$, respectively. Let $I^X(x)$ denote the union of the remaining $60h$ consecutive intervals  of $\Pi$ (see Figure~\ref{fig: level 2 var gadget}).

 \begin{itemize}
 
 \item {\bf (Extra Pairs).} We use $60h$ copies of $c$-wide level-$i$ instances, that we denote by $\iset^X_j(x)$, for $1\leq j\leq 60h$. For each $1\leq j\leq 60h$, we let $Z(\iset_j^X(x))$ be the $j$th interval of $I^X(x)$. We place the corresponding boxes $B(\iset^X_1(x)),\ldots,B(\iset^X_{60h}(x))$ side-by-side, obtaining one box $B_X(x)$ of width $60hcL_i'$ and height $H_i$. We define the placement of this box inside $B(x)$ later. We denote by $\mset^X(x)$ the resulting set of demand pairs, and we refer to them as \emph{the \extra demand pairs of $x$}.
 
% \item {\bf (Left Pairs).}  Similarly, we use $15c'$ copies of $c$-wide level-$i$ instances, that we denote by $\iset^R_j$, for $1\leq j\leq 15c'$. For each $1\leq j\leq 15c'$, we let $Z(\iset_j^R)$ be the $j$th interval of $I^R_x$. As before, we place the corresponding boxes $B(\iset^R_1),\ldots,B(\iset^R_{15c'})$ side-by-side, obtaining one box $B_R$ of width $15cc'L_i$. We denote by $\mset^R(x)$ the resulting set of demand pairs, and we refer to them as \emph{the right demand pairs of $x$}.

 \item  {\bf (\true Pairs).} 
We denote the intervals of $\Pi$ appearing on $I^T(x)$ by: $$A^T_1,Y^T_1,A^T_2,Y^T_2,\ldots, Y^T_{5h},A^T_{5h+1},$$ and we assume that they appear on $I^T(x)$ in this left-to-right order.
 We use  $(5h+1)$ copies of the $c$-wide level-$i$ instance, that we denote by $\iset^T_j(x)$, for $1\leq j\leq 5h+1$. For each $1\leq j\leq 5h+1$, we let $Z(\iset_j^T)$ be the interval $A_j^T$. We place the corresponding boxes $B(\iset^T_1(x)),\ldots,B(\iset^T_{5h+1}(x))$ side-by-side, obtaining one box $B_T(x)$ of width $(5h+1)cL_i'$ and height $H_i$. We denote by $\mset^T(x)$ the resulting set of demand pairs, and we refer to them as \emph{the \true demand pairs of $x$}.

  \item  {\bf (\false Pairs).} 
Similarly, we denote the intervals of $\Pi$ appearing on $I^F(x)$ by:

 $$A^F_1,Y^F_1,A^F_2,Y^F_2,\ldots, Y^F_{5h},A^F_{5h+1},$$ 
 
 and we assume that they appear on $I^F(x)$ in this left-to-right order.
 We use  $(5h+1)$ copies of the $c$-wide level-$i$ instance, that we denote by $\iset^F_j(x)$, for $1\leq j\leq 5h+1$. For each $1\leq j\leq 5h+1$, we let $Z(\iset_j^F)$ be the interval $A_j^F$. We place the corresponding boxes $B(\iset^F_1(x)),\ldots,B(\iset^F_{5h+1}(x))$ side-by-side, obtaining one box $B_F(x)$ of width $(5h+1)cL_i'$ and height $H_i$. We denote by $\mset^F(x)$ the resulting set of demand pairs, and we refer to them as \emph{the \false demand pairs of $x$}.
 \end{itemize}

% Let $B(x)$ be a box of height $2H_i$ (the same as the heights of $B_L,B_R,B_T$ and $B_F$), and width $40cc'L_i+5cc'N_i$. We place the boxes $B_L,B_T,B_F$ and $B_R$ inside box $B(x)$, so that the left boundary of $B_L$ coincides with the left boundary of $B(x)$, and the right boundary of $B_L$ coincides with the left boundary of $B_T$. Similarly, the right boundary of $B_R$ coincides with the right boundary of $B(x)$, and the left boundary of $B_R$ coincides with the right boundary of $B_F$. This leaves a horizontal spacing of $5cc'N_i$ between $B_F$ and $B_T$ (see Figure~\ref{fig: level 2 var gadget}). 

We let $\mset(x)=\mset^X(x)\cup \mset^T(x)\cup \mset^F(x)$. We call the demand pairs in $\mset(x)$ \emph{variable-pairs representing $x$}.

\begin{figure}[h]
\scalebox{0.6}{\includegraphics{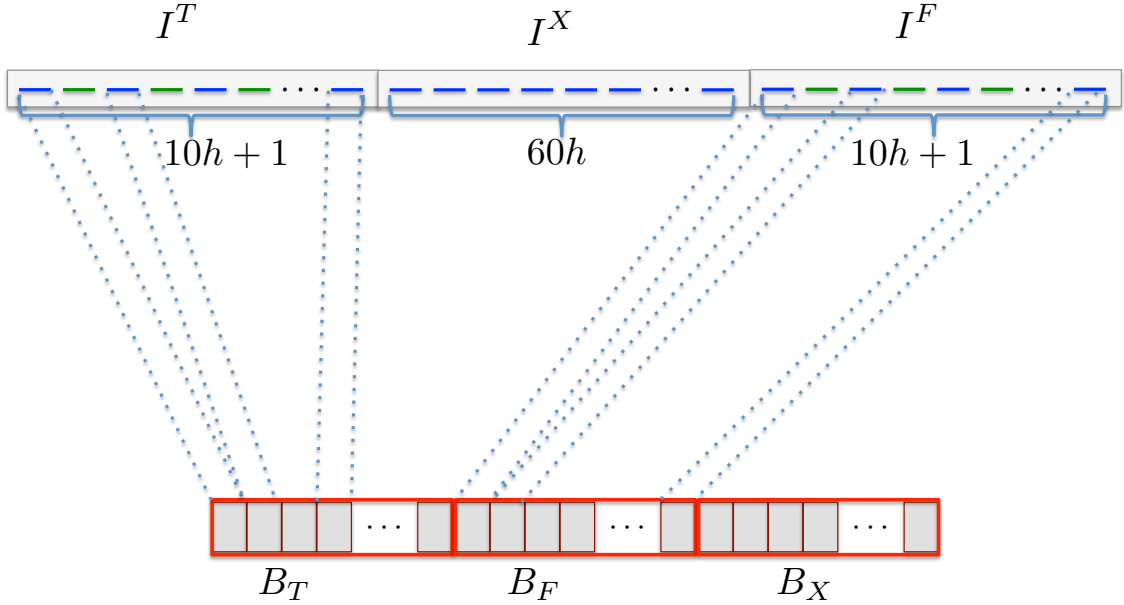}}
\caption{Variable gadget for level-$(i+1)$ instance. Index $x$ is omitted for convenience. Blue sub-intervals of $I^F$ belong to instances $\iset_j^F$ whose destinations lie in $B_F$; green sub-intervals belong to instances $\iset_j(C,x)$ associated with clauses $C\in \cset(x)$. The $(5h)$ green intervals are partitioned into $5$ groups of $h$ consecutive intervals each, and each group belongs to a distinct clause. %Each green interval is an interval of $\Pi$; each blue interval is a union of $c'$ intervals of $\Pi$.
Blue and green intervals of $I^T$ are dealt with similarly.
\label{fig: level 2 var gadget}}
\end{figure}

Recall that the length of box $B_X(x)$ is $60hcL'_i$, while boxes $B_T(x),B_F(x)$ have length $(5h+1)cL'_i$ each. The height of each box is $H_i$. Recall also that box $B(x)$ has length $L^V=  4N_{i+1}+(70h+2)cL'_i$ and height $H^V=H_i+2N_i$.

 We place the boxes $B_T(x),B_F(x)$ and $B_X(x)$ side-by-side inside $B(x)$ in this order, so that the middle row of each box is contained in the middle row of $B(x)$, and there is a horizontal spacing of  $2N_{i+1}$ between the left boundaries of $B_T(x)$ and $B(x)$, and between the right boundaries of $B_X(x)$ and $B(x)$ (see Figure~\ref{fig: var box}). Notice that there is no horizontal spacing between $B_T(x),B_F(x)$ and $B_X(x)$, and all destination vertices lying in $B(x)$ belong to the middle row of $B(x)$, and hence to the middle row of $B(\iset)$.

\begin{figure}[h]
\scalebox{0.5}{\includegraphics{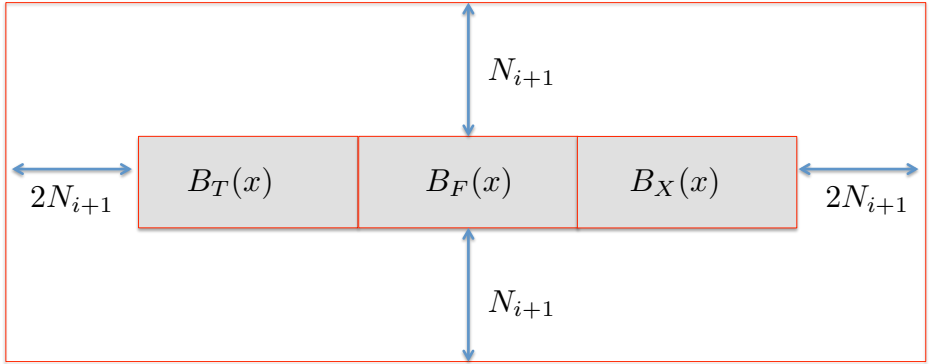}}
\caption{Box $B(x)$. Height: $H^V=H_i+2N_{i+1}$, length: $L^V=4N_{i+1}+(70h+2)cL'_i$. \label{fig: var box}}
\end{figure}

 Consider the set $\cset(x)\subseteq \cset$ of clauses in which variable $x$ appears without negation. Assume without loss of generality that $\cset(x)=\set{C_1,\ldots,C_r}$, where $r\leq 5$. For each $1\leq r'\leq r$, we will create $h$ level-$i$ instances of width $c$, that represent the variable $x$ of $C_{r'}$. We denote these instances by $\iset_j(C_{r'},x)$, for $1\leq j\leq h$. Consider the interval $I^F(x)$. We will use the sub-intervals $Y_j^F$ of $I^F(x)$ as intervals $Z(\iset_{j}(C_{r'},x))$, where, intuitively, intervals corresponding to the same clause-variable pair appear consecutively. Formally, for each $1\leq r'\leq r$, for each $1\leq j\leq h$, we use the interval $Y^F_{(r'-1)h+j}$  of $I^F(x)$ as $Z(\iset_j(C_{r'},x))$, and we say that it is the sub-interval of $I^F(x)$ that belongs to instance $\iset_j(C_{r'},x)$. Intuitively,  if $x$ is assigned the value \false, then we will route a large number of demand pairs in $\mset^F(x)$. The paths routing these pairs will ``block'' the intervals $Y^F_j$ of $I^F(x)$, thus preventing us from routing demand pairs that belong to instances $\iset_j(C_{r'},x)$, for $1\leq j\leq h$ and $C_{r'}\in \cset(x)$.

We treat the subset $\cset'(x)\subseteq \cset$ of clauses containing the negation of $x$ similarly, except that we assign to each resulting instance an  interval $Y_j^T$ of $I^T(x)$. 

%For convenience, we now create one box $B^V$ that holds all variable gadgets  (it is {\bf not} a cut-out box). Let $L^V=2N_{i+1}+(70h+2)cL'_i$ be the length of a single variable box, and $H^V=H_i+2N_{i+1}$ its height. Box $B^V$ has length $nL^V+(n+1)N_{i+1}$ and height $H^V+4N_{i+1}$. Let $R^*$ be the row of $B^V$ at distance exactly $2N_{i+1}-1$ from its bottom boundary. We place the boxes $B(x_1),B(x_2),\ldots,B(x_n)$ in this left-to-right order inside $B^V$, so that their bottom boundaries are aligned with row $R^*$, and there is a horizontal spacing of $N_{i+1}$ between every pair of these boxes, between $B(x_1)$ and the left boundary of $B^V$, and between $B(x_n)$ and the right boundary of $B^V$ (see Figure~\ref{fig: big var box}). We notice that all destination vertices that are contained in $B^V$ lie in the middle row of $B^V$. We let $\mset^V=\bigcup_{h=1}^n\mset(x_h)$ be the set of all variable-pairs.

 \paragraph{Clause Gadgets.}
 Consider some clause $C_q=(\ell_{q_1}\bor \ell_{q_2}\bor \ell_{q_3})$. For each one of the three literals of $C_q$, we construct $h$ level-$i$ width-$c$ instances, with instances $\set{\iset_j(C_q,\ell_{q_z})}_{j=1}^{h}$ representing the literal $\ell_{q_z}$, for $1\leq z\leq 3$. 
 Recall that $B(C_q)$ is a grid of height $H^C=H_i$ and length $L^C=3ch L_i'$. We partition $B(C_q)$ into $h$ sub-grids $B^1(C_q),\ldots,B^h(C_q)$, each of which has height $H_i$ and length $3cL'_i$. For each $1\leq j\leq h$, we place the boxes $B(\iset_j(C_q,\ell_{q_1})),B(\iset_j(C_q,\ell_{q_2})),B(\iset_j(C_q,\ell_{q_3}))$ inside $B^j(C_q)$ side-by-side in this order (see Figure~\ref{fig: clause-gadget}). 
 
% \[B(\iset_1(C_q,\ell_{q_1})),\ldots,B(\iset_{h}(C_q,\ell_{q_1})),B(\iset_1(C_q,\ell_{q_2})),\ldots,B(\iset_{h}(C_q,\ell_{q_2})),B(\iset_1(C_q,\ell_{q_3})),\ldots,B(\iset_{h}(C_q,\ell_{q_3})).\]

 \begin{figure}[h]
\scalebox{0.6}{\includegraphics{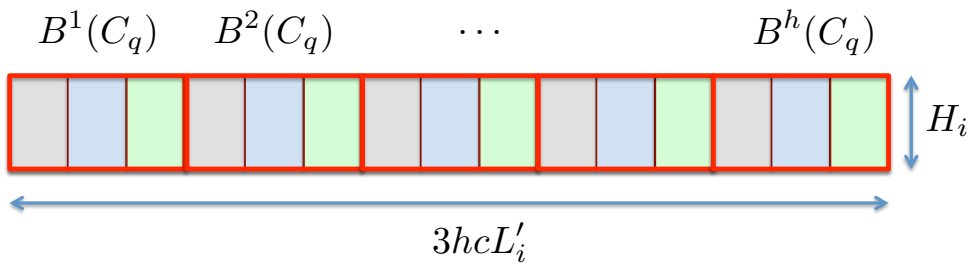}}
\caption{Box $B(C_q)$. Different colors show boxes that represent the three different literals. \label{fig: clause-gadget}}
\end{figure}
 
  The intervals $Z(\iset_j(C_q,\ell_{q_z}))$ are the same as the ones defined in the constructions of the variable gadgets. We denote by $\mset(C_q)$ the set of all demand pairs whose destinations lie in $B(C_q)$, and we call them \emph{clause-pairs representing $C_q$}. For each $1\leq z\leq 3$, we denote by $\mset(C_q,\ell_{q_z})$ the set of all demand pairs that belong to instances $\iset_j(C_q,\ell_{q_z})$, for $1\leq j\leq h$, and we sometimes say that they represent literal $\ell_{q_z}$ of clause $C_q$. We then let $\mset^C=\bigcup_{q=1}^m\mset(C_q)$ be the set of all clause-pairs, and $\mset^V=\bigcup_{j=1}^n\mset(x_j)$ the set of all variable-pairs. Our final set of demand pairs is $\mset=\mset^V\cup \mset^C$. This completes the definition of the level-$(i+1)$ instance.

\label{-------------------------------------------sec: YI-----------------------------------}
 \section{Yes-Instance Analysis}\label{sec: YI}
 
 %------------------------------------------------------------------------------------
%------------------------------------------------------------------------------------
%------------------------------------------------------------------------------------
%------------------------------------------------------------------------------------
%------------------------------------------------------------------------------------

 The goal of this section is to prove the following theorem.
 
 \begin{theorem}\label{thm: yi}
 Assume that the input 3SAT(5) formula $\phi$ is a \yi. Then for all $i\geq 0$, for every instantiation of the level-$i$ instance $\iset$, there is a solution routing $N_i$ demand pairs, that respects the box $B(\iset)$.
 \end{theorem}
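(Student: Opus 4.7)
I will prove the theorem by induction on $i$. The base case $i=0$ is immediate: a level-$0$ instance consists of a single pair $(s,t)$, with $s$ mapped onto the first row of the ambient grid and $t$ inside the cut-out box $B(\iset)$ far from its boundaries. Since $B(\iset)$ is an induced subgraph of a grid containing $t$ and meeting the top boundary of $G_0$ at its opening, it is trivial to connect $s$ to a vertex of the opening by a segment of the first row of $G_0$ followed by a straight descent, and then extend this inside $B(\iset)$ to $t$ using a column and a row of the grid. The resulting single path routes $N_0=1$ demand pair and respects $B(\iset)$.

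For the inductive step, assume the statement for level-$i$ and fix an instantiation of a level-$(i+1)$ instance $\iset$. Fix an assignment $\aset$ satisfying all clauses of $\phi$, and use it exactly as in the level-$1$ sketch to select a subset $\hmset\subseteq\mset$ of demand pairs: for every variable $x$, take $\hmset(x)=\mset^X(x)\cup\mset^T(x)$ if $\aset(x)=\true$ and $\hmset(x)=\mset^X(x)\cup\mset^F(x)$ otherwise; for every clause $C_q$, pick one literal $\ell_{q_z}$ satisfied by $\aset$ and take $\hmset(C_q)=\mset(C_q,\ell_{q_z})$. A simple count shows that $\hmset$ partitions into exactly $n(65h+1)+mh$ level-$i$ width-$c$ sub-instances, each contributing $cN_i$ candidate pairs, for a grand total of $cN_i\,(n(65h+1)+mh)=nc(200h/3+1)N_i=N_{i+1}$ pairs. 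By the inductive hypothesis, for each such sub-instance $\iset'$ there exists, in any valid instantiation, a routing of $cN_i$ pairs that respects $B(\iset')$, so it suffices to construct an \emph{outer routing} that delivers the sources in $S(\hmset)$ to the openings of the corresponding boxes $B(\iset')$ in a way consistent with instantiating each sub-instance.

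The outer routing is built in four stages, exploiting the slack carefully provisioned in the construction. First, I connect all sources of $\hmset$ from $Z(\iset)$ down to the leftmost $|\hmset|$ vertices of the opening of $B(\iset)$ by order-preserving paths on the top rows of $G_{i+1}$, which is possible because $Z(\iset)$ lies on the top boundary far from the vertices of $B(\iset)$, and $|\hmset|\le L'_{i+1}$. Second, inside $B(\iset)$ I route these paths to a carefully chosen set of $|\hmset|$ vertices on the top boundary of $B^V$, and then split them: for each variable $x$ the paths carrying sources of $\hmset(x)$ descend vertically into the top boundary of $B(x)$, while paths carrying sources of $\hmset^C$ are continued through the $2N_{i+1}$-wide vertical corridors separating the boxes $B(x_1),\ldots,B(x_n)$ down to the bottom boundary of $B^V$, and then routed rightward on rows of $B(\iset)$ below $B^V$ to the leftmost $|\hmset^C|$ vertices on the top boundary of $B^C$. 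These corridors and the $\ge 2N_{i+1}$ spacing above/below $B^V$ and $B^C$, together with $|\hmset|<H_{i+1}/2$, ensure the paths can be drawn vertex-disjointly. Third, inside $B^C$ I perform the standard snake-like routing that reorders the clause-pair paths so that for every clause $C_q$ they arrive in order at a consecutive set of vertices on the top boundary of $B^j(C_q)$ corresponding to the chosen literal; the $4N_{i+1}$ spacing and $2N_{i+1}$ rows above/below each $B(C_q)$ guarantee enough room. Finally, inside each box $B(x)$ I route the paths from the top boundary of $B(x)$ to the top boundary of each sub-instance box $B(\iset^T_j(x))$, $B(\iset^F_j(x))$, $B(\iset^X_j(x))$ (or $B(\iset_j(C_{r'},x))$ as appropriate) in an order-preserving fashion, then invoke the inductive hypothesis inside each such sub-instance.

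The main obstacle is the fourth stage in the case $\aset(x)=\false$. In this case the paths that have arrived at the top of $B(x)$ carrying the sources of $\mset^F(x)$ lie to the right of those carrying $\mset^X(x)$ (because on $I(x)$ the interval $I^F$ lies to the right of $I^X$), whereas the boxes $B_F(x)$ and $B_X(x)$ inside $B(x)$ are arranged in the opposite left-to-right order. To resolve this, I will use the $2N_{i+1}$-column margin between the left boundary of $B_T(x)$ and the left boundary of $B(x)$, the $2N_{i+1}$-column margin on the right, and the $N_{i+1}$ free rows above the $B_*(x)$ boxes inside $B(x)$, to perform a disjoint ``swap'' of these two bundles before descending to the openings of the sub-instances; the clause-pair paths from $I^T(x)$ (if any) continue straight down and need no swap, because $\hmset^C(x)$ is confined to intervals $Y^F_j$ of $I^F$ when $\aset(x)=\true$ and to $Y^T_j$ of $I^T$ when $\aset(x)=\false$, i.e.\ their sources lie on the side \emph{opposite} to the selected variable block and do not cross the swap. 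Once these paths reach the openings of the individual boxes $B(\iset^\cdot_j(x))$ in the correct order and in consecutive positions, the inductive hypothesis applied to each sub-instance completes the routing of $cN_i$ pairs per sub-instance, all while respecting $B(\iset^\cdot_j(x))$, and hence respecting $B(\iset)$. The total number of routed pairs is $N_{i+1}$, completing the induction.
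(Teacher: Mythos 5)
Your proof follows essentially the same plan as the paper's: the same induction on $i$, the same selection of $\hmset$ from a satisfying assignment $\aset$, the same count $|\hmset|=N_{i+1}$, the same multi-stage outer routing via snakes/corridors exploiting the $\Theta(N_{i+1})$ spacing, and the same identification of the "swap" obstruction inside $B(x)$ when $\aset(x)=\false$. However, your description of the swap contains a concrete gap. You propose to use "the $2N_{i+1}$-column margin [on the left], the $2N_{i+1}$-column margin on the right, and the $N_{i+1}$ free rows \emph{above} the $B_*(x)$ boxes" to exchange the two bundles \emph{before} descending. This cannot work: by order-preservation, the $\pset^F$ paths arrive at the set $J$ of top-boundary vertices lying directly over the opening of $B_F(x)$ and descend straight down from there, so the entire column range over $B_F(x)$ in the top-$N_{i+1}$-row band of $B(x)$ is occupied. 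Any attempt to carry the $\pset^X$ bundle (which arrives at $J'$, the leftmost positions, inside the left margin) rightward past $B_F(x)$ while staying above the inner boxes would have to cross those descending $\pset^F$ paths. The resource you are missing is the $N_{i+1}$ rows of $B(x)$ \emph{below} the inner boxes: the paper's snake uses the left margin as its first corridor, the \emph{bottom} $N_{i+1}$ rows of $B(x)$ as its second corridor (the "go under" leg), the right margin as its third, and only then the rows above $B_X(x)$ as its fourth corridor to drop into the opening. Without the under-the-boxes leg, there is no vertex-disjoint route from the left margin to the right margin inside $B(x)$, and the swap cannot be realized.

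A smaller slip: in your final stage you write that inside $B(x)$ you route to "the top boundary of each sub-instance box $B(\iset^T_j(x))$, $B(\iset^F_j(x))$, $B(\iset^X_j(x))$ (or $B(\iset_j(C_{r'},x))$ as appropriate)". The boxes $B(\iset_j(C_{r'},x))$ live inside $B(C_{r'})\subseteq B^C$, not inside $B(x)$; those paths never enter $B(x)$ at all, bypassing it through the spacing in $B^V$, exactly as your stage two already says. That parenthetical contradicts your own stage-two description and should be dropped.
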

 
 The remainder of this section is devoted to proving this theorem. The proof proceeds by induction. For $i=0$, $N_0=1$, and it is easy to see that for any instantiation of the level-$0$ instance $\iset$, there is a solution that routes the unique demand pair of this instance and respects the box $B(\iset)$. We now assume that the theorem holds for some $i\geq 0$, and prove it for a level-$(i+1)$ instance, that we denote by $\iset$. We assume that we are given an instantiation of instance $\iset$, that consists of a grid $G_{i+1}$ of length at least $2L_{i+1}+2L_{i+1}'+4H_{i+1}$ and height at least $3H_{i+1}$, the placement  of the path $Z(\iset)$ on the top boundary of $G_{i+1}$, and the placement of the box $B(\iset)$ inside $G_{i+1}$, at distance at least $H_{i+1}$ from its boundaries. We denote the resulting graph by $G$, and the resulting set of demand pairs by $\mset$. Recall that our construction combines a number of level-$i$ instances.  From the induction hypothesis, for each such instance $\iset'$, for every instantiation of instance $\iset'$, there is a set $\pset(\iset')$ of disjoint paths, routing a set of $N_i$ demand pairs of $\iset'$, such that the paths in $\pset(\iset')$ respect the box $B(\iset')$. It is easy to verify that, if a set $\mset^*(\iset')$ of demand pairs of $\iset'$ has a routing that respects $B(\iset')$ in one instantiation of $\iset'$, then it has such a routing in every instantiation of $B(\iset')$. Therefore, for every level-$i$ instance $\iset'$, we can fix one such set $\mset^*(\iset')$ of demand pairs with $|\mset^*(\iset')|=N_i$, and assume that $\mset^*(\iset')$ has a routing that respects $B(\iset')$ in every instantiation of $\iset'$.
 
 Recall that our construction combines level-$i$ instances into $c_{i+1}$-wide level-$i$ instances. Let $\iset''$ be any such $c_{i+1}$-wide level-$i$ instance, and assume that it consists of level-$i$ instances $\iset_1,\ldots,\iset_{c_{i+1}}$. We set $\mset^*(\iset'')=\bigcup_{j=1}^{c_{i+1}}\mset^*(\iset_j)$, so $|\mset^*(\iset'')|=c_{i+1}N_i$. It is easy to see that  for any instantiation of $\iset''$, there is a routing of all demand pairs in $\mset^*(\iset'')$, such that for each $1\leq j\leq c_{i+1}$, the demand pairs in $\mset^*(\iset_j)$ are routed via paths that respects the box $B(\iset_j)$.

Consider now the given instantiation $(G,\mset)$ of the level-$(i+1)$ instance $\iset$.
 We first select the set $\hmset\subseteq \mset$ of demand pairs that we route, and then compute their routing. We fix some assignment $\aset$ to the variables $\set{x_1,\ldots,x_n}$ of $\phi$, that satisfies all clauses. 
 
 \paragraph{Variable Pairs.}
 Let $x$ be some variable, and let $\hmset^X(x)=\bigcup_{j=1}^{60h}\mset^*(\iset^X_j(x))$ --- the set of all demand pairs that are routed by the solutions to the $c_{i+1}$-wide level-$i$ instances $\iset_1^X(x),\ldots,\iset_{60h}^X(x)$. Notice that $|\hmset^X(x)|=60 h c_{i+1}N_i$. 
 If $x$ is assigned the value \true, then we let $\hmset^T(x)=\bigcup_{j=1}^{5h+1}\mset^*(\iset^T_j(x))$, and we set $\hmset^F(x)=\emptyset$. Notice that $|\hmset^T(x)|=(5h+1)c_{i+1}N_i$ in this case.
  Otherwise, we let  $\hmset^F(x)=\bigcup_{j=1}^{5h+1}\mset^*(\iset^F_j(x))$, so $|\hmset^F(x)|=(5h+1)c_{i+1}N_i$,  and we set $\hmset^T(x)=\emptyset$. 
  We denote $\hmset(x)=\hmset^X(x)\cup \hmset^T(x)\cup \hmset^F(x)$, and we let $\hmset^V=\bigcup_{x}\hmset(x)$, so $|\hmset^V|=nc_{i+1}(65h+1)N_i$.
 
\paragraph{Clause Pairs.} Let $C_q\in \cset$ be a clause, and let $\ell_q$ be a literal of $C_q$ whose value is \true (if there are several such literals, we select any one of them arbitrarily). We say that clause $C_q$ \emph{chooses} the literal $\ell_q$. For simplicity, we denote $\hmset_j(C_q)=\mset^*(\iset_j(C_q,\ell_q))$, for all $1\leq j\leq h$, and we let $\hmset(C_q)=\bigcup_{j=1}^{h}\hmset_j(C_q)$. Let $\hmset^C= \bigcup_{q=1}^m\hmset(C_q)$. Clearly, for each $1\leq q\leq m$, $|\hmset(C_q)|=hc_{i+1}N_i$, and overall, $|\hmset^C|=mhc_{i+1} N_i=5nhc_{i+1}N_i/3$.  
 
 Finally, we let $\hmset=\hmset^V\cup \hmset^C$, so $|\hmset|= nc_{i+1}\cdot(65h+1)N_i+5nc_{i+1}h N_i/3=n\cdot N_ic_{i+1} (200h/3+1)=N_{i+1}$.
 It is now enough to prove the following lemma.
 
 \begin{lemma}\label{lem: routing yi}
 There is a set $\pset$ of node-disjoint paths in graph $G$, routing all demand pairs in $\hmset$, such that $\pset$ respects box $B(\iset)$.
 \end{lemma}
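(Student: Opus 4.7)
The plan is to construct the routing in stages, moving from the sources on $Z(\iset)$ downward toward their destinations while maintaining a left-to-right ordering within carefully chosen groups; the last stage invokes induction on the embedded level-$i$ instances. In Stage~1, I route from $S(\hmset)$ on $Z(\iset)$ to a contiguous block $A$ of $|\hmset|$ vertices on the opening of $B(\iset)$ by $|\hmset|$ order-preserving node-disjoint paths that live in the top strip of $G$; this is immediate because every source lies on the top boundary of $G$ and there are at least $H_{i+1}$ rows of vertical slack between $Z(\iset)$ and $B(\iset)$. In Stage~2, I extend these paths from $A$ down to a set $\Gamma$ of $|\hmset|$ vertices on the top boundary of $B^V$, chosen so that variable-paths originating in $I(x_j)$ end directly above the sub-box $B(x_j)$, while clause-paths originating in $I(x_j)$ end above the $2N_{i+1}$-wide vertical gap adjacent to $B(x_j)$ inside $B^V$. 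This placement is feasible because at most $5hc_{i+1}N_i \ll 2N_{i+1}$ clause-paths originate in $I(x_j)$, so each gap accommodates all of them.

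Stage~3 descends inside $B^V$: variable-paths for $\hmset(x_j)$ drop as straight column segments into the top of $B(x_j)$; clause-paths drop as straight column segments inside the gaps between the $B(x_j)$'s and exit on the bottom boundary of $B^V$ at a set $\Gamma''$. Stage~4 uses the $2N_{i+1}$-column separation between $B^V$ and $B^C$, together with the $2N_{i+1}$-row strip of $B(\iset)$ immediately below both boxes, to route the clause-paths order-preservingly from $\Gamma''$ around the bottom of $B^V$ and $B^C$ up to the set $\Gamma'''$ of $|\hmset^C|$ leftmost vertices on the opening of $B^C$. Stage~5 performs a standard snake-like wiring inside $B^C$, using the $2N_{i+1}$ rows of $B^C$ above and below the $B(C_q)$'s and the $4N_{i+1}$-column spacing between consecutive $B(C_q)$'s, to distribute the paths so that for every $q$ and $j$ the $c_{i+1}$ paths of $\hmset_j(C_q)$ arrive, in the correct left-to-right order, on the top boundary of $B(\iset_j(C_q,\ell_q))$. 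The snake routing succeeds because each clause chose a single literal, so the $c_{i+1}$ sources of $\hmset_j(C_q)$ appear consecutively on $\Gamma'''$.

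Stage~6 handles each variable box $B(x)$. If $\aset(x)=\true$, then $\hmset^F(x)=\emptyset$, and the paths entering $B(x)$ at its top boundary already appear in the same left-to-right order as the top boundaries of $B_T(x)$ and $B_X(x)$; a straight column descent completes the connection. If $\aset(x)=\false$, however, the block of paths destined for $B_F(x)$ arrives to the \emph{right} of the block destined for $B_X(x)$, while $B_F(x)$ lies to the \emph{left} of $B_X(x)$ inside $B(x)$. I handle this by using the $2N_{i+1}$-wide vertical strip on the right of $B_X(x)$, the $N_{i+1}$ rows of $B(x)$ above the three sub-boxes, and the $N_{i+1}$ rows below them to detour the $B_F$-block around $B_X(x)$: the $B_F$-paths descend the right-side strip, pass under $B_X(x)$ in the rows below it, and rise to the top of $B_F(x)$, while the $B_X$-paths travel over the top of $B(x)$ and descend straight into $B_X(x)$. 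This order-swap is the main obstacle in the argument, and is the one place where the $2N_{i+1}$ padding built into the dimensions of $B(x)$ is genuinely needed.

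Finally, Stage~7 invokes the inductive hypothesis on each $c_{i+1}$-wide level-$i$ instance $\iset''$ appearing in the construction, namely each of $\iset_j^T(x), \iset_j^F(x), \iset_j^X(x)$, and $\iset_j(C_q,\ell_q)$: by the inductive assumption there is a routing of $\mset^*(\iset'')$ whose paths respect $B(\iset'')$, and whose source endpoints are exactly the vertices of $Z(\iset'')$ reached by the previous stages. Concatenating the six stages with these inner routings produces a set $\pset$ of node-disjoint paths routing all of $\hmset$. To see that $\pset$ respects $B(\iset)$, note that Stages~1--2 bring the paths to the \emph{leftmost} $|\hmset|$ vertices of the opening of $B(\iset)$ in the original left-to-right order of $S(\hmset)$ on $Z(\iset)$; every subsequent stage operates inside $B(\iset)$; and the ``respects'' property of each inner routing, together with the order-preservation in Stages~1--2, gives the required first-vertex intersection condition for $\pset$ on the opening of $B(\iset)$.
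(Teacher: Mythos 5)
Your overall strategy matches the paper's: route order-preservingly from $Z(\iset)$ into the opening of $B(\iset)$, descend through $B^V$, peel the clause-paths off to $B^C$ through the gap between the two boxes, snake inside $B^C$, and then invoke the induction inside the level-$i$ boxes. Stages~1--5 and~7 are sound. The gap is in Stage~6, the order-swap inside $B(x)$ when $\aset(x)=\false$. You send the block of $\hmset^F(x)$-paths down the right strip, under $B_X(x)$, and then ask it to ``rise to the top of $B_F(x)$.'' This step cannot be executed. The sub-boxes $B_T(x),B_F(x),B_X(x)$ abut each other with \emph{no} column spacing between them, and each is a union of cut-out boxes whose left, right and bottom boundary rows have been deleted from $G$; in particular the entire bottom row of $B_F(x)$ is absent from the graph. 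Paths arriving in the rows beneath $B_F(x)$ therefore have no column along which to climb: not through the interior of $B_F(x)$ (deleted), and not through a gap between $B_F(x)$ and a neighbor (nonexistent). The only admissible entrance to the level-$i$ boxes inside $B_F(x)$ is from directly above their openings, and your route never gets there.

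The paper handles the swap the other way around, and this is not cosmetic. It chooses the set $A(x_j)$ on the top boundary of $B(x_j)$ so that, after the order-preserving paths of the earlier stages arrive, the $\hmset^F(x_j)$-paths land on a set $J$ lying \emph{directly above} the openings inside $B_F(x_j)$, while the $\hmset^X(x_j)$-paths land on the set $J'$ of the leftmost $60hc_{i+1}N_i < N_{i+1}$ vertices of the top boundary of $B(x_j)$ --- which fits entirely inside the $2N_{i+1}$-column gap to the \emph{left} of $B_T(x_j)$, so $J'$ precedes $J$ and order-preservation is respected. The $\hmset^F$-paths then drop straight down into $B_F(x_j)$, and it is the $\hmset^X$-paths that take the detour: down the left strip $\Y_1$, across the bottom strip $\Y_2$ below all three sub-boxes, up the right strip $\Y_3$, and finally leftward in the rows above $B_X(x_j)$ via $\Y_4$, entering $B_X(x_j)$ from above. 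With this assignment the snake region and the vertical-drop region of $B(x_j)$ are disjoint, so node-disjointness is immediate. In your version both blocks would be forced to share the top strip (and part of the left strip) of $B(x_j)$, which makes disjointness far harder to establish. You identified the correct obstruction --- the order swap is indeed the crux of the lemma --- but you routed the wrong block around it, and you did not anchor the straight-drop block at the columns directly above $B_F(x_j)$ where the drop is actually possible.
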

 
% We provide a proof sketch of the lemma here. Turning this sketch into a formal proof is routine but somewhat tedious, and is deferred to Section~\ref{sec: YI-routing} of the Appendix.

For convenience, we denote the first row of the grid $G_{i+1}$ by $R$. 
 Let $S'\subseteq S(\hmset)$ be any subset of the source vertices of the demand pairs in $\hmset$. We say that the sources of $S'$ appear \emph{consecutively} on $R$, iff there is a sub-path $P$ of $R$ that contains all the vertices of $S'$ and does not contain any vertex of $S(\hmset)\setminus S'$. We let $\oset$ be the left-to-right ordering of the vertices of $S(\hmset)$ on row $R$.

  Consider some variable $x$, and denote by $\hmset'(x)$ the set of all demand pairs in $\hmset^C$ whose sources lie on the interval $I(x)$ --- these are the demand pairs representing the clauses $C_q$ that chose either $x$ or $\neg x$ as their literal. Assume first that $x$ is assigned the value \true. Then all sources of the demand pairs in $\hmset(x)$ lie on the intervals $I^X(x)$ and $I^T(x)$, while all sources in set $S(\hmset'(x))$ lie on $I^F(x)$: indeed, since $x$ is assigned the value \true,  the corresponding clauses must contain $x$ without negation and so their sources lie on $I^F(x)$. Therefore, the sources of each set $\hmset(x)$ and $\hmset'(x)$ are consecutive on $R$. If $x$ is assigned the value \false, then similarly all sources of the demand pairs in $\hmset(x)$ are consecutive on $R$, while all sources of the demand pairs in $\hmset'(x)$ appear on $I^T(x)$ and are therefore also consecutive on $R$. In either case, for every clause $C_q\in \cset$, the vertices of $S(\hmset(C_q))$ are consecutive on $R$.

 In order to construct the routing, it is convenient to use special subgraphs of $G$ that we call snakes, in which routing can be done easily. We start by defining a corridor, and then combine several corridors to define a snake.
 
 Recall that our graph $G$ is a subgraph of a grid $G_{i+1}$. Recall also that, given a set $\rset$ of consecutive rows of $G_{i+1}$ and a set $\wset$ of consecutive columns of $G_{i+1}$, we denoted by $\Y(\rset,\wset)$ the subgraph of $G_{i+1}$ induced by the set $\set{v(j,j')\mid R_j\in \rset, W_{j'}\in \wset}$ of vertices. We say that $\Y=\Y(\rset,\wset)$ is a \emph{corridor} iff every vertex of $\Y$ belongs to $G$. For convenience, we will say that $\Y$ is a corridor spanned by the rows in $\rset$ and the columns of $\wset$.  Let $R'$ and $R''$ be the first and the last row of $\rset$ respectively, and let $W'$ and $W''$ be the first and the last column of $\wset$ respectively. Each of the four paths $\Y\cap R',\Y\cap R'',\Y\cap W'$ and $\Y\cap W''$ is called a \emph{boundary edge} of $\Y$, and their union is called the \emph{boundary} of $\Y$. We say that two corridors $\Y,\Y'$ are \emph{internally disjoint}, iff every vertex $v\in \Y\cap \Y'$ belongs to the boundaries of both corridors. We say that two internally disjoint corridors $\Y,\Y'$ are \emph{neighbors} iff $\Y\cap \Y'\neq \emptyset$.
 
 We are now ready to define snakes. A snake $\yset$ of length $\ell$ is a sequence $\Y_1,\Y_2,\ldots,\Y_{\ell}$ of $\ell$ corridors that are pairwise internally disjoint. Moreover, for all $1\leq \ell',\ell''<\ell$, $\Y_{\ell'}$ is a neighbor of $\Y_{\ell''}$ iff $|\ell'-\ell''|=1$. We say that the width of the snake is $w$ iff for each $1\leq \ell'\leq \ell$, $\Y_{\ell'}$ is spanned by a set of at least $w$ rows and by a set of at least $w$ columns, and for all $1\leq \ell'<\ell$, $\Y_{\ell'}\cap \Y_{\ell'+1}$ contains at least $w$ vertices.
 We use the following simple claim, whose proof is deferred to the Appendix, for routing in snakes.
 
 %------------------------------------------------
%------------------------------------------------
%------------------------------------------------
 \begin{claim}\label{claim: routing in a snake}
 Let $\yset=(\Y_1,\ldots,\Y_{\ell})$ be a snake of width $w$, and let $A,A'$ be two sets of vertices with $|A|=|A'|\leq w-2$, such that the vertices of $A$ lie on a single boundary edge of $\Y_1$ and the vertices of $A'$ lie on a single boundary edge of $\Y_{\ell}$. Then there is a set $\qset$ of node-disjoint paths contained in $\bigcup_{\ell'=1}^{\ell}\Y_{\ell'}$, that  connect every vertex of $A$ to a distinct vertex of $A'$. 
 \end{claim}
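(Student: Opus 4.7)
The plan is to prove the claim by induction on the length $\ell$ of the snake, with the real content residing in a single-corridor base case. For $\ell = 1$, the snake consists of one corridor $\Y_1$ spanned by at least $w$ consecutive rows and at least $w$ consecutive columns, and both $A$ and $A'$ lie on (possibly equal) boundary edges of $\Y_1$ with $|A| = |A'| \leq w - 2$. I would establish as a sub-claim the following grid-routing fact: in any rectangular sub-grid of dimensions $r \times c$ with $r, c \geq k + 2$, any $k$ vertices on one boundary edge can be connected by node-disjoint paths to any $k$ vertices on any (possibly the same) boundary edge. This is handled by a straightforward case analysis on the relative positions of the two edges: if the two edges are parallel, route the terminals monotonically along columns (or rows) with a vertical/horizontal rerouting to realize the required permutation, using the two rows/columns of slack to accommodate the turns; if they are perpendicular, first pull the terminals off the first edge along parallel column-segments into distinct rows, then traverse horizontally to the second edge in a snake-like pattern; if the two edges coincide, drop the terminals one row inwards, cross to the opposite side, and re-enter.

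For the inductive step, assume the claim for snakes of length less than $\ell$. The interface $\Y_1 \cap \Y_2$ is contained in a single boundary edge of each of $\Y_1$ and $\Y_2$ and contains at least $w$ vertices. I would pick a set $A^*$ of $|A|$ consecutive vertices along this interface. Applying the base case to $\Y_1$ with source set $A$ and target set $A^*$ yields a collection $\qset_1$ of node-disjoint paths inside $\Y_1$ realizing the routing from $A$ to $A^*$. Since $A^*$ now lies on a single boundary edge of $\Y_2$ and $|A^*| \leq w - 2$, the inductive hypothesis applied to the shorter snake $(\Y_2, \ldots, \Y_\ell)$ with source set $A^*$ and target set $A'$ yields a collection $\qset'$ of node-disjoint paths contained in $\bigcup_{\ell' \geq 2} \Y_{\ell'}$. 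Concatenating each path of $\qset_1$ with the path of $\qset'$ that shares its endpoint in $A^*$ produces the desired set $\qset$. Node-disjointness of the concatenation follows from the internal disjointness of the corridors: paths in $\qset_1$ and $\qset'$ can only meet inside $\Y_1 \cap \Y_2$, and by construction they meet precisely at the matching endpoint in $A^*$.

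The main obstacle is the single-corridor base case, and in particular the perpendicular-edges configuration, because it forces the routing to ``turn a corner.'' The crucial quantitative input is the slack $|A| \leq w - 2$, which guarantees two spare rows and two spare columns on every side; this is exactly enough to perform a monotone staircase routing without collisions. Once the base case is in hand, the induction is essentially mechanical, since the freedom to choose which $|A|$ consecutive vertices of the interface $\Y_1 \cap \Y_2$ to route to means the width condition is preserved without any additional bookkeeping.
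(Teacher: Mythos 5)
Your overall strategy — induction on $\ell$, a one-corridor base case, and peeling off a single corridor per inductive step — is exactly the paper's. The one substantive difference in the successful parts is the base case: you sketch an explicit, case-by-case construction (parallel, perpendicular, coincident boundary edges), whereas the paper dispatches the base case non-constructively via Menger's theorem. It passes to $\Y' = \Y$ minus all boundary vertices except $A \cup A'$, supposes no $w'$ disjoint $A$--$A'$ paths exist, extracts a cut $J$ of size $w'-1$, and in each of your three cases exhibits a row/column witness in $\Y'\setminus J$ that still connects $A$ to $A'$. The Menger argument is shorter and avoids the fiddly staircase/permutation bookkeeping in the perpendicular case, which you yourself flag as the main obstacle; your route is workable but would require more care to pin down. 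Peeling off the first corridor (you) versus the last (the paper) is immaterial.

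There is, however, a genuine gap in your concatenation step. You claim that paths in $\qset_1$ and $\qset'$ ``can only meet inside $\Y_1 \cap \Y_2$, and by construction they meet precisely at the matching endpoint in $A^*$.'' The first half is right, but the second is an unsupported assertion: nothing in your base case rules out a path $P \in \qset_1$ (ending at some $a^* \in A^*$) passing through a different vertex $u \in (\Y_1\cap\Y_2)\setminus A^*$ as an interior vertex, while some path in $\qset'$ also uses $u$. That would make the concatenated paths collide away from $A^*$. The paper closes exactly this hole by proving a strengthened base case: the one-corridor routing can be chosen so that every path is \emph{internally disjoint from the boundary of the corridor}. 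With that extra property, $P$ touches $\Y_1\cap\Y_2$ only at its endpoint $a^*$, and node-disjointness of $\qset'$ then forbids any other path in $\qset'$ from using $a^*$, so the concatenation is clean. Your explicit constructions might in fact satisfy this boundary-avoidance property, but you never state or use it, and without it the final ``by construction'' step does not go through.
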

 %------------------------------------------------
%------------------------------------------------
%------------------------------------------------
%--------------------------------
 
 Our routing consists of two steps. In the first step, we connect each source vertex $s\in S(\hmset)$ to the top boundary of the unique box in $\bset=\set{B(x_1),\ldots,B(x_n),B(C_1)\ldots,B(C_m)}$, that contains its corresponding destination vertex. We will later select specific vertices on the top boundary of each such box to which the sources are routed. The resulting paths will be internally disjoint from the boxes in $\bset$. In the second step, we complete the routing inside each box. The first step is summarized in the following claim.
 
 \begin{claim}\label{claim: step 1 of routing}
 Suppose we are given, for every variable $x_j$, a set $A(x_j)$ of $|\hmset(x_j)|$ vertices on the top boundary of $B(x_j)$, and for every clause $C_q\in \cset$, a set $A(C_q)$ of $|\hmset(C_q)|$ vertices on the top boundary of $B(C_q)$. Then there is a collection $\pset'$ of node-disjoint paths in $G$ with the following properties:
 
 \begin{itemize}
 \item the paths of $\pset'$ are internally disjoint from all boxes in $\bset=\set{B(x_1),\ldots,B(x_n),B(C_1)\ldots,B(C_m)}$;
 
 \item for every variable $x_j$ of $\phi$, there is a subset $\pset(x_j)\subseteq \pset'$ of paths, that connect every vertex in $S(\hmset(x_j))$ to a vertex of $A(x_j)$, so that the paths in $\pset(x_j)$ are order-preserving; and
 
 \item for every clause $C_q\in \cset$, there is a subset $\pset(C_q)\subseteq \pset'$ of paths, connecting every vertex in $S(\hmset(C_q))$ to a vertex of $A(C_q)$, so that the paths in $\pset(C_q)$ are order-preserving.
 \end{itemize}
 \end{claim}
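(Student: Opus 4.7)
The plan is to construct $\pset'$ in three successive phases, where in each phase we route an increasingly refined collection of path-bundles through regions that are wide enough to apply Claim~\ref{claim: routing in a snake}. In Phase~1, I route all sources in $S(\hmset)$ from the top row $R$ of $G_{i+1}$ down to the opening of $B(\iset)$, mapping them to the $|\hmset|$ leftmost vertices of the opening in an order-preserving way. The vertical strip of height at least $H_{i+1}$ that lies between $R$ and the top of $B(\iset)$ forms a single corridor whose width on both its top and bottom boundary edges exceeds $|\hmset|+2 = N_{i+1}+2$, so a direct application of Claim~\ref{claim: routing in a snake} yields this bundle of order-preserving paths.

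In Phase~2, I work inside $B(\iset)$ but externally to $B^V$ and $B^C$, and split the single bundle arriving at the opening into two sub-bundles: the variable-pair paths descend into $B^V$, and the clause-pair paths are diverted horizontally above $B^V$ toward $B^C$. Because the assignment $\aset$ satisfies $\phi$, within each interval $I(x)$ the clause-pair sources sit on exactly the opposite side ($I^F(x)$ if $x$ is \true, $I^T(x)$ if $x$ is \false) from where $\hmset(x)$ places its variable-pair sources, so along the opening of $B(\iset)$ the clause-pair vertices of $S(\hmset(C_q))$ form a consecutive block for every clause $C_q$ and likewise the variable-pair vertices of $S(\hmset(x))$ form a consecutive block for every $x$. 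This lets me route the variable-pair paths of $\hmset(x)$ straight down into $B^V$, entering the top boundary of $B(x)$ inside $B^V$ with the prescribed consecutive vertices $A(x) \subseteq B(x)$ as targets, and to separately carry the clause-pair paths eastward using the at-least-$2N_{i+1}$-tall strip of rows above $B^V$ inside $B(\iset)$. Two snake applications (one for the descending variable-pair bundle, one for the detouring clause-pair bundle) handle this, with the width bounds following from $|\hmset| \le N_{i+1}$ and the spacings built into the construction.

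In Phase~3, I complete the routing of the clause-pair bundle from the top of $B^V$ down, around $B^V$, and into the top boundary of $B^C$, and then distribute it among the boxes $B(C_q)$. The clause-pair paths that must cross $B^V$ do so through the $2N_{i+1}$-wide vertical gaps between consecutive boxes $B(x_j)$ (plus the $2N_{i+1}$-wide gaps at the left/right boundary of $B^V$); each such gap needs to carry at most $|\hmset^C|\leq N_{i+1}$ paths, well within the budget for Claim~\ref{claim: routing in a snake}. They then emerge from the bottom of $B^V$ into a corridor of height at least $2N_{i+1}$ that runs beneath $B^V$, across the $\geq 2N_{i+1}$-column gap separating $B^V$ from $B^C$, and up the left side of $B^C$ to its top boundary; another snake of width $\gg |\hmset^C|$ routes them in order-preserving fashion to a consecutive prefix of the top of $B^C$. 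Finally, the top rows of $B^C$ above each $B(C_q)$, together with the $4N_{i+1}$-wide vertical gaps between consecutive $B(C_q)$'s, form a snake into which we distribute each sub-bundle $\hmset(C_q)$ onto the prescribed set $A(C_q)$, again by Claim~\ref{claim: routing in a snake}.

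The main obstacle, and where the argument must be executed carefully, is the bookkeeping that at each transition the sources arrive on a boundary edge of the next snake in the required consecutive, correctly-ordered configuration, so that the claim's hypothesis is actually satisfied. This reduces to checking, once, that $\aset$-satisfiability forces clause-pair and variable-pair sources to alternate in the way described above on each $I(x)$, and that every corridor and every inter-box gap used in the three phases has width at least $N_{i+1}+2$, which holds by the parameter choices $H_{i+1}=20N_{i+1}$, the $2N_{i+1}$-separation of the $B(x_j)$'s, the $4N_{i+1}$-separation of the $B(C_q)$'s, and the $\geq 2N_{i+1}$-wide buffers between $B^V$, $B^C$, and the boundary of $B(\iset)$.
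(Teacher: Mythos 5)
Your overall plan — route from the sources to the opening of $B(\iset)$, then into $B^V$ (sending the variable-pair paths into the boxes $B(x_j)$ while passing the clause-pair paths through the gaps between them), then around to $B^C$, and finally into the boxes $B(C_q)$ — matches the paper's five-stage construction $\pset_0,\ldots,\pset_4$, and the preliminary observation that each $S(\hmset(x))$ and each $S(\hmset(C_q))$ forms a consecutive block among the routed sources is the key fact the paper also uses. There are, however, two points where your argument does not go through as written.

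First, your Phase~2 and Phase~3 contradict each other about how the clause-pair bundle reaches $B^C$. Phase~2 says the clause paths are ``diverted horizontally above $B^V$ toward $B^C$'' and ``carried eastward using the strip of rows above $B^V$,'' while Phase~3 says they ``cross $B^V$ \ldots through the $2N_{i+1}$-wide vertical gaps between consecutive boxes $B(x_j)$.'' Only the second plan is realizable: the clause-pair blocks and variable-pair blocks alternate along the opening, so any clause path that starts to the left of some later variable's block and insists on staying \emph{above} $B^V$ while heading rightward would have to cross that variable's descending paths. The paper avoids this precisely by routing the entire bundle, order-preservingly, to a single set $\Gamma$ on the top row of $B^V$ (placing the variable blocks above the $B(x_j)$'s and the interleaved clause blocks above the inter-box gaps), after which each path inside $B^V$ is simply a column segment; this is what makes the gap-crossing in your Phase~3 consistent.

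Second, and more substantively, the final step inside $B^C$ is where your proposal has a genuine gap. You say the top strip of $B^C$ together with the $4N_{i+1}$-wide gaps between the $B(C_q)$'s ``form a snake into which we distribute each sub-bundle $\hmset(C_q)$ onto the prescribed set $A(C_q)$, again by Claim~\ref{claim: routing in a snake}.'' But Claim~\ref{claim: routing in a snake} only produces \emph{some} disjoint matching of a set $A$ to a set $A'$; it gives no control over which source lands on which target, so it cannot by itself enforce that each arriving block goes to its own $A(C_q)$. Moreover the order in which the blocks $\Gamma(C_q)$ reach the top of $B^C$ is the order $\oset'$ determined by the positions on the first row of the intervals $I(x)$ corresponding to the chosen literals, and this ordering is in general unrelated to the left-to-right index order in which the boxes $B(C_1),\ldots,B(C_m)$ sit inside $B^C$, so a single monotone snake cannot deliver each block to the correct box. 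The paper resolves exactly this reordering problem by replacing your single snake with the multi-stage construction in its Claim~\ref{claim: snake-like routing}: the extended boxes $B'(C_1),\ldots,B'(C_m)$ are traversed in index order, and at box $B'(C_q)$ the block for $C_q$ is peeled off to $A(C_q)$ while the paths of the still-unserved clauses are split between $\Gamma^L(C_q)$ and $\Gamma^R(C_q)$ according to whether they precede or succeed $C_q$ in $\oset'$. That left/right split is what effects the permutation, and it cannot be replaced by a single application of the snake-routing claim.
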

 
 \begin{proof}
 We provide here a high-level sketch of the proof. Turning it into a formal proof is straightforward but somewhat tedious; we defer the formal proof to the Appendix. We construct $5$ sets of node-disjoint paths, $\pset_0,\ldots,\pset_4$. Let $Z'$ be the set of $|\hmset|$ leftmost vertices on the opening of $B(\iset)$. Set $\pset_0$ of paths routes all vertices in $S(\hmset)$ to the vertices of $Z'$ via node-disjoint paths that are order-preserving and internally disjoint from $B(\iset)$, in a straightforward manner. Next, we select a  subset $\Gamma$ of  $|\hmset|$ vertices on the top row of $B^V$. Set $\pset_1$ of paths connects every vertex in $Z'$ to a distinct vertex of $\Gamma$, so that the paths in $\pset_1$ are node-disjoint, order-preserving, internally disjoint from $B^V\cup B^C$, and contained in $B(\iset)$. In order to route the paths in $\pset_1$, we construct an appropriate snake inside $B(\iset)$ (see Figure~\ref{fig: yi-routing}). If we combine the paths in $\pset_0$ and $\pset_1$, we obtain a set of node-disjoint order-preserving paths, connecting every vertex of $S(\hmset)$ to a distinct vertex of $\Gamma$. Paths originating at the vertices corresponding to variable demand pairs then continue to their corresponding boxes $B(x_j)$, while paths originating at the vertices corresponding to clause demand pairs continue directly to the bottom boundary of $B^V$, exploiting the spacing between the boxes $B(x_j)$. The vertices of $\Gamma$ are selected in such a way that each such path can be implemented as a sub-path of a column of $B^V$ (see Figure~\ref{fig: yi-routing}). The resulting set of paths is denoted by $\pset_2$. We then connect the endpoints of the paths in $\pset_2$ that appear on the bottom boundary of $B^V$ to a subset $\Gamma'''$ of vertices on the top boundary of $B^C$, by defining an appropriate snake that exploits the spacing between $B^V$ and $B^C$. This latter set of paths is denoted by $\pset_3$. Let $\pset''$ be the set of paths obtained by combining the paths in $\pset_0,\ldots,\pset_3$. For every variable $x_j$ of $\phi$, there is a subset $\pset(x_j)\subseteq \pset''$ of paths, connecting every vertex of $S(\hmset(x_j))$ to a vertex of $A(x_j)$. For every clause $C_q\in \cset$, there is a subset $\pset'(C_q)\subseteq \pset''$ of paths, connecting every vertex of $S(\hmset(C_q))$ to a distinct vertex  of $\Gamma'''$. We denote by $\Gamma(C_q)\subseteq \Gamma'''$ the set of vertices that serve as endpoints of the paths in $\pset'(C_q)$. Note that for each clause $C_q\in \cset$, the vertices of $\Gamma(C_q)$ appear consecutively on the top boundary of $B^C$.

 \begin{figure}[h]
 \scalebox{0.6}{\includegraphics{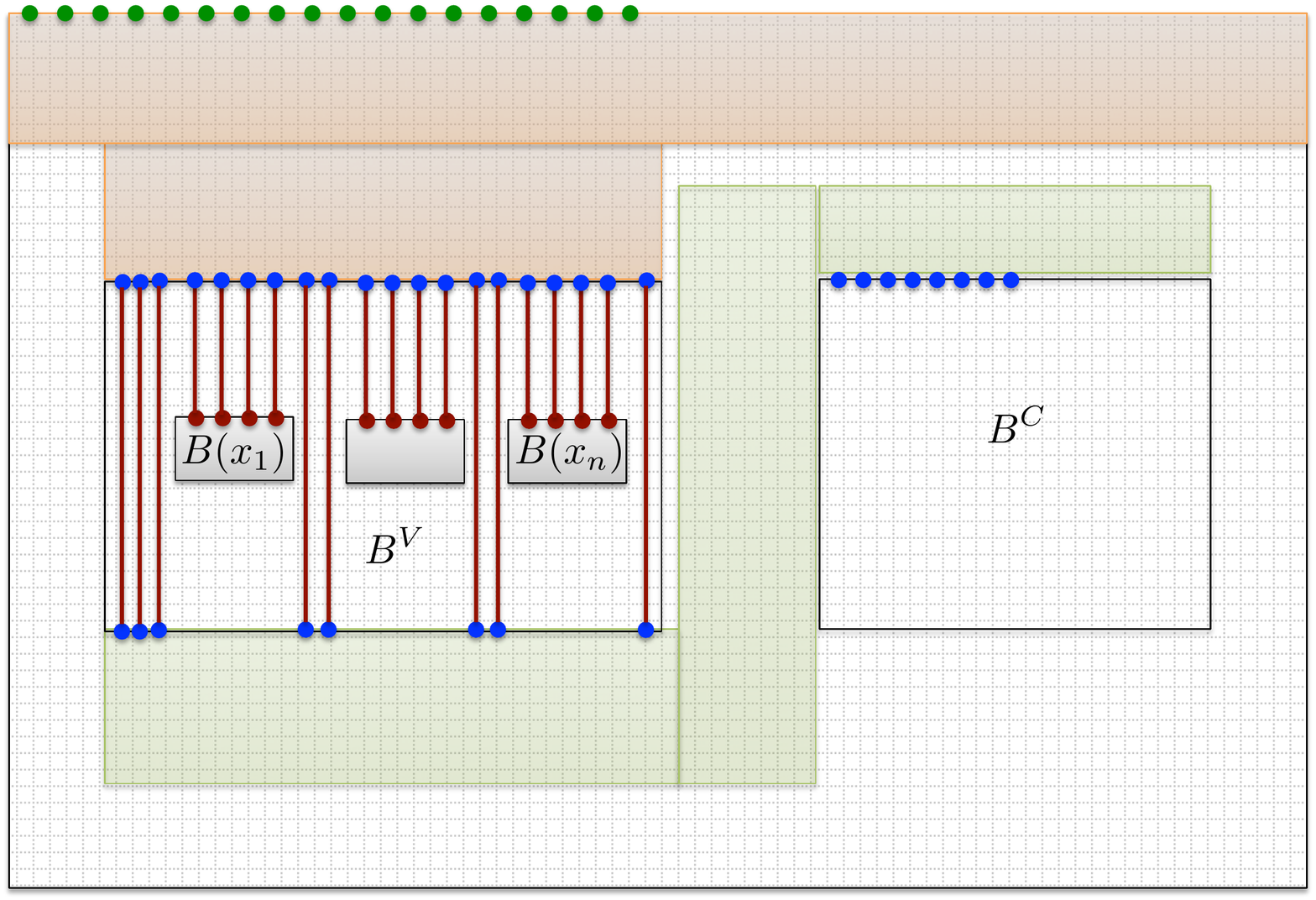}}
 \caption{Routing the sets $\pset_1,\pset_2$ and $\pset_3$ of paths inside $B(\iset)$. The paths in $\pset_2$ are shown in red; the paths of $\pset_1$ are routed inside the orange snake, and the paths of $\pset_3$ are routed inside the green snake.\label{fig: yi-routing}}
 \end{figure}

 We can now define an ordering $\oset'$ of the clauses in $\cset$ as follows: clause $C_q$ appears before clause $C_{q'}$ in this ordering iff the vertices of $\Gamma(C_q)$ appear to the left of the vertices of $\Gamma(C_{q'})$ on the top row of $B^C$. Our final step is to define a set $\pset_4$ of node-disjoint paths, that are contained in $B^C$ and are internally disjoint from all boxes $\set{B(C_q)\mid C_q\in \cset}$, such that for each clause $C_q\in \cset$, there is a subset $\pset'(C_q)\subseteq \pset_4$ of order-preserving paths, connecting the vertices of $\Gamma(C_q)$ to the vertices of $A(C_q)$.
 Set $\pset_4$ is constructed by using a standard snake-like routing inside the box $B^C$. For each clause $C_q$, we extend the box $B(C_q)$ by $N_{i+1}$ columns to the right and to the left, obtaining a larger box $B'(C_q)$. The idea is to route the paths, so that they visit the boxes $B'(C_1),\ldots,B'(C_m)$ in this order. For all $1\leq q\leq m$, only paths corresponding to the clauses $C_q,C_{q+1},\ldots,C_m$ visit the box $B'(C_q)$. The paths corresponding to clause $C_q$ enter the box $B(C_q)$ and terminate at its top boundary. For each clause $C\in \set{C_{q+1},\ldots,C_m}$, if $C$ appears before $C_q$ in the ordering $\oset'$, then the paths corresponding to $C$ traverse the box $B'(C_q)$ to the left of $B(C_q)$, and otherwise they traverse it to the right of $B(C_q)$. The routing itself is implemented by constructing an appropriate collection of snakes, where each snake is used to connect the vertices on the bottom boundary of $B'(C_q)$ to the vertices on the top boundary of $B'(C_{q+1})$ and exploits the spacing between the two boxes (see Section~\ref{subsec: proof of routing claim} of Appendix for more details).
 \end{proof} 
 %--------------------------------------------
 %--------------------------------------------
%-------------------------------------------- 
%--------------------------------------------
  
 We now define the sets $\set{A(x_j) \mid 1\leq j\leq n}\cup \set{A(C_q)\mid C_q\in \cset}$ of vertices and complete the routing.

Consider first some clause $C_q\in \cset$, and  assume that it has chosen the literal $\ell_q$. Fix some index $1\leq j\leq h$, and consider the width-$c_{i+1}$ level-$i$ instance $\iset'=\iset_j(C_q,\ell_q)$. Instance $\iset'$ consists of $c_{i+1}$ level-$i$ instances, that we denote by $\iset_1',\ldots,\iset'_{c_{i+1}}$. Then set $\hmset(C_q)$ contains $c_{i+1}N_{i}$ demand pairs that belong to instance $\iset'$, with exactly $N_{i}$ pairs from each instance $\iset_{r}'$. For each such instance $\iset_{r}'$, we select $N_{i}$ leftmost vertices on the opening of the box $B(\iset'_r)$. We then let $A(C_q)$ be the set of all such vertices we have selected, for all $1\leq j\leq h$ and $1\leq r\leq c_{i+1}$, so $|A(C_q)|=hc_{i+1}N_{i}=|\hmset(C_q)|$.  Notice that, if we are given any set $\pset(C_q)$ of $|\hmset(C_q)|$ node-disjoint order-preserving paths, that connect the vertices of $S(\hmset(C_q))$ to the vertices of $A(C_q)$, such that the paths in $\pset(C_q)$ are internally disjoint from the box $B(C_q)$, then we can extend the paths in $\pset(C_q)$ inside $B(C_q)$, so that they route the set $\hmset(C_q)$ of demand pairs. In order to do so, for each $c_{i+1}$-wide level-$i$ instance $\iset'=\iset_j(C_q,\ell_q)$, for each level-$i$ instance $\iset'_r$ that was used to construct $\iset'$, we use the routing that respects the box $B(\iset'_r)$, which is guaranteed by the induction hypothesis and by our choice of the demand pairs to route, in order to connect the vertices we have selected on the opening of $B(\iset'_r)$ to the corresponding destination vertices.
 
 Consider now some variable $x_j$, for $1\leq j\leq n$. We now show how to select a set $A(x_j)$ of $|\hmset(x_j)|$ vertices on the top boundary of box $B(x_j)$. We start with $A(x_j)=\emptyset$ and then add vertices to it.

 Assume first that $x_j$ is assigned the value \true. Let $\iset'$ be any of the $(65h+1)c_{i+1}$ level-$i$ instances whose box $B(\iset')$ is contained in $B_X(x)\cup B_T(x)$. Recall that $\hmset(x_j)$ contains $N_i$ demand pairs that belong to this instance. Let $A'(\iset')$ be the set of $N_i$ leftmost vertices on the opening of the box $B(\iset')$. For each vertex $v\in A'(\iset')$, we add to $A(x_j)$ the vertex $v'$ that belongs to the top boundary of $B(x_j)$ and lies on the same column $W$ as $v$. We let $P_v$ be the sub-path of $W$ between $v$ and $v'$. Notice that $|A(x_j)|=|\hmset(x_j)|$, and, given any set $\pset(x_j)$ of node-disjoint order-preserving paths connecting the vertices of $S(\hmset(x_j))$ to the vertices of $A(x_j)$, such that the paths in $\pset(x_j)$ are internally disjoint from $B(x_j)$, we can extend these paths inside the box $B(x_j)$, so that they route the set $\hmset(x_j)$ of demand pairs. This is done using the box-respecting routing of each level-$i$ instance as before, together with the set $\set{P_v\mid v'\in A(x_j)}$ of paths.

 Finally, assume that $x_j$ is assigned the value \false. 
 Notice that the sources of the \extra demand pairs for $x_j$ appear to the left of the sources of the \false demand pairs of $x_j$, while the box $B_F(x_j)$ appears to the left of the box $B_X(x_j)$. Therefore, the straightforward routing as above cannot be employed here and we need to ``switch'' the two sets of paths. In order to do so, we exploit the spacing between the boxes $B_T(x_j),B_F(x_j),B_X(x_j)$ and the boundaries of $B(x_j)$ (see Figure~\ref{fig: var box}).
 
 For each of the $(5h+1)c_{i+1}$ level-$i$ instances whose box $B(\iset')$ is contained in $B_F(x)$, we define the set $A'(\iset')$ of $N_i$ vertices on the opening of the box $B(\iset')$, and for each such vertex $v$, the corresponding path $P_v$ and vertex $v'$ that is added to $A(x_j)$ exactly as before. Let $J$ denote the set of vertices added to $A(x_j)$ so far.  We also add to $A(x_j)$ the set $J'$ of $60hc_{i+1}N_i$ leftmost vertices on the top boundary of $B(x_j)$. This ensures that $|A(x_j)|=(65h+1)hc_{i+1}N_i=|\hmset(x_j)|$. Assume now that we are given any set $\pset(x_j)$ of node-disjoint order-preserving paths connecting the vertices of $S(\hmset(x_j))$ to the vertices of $A(x_j)$, such that the paths in $\pset(x_j)$ are internally disjoint from $B(x_j)$. We now show how to extend  these paths inside the box $B(x_j)$, so that they route the set $\hmset(x_j)$ of demand pairs. We partition the set $\pset(x_j)$ into two subsets: set $\pset^X$ contains all paths that originate at the sources of the demand pairs in $\hmset^X(x_j)$ --- that is, the \extra demand pairs of $x_j$, and $\pset^F$ contains all remaining demand pairs, that must originate at the source vertices of the \false demand pairs for $x_j$. Since the set $\pset(x_j)$ of paths is order-preserving, the paths in $\pset^X$ terminate at the vertices of $J'$, while the paths in $\pset^F$ terminate at the vertices of $J$. We extend the paths in set $\pset^F$ exactly as before, using the paths in set $\set{P_v\mid v'\in  J}$, and then using the box-preserving routing of each corresponding level-$i$ instance to connect each source vertex of $S(\hmset^F(x_j))$ to its destination.
 
 In order to extend the paths in $\pset^X$ we do the following. For each level-$i$ instance $\iset'$ whose box $B(\iset')$ is contained in $B_X(x_j)$, we let $A'(\iset')$ be the set of $N_{i}$ left-most vertices on the opening of $B(\iset')$. Let $A'$ be the set of all such vertices, for all such instances $\iset'$, so $|A'|=|J'|$. It is now enough to construct a set $\qset$ of order-preserving node-disjoint paths contained in $B(x_j)$ that connect every vertex in $J$ to a distinct vertex of $A'$, such that the paths in $\qset$ are internally disjoint from $B_T(x_j)\cup B_F(x_j)\cup B_X(x_j)$ and are completely disjoint from $\set{P_v\mid v'\in J}$. We show the existence of the set $\qset$ of paths by constructing a snake $\yset$, that consists of four corridors. The first corridor, $\Y_1$, is the set of the first $N_{i+1}$ columns of $B(x_j)$. The third corridor, $\Y_3$, is the set of the last $N_{i+1}$ columns of $B(x_j)$. Let $W'$ be the last column of $\Y_1$ and let $W''$ be the first column of $\Y_3$. The second corridor, $\Y_2$, is spanned by the bottom $N_{i+1}$ rows of $B(x_j)$ and the columns of $B(x_j)$ from $W'$ to $W''$, including these two columns. Let $W'''$ be the leftmost column of $G_{i+1}$ that intersects $B_X(x_j)$; let $R'$ be the topmost row of $G_{i+1}$ that intersects $B_X(x_j)$, and let $\rset$ be the set of $N_{i+1}$ consecutive rows lying above $R'$, including $R'$. The last corridor, $\Y_4$, is spanned by the set $\rset$ of rows, and the set of all columns from $W'''$ to $W''$ (see Figure~\ref{fig: var-routing}). Using Claim~\ref{claim: routing in a snake}, it is immediate to verify that the desired set $\qset$ of paths exists inside the resulting snake. We then extend the routing inside each box $B(\iset')$ of each corresponding level-$i$ instance $\iset'$ exactly as before.

 \begin{figure}[h]
 \scalebox{0.4}{\includegraphics{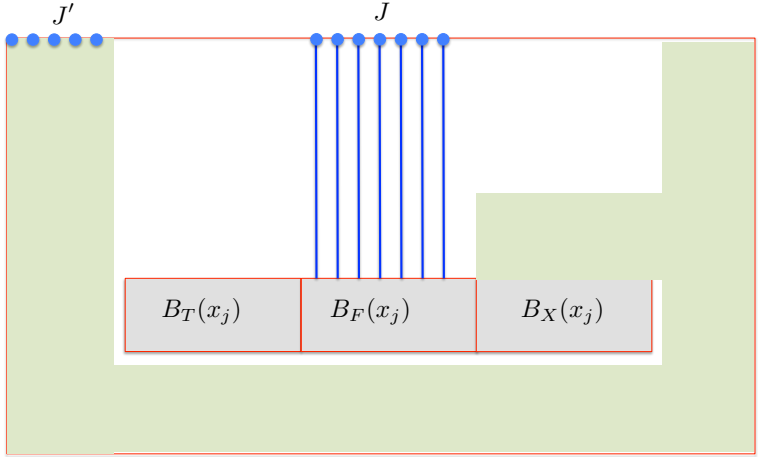}}
 \caption{Routing inside box $B(x_j)$ if $x_j$ is assigned value $F$.\label{fig: var-routing}}
 \end{figure}

 Using Claim~\ref{claim: step 1 of routing} with our definitions of the sets $\set{A(x_j)\mid 1\leq j\leq n}\cup \set{A(C_q)\mid 1\leq q\leq m}$ of vertices, and the above discussion, we can route all demand pairs in $\hmset$ via a set $\pset$ of node-disjoint paths that respects the box $B(\iset)$.

\label{-------------------------------------------sec: NI-----------------------------------}
 \section{No-Instance Analysis}\label{sec: NI}
%-----------------------------------------------
%-----------------------------------------------
%-----------------------------------------------
%-----------------------------------------------
In this section we analyze the \ni case, by proving the following theorem.

\begin{theorem}\label{thm: ni}
Assume that $\phi$ is a \ni. Then for every integer $i\geq 0$, for every instantiation of the level-$i$ instance $\iset$, and for every solution $\pset$ to this instance, $|\pset|\leq N'_i$.
\end{theorem}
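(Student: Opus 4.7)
The plan is to argue by induction on $i$. The base case $i=0$ is immediate since $N'_0 = 1$ and a level-$0$ instance contains only one demand pair. For the inductive step I will fix a level-$(i+1)$ instance $\iset$ and an arbitrary solution $\pset$ routing a set $\tmset \subseteq \mset$, and set out to prove $|\tmset| \le N'_{i+1}$. The first move will be to apply the inductive hypothesis to each level-$i$ sub-instance $\iset'$ sitting inside $\iset$: the restriction of $\pset$ to $\mset(\iset')$ is a valid solution to $\iset'$, so $|\tmset \cap \mset(\iset')| \le N'_i$. This immediately yields the per-gadget bounds $|\tmset^X(x)| \le 60h\,c_{i+1}N'_i$, $|\tmset^T(x)|, |\tmset^F(x)| \le (5h+1)c_{i+1}N'_i$ for every variable $x$, and $|\tmset(C_q,\ell_{q_z})| \le h\,c_{i+1}N'_i$ for every clause--literal pair.

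The crux will be a topological key lemma lifting Claim~\ref{claim: lvl1 var gadget} to higher levels: for every variable $x$, at least one of $\tmset^T(x),\tmset^F(x),\tmset^X(x)$ is empty. The argument will be identical in form to that of level~$1$: the three boxes $B_T(x), B_F(x), B_X(x)$ are cut-out and pairwise disconnected through their deleted boundary columns, so any routed path must enter the sub-box containing its destination through that sub-box's individual opening on the top boundary of $B(x)$, while the sources lie in $I^T(x),I^F(x),I^X(x)$ in this left-to-right order on $Z(\iset)$. Embedding $G$ on a sphere, the region between $\partial G$ and $\partial(B_T(x)\cup B_F(x)\cup B_X(x))$ is an annulus, so three simultaneously routed pairs, one drawn from each set, would produce three non-crossing curves violating the cyclic-order condition of Observation~\ref{obs: cylinder}. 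I will then establish the clause analog in the same way: for every original clause $C_q$, at most three of its copies $C_q^j$ are \emph{troublesome}, meaning that $\tmset_j(C_q,\ell) \ne \emptyset$ for more than one literal $\ell$.

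With these two lemmas in hand the accounting will mirror the level-$1$ no-instance analysis. I will define $\aset(x) = \true$ when $\tmset^F(x) = \emptyset$ and $\aset(x) = \false$ otherwise, and will call an index $j \in \{1,\ldots,5h+1\}$ \emph{bad} for $x$ if the $j$-th width-$c_{i+1}$ sub-instance on the side chosen by $\aset$ routes strictly fewer than $c_{i+1}N'_i$ pairs. Each new clause $C_q^j$ with $\tmset_j(C_q,\ell) \ne \emptyset$ for some literal $\ell$ will be called \emph{interesting}, it will \emph{choose} such an $\ell$, and will be \emph{cheating} if this choice is inconsistent with $\aset$. Assuming for contradiction that $|\tmset| > N'_{i+1}$, the per-sub-instance bound combined with the two topological lemmas will force at least $(1 - \eps/10)mh$ interesting copies and at most $\eps mh/16$ bad indices. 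I will then split the cheating copies into bad-cheating (those incident to a bad index, at most $\eps mh/8$) and good-cheating (bounded by $3m \le \eps mh/4$ via the level-$1$ planarity argument that no literal of a single original clause can be chosen by two good cheating copies). It will follow that more than $(1-\eps)mh$ interesting copies are non-cheating and are therefore satisfied by $\aset$; hence $\aset$ satisfies more than $(1-\eps)m$ original clauses of $\phi$, contradicting the \ni property.

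The main obstacle I anticipate is to verify that the two topological arguments of Section~\ref{sec: level 1} transfer verbatim to the recursive setting. This should go through because the construction is ``geometrically congruent'' to its level-$1$ version: all destinations of variable pairs still lie on the middle row of $B(x)$ (inherited down the chain of cut-out sub-boxes); the three sub-boxes $B_T, B_F, B_X$ remain pairwise disconnected through their deleted boundary columns; and the three source intervals $I^T, I^F, I^X$ retain their left-to-right order on $Z(\iset)$. The parameter bookkeeping also needs care, since the induction loses a factor of $g_i$ per sub-instance relative to the yes-case; but the choice $c_{i+1} = 10^8 h^2 g_i$ is calibrated precisely (through the identity~\eqref{bound on H_i}) so that this loss is absorbed by the $(1-\delta)$ amplification at level $i+1$, thus closing the induction.
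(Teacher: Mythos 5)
Your high-level skeleton is right (induction; per-sub-instance bound from the hypothesis; variable-gadget emptiness; troublesome-copy bound; assignment; bad/good cheating; final accounting), but there is a genuine gap at the step you identify as the ``crux'', and you are precisely wrong about the direction you say you think it resolves.

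The claim that ``at least one of $\tmset^T(x),\tmset^F(x),\tmset^X(x)$ is empty'' does \emph{not} transfer verbatim to levels $i\geq 2$, and the paper explicitly warns about this in Section~\ref{sec: level 1}: once you replace a single demand pair by a level-$i$ instance, each destination $t$ sits on the \emph{middle} row of its box $B(\iset')$, with roughly $H_i/2$ free rows beneath it. A path routing a different pair can dip into $B(\iset')$ through its opening, pass \emph{beneath} $t$, and exit --- i.e., it can \emph{encircle} $t$ --- and so it is genuinely possible to route a few pairs from all three of $\mset^T(x),\mset^F(x),\mset^X(x)$ simultaneously. Consequently the cylinder argument of Observation~\ref{obs: cylinder} cannot be applied directly to one arbitrary pair from each set; after concatenating $P(s,t)$ with the line $Q_t$ to reach the inner boundary of the annulus, those three curves need not be disjoint, because one of them may cross another's $Q_t$. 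The paper repairs this with machinery your proposal omits entirely: (a) it first discards all pairs belonging to ``uninteresting'' $c_{i+1}$-wide sub-instances, those routing fewer than $25H_i$ pairs, and Observation~\ref{obs: excess pairs} (which depends on the calibration $H_i = 2c_{i+1}\eps^2 N'_i/10^{13}$ of equation~\eqref{bound on H_i}) shows this costs at most a $\delta$-fraction of $N'_{i+1}$; and (b) it introduces the encircling relation and Lemma~\ref{lem: enc resolution}, which says that from any $r$ disjoint sets of at least $r^2H_i/2$ demand pairs one can select one representative from each so that no two encircle one another. Only after these two steps does Lemma~\ref{lem: var gadget analysis} assert ``one of the three (now pruned) sets is empty''; similarly the troublesome-copy bound and the good-cheating bound use four and five non-encircling representatives, respectively, found via Lemma~\ref{lem: enc resolution}.

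A second, related issue: your notion of a ``bad index'' $j$ for $x$ --- routing strictly fewer than $c_{i+1}N'_i$ pairs --- sets the threshold at the full no-instance budget, so essentially every index would be bad and the target bound of at most $\eps mh/16$ bad indices cannot possibly hold. The paper uses the much lower threshold of $25H_i$ (the ``uninteresting'' cutoff), which is small relative to $c_{i+1}N'_i$ by design of the parameter $c_{i+1}=10^8h^2g_i$; this is what allows the excess-pair discard and the count of bad indices to both remain $O(\delta)$-small. To close the induction you must import this excess-pair/encircling layer rather than invoke the level-$1$ planarity argument verbatim.
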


The proof is again by induction. For the base case of $i=0$, $N'_i=1$, and the corresponding level-$0$ instance contains a single demand pair, so the theorem clearly holds. We now assume that the theorem holds for some value $i\geq 0$ and prove it for a level-$(i+1)$ instance $\iset$. We assume that we are given some instantiation of $\iset$, and from now on our goal is to prove that no solution to this instance  of \NDP can route more than $N_{i+1}'=(1-\delta)nc_{i+1}\cdot  (200h/3+1)N'_i$ demand pairs, where $\delta=8\eps^2/10^{12}$. We assume for contradiction that this is not the case, and we let $\tpset$ be a collection of more than $N'_{i+1}$ node-disjoint paths, routing a set $\tmset\subseteq \mset$ of demand pairs. For every demand pair $(s,t)\in \tmset$, we let $P(s,t)\in \tpset$ be the path routing this pair in the solution. 

Recall that our construction of a level-$(i+1)$ instance $\iset$ consists of a number of copies of $c_{i+1}$-wide level-$i$ instances: For every variable $x$ of $\phi$, we have constructed $(70h+2)$ such instances ($60h$ instances for the extra pairs, and $(5h+1)$ instances each for \true and \false pairs); for every clause $C\in \cset$, we have constructed $3h$ such instances. Therefore, overall we use $(70h+2)n+3hm=75nh+2n$ copies of $c_{i+1}$-wide level-$i$ instances (we have used the fact that $m=5n/3$). We assume (by induction) that at most $c_{i+1}N'_i$ pairs from each such instance are in $\tmset$.
We say that a $c_{i+1}$-wide level-$i$ instance $\iset'$ is \emph{interesting} iff at least $25H_i$ demand pairs of $\mset(\iset')$ belong to $\tmset$; otherwise we say that it is uninteresting. We let $\hat \mset\subseteq \tmset$ be the set of all demand pairs that belong to uninteresting instances, and we call them \emph{excess pairs}. We need the following simple observation.

\begin{observation}\label{obs: excess pairs}
$|\hat \mset|\leq \delta (200h/3+1)nc_{i+1}N'_i$.
\end{observation}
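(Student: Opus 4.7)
The plan is to bound $|\hat\mset|$ by simply multiplying the total number of $c_{i+1}$-wide level-$i$ instances by the per-instance upper bound $25H_i$ on the number of demand pairs an uninteresting instance contributes to $\tmset$, and then to use Equation~(\ref{bound on H_i}) together with the explicit choices $h=1000/\eps$ and $\delta=8\eps^2/10^{12}$ to compare the result to the target bound $\delta(200h/3+1)nc_{i+1}N'_i$. There is no topological or routing content here; the lemma is purely an arithmetic consequence of the parameter settings, so the proof will just be a careful calculation.

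First I would count the $c_{i+1}$-wide level-$i$ instances used to build $\iset$. For each of the $n$ variables we employ $60h$ \extra instances, $(5h+1)$ \true instances, and $(5h+1)$ \false instances, giving $(70h+2)n$ in total; for each of the $m=5n/3$ clauses we employ $3h$ instances, giving $3hm=5nh$ more. The grand total is $(70h+2)n+5nh=(75h+2)n\le 77nh$. Second, by the definition of ``uninteresting,'' every such instance contributes fewer than $25H_i$ demand pairs to $\hat\mset$. Hence
\[
|\hat\mset|\;\le\;25H_i\cdot 77nh\;=\;1925\,nhH_i.
\]

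Third, I would plug in the value of $H_i$ from Equation~(\ref{bound on H_i}), namely $H_i=\tfrac{2c_{i+1}\eps^2 N'_i}{10^{13}}$, to obtain
\[
|\hat\mset|\;\le\;\frac{1925\cdot 2\,nh\eps^2 c_{i+1}N'_i}{10^{13}}\;=\;\frac{3850\,nh\eps^2 c_{i+1}N'_i}{10^{13}}.
\]
On the other hand, since $h=1000/\eps$ we have $200h/3+1\ge 200h/3$, so
\[
\delta(200h/3+1)nc_{i+1}N'_i\;\ge\;\frac{8\eps^2}{10^{12}}\cdot\frac{200h}{3}\cdot nc_{i+1}N'_i\;=\;\frac{1600\,nh\eps^2 c_{i+1}N'_i}{3\cdot 10^{12}}\;=\;\frac{16000\,nh\eps^2 c_{i+1}N'_i}{3\cdot 10^{13}}.
\]
Since $16000/3\approx 5333>3850$, the desired inequality
\[
|\hat\mset|\;\le\;\delta(200h/3+1)nc_{i+1}N'_i
\]
follows. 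The only delicate point is making sure the numerical constants match, which is why the proof amounts to nothing more than this comparison; no combinatorial argument about the routing $\tpset$ is needed, since the induction hypothesis (at most $c_{i+1}N'_i$ pairs per instance) is not even invoked --- only the trivial per-instance bound of $25H_i$ that defines ``uninteresting.''
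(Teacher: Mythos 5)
Your proof is correct and follows essentially the same approach as the paper's: count the $(70h+2)n+3hm=75nh+2n$ instances of $c_{i+1}$-wide level-$i$ gadgets, bound each uninteresting one's contribution by $25H_i$, substitute the expression for $H_i$ from Equation~(\ref{bound on H_i}), and compare numerically against $\delta(200h/3+1)nc_{i+1}N'_i$ using $h=1000/\eps$ and $\delta=8\eps^2/10^{12}$. The only cosmetic difference is that you round $75nh+2n$ up to $77nh$ before the final comparison, which is harmless.
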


\begin{proof}
As observed above, there are at most $75nh+2n$ uninteresting instances, each of which contributes at most $25H_i\leq \frac{50c_{i+1}\eps^2N'_i}{10^{13}}$ excess demand pairs. Therefore, it is enough to show that:

\[(75nh+2n)\cdot  \frac{50c_{i+1}\eps^2N'_i}{10^{13}}\leq \delta\left(\frac{200 h}{3}+1\right ) nc_{i+1}N'_i,\]

which is immediate to verify, substituting $\delta=8\eps^2/10^{12}$.
\end{proof}

It would be convenient for us to assume that no excess pairs exist. In order to do so, we discard all excess pairs from $\tmset$. From Observation~\ref{obs: excess pairs}, $|\tmset|\geq (1-2\delta)nc_{i+1}(200h/3+1)N'_i$ still holds.

For every variable $x$, we let $\tmset(x)=\tmset\cap \mset(x)$, and for every clause $C_q$, we let $\tmset(C_q)=\tmset\cap \mset(C_q)$. We also denote by $\tmset^V=\bigcup_{j=1}^n\tmset(x_j)$ and by $\tmset^C=\bigcup_{q=1}^m\tmset(C_q)$.

 For the sake of the \ni-analysis, it is convenient to view our construction slightly differently. Let $\phi$ be the input 3SAT(5) formula, and recall that $\cset=\set{C_1,\ldots,C_m}$ is the set of its clauses. For each clause $C_q\in \cset$, we create $h$ new clauses $C^1_q,\ldots,C^h_q$, each of which is a copy of the original clause. We let $\cset'=\set{C^j_q\mid 1\leq q\leq m, 1\leq j\leq h}$ be the resulting set of clauses, and $\phi'$ the corresponding 3SAT formula. In order to avoid confusion, we refer to the clauses in $\cset$ as the \emph{original} clauses, to the clauses of $\cset'$ as the \emph{new clauses}, and for each $1\leq q\leq m$, $1\leq j\leq h$, we call $C^j_q$ the \emph{$j$th copy of clause $C_q$}. Recall that the clause gadget for $C_q\in \cset$ contains $h$ boxes $B^1(C_q),\ldots,B^h(C_q)$, where box $B^j(C_q)$ is the union of three boxes: $B(\iset_j(C_q,\ell_{q_1})),B(\iset_j(C_q,\ell_{q_2}))$ and $B(\iset_j(C_q,\ell_{q_3}))$ (see Figure~\ref{fig: clause-gadget}). We think of the box $B^j(C_q)$ as representing the new clause $C^j_q\in \cset'$. For convenience, we denote by $\tmset(C_q^j)\subseteq \tmset(C_q)$ the set of all demand pairs routed by our solution whose destinations lie in $B^j(C_q)$. This set is further partitioned into three subsets, $\tmset(C_q^j,\ell_{q_1}),\tmset(C_q^j,\ell_{q_2}),\tmset(C_q^j,\ell_{q_3})$, each of which contains demand pairs from the instances $\iset_j(C_q,\ell_{q_1}),\iset_j(C_q,\ell_{q_2})$, and $\iset_j(C_q,\ell_{q_3})$ respectively.
The following observation is immediate:

\begin{observation}\label{obs: ni-satisfied new clauses}
If $\phi$ is a \ni, then for any assignment to its variables, at most $(1-\eps)mh$ clauses of $\cset'$ are satisfied.
\end{observation}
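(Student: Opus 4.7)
The plan is to observe that this statement is essentially immediate from the definition of $\cset'$ and the definition of a \ni. Fix any assignment $\aset$ to the variables of $\phi$. Since $\phi$ is a \ni (with respect to the fixed parameter $\eps$ from Theorem~\ref{thm: PCP}), the assignment $\aset$ satisfies at most $(1-\eps)m$ of the original clauses in $\cset$.

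Next I would use the fact that for each original clause $C_q \in \cset$, the $h$ new clauses $C_q^1, \ldots, C_q^h \in \cset'$ are, by construction, all identical copies of $C_q$ (they have the same three literals). Therefore $\aset$ satisfies $C_q^j$ if and only if it satisfies $C_q$, so either all $h$ copies of $C_q$ are satisfied by $\aset$ or none of them are. Summing over the original clauses, the number of new clauses satisfied by $\aset$ equals $h$ times the number of original clauses satisfied by $\aset$, which is at most $h \cdot (1-\eps)m = (1-\eps)mh$. Since $\aset$ was arbitrary, this bound holds for every assignment, completing the proof.

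The argument is entirely bookkeeping, and there is essentially no obstacle; the only thing to be careful about is to make explicit that the $h$ copies $C_q^1,\ldots,C_q^h$ are literally the same clause, so their satisfaction status is determined by $C_q$ alone.
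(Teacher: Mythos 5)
Your proof is correct and matches the paper's intent: the paper declares this observation ``immediate'' without spelling out an argument, and your argument (each of the $h$ copies of $C_q$ is literally the same clause, so an assignment satisfies exactly $h$ times as many new clauses as original clauses, giving the bound $(1-\eps)mh$) is precisely the elementary bookkeeping the authors had in mind. There is nothing to add or correct.
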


%-------------------------------------------------------------
%-------------------------------------------------------------
%-------------------------------------------------------------
%-------------------------------------------------------------
 \paragraph{Encircling and its Resolution}
 %-------------------------------------------------------------
 %-------------------------------------------------------------
 %-------------------------------------------------------------
 %-------------------------------------------------------------
Let $(s,t)\in \tmset$ be any demand pair routed by the solution. Recall that $(s,t)$ belongs to some level-$i$ instance $\iset'$, and we have defined a line $Q_t$ containing at most $H_i/2$ vertices of the graph, that connects $t$ to the bottom of the box $B(\iset')$ (which is a cut-out box). We say that a demand pair $(s',t')\in \tmset$ \emph{encircles} pair $(s,t)$ iff path $P(s',t')$ contains a vertex lying on $Q_t$. Since $Q_t$ contains at most $H_i/2$ vertices, at most $H_i/2$ demand pairs may encircle $(s,t)$. We repeatedly use the following simple lemma.

\begin{lemma}\label{lem: enc resolution}
Let $S_1,\ldots,S_r$ be a collection of disjoint subsets of $\tmset$, such that for all $1\leq j\leq r$, $|S_j|\geq r^2H_i/2$. Then there is a collection $\mset'=\set{(s_1,t_1),\ldots,(s_r,t_r)}$ of demand pairs, such that for all $1\leq j\leq r$, $(s_j,t_j)\in S_j$, and for all distinct $(s,t),(s',t')\in \mset'$, pair $(s',t')$ does not encircle pair $(s,t)$.
\end{lemma}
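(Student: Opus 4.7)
The plan is a clean first-moment (probabilistic method) argument, using the geometric bound on $|Q_t|$ already built into the construction. It proceeds in three short steps.

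First, I would establish the per-pair bound: for every $(s,t) \in \tmset$, at most $H_i/2$ pairs of $\tmset$ encircle $(s,t)$. By definition, encircling $(s,t)$ requires the encircling path to contain a vertex of $Q_t$; since $Q_t$ has at most $H_i/2$ vertices and the paths of $\tpset$ are pairwise node-disjoint, at most $H_i/2$ distinct paths of $\tpset$ (hence at most $H_i/2$ pairs) can do this. This is the only structural fact needed, and it is essentially recorded in the paragraph preceding the lemma.

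Second, I would sample $(s_j, t_j) \in S_j$ uniformly and independently for each $1 \le j \le r$, and bound the expected number of ordered pairs $(i,j)$ with $i \neq j$ such that $(s_i, t_i)$ encircles $(s_j, t_j)$. Conditioning on any realization of $(s_j, t_j)$, the first step tells us that at most $H_i/2$ pairs in $S_i$ encircle $(s_j, t_j)$. By independence of the two draws,
\[
\Pr\!\bigl[(s_i, t_i)\text{ encircles }(s_j, t_j)\bigr] \;\le\; \frac{H_i/2}{|S_i|} \;\le\; \frac{H_i/2}{r^2 H_i/2} \;=\; \frac{1}{r^2},
\]
where the last inequality uses the hypothesis $|S_i| \ge r^2 H_i/2$.

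Finally, summing over the $r(r-1)$ ordered pairs by linearity of expectation gives an expected violation count strictly less than $1$. Hence some realization has zero violations: no chosen pair encircles any other chosen pair. Taking $\mset' = \{(s_1,t_1), \ldots, (s_r,t_r)\}$ for such a realization satisfies the conclusion, since quantifying over \emph{ordered} distinct pairs automatically rules out encircling in both directions. I do not anticipate any real obstacle: the threshold $r^2 H_i/2$ in the hypothesis is tuned precisely so that the union bound closes, and the only nontrivial input, namely the $H_i/2$ bound on encirclings of a fixed pair, is immediate from node-disjointness of $\tpset$ and the length of $Q_t$.
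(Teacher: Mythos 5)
Your proof is correct, and it takes a genuinely different route from the paper. The paper's proof is an iterative greedy/pigeonhole construction: over $r-1$ rounds it maintains, for every not-yet-processed index $j$, a surviving subset $S'_j \subseteq S_j$ of controlled size consisting of pairs that neither encircle nor are encircled by the already-chosen pairs; in each round it picks a pair from $S'_\ell$ whose routing path meets few of the remaining lines $Q_t$ (an averaging step over the node-disjoint paths), adds it to $\mset'$, and prunes each $S'_j$ by at most $rH_i/2$ elements. You instead use the probabilistic method with a single first-moment bound: sample one pair per $S_j$ independently and uniformly, bound the probability of each ordered encircling event by $(H_i/2)/|S_i| \le 1/r^2$, and sum over the $r(r-1)$ ordered pairs to get expected violations $(r-1)/r < 1$. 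Both arguments rest on the same single geometric input (the $|Q_t| \le H_i/2$ bound together with node-disjointness of $\tpset$), and both use the hypothesis $|S_j| \ge r^2 H_i/2$ with similar slack. Your version is shorter and arguably cleaner to verify, while the paper's version is explicitly constructive and makes the source of the $r^2$ threshold transparent as a pruning budget; you could derandomize yours by the method of conditional expectations to recover that constructiveness, but it is not needed for the lemma as stated.
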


\begin{proof}
We perform $(r-1)$ iterations. At the beginning of iteration $\ell$, we are given a set $\mset'=\set{(s_1,t_1),\ldots,(s_{\ell-1},t_{\ell-1})}$ of demand pairs, where for all $1\leq j\leq \ell-1$, $(s_j,t_j)\in S_j$, and no two pairs in $\mset'$ encircle each other. Additionally, for each $\ell\leq j\leq r$, we are given a subset $S'_j\subseteq S_j$ of $(r^2-r(\ell-1))H_i/2$ demand pairs, such that no pair in $S'_j$ encircles a pair in $\mset'$, and no pair in $S'_j$ is encircled by a pair in $\mset'$. Therefore, at the end of iteration $(r-1)$, set $\mset'$ contains one pair from each set $S_1,\ldots,S_{r-1}$, and $S_r\neq \emptyset$. We add an arbitrary pair of $S_r$ to $\mset'$ to obtain the desired output.

At the beginning of the algorithm, $\mset'=\emptyset$, and for each $1\leq j\leq r$, set $S'_j$ contains any subset of $r^2H_i/2$ demand pairs of $S_j$. We now describe an execution of some iteration $\ell$. Let $\mset''=\bigcup_{j=\ell+1}^rS'_j$, and let $U$ be the set of all vertices appearing on lines $Q_t$, where $t$ is a destination vertex of a demand pair in $\mset''$. Then $|\mset''|\leq (r-1)(r^2-r(\ell-1))H_i/2$ and $|U|\leq (r-1)(r^2-r(\ell-1))H_i^2/4$,       while $S_{\ell}$ contains $(r^2-r(\ell-1))H_i/2$ demand pairs. Therefore, there is some demand pair $(s_{\ell},t_{\ell})\in S_{\ell}$ that contains at most $(r-1)H_i/2$ vertices of $U$. We add $(s_{\ell},t_{\ell})$ to $\mset'$. Consider now some set $S'_j$, for some $\ell+1\leq j\leq r$. Then pair $(s_{\ell},t_{\ell})$ may encircle at most $(r-1)H_i/2$ pairs of $S'_j$, and at most $H_i/2$ pairs in $S'_j$ may encircle $(s_{\ell},t_{\ell})$. We discard from $S_j'$ all pairs that either encircle $(s_{\ell},t_{\ell})$ or are encircled by it. At the end of this procedure, $|S_j'|\geq (r^2-r(\ell-1))H_i/2-rH_i/2\geq (r^2-r\ell)H_i/2$. If $|S'_j|>(r^2-r\ell)H_i/2$, then we discard arbitrary pairs from $S'_j$ until the equality holds.
\end{proof}

% Before we prove the theorem, we perform an overall accounting. If the above theorem holds, then the gap between the Yes and the No-instances grows by factor $(1-\delta)$ in every iteration. The instance size grows by factor $O(n)$ in every iteration. We will perform $\log^zn$ iterations, where $z$ is an arbitrarily large constant. We will then obtain a gap of $2^{\Omega(z)}$, and instance size $N=n^{O(z)}$. Therefore, $\log N=\log^{z+1}n$, and $z=\Omega(\log N/\log n)=\Omega(\log^{1-1/(z+1)}N)$. The gap is then $2^{\Omega(\log^{1-1/(z+1)}N)}$. In other words, for any constant $\eps'$, we obtain a gap of $2^{\log^{1-\eps}N}$, unless \NP has randomized algorithm running in time $O(n^{\poly\log n})$.
%
 
%-------------------------------------------------------------
%-------------------------------------------------------------
%-------------------------------------------------------------
%-------------------------------------------------------------
 \paragraph{Variable Gadget Analysis}
 %-------------------------------------------------------------
 %-------------------------------------------------------------
 %-------------------------------------------------------------
 %-------------------------------------------------------------
 
 We fix some variable $x$ and consider its corresponding gadget. We start with the following lemma.

%-------------------------------------------------------------
%-------------------------------------------------------------
\begin{lemma}\label{lem: var gadget analysis}
Let $\tmset_X=\tmset\cap \mset^X(x),\tmset_T=\tmset \cap\mset^T(x)$, and $\tmset_F=\tmset \cap \mset^F(x)$ be the subsets of $\mset^X(x),\mset^T(x)$, and $\mset^F(x)$, respectively, that are routed by our solution. Then at least one of the sets $\tmset_X,\tmset_T,\tmset_F$ is empty.\end{lemma}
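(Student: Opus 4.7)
The plan is to derive a planar-topological contradiction by embedding the graph on a sphere and invoking Observation~\ref{obs: cylinder}, closely following the sketch given for Claim~\ref{claim: lvl1 var gadget} in the level-$1$ analysis. Suppose for contradiction that $\tmset_X, \tmset_T, \tmset_F$ are all nonempty, and fix demand pairs $(s_T, t_T) \in \tmset_T$, $(s_F, t_F) \in \tmset_F$, $(s_X, t_X) \in \tmset_X$, routed by the pairwise node-disjoint paths $P_T, P_F, P_X \in \tpset$. From the variable-gadget construction of Section~\ref{sec: level i}, the subintervals $I^T(x), I^X(x), I^F(x)$ of $I(x)$ appear in this left-to-right order on $Z(\iset)$, so the three sources appear on the first row $R$ of $G_{i+1}$ in the left-to-right order $s_T, s_X, s_F$; dually, the sub-boxes $B_T(x), B_F(x), B_X(x)$ sit contiguously inside $B(x)$ in this order, so the destinations appear in the left-to-right order $t_T, t_F, t_X$ along the common middle row of the three sub-boxes.

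The next step is to embed $G$ on a sphere $\Sigma$ and choose two disjoint open discs on it: $D$, bounded by the outer face of $G_{i+1}$; and $D'$, bounded by a simple closed curve $\partial D'$ drawn tightly around $B_T(x) \cup B_F(x) \cup B_X(x)$, routed through the $N_{i+1}$-row buffer separating this union from the top and bottom rows of $B(x)$ and through the deleted columns flanking its left and right sides, so that $\partial D'$ meets no vertex of $G$ while all three destinations lie inside $D'$. The key structural fact is that inside $D'$ the graph splits into several disconnected components, one per level-$i$ cut-out box contained in the three sub-boxes: each such cut-out box has its sides and bottom entirely deleted, and adjacent level-$i$ boxes are further separated from one another by deleted boundary columns. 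Consequently, for each $i \in \{T, F, X\}$ the path $P_i$ has a well-defined \emph{last} crossing of $\partial D'$, at a point that I denote $v_i$, after which $P_i$ remains inside the unique level-$i$ box containing $t_i$; the horizontal position of $v_i$ therefore matches the horizontal position of that sub-box, so $v_T, v_F, v_X$ appear on $\partial D'$ in the cyclic order $v_T, v_F, v_X$.

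Finally, one applies Observation~\ref{obs: cylinder}: three pairwise disjoint simple curves on the annulus $\Sigma \setminus (D \cup D')$ connecting three ordered points on $\partial D$ to three ordered points on $\partial D'$ must realize a cyclic-order-preserving matching. By a standard homotopy argument --- isotoping each $P_i$ relative to its endpoints so that its sub-curve from $s_i$ to $v_i$ lies in the closure of the annulus while preserving pairwise disjointness --- we obtain three disjoint annulus curves from $s_T, s_X, s_F$ to $v_T, v_F, v_X$. Setting $a_1 = s_T, a_2 = s_X, a_3 = s_F$ and $b_1 = v_T, b_2 = v_F, b_3 = v_X$ in their respective cyclic orders, the matching realized by these curves is $a_1 \mapsto b_1,\; a_2 \mapsto b_3,\; a_3 \mapsto b_2$, which transposes positions $2$ and $3$ and is therefore \emph{not} cyclic-order-preserving --- the desired contradiction. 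The main obstacle is the middle step: establishing that the cyclic order of the last-crossing points on $\partial D'$ is forced to match the horizontal order of the target sub-boxes, which follows from the disconnectedness of the three level-$i$ target boxes inside $D'$ together with the fact that every path must ultimately enter and remain in its own.
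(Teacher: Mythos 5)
The high-level skeleton of your argument matches the paper's: embed on a sphere, pick an outer disc $D$ and an inner disc $D'$, reduce to Observation~\ref{obs: cylinder}, and observe that the source matching is cyclic-order-reversing. Your computation of the cyclic order of the last-crossing points $v_T,v_F,v_X$ on $\partial D'$ is also correct. But there is a genuine gap, and it is precisely the step you dismiss as ``a standard homotopy argument.''

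The paper does \emph{not} apply the cylinder observation to arbitrary routing paths. It first discards excess pairs so that each nonempty set $\tilde\mset_T,\tilde\mset_F,\tilde\mset_X$ has at least $25H_i$ pairs, then invokes Lemma~\ref{lem: enc resolution} to select three representatives with \emph{no mutual encircling}, and only then builds the three curves by concatenating each routing path with the vertical line $Q_t$ down to the bottom boundary of its box. The encircling resolution is exactly what guarantees these three concatenated curves are pairwise disjoint, so that the cylinder observation applies.

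Your proposal replaces all of that with: take $D'$ around the whole union, take last crossings $v_i$, and isotope the sub-arcs $P_i[s_i\to v_i]$ into the closed annulus while preserving disjointness. That isotopy can fail. Suppose the sub-arc of $P_F$ before $v_F$ makes an excursion into the level-$i$ cut-out box containing $t_T$, entering and exiting through the opening, and suppose this excursion together with the short arc $\sigma$ of $\partial D'$ between its entry and exit points bounds a disc $\Delta$ that contains $t_T$. Then the tail $\tau_T=P_T[v_T\to t_T]$ lies in $\Delta$ and $v_T$ lies on $\sigma$, so pushing the excursion out of $D'$ past $\sigma$ necessarily crosses $P_T$ in a neighborhood of $v_T$. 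This is not a hypothetical: one can route three pairwise node-disjoint graph paths realizing exactly the ``crossing'' cyclic matching by letting $P_F$ wrap around the annulus, dip into $t_T$'s box around $t_T$ and $\tau_T$, come out, and then enter its own box at $v_F$; the dips into $D'$ give the curves freedom that the pure-annulus model of Observation~\ref{obs: cylinder} does not have. So the isotopy conclusion is false in general, not merely unproved.

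This is exactly the obstruction the paper's encircling notion is built to exclude: any excursion of $P_F$ that encloses $t_T$ must cross $Q_{t_T}$ (since $Q_{t_T}$ runs from $t_T$ straight down to the deleted bottom row, which lies outside the enclosed disc $\Delta$, hence $Q_{t_T}$ must cross $\partial\Delta$, and it cannot cross $\sigma$). Lemma~\ref{lem: enc resolution} selects three pairs none of which encircles another, killing precisely this scenario, and it is applicable only because the earlier removal of excess pairs guarantees $|\tilde\mset_T|,|\tilde\mset_F|,|\tilde\mset_X|\ge 25H_i$. Your proof picks a single pair from each set and never invokes either the size bound or the resolution lemma, so the obstruction survives. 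In short: the cyclic-order bookkeeping and the choice of Observation~\ref{obs: cylinder} as the endgame are right, but the encircling-resolution step of the paper is not a technical convenience you can route around with an isotopy — it is a necessary ingredient, and omitting it leaves a real hole.
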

%-------------------------------------------------------------
%-------------------------------------------------------------

\begin{proof}
Assume otherwise. Since we have discarded all excess pairs, each one of the three sets $\tmset_X,\tmset_T,\tmset_F$ contains at least $25H_i$ demand pairs. From Lemma~\ref{lem: enc resolution},  we can find three pairs, $(s_X,t_X)\in \tmset_X$, $(s_T,t_T)\in \tmset_T$ and $(s_F,t_F)\in \tmset_F$, with neither pair encircling the other. We now claim that it is impossible to route all three pairs simultaneously via node-disjoint paths. In order to do so, we use the following simple observation.

%-------------------------------------------------------------
\begin{observation}\label{obs: cylinder} Let $\Sigma$ be the sphere, and let $D,D'$ be two disjoint closed simple discs on $\Sigma$, whose boundaries are denoted 
by $\Gamma$ and $\Gamma'$, respectively. Let $\Sigma'$ be the cylinder obtained from $\Sigma$ by removing the open discs $D\setminus \Gamma$, $D'\setminus \Gamma'$ from it. Assume further that we have three distinct points $a_1,a_2,a_3$ appearing on $\Gamma$ in this circular order, and three distinct points $a_1',a_3',a_2'$ appearing on $\Gamma'$ in this circular order (see Figure~\ref{fig: cylinder}). Let $\gamma_1,\gamma_2,\gamma_3$ be three simple curves on $\Sigma'$, where for each $1\leq j\leq 3$, $\gamma_j$ connects $a_j$ to $a_j'$. Then the three curves cannot be pairwise disjoint.
\end{observation}
%-------------------------------------------------------------

\begin{proof}
Assume otherwise. We can twist the cylinder so that the curve $\gamma_1$ becomes a straight vertical line, and then cut the cylinder along this line, obtaining a square, with $a_2$ appearing to the left of $a_3$ on the top of the square, and $a_3'$ appearing to the left of $a_2'$ on the bottom of the square. It is now immediate to see that it is impossible to connect $a_2$ to $a_2'$ and $a_3$ to $a_3'$ via disjoint curves.

\begin{figure}[h]
\begin{center}
\scalebox{0.5}{\includegraphics{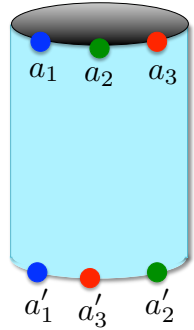}}
\caption{Illustration to Observation~\ref{obs: cylinder}.\label{fig: cylinder}}
\end{center}
\end{figure}
\end{proof}
%-------------------------------------------------------------
%-------------------------------------------------------------

Assume for contradiction that there are three node-disjoint paths, routing demand pairs $(s_X,t_X)$, $(s_T,t_T)$, and $(s_F,t_F)$. We embed the grid $G_{i+1}$ onto a sphere in a natural way, and then construct a curve $\gamma_X$, by concatenating the image of the path $P(s_X,t_X)$ with the line $Q_{t_X}$, denoting by $b_X$ the endpoint of this curve that is distinct from $s_X$. If this curve is not simple, we delete all loops from it until it becomes simple. We define curves $\gamma_T$, $\gamma_F$, and their endpoints $b_T$ and $b_F$ for the pairs  $(s_T,t_T)$, and $(s_F,t_F)$, respectively, in a similar way. Let $\Gamma$ be the top boundary of the grid, and let $\Gamma'$ be the union of the bottom boundaries of the boxes $B_T(x),B_F(x)$ and $B_X(x)$. By slightly thickening $\Gamma$ and $\Gamma'$, we can create two disjoint discs $D$ and $D'$ in the plane, with $\Gamma$ being the boundary of $D$ and $\Gamma'$ the boundary of $D'$, so that the curves $\gamma_X,\gamma_T$ and $\gamma_F$ are internally disjoint from $D$ and $D'$. Vertices $s_T,s_X$ and $s_F$ appear on $\Gamma$ in this circular order, while vertices $b_T,b_F$ and $b_X$ appear on $\Gamma'$ in this circular order. But then we obtain a collection of three disjoint curves on a cylinder that contradicts Observation~\ref{obs: cylinder}.
\end{proof}

The following corollary is now immediate.

\begin{corollary}\label{cor: var pairs}
$|\tmset^V|\leq (65h+1)\cdot n\cdot c_{i+1}N'_i$.
\end{corollary}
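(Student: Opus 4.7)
The plan is to obtain the bound by applying Lemma~\ref{lem: var gadget analysis} variable-by-variable and combining it with the inductive bound on each constituent level-$i$ instance. First I would fix an arbitrary variable $x$ and recall the decomposition $\tmset(x)=\tmset_X\cup \tmset_T\cup \tmset_F$, where $\tmset_X\subseteq \mset^X(x)$, $\tmset_T\subseteq \mset^T(x)$, $\tmset_F\subseteq \mset^F(x)$. By construction, $\mset^X(x)$ is the union of the demand pairs of $60h$ distinct $c_{i+1}$-wide level-$i$ instances, while each of $\mset^T(x)$ and $\mset^F(x)$ is the union over $5h+1$ such instances.

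Next, I would invoke the induction hypothesis in the form appropriate for $c_{i+1}$-wide instances: each $c_{i+1}$-wide level-$i$ instance $\iset'$ consists of $c_{i+1}$ disjoint level-$i$ instances, and the restriction of $\tpset$ to $\iset'$ is in particular a valid solution to each of those level-$i$ sub-instances (on its instantiation), so at most $c_{i+1}N'_i$ of the demand pairs of $\iset'$ lie in $\tmset$. Applying this to each of the $60h$, $5h+1$, and $5h+1$ instances gives
\[
|\tmset_X|\leq 60h\cdot c_{i+1}N'_i,\qquad |\tmset_T|\leq (5h+1)c_{i+1}N'_i,\qquad |\tmset_F|\leq (5h+1)c_{i+1}N'_i.
\]

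Now I would apply Lemma~\ref{lem: var gadget analysis}, which guarantees that at least one of $\tmset_X,\tmset_T,\tmset_F$ is empty. The worst case occurs when the empty set is $\tmset_T$ or $\tmset_F$, yielding
\[
|\tmset(x)|\leq 60h\cdot c_{i+1}N'_i+(5h+1)c_{i+1}N'_i=(65h+1)c_{i+1}N'_i,
\]
while the case $\tmset_X=\emptyset$ only gives the smaller bound $(10h+2)c_{i+1}N'_i$. Finally, summing over the $n$ variables yields $|\tmset^V|=\sum_{j=1}^n|\tmset(x_j)|\leq (65h+1)nc_{i+1}N'_i$, as required.

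There is no real obstacle here: once Lemma~\ref{lem: var gadget analysis} is established, the corollary is a direct accounting argument. The only subtlety is to make sure the induction hypothesis is applied at the correct granularity, namely to each of the level-$i$ instances inside every $c_{i+1}$-wide instance rather than to the $c_{i+1}$-wide instance as a whole; this is justified because each level-$i$ sub-instance inherits a valid instantiation (with all sources on the top boundary of the ambient grid and box far from the grid boundary) from the instantiation of $\iset$.
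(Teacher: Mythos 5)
Your argument is correct and is essentially the paper's own reasoning, which labels the corollary ``immediate'' after Lemma~\ref{lem: var gadget analysis}: you correctly combine the inductive per-instance bound of $c_{i+1}N'_i$ with the fact that at least one of $\tmset_X,\tmset_T,\tmset_F$ is empty, taking the worst case $60h+5h+1=65h+1$ instances per variable, and sum over the $n$ variables.
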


Consider some variable $x$ of $\phi$. If $\tmset\cap \mset^T(x)=\emptyset$, then we assign it the value \false, and otherwise we assign it the value \true.

Fix some variable $x\in X$ and some index $1\leq j\leq 5h+1$. We say that index $j$ is bad for variable $x$ if either (i) $x$ is assigned the value \true, and instance $\iset_j^T(x)$ is uninteresting; or (ii) $x$ is assigned the value \false, and instance $\iset_j^F(x)$ is uninteresting. We will later show that the total number of pairs $(x,j)$ where $j$ is a bad index for $x$ is bounded by $12n$.

%We say that a variable $x$ is \emph{bad} iff $|\tmset(x)|\leq (65h+0.5)cN'_i$. We will later show that the number of bad variables is bounded by $\eps n/10$. Consider now some variable $x$ that is a good variable. From Lemma~\ref{lem: var gadget analysis}, one of the corresponding three sets $\tmset_X(x),\tmset_T(x),\tmset_F(x)$ routes fewer than $9H_i/2$ pairs. Recall that $H_i=\frac{2c\eps^2N'_i}{10^{13}}$.
%Since $|\tmset(x)|\geq (65h+0.5)cN'_i$, it is easy to verify that either $|\tmset_T|<9H_i/2$, or $|\tmset_F|<9H_i/2$. Assume first that the former happens. We assign the variable $x$ the value \false. Notice that for all $1\leq j\leq 60h$, set $\tmset(x)$ must contain at least $cN'_i/3$ demand pairs from instance $\iset^X_j(x)$, and similarly for all $1\leq j\leq 5h+1$, set $\tmset(x)$ must contain at least $cN'_i/3$ demand pairs from instance $\iset^T_j(x)$. We call the set of at most $9H_i/2$ pairs of $\tmset(x)$ that belong to $\mset_F(x)$ \emph{excess demand pairs}. The case where $|\tmset_F(x)|<9H_i/2$ is treated similarly, except that now $x$ is assigned the value \true, and the excess demand pairs belong to $\tmset_F(x)$.

\paragraph{Clause Gadget Analysis}
Consider a new clause $C^j_q\in \cset'$ and its three literals $\ell_{q_1},\ell_{q_2},\ell_{q_3}$. %For $1\leq z\leq 3$, we let $\tmset(C_q,\ell_{q_z})\subseteq \tmset(C_q)$ be the set of all demand pairs whose destinations lie in boxes $B(\iset_j(C_q,\ell_{q_z}))$, for $1\leq j\leq h$. In other words, these are all demand pairs that are routed and represent literal $\ell_{q_z}$ for clause $C_q$. 
We say that clause $C_q^j$ is a \emph{troublesome clause}, or a \emph{troublesome copy of $C_q$},  iff there are at least two values $1\leq z<z'\leq 3$, for which instances $\iset_j(C_q,\ell_{q_z})$, $\iset_j(C_q,\ell_{q_{z'}})$ are both interesting.

\begin{lemma}\label{lemma: troublesome clauses}
For every original clause $C_q\in \cset$, at most three of its copies are troublesome.
\end{lemma}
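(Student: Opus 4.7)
The plan is to prove a stronger auxiliary claim: for each unordered pair $\{\ell,\ell'\}$ of distinct literals of $C_q$, there is at most one index $j\in\{1,\dots,h\}$ such that both $\iset_j(C_q,\ell)$ and $\iset_j(C_q,\ell')$ are interesting. Granting this claim, the lemma follows by double counting: every troublesome copy of $C_q$ has at least two interesting literals, hence contributes at least one ``doubly interesting'' pair, and since $C_q$ has only $\binom{3}{2}=3$ literal pairs, the number of troublesome copies is at most $3$.

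To prove the auxiliary claim, I would argue by contradiction: assume two indices $j_1<j_2$ both witness the same doubly interesting pair $\ell_a,\ell_b$, so that the four sub-instances $\iset_{j_1}(C_q,\ell_a), \iset_{j_2}(C_q,\ell_a), \iset_{j_1}(C_q,\ell_b), \iset_{j_2}(C_q,\ell_b)$ are all interesting. I would focus on three of them---$\iset_{j_1}(C_q,\ell_a)$, $\iset_{j_2}(C_q,\ell_a)$, $\iset_{j_1}(C_q,\ell_b)$---each containing at least $25H_i$ routed pairs. Applying Lemma~\ref{lem: enc resolution} with $r=3$ (its hypothesis $|S_k|\geq 9H_i/2$ is comfortably met by $25H_i$) would produce three demand pairs $(s_k,t_k)$, one per chosen sub-instance, whose routing paths in $\tpset$ pairwise do not encircle each other.

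The remainder is a topological obstruction modeled on the proof of Lemma~\ref{lem: var gadget analysis}. I would concatenate each path $P(s_k,t_k)$ with the vertical descent segment $Q_{t_k}$ from $t_k$ down to the bottom of the level-$i$ cut-out box containing $t_k$, deleting self-loops if necessary, to obtain three simple curves $\gamma_1,\gamma_2,\gamma_3$ in the planar embedding. The non-encircling property together with the node-disjointness of $\tpset$ ensures the $\gamma_k$'s are pairwise disjoint. I would then embed $G_{i+1}$ in the sphere and introduce two disjoint closed discs: $D$, a thin neighborhood of the outer grid boundary, so that its boundary $\Gamma$ carries the sources; and $D'$, a thin neighborhood of the bottom row of $B(C_q)$ equipped with narrow vertical ``fingers'' rising along each $Q_{t_k}$-column to the endpoint $q_{t_k}$ of $Q_{t_k}$, so that its boundary $\Gamma'$ carries these three endpoints. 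Each $\gamma_k$ then becomes a simple curve from $\Gamma$ to $\Gamma'$, internally disjoint from $D\cup D'$.

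The contradiction comes from Observation~\ref{obs: cylinder} together with a permutation check. Assuming without loss of generality that $x_a$'s variable gadget lies to the left of $x_b$'s on $Z(\iset)$ and that $\ell_a$ precedes $\ell_b$ in the intra-$B^k(C_q)$ literal ordering, the sources appear along $\Gamma$ in the circular order $s^a_{j_1},s^a_{j_2},s^b_{j_1}$, while the endpoints appear along $\Gamma'$ in the circular order $q_{t_1},q_{t_3},q_{t_2}$, because the three enclosing sub-instance boxes $B(\iset_{j_1}(C_q,\ell_a)), B(\iset_{j_1}(C_q,\ell_b)), B(\iset_{j_2}(C_q,\ell_a))$ sit in exactly this left-to-right order inside $B(C_q)$. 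Setting $(a_1,a_2,a_3)=(s^a_{j_1},s^a_{j_2},s^b_{j_1})$ and $(a'_1,a'_2,a'_3)=(q_{t_1},q_{t_2},q_{t_3})$ renders the $\Gamma'$-order as $a'_1,a'_3,a'_2$---precisely the forbidden configuration of Observation~\ref{obs: cylinder}. The main subtlety I expect to face is verifying that each $\gamma_k$ is truly internally disjoint from the interior of $D'$; this relies on the fact that the bottom row of $B(C_q)$ consists entirely of deleted vertices, and that inside each finger the only non-deleted vertex is $q_{t_k}$ (which lies on $\Gamma'$), so any incidental earlier visit of $P(s_k,t_k)$ to $q_{t_k}$ can be absorbed into the endpoint by the loop-deletion step.
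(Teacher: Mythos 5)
Your high-level plan is sound and in fact slightly cleaner than the paper's: the auxiliary claim (at most one index $j$ per unordered literal pair is doubly interesting) is exactly what the paper's pigeonhole extracts, and the double-counting over the $\binom{3}{2}=3$ literal pairs is a nice way to close the argument. The issue is in the proof of the auxiliary claim, and it is a genuine gap rather than a cosmetic one.

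The problem is the double ``without loss of generality.'' The left-to-right order of the variable gadgets on $Z(\iset)$ is governed by variable indices, while the left-to-right order of the sub-boxes inside each $B^{j}(C_q)$ is governed by the literal index $z\in\set{1,2,3}$; the construction imposes no relationship between the two. You may WLOG pick the label $a$ so that $x_a$'s gadget is to the left of $x_b$'s, but once you do that, whether $\ell_a$ precedes $\ell_b$ in the clause box is determined, not free. In the unhandled sub-case ($x_a$'s gadget to the left but $\ell_b$'s sub-box to the left inside $B^{j_1}(C_q)$), your chosen triple $\iset_{j_1}(C_q,\ell_a),\iset_{j_2}(C_q,\ell_a),\iset_{j_1}(C_q,\ell_b)$ gives source order $s_1,s_2,s_3$ on $\Gamma$ but endpoint order $q_{t_3},q_{t_1},q_{t_2}$ on $\Gamma'$. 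That is the cyclic rotation $1\to 2\to 3\to 1$, not a transposition; a cyclic rotation \emph{is} realizable by three disjoint curves on a cylinder (wind all three around once), so Observation~\ref{obs: cylinder} gives no contradiction. Note that the paper's own proof explicitly considers both possible source orderings $(s_1,s_1',s_2,s_2')$ and $(s_2,s_2',s_1,s_1')$, which is exactly the distinction your WLOG tries to erase.

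There are two natural fixes. The simplest is to keep all four interesting instances and apply Lemma~\ref{lem: enc resolution} with $r=4$ (still comfortably within the $25H_i$ budget), as the paper does: with four curves, both possible source orders produce a circular order on $\Gamma$ that differs from the one on $\Gamma'$ as a circular sequence, so the routing is topologically impossible regardless of the sub-case. Alternatively, if you want to stick with three curves and the $r=3$ version of the cylinder obstruction, you must choose the triple adaptively: in the problematic sub-case, take $\iset_{j_1}(C_q,\ell_a),\iset_{j_1}(C_q,\ell_b),\iset_{j_2}(C_q,\ell_b)$ (i.e.\ repeat the literal whose sub-box comes \emph{first} in the clause box, not the one whose variable gadget comes first); one can check that this choice restores the forbidden transposition pattern. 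Either way, you need to branch on the relative orders rather than assume both simultaneously.
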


\begin{proof}
Assume otherwise, and let $C_q\in \cset$ be a clause, such that at least four of its copies are troublesome. Then there are two indices $1\leq z<z'\leq 3$, and two indices $1\leq j<j'\leq h$, such that each one of the four instances $\iset_j(C_q,\ell_{q_z}),\iset_j(C_q,\ell_{q_{z'}}), \iset_{j'}(C_q,\ell_{q_z}),\iset_{j'}(C_q,\ell_{q_{z'}})$ is interesting, and so each of the
 four sets $\tmset(C^j_q,\ell_{q_z}),\tmset(C^j_q,\ell_{q_{z'}}), \tmset(C^{j'}_q,\ell_{q_z}),\tmset(C^{j'}_q,\ell_{q_{z'}})$ contains at least $25H_i$ demand pairs. From Lemma~\ref{lem: enc resolution}, we can find demand pairs $(s_1,t_1)\in \tmset(C^j_q,\ell_{q_z})$,  $(s_2,t_2)\in \tmset(C^j_q,\ell_{q_{z'}})$, $(s_1',t_1')\in \tmset(C^{j'}_q,\ell_{q_{z}})$, and $(s_2',t_2')\in \tmset(C^{j'}_q,\ell_{q_{z'}})$, such that none of the four resulting pairs encircles another. Notice that the sources of these pairs appear on the top boundary of the grid in one of the following two orders: $(s_1,s_1',s_2,s_2')$, if the variable  corresponding to $\ell_{q_z}$ appears before the variable corresponding to $\ell_{q_{z'}}$ in the ordering $(x_1,x_2,\ldots,x_n)$ of the variables, or $(s_2,s_2',s_1,s_1')$ otherwise.

We draw a curve $\gamma_1$ by concatenating the image of the path $P(s_1,t_1)$ and the line $Q_{t_1}$. We draw curves $\gamma_1'$, $\gamma_2$ and $\gamma_2'$ for the remaining pairs similarly. On the one hand, none of the the resulting curves may cross each other, while on the other hand the vertices lying on the bottom boundary of the box $B(C_q)$ have been deleted, which is impossible.
\end{proof}

\begin{corollary}\label{cor: clause pairs}
For each original clause $C_q\in \cset$, $|\tmset(C_q)|\leq (6+h)N'_ic_{i+1}$, and overall $|\tmset^C|\leq 5n(6+h)N'_ic_{i+1}/3$.
\end{corollary}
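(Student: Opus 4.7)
The plan is to combine Lemma~\ref{lemma: troublesome clauses} with the inductive bound on the number of routed pairs per $c_{i+1}$-wide level-$i$ instance, exploiting the fact that all excess pairs have already been discarded from $\tmset$. Concretely, fix an original clause $C_q\in\cset$ and partition its $h$ copies $C_q^1,\ldots,C_q^h$ into the set of troublesome copies and the set of non-troublesome copies, which by Lemma~\ref{lemma: troublesome clauses} has at most $3$ and at least $h-3$ elements respectively. Since $\tmset(C_q^j)=\tmset(C_q^j,\ell_{q_1})\cup\tmset(C_q^j,\ell_{q_2})\cup\tmset(C_q^j,\ell_{q_3})$, I bound each copy's contribution according to how many of its three underlying instances $\iset_j(C_q,\ell_{q_z})$ are interesting.

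For a troublesome copy $C_q^j$, in the worst case all three of $\iset_j(C_q,\ell_{q_1}),\iset_j(C_q,\ell_{q_2}),\iset_j(C_q,\ell_{q_3})$ are interesting. The induction hypothesis (applied to each of these $c_{i+1}$-wide level-$i$ instances, viewed as $c_{i+1}$ independent level-$i$ instances) yields at most $c_{i+1}N'_i$ routed pairs per instance, so the copy contributes at most $3c_{i+1}N'_i$ pairs to $\tmset(C_q)$. For a non-troublesome copy $C_q^j$, by the very definition of "troublesome" at most one of the three instances is interesting; the other two are uninteresting and hence contribute no pairs to $\tmset$, since the excess pairs were removed earlier in the argument. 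Thus a non-troublesome copy contributes at most $c_{i+1}N'_i$ pairs.

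Summing over the $h$ copies of $C_q$ gives
\[
|\tmset(C_q)|\;\leq\;3\cdot 3c_{i+1}N'_i+(h-3)\cdot c_{i+1}N'_i\;=\;(h+6)c_{i+1}N'_i,
\]
which is the per-clause bound. For the global bound, summing over all $m=5n/3$ original clauses in $\cset$ yields
\[
|\tmset^C|\;=\;\sum_{C_q\in\cset}|\tmset(C_q)|\;\leq\;m(h+6)c_{i+1}N'_i\;=\;\frac{5n(h+6)}{3}\,c_{i+1}N'_i.
\]
The only subtle point is that we are allowed to drop the uninteresting instances' contributions entirely; this is precisely what the earlier excess-pair cleanup buys us. Everything else is bookkeeping combined with Lemma~\ref{lemma: troublesome clauses}, so no genuine obstacle arises in this corollary.
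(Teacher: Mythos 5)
Your proof is correct and follows essentially the same route as the paper's: partition the $h$ copies of $C_q$ into troublesome and non-troublesome using Lemma~\ref{lemma: troublesome clauses}, charge at most $3c_{i+1}N'_i$ to each troublesome copy and at most $c_{i+1}N'_i$ to each non-troublesome one via the inductive bound and the excess-pair cleanup, and sum. The paper phrases the arithmetic as a baseline of $h\cdot c_{i+1}N'_i$ plus $2c_{i+1}N'_i$ extra per troublesome copy rather than your $3\cdot 3 + (h-3)\cdot 1$ split, but this is cosmetic; your explicit observation that uninteresting instances contribute zero because excess pairs were discarded is the same fact the paper relies on implicitly.
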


\begin{proof}
Consider some original clause $C_q\in \cset$. Each of its non-troublesome copies $C^j_q$ contributes at most $c_{i+1}N'_i$ demand pairs to $\tmset^C$, while a troublesome copy may contribute at most $3c_{i+1}N'_i$ demand pairs. Since at most three copies are troublesome, overall $|\tmset(C_q)|\leq (6+h)N'_ic_{i+1}$ and $|\tmset^C|\leq m(6+h)N'_ic_{i+1}= 5n(6+h)N'_ic_{i+1}/3$, since $m=5n/3$.
\end{proof}

In the rest of our proof, we will reach a contradiction by proving that the current assignment to the variables of $\phi$ satisfies more than $(1-\eps)hm$ clauses in $\cset'$. In order to do so, we gradually discard clauses from $\cset'$, until we obtain a large enough subset of clauses that is guaranteed to be satisfied by the current assignment.

Our first step is to define uninteresting clauses. Recall that for each new clause $C_q^j\in \cset'$, there are three corresponding $c_{i+1}$-wide level-$i$ instances, $\iset_j(C_q,\ell_{q_1}),\iset_j(C_q,\ell_{q_2})$, and $\iset_j(C_q,\ell_{q_3})$. We say that clause $C_q^j$ is interesting iff at least one of these three instances is interesting, and we say that it is uninteresting otherwise. Let $\cset_0'\subseteq \cset'$ be the set of all  uninteresting clauses.

\begin{claim}\label{claim: uninteresting}
$|\cset_0'|\leq 12n$.
\end{claim}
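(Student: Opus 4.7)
The plan is to convert the lower bound on $|\tmset|$ (which survives after discarding excess pairs) into a lower bound on the number of interesting new clauses, by refining the per-clause upper bound from Corollary~\ref{cor: clause pairs} so that it scales with the number of interesting copies rather than with $h$. Since every demand pair we retained belongs to an interesting $c_{i+1}$-wide level-$i$ instance, any pair in $\tmset^C$ automatically lies in a box $B^j(C_q)$ whose associated clause $C_q^j$ is interesting; uninteresting copies contribute nothing.

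Fix an original clause $C_q \in \cset$, and let $I_q$ be the number of its interesting copies. By Lemma~\ref{lemma: troublesome clauses} at most three of them are troublesome. A non-troublesome interesting copy has (by definition) exactly one interesting instance, contributing at most $c_{i+1}N'_i$ pairs by the inductive hypothesis; each troublesome copy contributes at most $3c_{i+1}N'_i$ pairs. Adding these up gives $|\tmset(C_q)| \leq (I_q + 6)\, c_{i+1} N'_i$, and summing over $q$,
\[
|\tmset^C| \;\leq\; (I + 6m)\, c_{i+1} N'_i,\qquad \text{where } I \;=\; \sum_q I_q \;=\; mh - |\cset_0'|.
\]
On the other hand, since $|\tmset| \geq (1-2\delta)(200h/3+1)\, n\, c_{i+1} N'_i$ (after discarding excess pairs) and Corollary~\ref{cor: var pairs} gives $|\tmset^V| \leq (65h+1)\, n\, c_{i+1} N'_i$, using $(1-2\delta)(200h/3 + 1) - (65h+1) = 5h/3 - 2\delta(200h/3+1)$ we get
\[
|\tmset^C| \;\geq\; \bigl[\, 5h/3 - 2\delta(200h/3+1) \,\bigr]\, n\, c_{i+1} N'_i.
\]

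Combining the upper and lower bounds on $|\tmset^C|$, cancelling $5nh/3 = mh$ on both sides, and using $6m = 10n$, yields
\[
|\cset_0'| \;\leq\; 10n + 2\delta(200h/3 + 1)\, n.
\]
Finally, substituting $\delta = 8\eps^2/10^{12}$ and $h = 1000/\eps$ makes the error term $2\delta(200h/3+1)n$ bounded by roughly $3.2\cdot 10^{-6}\,n$, which is far smaller than $n$, so $|\cset_0'| \leq 12n$ follows with enormous slack. There is no genuine obstacle here: all the structural content (Lemma~\ref{lem: var gadget analysis}, Lemma~\ref{lemma: troublesome clauses}, the bound $|\hat\mset| \leq \delta(200h/3+1)nc_{i+1}N'_i$, and the inductive hypothesis) has already been established; what remains is a careful bookkeeping step that refines the $h$ in Corollary~\ref{cor: clause pairs} to the actual interesting-copy count $I_q$, and then solves for $|\cset_0'|$.
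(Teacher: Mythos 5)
Your proof is correct and takes essentially the same approach as the paper: the common key observation is that after excess pairs are discarded, every uninteresting copy $C_q^j$ contributes nothing to $\tmset^C$, so the bound of Corollary~\ref{cor: clause pairs} can be tightened by $|\cset_0'|\,c_{i+1}N'_i$, and then the lower bound $|\tmset|\geq(1-2\delta)(200h/3+1)nc_{i+1}N'_i$ forces $|\cset_0'|$ to be small. The only differences are presentational: you solve the inequality directly for $|\cset_0'|$ (obtaining roughly $10n$ plus a negligible error term) whereas the paper assumes $|\cset_0'|>12n$ and derives a contradiction, and you spell out the per-clause refinement $|\tmset(C_q)|\leq(I_q+6)c_{i+1}N'_i$ that the paper dismisses as ``it is easy to see.''
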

\begin{proof}
Assume otherwise. From Corollary~\ref{cor: var pairs}, $|\tmset^V|\leq (65h+1)n c_{i+1}N'_i$, while from Corollary~\ref{cor: clause pairs}, it is easy to see that $|\tmset^C|\leq \frac{5n(6+h)N_i'c_{i+1}}{3}-|\cset_0'|N'_ic_{i+1}\leq  \frac{5n(6+h)N_i'c_{i+1}}{3}-12 n N'_ic_{i+1}=nN_i'c_{i+1}(\frac{5h}{3}-2)$. But then:

\[|\tmset|\leq (65h+1)n c_{i+1}N'_i + \left(\frac{5h}{3}-2\right ) nN_i'c_{i+1}\leq \left(\frac{200h}{3}-1\right )nN_i'c_{i+1}<(1-2\delta)\left(\frac{200h}{3}+1\right )nN_i'c_{i+1},\]
 since $h=1000/\eps$ and $\delta=8\eps^2/10^{12}$,
a contradiction.
\end{proof}

Consider  some clause $C^j_q\in \cset'$ that is interesting. Then there is an index $z\in \set{1,2,3}$, such that instance $\iset_j(C_q,\ell_{q_z})$ is interesting. If there are several such indices $z$ (if $C^j_q$ is troublesome), then we choose one of them arbitrarily. We say that clause $C_q^j$ \emph{chooses} the literal $\ell_{q_z}$.
We say that $C^j_q$ is a \emph{cheating clause} iff the variable $x_{q_z}$ corresponding to literal $\ell_{q_z}$ is assigned the opposite value: In other words, if $\ell_{q_z}=x_{q_z}$, then $x$ is assigned the value \false, and otherwise, $\ell_{q_z}=\neg x_{q_z}$, and $x_{q_z}$ is assigned the value \true. We further say that it is a \emph{bad cheating clause} iff at least one of the indices $j,j+1$ is bad for the variable $x_{q_z}$, and we say that it is a \emph{good cheating clause} otherwise. Let $\cset_1'\subseteq \cset'\setminus \cset_0'$ be the set of all the cheating clauses. In the following lemma we bound the number of the cheating clauses.

\begin{lemma}\label{lem: cheating clauses}
There are at most $24n$ bad cheating clauses, and at most $3m$ good cheating clauses.
\end{lemma}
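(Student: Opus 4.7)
The two bounds decouple: the bad-cheating bound is a global counting argument controlling the total number of bad pairs, while the good-cheating bound is a per-(clause, literal) topological obstruction extending the level-$1$ face-counting argument using the $Q_t$-lines.

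For the bad bound, I will first show $\sum_x B(x)\le 12n$. Sharpen Corollary~\ref{cor: var pairs} to $|\tmset^V|\le[(65h+1)n-\sum_x(B(x)+E(x))]c_{i+1}N'_i$ (where $E(x)$ counts uninteresting \extra instances of $x$), since every uninteresting instance contributes $0$ pairs after the excess-pair removal. Combining with $|\tmset^C|\le 5n(h+6)c_{i+1}N'_i/3$ from Corollary~\ref{cor: clause pairs} and $|\tmset|\ge(1-2\delta)(200h/3+1)nc_{i+1}N'_i$ gives $\sum_x(B(x)+E(x))\le[10+2\delta(200h/3+1)]n\le 12n$ for $h=1000/\eps$, $\delta=8\eps^2/10^{12}$. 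Next, each bad pair $(x,j^*)$ is charged by at most two bad cheating clauses: a bad cheating clause $C_q^j$ that chose the literal of $x$ has its cheating source $s_j(C_q,\ell)$ between the sources of $\iset^T_\alpha(x)$ and $\iset^T_{\alpha+1}(x)$ on $I^T(x)$ (or the $I^F$-analogue), where $\alpha=(r'-1)h+j$ and $r'\in\{1,\dots,5\}$ indexes $C_q$ in the list of clauses containing $\ell$, and its badness is precisely $\alpha$ or $\alpha+1$ being a bad index of $x$. Given $(x,j^*)$, each of $\alpha=j^*$ and $\alpha=j^*-1$ has at most one solution $(r',j)$ with $r'\in\{1,\dots,5\}$, $j\in\{1,\dots,h\}$, giving at most $2$ charges per bad pair, hence $\le 24n$ bad cheating clauses in total.

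For the good-cheating bound, it suffices to show that for each $C_q\in\cset$ and each literal $\ell$ of $C_q$, at most one copy of $C_q$ that chose $\ell$ is good cheating; summing over the $3m$ (clause, literal) pairs gives $\le 3m$. Assume for contradiction $C_q^j,C_q^{j'}$ with $j<j'$ are two such copies. Let $x$ be the variable of $\ell$ (WLOG assigned \true, so $\ell=\neg x$), let $r'$ be the index of $C_q$ in $\cset'(x)$, and set $\alpha=(r'-1)h+j$, $\alpha'=(r'-1)h+j'$. Good cheating makes $\iset^T_\alpha(x),\iset^T_{\alpha+1}(x),\iset^T_{\alpha'+1}(x),\iset_j(C_q,\ell),\iset_{j'}(C_q,\ell)$ all interesting. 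Applying Lemma~\ref{lem: enc resolution} with $r=5$ (feasible because $25H_i\ge r^2H_i/2$) produces mutually non-encircling pairs $(s_1,t_1),(s_2,t_2),(s_3,t_3)$ from the first three instances and $(\sigma,\tau),(\sigma',\tau')$ from the last two, whose sources appear on the top boundary $R$ in the cyclic order $s_1,\sigma,s_2,\sigma',s_3$.

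The topological contradiction extends the level-$1$ argument, using the deleted bottom rows of the level-$i$ boxes in place of the fictitious ``bottom of $B'(x)$''. The region of deletions inside $B_T(x)$ forms a connected comb (horizontal bottom bar together with the vertical walls between adjacent level-$i$ boxes), so in the planar embedding of $G$ it bounds a single face $F_{L_T}$; similarly the deletions inside $B(C_q)$ bound a single face $F_{L_{C_q}}$, and these two are distinct. Form the curves $\gamma_k=P(s_k,t_k)\cup Q_{t_k}$, which are pairwise disjoint by non-encircling and end at vertices $v_1^*,v_2^*,v_3^*$ on the boundary of $F_{L_T}$ in left-to-right order. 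Because the bottom bar of $F_{L_T}$ spans horizontally across $B_T(x)$, I can draw an auxiliary curve $\tilde L_T$ inside $F_{L_T}$ connecting $v_1^*\to v_2^*\to v_3^*$ in order, producing a ladder drawing $R\cup\gamma_1\cup\gamma_2\cup\gamma_3\cup\tilde L_T$ with two distinct interior regions $f_{12}$ (between $\gamma_1,\gamma_2$) and $f_{23}$ (between $\gamma_2,\gamma_3$). Now $P(\sigma,\tau)\cup Q_\tau$ is disjoint from the drawing (by non-encircling against the $\gamma_k$'s and because graph paths cannot enter the interior of the face $F_{L_T}$), so entering $f_{12}$ at $\sigma$ forces $\tau\in f_{12}$; symmetrically $\tau'\in f_{23}$. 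But $Q_\tau$ and $Q_{\tau'}$ both end on the boundary of the single face $F_{L_{C_q}}$, which is contained in one region $f^*$ of the drawing, forcing $\tau,\tau'\in f^*$ and hence $f_{12}=f^*=f_{23}$ --- contradicting $f_{12}\neq f_{23}$. The main technical care lies in verifying that $F_{L_T}$ is a single face with $v_1^*,v_2^*,v_3^*$ on its boundary in the prescribed order, which follows from the connectedness of the deleted comb-shape induced by the side-by-side placement of the level-$i$ boxes inside $B_T(x)$.
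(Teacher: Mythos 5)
Your proof is correct and follows essentially the same approach as the paper: the same global accounting argument for bounding $\sum_x\beta(x)\leq 12n$ and the factor-$2$ charging for bad cheating clauses, and the same good-cheating argument via five interesting instances, Lemma~\ref{lem: enc resolution} with $r=5$, and a planar face/encircling contradiction in which the sources appear in the order $s_1,\sigma,s_2,\sigma',s_3$ and the curves pin $\tau$ and $\tau'$ into distinct regions that are also forced to coincide. Two small remarks: your explicit bookkeeping with $\alpha=(r'-1)h+j$ cleans up a notational shortcut in the paper (which writes ``indices $j,j+1$'' when it really means the indices into the $5h+1$ intervals of $I^{T}(x)$ or $I^{F}(x)$), and your auxiliary curve $\tilde L_T$ inside the deleted comb face plays exactly the role of the paper's ``bottom boundary of $B_F(x)$'' curve, so these are presentational rather than substantive differences.
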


We provide the proof of the lemma below, after we complete the analysis of the \ni using it. Notice that if a clause $C_q^j$ is an interesting non-cheating clause, then the current assignment must satisfy it. From Claim~\ref{claim: uninteresting} and Lemma~\ref{lem: cheating clauses}, there are at least $hm-12n-24n-3m=hm-123m/5=(1-123\eps/5000)hm>(1-\eps)hm$ such clauses, contradicting Observation~\ref{obs: ni-satisfied new clauses}. It now remains to complete the proof of Lemma~\ref{lem: cheating clauses}.

\begin{proofof}{Lemma~\ref{lem: cheating clauses}}
In order to bound the number of bad cheating clauses, we first bound the total number of bad indices for variables. Let $\beta$ denote the total number of pairs $(x,j)$, where $x$ is a variable of $\phi$ and $1\leq j\leq 5h+1$ is an index, such that $j$ is a bad index for $x$. Recall that, assuming that $x$ is assigned the value \true, this means that instance $\iset_j^T(x)$ is uninteresting, and if $x$ is assigned the value \false, then instance $\iset_j^F(x)$ is uninteresting. For every variable $x$, let $\beta(x)$ be the total number of indices $1\leq j\leq 5h+1$ that are bad for $x$. Then $|\tmset(x)|\leq (65h+1)N'_ic_{i+1}-\beta(x)\cdot N'_ic_{i+1}$. Therefore, overall,

\[\begin{split}
|\tmset|&\leq n \cdot (65h+1)N'_ic_{i+1}-\beta\cdot N'_ic_{i+1}+|\tmset^C|\\
&\leq  n \cdot (65h+1)N'_ic_{i+1}-\beta\cdot N'_ic_{i+1}+5n(h+6)N'_ic_{i+1}/3\\
&=(200h/3+11)nN'_ic_{i+1} -\beta\cdot N'_ic_{i+1}.\end{split}\]

Assume now for contradiction that $\beta>12n$. Then $|\tmset|<(200h/3-1)nN'_ic_{i+1}<(1-2\delta)(200h/3+1)nN'_ic_{i+1}$, a contradiction.
It is immediate to verify that each pair $(x,j)$ where $j$ is a bad index for variable $x$ may be responsible for at most two bad cheating clauses, and so the total number of bad cheating clauses is bounded by $24n$.

We now turn to bound the number of good cheating clauses. In order to do so, it is enough to prove the following claim:

\begin{claim}\label{claim: good cheating}
For each original clause $C_q\in \cset$ and index $z\in \set{1,2,3}$, at most one copy of $C_q$  that chooses the literal $\ell_{q_z}$ is a good cheating clause.
\end{claim}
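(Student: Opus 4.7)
The plan is to mimic the topological ``cylinder'' argument used in Lemma~\ref{lem: var gadget analysis} and in the level-$1$ sketch, but with representatives drawn from interesting $c_{i+1}$-wide level-$i$ instances via Lemma~\ref{lem: enc resolution}. Assume for contradiction that two distinct copies $C_q^{j}$ and $C_q^{j'}$ (with $j<j'$) of $C_q$ both choose the literal $\ell_{q_z}$ and are both good cheating clauses. Let $x=x_{q_z}$; by the cheating condition $x$ has been assigned the opposite value, so we may assume $x$ is \true and $\ell_{q_z}=\neg x$. Let $r'$ be the rank of $C_q$ in $\cset'(x)$, and set $k=(r'-1)h+j$, $k'=(r'-1)h+j'$, so that the sources of the $c_{i+1}$-wide instances $\iset_j(C_q,\ell_{q_z})$ and $\iset_{j'}(C_q,\ell_{q_z})$ lie respectively on the intervals $Y^T_k$ and $Y^T_{k'}$ of $I^T(x)$. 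The good-cheating hypothesis (combined with $x=\true$) ensures that $\iset^T_k(x)$, $\iset^T_{k+1}(x)$, and $\iset^T_{k'+1}(x)$ are all interesting, and the ``choosing'' of $\ell_{q_z}$ ensures the two clause instances are also interesting. Thus each of the five resulting subsets of $\tmset$ has size at least $25H_i$.

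Next I will apply Lemma~\ref{lem: enc resolution} with $r=5$ (which requires sets of size at least $25H_i/2$) to extract five pairs $p_1,\ldots,p_5$, one from each of the five interesting instances, such that no pair encircles another. Label them so that $p_1\in\iset^T_k(x)$, $p_2\in\iset^T_{k+1}(x)$, $p_3\in\iset_j(C_q,\ell_{q_z})$, $p_4\in\iset^T_{k'+1}(x)$, $p_5\in\iset_{j'}(C_q,\ell_{q_z})$. Their sources appear on the top row $R$ of $G_{i+1}$ in the left-to-right order $s_1,s_3,s_2,s_5,s_4$, the destinations $t_1,t_2,t_4$ lie in level-$i$ sub-boxes of $B_T(x)\subseteq B^V$, and the destinations $t_3,t_5$ lie in level-$i$ sub-boxes of $B(C_q)\subseteq B^C$. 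For each $\alpha$, I will concatenate $P(s_\alpha,t_\alpha)$ with the straight line $Q_{t_\alpha}$ from $t_\alpha$ to the deleted bottom of the level-$i$ box containing $t_\alpha$, obtaining a curve $\gamma_\alpha$ from $s_\alpha$ to that deleted row; node-disjointness of $\tpset$ together with the no-encircling property makes $\gamma_1,\ldots,\gamma_5$ pairwise internally disjoint.

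Finally, I will make a planar drawing consisting of the row $R$, the three curves $\gamma_1,\gamma_2,\gamma_4$, the deleted bottoms of the boxes $B(\iset^T_k(x)),B(\iset^T_{k+1}(x)),B(\iset^T_{k'+1}(x))$ in $B_T(x)$, and the deleted bottom row of $B(C_q)$ (which, since the three sub-boxes $B(\iset_{r}(C_q,\ell_{q_{z''}}))$ are placed side-by-side with no horizontal spacing, forms a single connected deleted segment spanning $B(C_q)$). This drawing has two distinguished bounded faces $f,f'$: $f$ is bounded in part by $\gamma_1,\gamma_2$ and the sub-arc of $R$ from $s_1$ to $s_2$, and $f'$ is bounded in part by $\gamma_2,\gamma_4$ and the sub-arc of $R$ from $s_2$ to $s_4$. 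Thus $s_3\in\overline f$, $s_5\in\overline{f'}$, and $f\ne f'$. Since $B(C_q)$'s deleted bottom lies in $B^C$ (horizontally separated from $B^V$ by at least $2N_{i+1}$ columns, which all curves $\gamma_1,\gamma_2,\gamma_4$ avoid as their other endpoints lie in $B^V$), this deleted segment lies inside a single face $f^*$ of the drawing; then $f^*$ can coincide with at most one of $f,f'$, say $f^*\ne f$. But $\gamma_3$ is a simple curve from $s_3\in\overline f$ to a point on this deleted segment in $f^*$, which forces $\gamma_3$ to cross the boundary of $f$ -- impossible, since that boundary consists only of $R$, the deleted bottoms of the $B(\iset^T_\cdot(x))$ boxes, and $\gamma_1,\gamma_2$, none of which $\gamma_3$ may cross. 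This contradiction proves the claim. The main obstacle, as in Lemma~\ref{lem: var gadget analysis}, is formalizing the planarity argument: specifically, verifying that the deleted bottom of $B(C_q)$ is indeed one connected segment lying in a single face, and that $f\neq f'$, which can be made precise by an application of Observation~\ref{obs: cylinder} taking $D$ to encircle the top boundary and $D'$ to encircle the deleted bottom of $B(C_q)$.
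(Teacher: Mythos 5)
Your overall strategy is the same as the paper's: pick the same five interesting width-$c_{i+1}$ level-$i$ instances (two clause-instances, three variable-instances on either side of the two chosen $Y$-intervals), extract five mutually non-encircling representatives via Lemma~\ref{lem: enc resolution}, draw only the three variable-pair curves together with the top row and the deleted bottom row of the appropriate ``barrier'' box, locate the two clause-pair sources $s_3,s_5$ in two distinct faces $f\neq f'$, and conclude that the curve for at least one clause-pair would have to cross $\partial f$ or $\partial f'$. The case split (you take $x=\true$, $\ell=\neg x$; the paper takes $x=\false$, $\ell=x$) is symmetric, and your explicit use of $k=(r'-1)h+j$ is actually a cleaner bookkeeping than the paper's slightly abused notation.

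There is one genuine gap in the drawing. You include only the deleted bottoms of the three specific boxes $B(\iset^T_k(x)),B(\iset^T_{k+1}(x)),B(\iset^T_{k'+1}(x))$. Since the first two are adjacent, the cycle bounding $f$ is fine, but when $j'>j+1$ the boxes $B(\iset^T_{k+1}(x))$ and $B(\iset^T_{k'+1}(x))$ are \emph{not} adjacent, and the segment of $B_T(x)$'s bottom row between the endpoints of $\gamma_2$ and $\gamma_4$ is missing from your drawing. Then $\gamma_2\cup\gamma_4\cup R[s_2,s_4]$ together with your two disjoint segment pieces form a tree, not a cycle, so $f'$ is not a bounded face and the ``$f^*$ can coincide with at most one of $f,f'$'' step collapses. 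The fix is exactly what the paper does implicitly: draw the \emph{entire} (deleted) bottom row of $B_T(x)$, which is a single connected segment because the $(5h+1)$ cut-out boxes are placed side-by-side with no spacing, so that both $f$ and $f'$ are genuine bounded faces.

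Two further small remarks. First, the parenthetical justification that $\Gamma$ (the deleted bottom of $B(C_q)$) lies in a single face because ``the curves $\gamma_1,\gamma_2,\gamma_4$ avoid'' the columns between $B^V$ and $B^C$ is not right as stated: the routing paths $P(s_\alpha,t_\alpha)$ may wander arbitrarily through those columns. The correct (and sufficient) reason is simply that $\Gamma$ is a connected arc disjoint from every drawn element: the $\gamma_\alpha$ never touch the deleted row, and the lines $Q_{t_\alpha}$ stay inside $B_T(x)$. Second, your closing suggestion to invoke Observation~\ref{obs: cylinder} with $D'$ encircling the deleted bottom of $B(C_q)$ is a red herring: the three drawn curves terminate on the bottom row of $B_T(x)$, not of $B(C_q)$, so that cylinder does not fit, and in any case you do not need it -- the direct ``$\gamma_3$ is trapped in $\overline f$'' argument you wrote just before is precisely what the paper's proof does.
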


Notice that from the above claim, for each original clause $C_q\in \cset$, there are at most $3$ copies of $C_q$ that are good cheating clauses, and so overall there are at most $3m$ good cheating clauses in $\cset'\setminus\cset'_0$. It now remains to prove Claim~\ref{claim: good cheating}.

\begin{proofof}{Claim~\ref{claim: good cheating}}
Let $C_q\in \cset$ be an original clause, and let $\ell\in \set{\ell_{q_1},\ell_{q_2},\ell_{q_3}}$ be one of its literals. Assume for contradiction that two distinct copies of $C_q$, that we denote by $C_q^j$ and $C_q^{j'}$ are good cheating clauses and that they both select the literal $\ell$ of $C_q$. Let $x$ be the variable corresponding to $\ell$. We assume without loss of generality that $x$ appears in $C_q$ without negation (so $\ell=x$), but we assigned $x$ the value \false; the other case is dealt with similarly. We also assume without loss of generality that $j<j'$.

Since both $C_q^j,C_q^{j'}$ are good cheating clauses, indices $j,j+1$ and $j'+1$ are good indices for $x$. Therefore, each of the instances $\iset_j^F(x),\iset_{j+1}^F(x)$ and $\iset_{j'+1}^F(x)$ is an interesting instance, and $\tmset$ contains at least $25H_i$ demand pairs that belong to each of the three instances. Moreover, $\tmset$ contains at least $25H_i$ demand pairs that belong to each of the two instances $\iset_j(C_q,\ell)$ and $\iset_{j'}(C_q,\ell)$, from the choice of the literal $\ell$ by each corresponding copies of $C_q$.

We let $\mset_1$ be the set of all demand pairs of $\iset_{j}(C_q,\ell)$ that belong to $\tmset$, and we define $\mset_2$ for instance $\iset_{j'}(C_q,\ell)$ similarly.
We also let $\mset_3,\mset_4$ and $\mset_5$ be the sets of the demand pairs from instances $\iset_{j}^F(x)$, $\iset_{j+1}^F(x)$, and $\iset_{j'+1}^F(x)$ that belong to $\tmset$, respectively. From the above discussion, each of the five sets contain at least $25H_i$ demand pairs. From Lemma~\ref{lem: enc resolution}, we can find, for each $1\leq y\leq 5$, a demand pair $(s_y,t_y)\in \mset_y$, such that none of the five resulting demand pairs encircle each other. 

Consider the plane into which the grid $G_i$ is embedded with only the following curves. First, we add the boundary of the grid. We also add the bottom boundary of the box $B_F(x)$, and for every $3\leq y\leq 5$, a curve $\gamma_y$, obtained by concatenating the image of $P(s_y,t_y)$,  with the line $Q_{t_y}$. Notice that the three curves cannot intersect. Let $f$ and $f'$ be the faces of the resulting drawing, that are distinct from the outer face, and are incident to the intervals $Y_{j}^F$ and $Y_{j'}^F$, respectively. Let $\Gamma$ be the bottom boundary of the box $B(C_q)$. Then $\Gamma$ must lie in the same face of the drawing as $t_1$ and $t_2$, since none of the five pairs encircles the other. Assume without loss of generality that this face is $f^*\neq f$. Then the source of the pair $(s_1,t_1)$ lies on the part $Y_{j}^F$ of boundary of $f$ that is also incident to the infinite face, while its destination lies strictly inside another face (and this other face is disjoint from $Y_{j}^F$), which is impossible. \end{proofof} \end{proofof}

\newpage

%---------------------------------------------------------------------------------------------------------------------
%---------------------------------------------------------------------------------------------------------------------
%---------------------------------------------------------------------------------------------------------------------

%---------------------------------------------------------------------------------------------------------------------
%---------------------------------------------------------------------------------------------------------------------
%---------------------------------------------------------------------------------------------------------------------
\appendix

\label{--------------------------------------------Appendix-----------------------------------------------------}

\section*{Appendix}
%------------------------------------------------
%------------------------------------------------
%------------------------------------------------
\label{-----------------------------------------sec: proofs from YI sec ----------------------------}
\section{Proofs Omitted from Section~\ref{sec: YI}}
\label{sec: proofs from YI sec}
%------------------------------------------------
%------------------------------------------------
%------------------------------------------------

%------------------------------------------------
%------------------------------------------------
%------------------------------------------------
 \subsection{Proof of Claim~\ref{claim: routing in a snake}}\label{subsec: proof of routing in snake claim}
%------------------------------------------------
%------------------------------------------------
%------------------------------------------------
 %It is enough to prove the claim for the case where the vertices of $A$ lie on the boundary of $\Y_1$ and the vertices of $A'$ lie on the boundary of $\Y_{\ell}$.
 The proof is by induction on $\ell$. For the base case, assume that $\ell=1$, and let $A,A'$ any pair of vertex sets (that are not necessarily disjoint), such that the vertices of each set lie on a single boundary edge of the corridor $\Y=\Y_1$, and $|A|=|A'|=w'\leq w-2$. We show that there is a set $\qset$ of node-disjoint paths in $\Y$, connecting every vertex of $A$ to a vertex of $A'$, with a slightly stronger property: namely, the paths in $\qset$ are internally disjoint from the boundary of $\Y$. Let $\Y'$ be the graph obtained from $\Y$, by deleting all vertices lying on the boundary of $\Y$, except for the vertices of $A\cup A'$. It is enough to show that there is a set $\qset$ of $w'$ node-disjoint paths in $\Y'$, connecting vertices of $A$ to vertices of $A'$. Assume for contradiction that such a set of paths does not exist. Then from Menger's theorem, there is a set $J$ of $w'-1$ vertices, such that $\Y'\setminus J$ has no path connecting a vertex of $A$ to the vertex of $A'$. We consider three cases.
 
 The first case is when $A$ and $A'$ lie on the opposite boundary edges of $\Y$. Assume without loss of generality that the vertices of $A$ lie on the top boundary edge, and the vertices of $A'$ on the bottom boundary edge of $\Y$. Let $\rset$ and $\wset$ be the sets of rows and columns of $G_{i+1}$, respectively, that span $\Y$. Let $\wset',\wset''\subseteq \wset$ be the sets of columns of $\Y$ where the vertices of $A$ and $A'$ lie, respectively, so $|\wset'|=|\wset''|=w'$. Since $|J|<w'$, there is a column $W'\in \wset'$ and a column $W''\in \wset'$, with $W'\cap J,W''\cap J=\emptyset$. There is also some row $R'\in \rset$ of $\Y'$, that is not its top or bottom row, such that $R'\cap J=\emptyset$. But $W'\cup W''\cup R'$ is a connected subgraph of $\Y'\setminus J$, that contains a vertex of $A$ and a vertex of $A'$, a contradiction.
 
 The other two cases, when $A$ and $A'$ lie on adjacent boundary edges of $\Y$, and when $A$ and $A'$ lie on the same boundary edge of $\Y$ are analyzed similarly.

Assume now that the claim holds for some value $\ell\geq 0$. We now prove it for $\ell+1$. Denote $|A|=|A'|=w'$, and let $U$ be any set of $w'$ vertices in $\Y_{\ell}\cap \Y_{\ell+1}$, so the vertices of $U$ lie on the boundaries of both corridors. Notice that the vertices of $U$ must belong to a single  boundary edge of $\Y_{\ell}$, and a single boundary edge of $\Y_{\ell+1}$. Using the induction hypothesis, there is a set $\pset_1$ of $w'$ node-disjoint paths in $\Y_1\cup\cdots\cup \Y_{\ell}$, connecting every vertex of $A$ to a distinct vertex of $U$. From our analysis of the base case, there is a set $\pset_2$ of $w'$ node-disjoint paths in $\Y_{\ell+1}$, connecting every vertex of $U$ to a distinct vertex of $A'$. Moreover, the paths in $\pset_2$ are internally disjoint from the boundary of $\Y_{\ell+1}$. We obtain the desired set of paths by concatenating the paths in $\pset_1$ and the paths in $\pset_2$.

%------------------------------------------------
%------------------------------------------------
%------------------------------------------------
\subsection{Proof of Claim~\ref{claim: step 1 of routing}}\label{subsec: proof of routing claim}
%------------------------------------------------
%------------------------------------------------
%------------------------------------------------ 
 %At a high level, our routing proceeds as follows. We select a subset $\Gamma$ of vertices on the top row of $B^V$, and route all source vertices in $S(\hmset)$ to them. This is done in a straightforward manner, so that the resulting set of paths respects the box $B(\iset)$. Paths originating at the vertices corresponding to variable demand pairs then continue to their corresponding box $B(x_j)$ in a straightforward manner. Paths originating at the vertices corresponding to clause demand pairs continue directly to the bottom boundary of $B^V$, exploiting the spacing between the boxes $B(x_j)$. These paths then continue to the top boundary of $B^C$, using the spacing between the boxes $B^V$ and $B^C$. Finally, we route the latter set of paths in a snake-like manner, so that each path arrives to the top boundary of its corresponding box $B(C_q)$ (see Figure~\ref{fig: yi-routing-appendix}). We now provide a formal proof.
 
 Consider two consecutive variables $x_j,x_{j+1}$, for $1\leq j<n$. Recall that the vertices $S(\hmset(x_j))$ appear consecutively on $R$, and so do the vertices of $S(\hmset(x_{j+1}))$. Let $V_j\subseteq S(\hmset^C)$ be the set of all source vertices that correspond to clause-pairs, and lie between the vertices of $S(\hmset(x_j))$ and the vertices of $S(\hmset(x_{j+1}))$ on $R$. Notice that the vertices of $V_j$ may only correspond to clauses in which $x_j$ or $x_{j+1}$ participate, so $|V_j|\leq 10hN_ic_{i+1}<N_{i+1}$. 
Similarly, we let $V_0,V_n\subseteq S(\hmset^C)$ be the sets of all source vertices that correspond to clause-pairs and lie before the vertices of $S(\hmset(x_1))$ and after the vertices of $S(\hmset(x_n))$ on $R$, respectively. We still have $|V_0|,|V_n|\leq N_{i+1}$. For all $0\leq j\leq n$, let $M'_j=|V_j|$. For all $1\leq j\leq n$, let $M_j=|\hmset(x_j)|=c_{i+1}N_i(65h+1)<N_{i+1}$. 
 
 We now select a set $\Gamma$ of vertices on the top row of $B^V$, as follows. First, for every variable $x_j$ of $\phi$, for every vertex $a\in A(x_j)$, we select a vertex $a'$ on the top row of $B^V$, lying in the same column as $a$, and  we let $P_a$ be the sub-path of the column $\col(a)$ between $a$ and $a'$.   Let $\Gamma_j$ denote the resulting set of vertices that we have selected for $x_j$.
 
 For $1\leq j<n$, let $\wset_j$ be the set of columns of $B^V$ that lie between the boxes $B(x_j)$ and $B(x_{j+1})$. We also let $\wset_0$ be the set of all columns of $B^V$ that lie before $B(x_1)$, and $\wset_n$ the set of all columns lying after $B(x_n)$. For all $0\leq j\leq n$,  we select an arbitrary set $\Gamma'_j$ of $M'_j$ distinct vertices on the top row of $B^V$ that belong to the columns of $\wset_j$. Since $|\wset_j|\geq N_{i+1}$, while $M'_j\leq N_{i+1}$, such a set of vertices exists. For each selected vertex $v\in \Gamma'_j$, we let $P_v$ be the column of $B^V$ in which $v$ lies, and we let $v'$ be the other endpoint of the column. Let $\Gamma''_j=\set{v'\mid v\in \Gamma'_j}$. 
 We denote $\Gamma=\left(\bigcup_{j=1}^n\Gamma_j\right )\cup \left(\bigcup_{j=0}^n\Gamma'_j\right )$, and $\Gamma''=\bigcup_{j=0}^n\Gamma''_j$.
 Finally, we let $\Gamma'''$ be the set of $|\hmset^C|$ consecutive left-most vertices on the top boundary of $B^C$. 
 
 We construct five sets of node-disjoint paths, $\pset_0,\ldots,\pset_4$. The final set $\pset'$ of paths will be obtained by combining these  sets of paths. %We now define each set of paths in turn.
 
 \paragraph{Set $\pset_0$:} Let $Z'$ be the set of $|\hmset|$ leftmost vertices on the opening of $B(\iset)$. Set $\pset_0$ consists of $|\hmset|$ node-disjoint paths, connecting every vertex of $S(\hmset)$ to a vertex of $Z'$, so that the paths in $\pset_0$ are order-preserving and  internally disjoint from $B(\iset)$. Since the distance between $B(\iset)$ and the boundaries of $G_{i+1}$ is at least $H_{i+1}\geq N_{i+1}$, while $|\hmset|=N_{i+1}$, it is easy to verify that such a set of paths exists.

\paragraph{Set $\pset_1$:} This set of paths connects every vertex of $Z'$ to a distinct vertex of $\Gamma$ in a node-disjoint and order-preserving manner. In order to construct it, we use a snake $\yset^1$, consisting of two corridors, $\Y_1^1$ and $\Y^1_2$. The first corridor $\Y_1^1$ is simply the set of the top $N_{i+1}+2$ rows of $B(\iset)$. In order to construct the second corridor, $\Y_2^1$, we denote by $\wset^V$ the set of all columns of $G_{i+1}$ that intersect the box $B^V$, and we denote by $\hat R$ the row of $G_{i+1}$ that contains the topmost row of $B^V$. Let $\hat{\rset}$ be a consecutive set of rows of $B(\iset)$, starting from row $R_{N_{i+1}+2}$ and ending at row $\hat R$. Let $\Y_2^1$ be the corridor spanned by the set $\wset^V$ of columns and the set $\hat {\rset}$ of rows (see Figure~\ref{fig: yi-routing-appendix}). By combining the two corridors, we we obtain a snake $\yset^1$ of width at least $N_{i+1}+2$. From Claim~\ref{claim: routing in a snake}, there is a set $\pset_1$ of node-disjoint paths, connecting every vertex of $Z'$ to a distinct vertex of $\Gamma$ inside the snake. It is immediate to verify that the set $\pset_1$ of paths must be order-preserving.

 \begin{figure}[h]
 \scalebox{0.6}{\includegraphics{yi-routing.pdf}}
 \caption{Routing the sets $\pset_1,\pset_2$ and $\pset_3$ of paths inside $B(\iset)$. The paths in $\pset_2$ are shown in red; the paths of $\pset_1$ are routed inside the orange snake, and the paths of $\pset_3$ are routed inside the green snake.\label{fig: yi-routing-appendix}}
 \end{figure}
 
 \paragraph{Set $\pset_2$:} Set $\pset_2$ contains, for every vertex $v\in \Gamma$, the path $P_v$ that we have defined above. Note that for each $1\leq j\leq n$, paths in $\pset_2$ connect the vertices of $\Gamma_j$ to the vertices of $A(x_j)$, lying on the top row of $B(x_j)$, and for each $0\leq j\leq n$, paths in $\pset_2$ connect the vertices of $\Gamma_j'$ to the vertices of $\Gamma''_j$, lying on the bottom row of $B^V$. 
 
\paragraph{Set $\pset_3$:} This set contains $|\hmset^C|$ node-disjoint paths, connecting every vertex of $\Gamma''$ to a distinct vertex of $\Gamma'''$. The paths in $\pset_3$ will be internally disjoint from $B^V$ and $B^C$, and order-preserving. In order to construct the set $\pset_3$ of paths, we construct a snake $\yset^2$. Let $\wset^V$ and $\wset^C$ be the sets of columns of $G_{i+1}$  that intersect $B^V$ and $B^C$, respectively, and let $\wset^M$ be the set of columns lying between these two sets. Let $R'$ be the row of $G_{i+1}$ containing the bottommost row of $B^V$, and let $\rset^V$ be the set of $N_{i+1}-2$ consecutive rows of $G_{i+1}$ lying below $R'$, including $R'$. Let $R''$ be the row of $G_{i+1}$ containing the topmost row of $B^C$, and let $\rset^C$ be the set of   $N_{i+1}-2$ consecutive rows of $G_{i+1}$ lying above $R^C$, including $R^C$. Finally, let $\rset^M$ be the set of rows lying between the top row of $\rset^C$ and the bottom row of $\rset^V$, including these two rows. The snake $\yset^2$ consists of three corridors (see Figure~\ref{fig: yi-routing-appendix}). The first corridor, $\Y_1^2$, is spanned by the set $\rset^V$ of rows and the set $\wset^V$ of columns. The second corridor, $\Y_2^2$, is spanned by the set $\rset^M$ of rows and the set $\wset^M$ of columns. The third and the final corridor, $\Y_3^2$, is spanned by the set $\rset^C$ of rows and the set $\wset^C$ of columns. We extend corridor $\Y_1^2$ by one column to the right, so that it intersects the boundary of $\Y_2^2$, and similarly we extend corridor $\Y_3^2$ by one column to the left.  It is now easy to verify that we obtain a snake of width at least $N_{i+1}-2$, and so from Claim~\ref{claim: routing in a snake}, there is a set $\pset_3$ of node-disjoint paths, connecting every vertex of $\Gamma''$ to a distinct vertex of $\Gamma'''$.
It is immediate to verify that the paths in $\pset_3$ are order-preserving.

 By combining the paths of $\pset_0,\ldots,\pset_3$, we obtain a set $\pset''$ of node-disjoint paths, that are internally disjoint from $B^C$. For every variable $x_j$ of $\phi$, set $\pset''$ contains a set $\pset(x_j)$ of paths connecting every vertex of $S(\hmset(x_j))$ to a vertex of $A(x_j)$, so that the paths in $\pset(x_j)$ are order-preserving. For every clause $C_q\in \cset$, set $\pset''$ contains a set $\pset'(C_q)$ of node-disjoint order-preserving paths, connecting every vertex of $S(\hmset(C_q))$ to a distinct vertex of $\Gamma'''$. We denote by $\Gamma(C_q)\subseteq\Gamma'''$ the set of vertices that serve as endpoints of the paths in $\pset'(C_q)$. Note that for each clause $C_q\in \cset$, the vertices of $\Gamma(C_q)$ appear consecutively in $\Gamma'''$.
 We define an ordering $\oset'$ of the clauses in $\cset$ as follows: clause $C_q$ appears before clause $C_{q'}$ in this ordering iff the vertices of $\Gamma(C_q)$ appear to the left of the vertices of $\Gamma(C_{q'})$ on the top row of $B^C$.

\paragraph{Set $\pset_4$:}  In our final step, we construct, for each clause $C_q\in \cset$, a set $\pset''(C_q)$ of paths, connecting the vertices of $\Gamma(C_q)$ to the vertices of $A(C_q)$, so that the paths in $\pset''(C_q)$ are order-preserving and the paths in $\bigcup_{C_q\in \cset}\pset''(C_q)$ are node-disjoint. We then let $\pset_4$ be the union of these sets of paths.
 
 \begin{claim}\label{claim: snake-like routing}
 For each clause $C_q\in \cset$, there is a set $\pset''(C_q)\subseteq B^C$ of paths, connecting every vertex of $\Gamma(C_q)$ to a distinct vertex of $A(C_q)$, such that the paths in set $\pset_4=\bigcup_{C_q\in \cset}\pset''(C_q)$ are mutually node-disjoint and internally disjoint from the boxes $\set{B(C_1),\ldots,B(C_m)}$. Moreover, the paths in $\pset_4$ are internally disjoint from the top boundary of $B^C$, and for each $1\leq q\leq m$, the paths in $\pset(C_q)$ are order-preserving.
 \end{claim}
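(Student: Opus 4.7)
The plan is to route the paths by processing the clause boxes $B(C_1),\ldots,B(C_m)$ from left to right in the order in which they appear in $B^C$, carrying along an ever-shrinking bundle of ``active'' paths that still need to reach their destinations. Formally, for each $q$ I would define an enlarged region $B'(C_q)$ consisting of $B(C_q)$ together with the $N_{i+1}$ columns of $B^C$ immediately to its left and immediately to its right; since the boxes $B(C_1),\ldots,B(C_m)$ are aligned and $4N_{i+1}$-separated in $B^C$, the regions $B'(C_q)$ are pairwise disjoint and each is separated from the next by at least $2N_{i+1}$ columns. The snake $\yset$ underlying the routing will be a concatenation of alternating ``descent'' pieces (inside each $B'(C_q)$) and ``bridge'' pieces (inside the gap between $B'(C_q)$ and $B'(C_{q+1})$, which uses the bottom strip of rows of $B^C$ together with the vertical spacing available above and below $B(C_q)$).

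At the first descent piece, all $|\hmset^C|$ paths enter from the vertices of $\Gamma'''$ on the top boundary of $B^C$. Inside each region $B'(C_q)$ the active bundle is split into three groups according to the ordering $\oset'$: the paths corresponding to $C_q$ itself, the paths corresponding to clauses $C\neq C_q$ whose $\Gamma(C)$ lies to the left of $\Gamma(C_q)$, and the paths corresponding to clauses whose $\Gamma(C)$ lies to the right. Using Claim~\ref{claim: routing in a snake} inside a sub-snake of $B'(C_q)$ (whose corridors are the top $N_{i+1}$-row strip of $B'(C_q)$ together with the two $N_{i+1}$-column columns flanking $B(C_q)$), I would route the first group straight down into $B(C_q)$ and terminate them at $A(C_q)$ in a natural left-to-right manner; the second group is routed to the bottom of the left flank of $B'(C_q)$ and the third group to the bottom of the right flank, after which the bridge snake in the gap to the right carries them (preserving their cyclic order) into the top boundary of $B'(C_{q+1})$.

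The main obstacle is verifying two things simultaneously: that each snake has sufficient width, and that the order in which paths enter $B(C_q)$ from above matches the order in which the destinations $A(C_q)$ are arranged on its top boundary. For the width bound, note that the total number of active paths at any stage is at most $|\hmset^C|\leq \tfrac{5}{3}nhc_{i+1}N_i$, which is far less than $N_{i+1}-2$, so each snake of width $\geq N_{i+1}$ can accommodate it with room to spare. For the order, since $\Gamma(C_q)$ occupies a consecutive block of $\Gamma'''$ and since the intermediate snakes (both descent and bridge pieces) route bundles in an order-preserving manner, the paths destined for $C_q$ arrive at the top of $B(C_q)$ consecutively and in the same left-to-right order as on $\Gamma'''$; by our choice of $A(C_q)$ this matches the left-to-right order of destinations, yielding the order-preserving property claimed.

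Finally I would check the disjointness and internal-disjointness conditions: the descent piece of stage $q$ lies entirely inside $B'(C_q)$ and is internally disjoint from $B(C_q)$ except that the first group of paths terminates on its top boundary; the bridge piece of stage $q$ lies in the gap between $B'(C_q)$ and $B'(C_{q+1})$ and hence is disjoint from every $B(C_{q'})$. Since the $B'(C_q)$ are pairwise disjoint, the whole routing is disjoint from all boxes $B(C_{q'})$ with $q'\neq q$, and the statement of the claim follows.
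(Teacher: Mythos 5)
Your plan matches the paper's approach in all essentials: enlarging each $B(C_q)$ to $B'(C_q)$ by $N_{i+1}$ columns on either side, processing the boxes $B'(C_1),\ldots,B'(C_m)$ from left to right, splitting the active bundle at each stage into ``stop here,'' ``pass left,'' and ``pass right'' groups according to $\oset'$, and linking consecutive $B'$-regions via bridge snakes that exploit the vertical room above and below the clause boxes. This is exactly what the paper does.

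Where you diverge from the paper, and where the imprecision lies, is in the ``descent'' piece. You propose to route the three groups to three distinct destinations (the top boundary of $B(C_q)$, the bottom of the left flank, and the bottom of the right flank) using Claim~\ref{claim: routing in a snake} inside a single sub-snake. That claim, however, routes a set $A$ lying on one boundary edge of the first corridor to a set $A'$ lying on one boundary edge of the last corridor---it does not route a bundle to three destination sets sitting on three different boundary edges. Also, as a dimension check, $B'(C_q)$ spans only the $H_i<N_{i+1}$ rows of $B(C_q)$, so ``the top $N_{i+1}$-row strip of $B'(C_q)$'' does not exist; any extra rows must be taken from the part of $B^C$ above or below the clause boxes. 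The paper sidesteps both issues by making the descent piece trivial: it pre-selects the landing set $\Gamma^L(C_q)\cup A(C_q)\cup\Gamma^R(C_q)$ on the top boundary of $B'(C_q)$, lets the bridge snake (which \emph{is} an application of Claim~\ref{claim: routing in a snake}) deliver the paths to those exact vertices in order-preserving fashion, and then the descent is just a collection of vertical column sub-paths $P_v$ (with $P_v$ a single vertex for $v\in A(C_q)$), needing no further argument. If you reorganize your descent step in that way, your proof becomes the paper's.
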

 
 We obtain the final set $\pset'$ of paths by combining the paths of $\pset_0,\ldots,\pset_4$. In order to complete the proof of Claim~\ref{claim: step 1 of routing}, it now remains to prove Claim~\ref{claim: snake-like routing}.
 
 \begin{proof}
 Consider some clause $C_q\in \cset$, and let $\wset(C_q)$ be the set of columns of $G_{i+1}$ that intersect $B(C_q)$. We let $\wset^L(C_q)$ be the set of $N_{i+1}$ columns lying immediately to the left of $\wset(C_q)$, and similarly we let $\wset^R(C_q)$ be the set of $N_{i+1}$ columns lying immediately to the right of $C_q$. Let $\rset$ be the set of all rows of $G_{i+1}$ that intersect the box $B(C_q)$ (this set is the same for all clauses $C_q$). We let $B'(C_q)$ be the sub-grid of $G_{i+1}$ spanned by the set $\wset^L(C_q)\cup \wset(C_q)\cup \wset^R(C_q)$ of columns and the set $\rset$ of rows. %For $1< q<m$, we let $\wset'(C_q)$ be any set of $N_{i+1}$ columns lying between $\wset^R(C_{q-1})$ and $\wset^L(C_q)$, and let $\wset'(C_1)$ be the set of the first $N_{i+1}$ columns of $B^C$.
 
 The idea is to define a snake-like routing, where the paths visit the boxes $B'(C_1),\ldots,B'(C_m)$ in turn. For each clause $C_q$, only the paths corresponding to the clauses $C_q,C_{q+1},\ldots,C_m$ will visit the box $B'(C_q)$. The paths corresponding to clause $C_q$ will terminate on the top boundary of $B(C_q)$. For each $q+1\leq q'\leq m$, if $C_{q'}$ appears before $C_q$ in the ordering $\oset'$, then the paths corresponding to $C_{q'}$ will traverse the box $B'(C_q)$ to the left of $B(C_q)$; otherwise, they will traverse the box $B'(C_q)$ to the right of $B(C_q)$. In order to implement this, we select a subset of vertices on the top and the bottom boundaries of $B'(C_q)$, through which the paths will enter and leave the box (see Figure~\ref{fig: yi-clauses1}).

 \begin{figure}[h]
 \scalebox{0.6}{\includegraphics{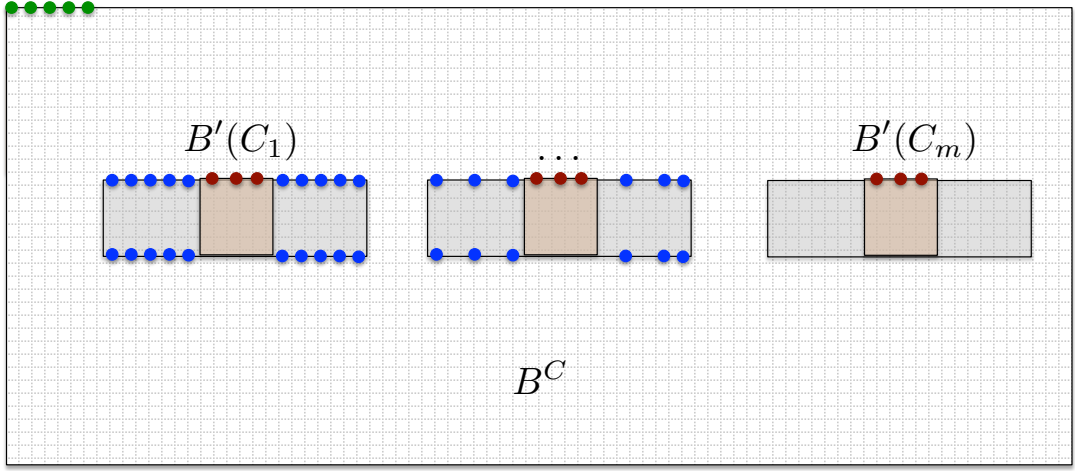}}
 \caption{Selecting the sets $\Gamma(C_q)$ and $\Gamma'(C_q)$ of vertices for the clauses $C_q$. The vertices of $\Gamma'''$ are shown in green.\label{fig: yi-clauses1}}
 \end{figure}
 
 Fix some clause $C_q$, for $1\leq q\leq m$. Let $n^L_q$ be the number of clauses $C_{q'}$, with $q<q'\leq m$, such that $C_{q'}$ appears before $C_q$ in the ordering $\oset'$, and let $n^R_q$ be the number of such clauses with $C_{q'}$ appearing after $C_q$ in $\oset'$. Recall that for each clause $C_{q'}$, $|\hmset(C_q')|=hc_{i+1}N_i$, and $mhc_{i+1}N_i<N_{i+1}-2$. We select an arbitrary set $\Gamma^L(C_q)$ of $n^L_q\cdot hc_{i+1}N_i$ vertices on the top boundary of $B'(C_q)$, strictly to the left of $B(C_q)$, and similarly, we select an arbitrary set $\Gamma^R(C_q)$ of $n^R_q\cdot hc_{i+1}N_i$ vertices on the top boundary of $B'(C_q)$, strictly to the right of $B(C_q)$. For each vertex $v\in \Gamma^L(C_q)\cup \Gamma^R(C_q)$, we let $P_v$ be the column of $B'(C_q)$ that contains $v$, and we let $v'$ be its other endpoint. For each vertex $v\in A(C_q)$, we let $P_v$ be a path consisting of a single vertex $v$. We define $\Gamma(C_q)=\Gamma^L(C_q)\cup A(C_q)\cup \Gamma^R(C_q)$, a set of vertices on the top boundary of $B'(C_q)$. We define $\Gamma'(C_q)=\set{v'\mid v\in \Gamma^L(C_q)\cup \Gamma^R(C_q)}$, a set of vertices on the bottom boundary of $B'(C_q)$. Finally, we let $\tpset_q'=\set{P_v\mid v\in \Gamma(C_q)}$. It is easy to verify that for all $1\leq q<m$, $|\Gamma'(C_q)|=|\Gamma(C_{q+1})|$.
 
 In order to compute the routing, we use the following simple observation.
 
 \begin{observation}\label{obs: snake-like routing}
 For all $1\leq q< m$, there is a set $\tpset_q\subseteq B^C$ of paths, such that:
 
 \begin{itemize}
 \item Paths in $\tpset_1$ connect every vertex in $\Gamma'''$ to a distinct vertex of $\Gamma(C_1)$ and are order-preserving;
 
 \item For $1<q< m$, paths in $\tpset_q$ connect every vertex in $\Gamma'(C_q)$ to a distinct vertex of $\Gamma(C_{q+1})$ and are order-preserving;
 
 \item The paths in $\bigcup_q\tpset_q$ are mutually node-disjoint, and they are internally disjoint from all boxes in $\set{B'(C_1),\ldots,B'(C_m)}$.
 \end{itemize}
 \end{observation}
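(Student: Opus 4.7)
The plan is to construct, for each $1 \leq q < m$, an appropriate snake $\yset^q$ inside $B^C$ that avoids the interiors of all boxes $B'(C_1), \ldots, B'(C_m)$, and then to invoke Claim~\ref{claim: routing in a snake} to obtain $\tpset_q$. The construction crucially exploits the fact that the boxes $B(C_1), \ldots, B(C_m)$ are $4N_{i+1}$-separated inside $B^C$, so the gap between any two consecutive boxes $B'(C_q)$ and $B'(C_{q+1})$ still spans $2N_{i+1}$ columns, and that there are at least $2N_{i+1}$ rows of $B^C$ strictly above (respectively strictly below) every $B'$-box.

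For $q \geq 2$, I will use a U-shaped snake consisting of three corridors. The first corridor $\Y^q_1$ occupies the columns of $B'(C_q)$ together with $N_{i+1}$ rows immediately below $B'(C_q)$; its top boundary edge coincides with the bottom boundary of $B'(C_q)$ and hence contains $\Gamma'(C_q)$. The middle corridor $\Y^q_2$ occupies the $2N_{i+1}$ columns strictly between $B'(C_q)$ and $B'(C_{q+1})$, spanning vertically from the top strip of $B^C$ down into the bottom strip, and meets $\Y^q_1$ along one column segment lying below $B'(C_q)$. The third corridor $\Y^q_3$ is symmetric to $\Y^q_1$: it sits directly above $B'(C_{q+1})$, its bottom boundary edge coincides with the top boundary of $B'(C_{q+1})$ (hence contains $\Gamma(C_{q+1})$), and it meets $\Y^q_2$ along one column segment above $B'(C_{q+1})$. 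For $q=1$, I will use an analogous L-shaped snake with two corridors: the first occupies the top $N_{i+1}$ rows of $B^C$ (whose top boundary contains $\Gamma'''$) together with all columns from the left boundary of $B^C$ to the right boundary of $B'(C_1)$, and the second descends from it down the columns of $B'(C_1)$ onto the top boundary of $B'(C_1)$.

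To apply Claim~\ref{claim: routing in a snake}, each corridor must be spanned by at least $w$ rows and $w$ columns, and consecutive corridors must share at least $w$ vertices, for $w = |\Gamma'(C_q)| + 2$. Since $|\Gamma'(C_q)| \leq m h c_{i+1} N_i = \frac{5 n h c_{i+1} N_i}{3} < N_{i+1}$, and the construction above ensures every span and every intersection has size at least $N_{i+1}$, the width condition is satisfied. The claim then yields the desired set $\tpset_q$, and order-preservation follows directly from the routing inside a snake together with the consecutiveness of the two endpoint sets on the respective boundaries of $B'(C_q)$ and $B'(C_{q+1})$ (respectively of $B^C$ and $B'(C_1)$ when $q=1$).

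Finally, the snakes for distinct values of $q$ occupy vertex-disjoint regions: for $q \geq 2$, snake $\yset^q$ lives entirely in the union of the columns of $B'(C_q) \cup B'(C_{q+1})$ and the gap strictly between them, and $\yset^1$ lives in the top rows of $B^C$ together with the columns of $B'(C_1)$, so snake $\yset^q$ shares no vertices with $\yset^{q'}$ for $q' \neq q$. Hence the paths in $\bigcup_q \tpset_q$ are mutually node-disjoint. The main obstacle I foresee is purely bookkeeping: one must choose corridor boundaries precisely enough that consecutive corridors meet along exactly one boundary edge (so they are internally disjoint as required by the definition of a snake), and must verify that the top and bottom strips of $B^C$ (each of height at least $2N_{i+1}$) comfortably accommodate the horizontal portions of each corridor; once the parameters are laid out, both facts reduce to straightforward arithmetic.
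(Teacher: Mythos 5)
Your construction is essentially the same as the paper's: both build, for each pair of consecutive boxes, a short snake (two corridors for the $\Gamma'''\to\Gamma(C_1)$ case, three thereafter) that dips into the bottom strip under the originating box, traverses the $2N_{i+1}$-column gap between the boxes at full height, and rises into the top strip above the target box, after which Claim~\ref{claim: routing in a snake} gives the order-preserving routing; the width check against $|\hmset^C| < N_{i+1}$ and the disjointness of the per-$q$ snakes are the same observations the paper makes. The one loose end you flag (shifting consecutive corridors by a column so they actually share a boundary edge) is exactly what the paper handles by "extending $\Y_1$ by one column to the right and $\Y_3$ by one column to the left," so your assessment that this is routine bookkeeping is accurate.
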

 
 Notice that combining the paths in sets $\bigcup_{q=1}^{m-1}\tpset_q$ and $\bigcup_{q=1}^m\tpset'_q$ finishes the proof of Claim~\ref{claim: snake-like routing} and Claim~\ref{claim: step 1 of routing}. We now prove the observation.
 
 \begin{proof}
 We construct, for each $1\leq q\leq m$, a snake $\yset(C_q)$, and route the set $\tpset_q$ of paths inside it (see Figure~\ref{fig: yi-clauses2} for an illustration). Since $|\hmset^C|<N_{i+1}-2$, it is enough to ensure that each snake has width at least $N_{i+1}$.

 \begin{figure}[h]
 \scalebox{0.6}{\includegraphics{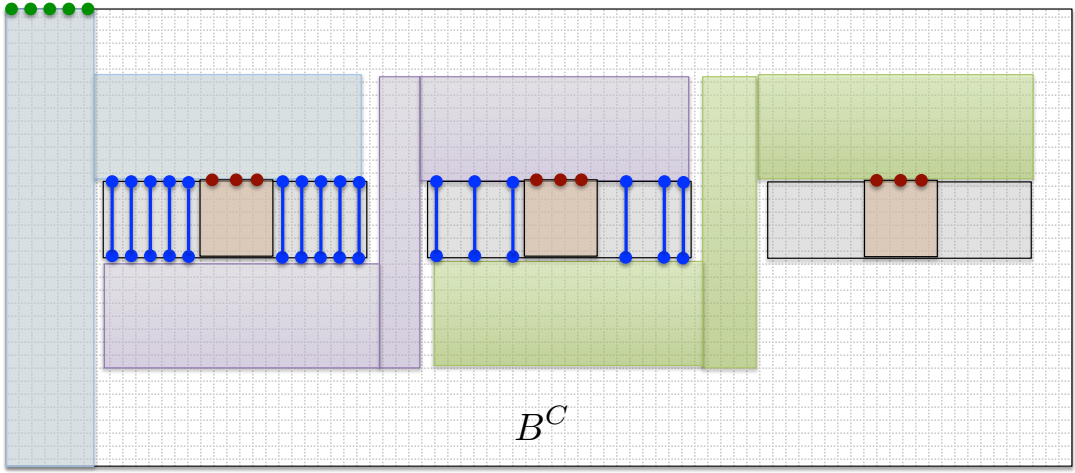}}
 \caption{The snakes used to route the sets $\tilde \pset_q$ of paths.\label{fig: yi-clauses2}}
 \end{figure}

 Recall that $\rset$ is the set of all rows of $G_{i+1}$ that intersect the boxes $B(C_q)$. Let $R'$ and $R''$ be the topmost and the bottommost rows of $\rset$, respectively. Let $\rset'$ be the set of $N_{i+1}$ consecutive rows lying above 
 $R'$ including $R'$, and let $\rset''$ be the set of $N_{i+1}$ consecutive rows lying below $R''$, including $R''$.
 
 The first snake, $\yset(C_1)$ consists of two corridors. The first corridor, $\Y_1(C_1)$, is spanned by the first $N_{i+1}$ columns of $B^C$. For the second corridor, $\Y_2(C_1)$, we let $W$ be the last column of $\Y_1(C_1)$, $W'$ the last column of $\wset^R(C_1)$, and $\wset'$ the set of all columns lying between $W$ and $W'$, including these two columns. We then let $\Y_2(C_1)$ be the corridor spanned by the columns of $\wset'$ and the rows of $\rset'$.
 
For each $1<q\leq m$, we construct a snake $\yset(C_q)$, that consists of three corridors. The first corridor, $\Y_1(C_q)$ is spanned by the rows of $\rset''$ and the columns of $\wset^L(C_{q-1})\cup \wset(C_{q-1})\cup \wset^R(C_{q-1})$. The third corridor, $\Y_3(C_q)$ is spanned by the rows of $\rset'$ and the columns of $\wset^L(C_{q})\cup \wset(C_{q})\cup \wset^R(C_{q})$. For the second corridor, $\Y_2(C_q)$, we let $\rset^*$ be the set of consecutive rows starting from the top row of $\rset'$ and terminating at the bottom row of $\rset''$, including these two rows, and we let $\wset'(C_q)$ be the set of all columns lying between $B'(C_{q-1})$ and $B'(C_q)$. We then let $\Y_2(C_q)$ be the corridor spanned by the rows in $\rset^*$ and the columns in $\wset'(C_q)$. In order to obtain a valid snake, we extend $\Y_1(C_q)$ by one column to the right and $\Y_3(C_q)$ by one column to the left (see Figure~\ref{fig: yi-clauses2}).   
 
It is immediate to verify that for each $1\leq q\leq m$, we obtain a valid snake $\yset(C_q)$ of width at least $N_{i+1}$; the resulting snakes are mutually disjoint from each other, and each snake $\yset(C_q)$ intersects the boxes in $\set{B(C_1),\ldots,B(C_m)}$ only on their boundaries. For each $1\leq q< m$, we can now find the desired set $\tpset(C_q)$ of paths inside the snake $\yset(C_q)$, using Claim~\ref{claim: routing in a snake}. It is immediate to verify that the paths are order-preserving.
 \end{proof}
\end{proof} 
% \end{proof}
%------------------------------------------------
%------------------------------------------------
%------------------------------------------------
%------------------------------------------------
%------------------------------------------------
%------------------------------------------------

%------------------------------------------------------------------------------------
%------------------------------------------------------------------------------------
%------------------------------------------------------------------------------------
%------------------------------------------------------------------------------------
%------------------------------------------------------------------------------------

%------------------------------------------------------------------------------------
%------------------------------------------------------------------------------------
%------------------------------------------------------------------------------------
%------------------------------------------------------------------------------------
%------------------------------------------------------------------------------------

%------------------------------------------------------------------------------------
%------------------------------------------------------------------------------------
%------------------------------------------------------------------------------------
%------------------------------------------------------------------------------------
%------------------------------------------------------------------------------------
\label{-------------------------------------------sec: EDP-----------------------------------}
\section{Reducing the Maximum Vertex Degree and Hardness of \EDP in Sub-Cubic Planar Graphs}\label{sec: EDP}

%------------------------------------------------------------------------------------
%------------------------------------------------------------------------------------
%------------------------------------------------------------------------------------
%------------------------------------------------------------------------------------
%------------------------------------------------------------------------------------
In this section we prove Theorem \ref{thm: main-EDP}, and show that Theorem~\ref{thm: main} holds for sub-cubic planar graphs. We start with proving Theorem~\ref{thm: main-EDP}.
Let $G=G^{\ell,h}$ be a grid of length $\ell$ and height $h$, where $\ell>0$ is an even integer, and $h>0$. For every column $W_j$ of the grid, let $e^j_1,\ldots,e^j_{h-1}$ be the edges of $W_j$ indexed in their top-to-bottom order. Let $E^*(G)\subseteq E(G)$ contain all edges $e^j_z$, where $z\neq j \mod 2$, and let $\hat G$ be the graph obtained from $G\setminus E^*(G)$, by deleting all degree-$1$ vertices from it. The resulting graph is called a \emph{wall of length $\ell/2$ and height $h$} (see Figure~\ref{fig: wall}). Consider the subgraph of $\hat G$ induced by all horizontal edges of the grid $G$ that belong to $\hat G$. This graph is a collection of $h$ node-disjoint paths, that we refer to as the \emph{rows} of $\hat G$, and denote them by $R_1,\ldots,R_h$ in this top-to-bottom order; notice that $R_j$ is a sub-path of the $j$th row of $G$ for all $j$. Graph $\hat G$ contains a unique collection $\wset$ of $\ell/2$ node-disjoint paths that connect vertices of $R_1$ to vertices of $R_h$ and are internally disjoint from $R_1$ and $R_h$. We refer to the paths in $\wset$ as the \emph{columns} of $\hat G$, and denote them by $W_1,\ldots,W_{\ell/2}$ in this left-to-right order.
Paths $W_1, W_{\ell/2},R_1$ and $R_h$ are called the left, right, top and bottom boundary edges of $\hat G$, respectively, and the union of these paths is the boundary of $\hat G$.

\begin{figure}[h]
\scalebox{0.4}{\includegraphics{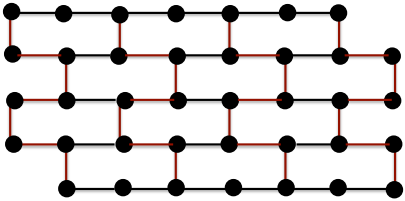}}
\caption {A wall of height $5$ and length $4$; the columns of the wall are shown in red. \label{fig: wall}}
\end{figure}

Given a wall $\hat G$, a consecutive subset $\rset'$ of its rows, and a consecutive subset $\wset'$ of its columns, the sub-wall of $\hat G$ spanned by the rows of $\rset'$ and the columns of $\wset'$ is the subgraph of $\hat G$ induced by the set $\set{v\mid \exists R\in \rset',W\in \wset':  v \in R\cap W }$ of vertices. The first and the last columns of $\wset'$ serve as the left and the right boundary edges of the sub-wall, and the top and the bottom rows of $\rset'$ serve as its top and bottom boundary edges.

We perform a reduction from the 3SAT(5) problem. Assume that we are given an instance $\phi$ of 3SAT(5) on $n$ variables and $m=5n/3$ clauses. As before, our construction has $\Theta(\log n)$ levels. For every level $i\geq 0$, we define a family of instances of \EDP. In order to construct a level-$i$ instance $\iset$, we define the parameters $H_i,L_i$ and $L'_i$ exactly as before, a path $Z(\iset)$ and a box $B(\iset)$ (which is a subgraph of the grid $G'_i$ of length $L'_i$ and height $H_i$), a collection $\mset$ of demand pairs, and the mappings of the vertices of $S(\mset)$ to the vertices of $Z(\iset)$, and of the vertices of $T(\mset)$ to distinct vertices of the middle row of $B(\iset)$ exactly as before. In order to instantiate this instance, we select an arbitrary grid $G_i$ of length $\ell\geq 2L_i+2L'_i+4H_i$, where $\ell$ is an even integer, and height $h\geq 3H_i$, map the vertices of $Z(\iset)$ to the vertices of the first row of $G_i$, and map the vertices of $B(\iset)$ to the vertices of a sub-grid $G''_i$ of $G_i$ exactly as before, obtaining an instantiation $(G,\mset)$ of the level-$i$ instance $\iset$ of \NDP. Our final step is to delete from $G$ every edge of $E^*(G_i)\cap E(G)$, and then to delete all vertices that have degree $1$ in the resulting graph. We also delete every other edge on the top row of $G_i$, and all horizontal edges that are incident to the vertices of $T(\mset)$, to ensure that the degree of every terminal is at most $2$. The final graph, denoted by $\hat G$, is a subgraph of a wall of length $\ell/2$ and height $h$. We denote by $\hat B(\iset)$ the intersection of the image of $B(\iset)$ in $G_i$ and the graph $\hat G$.
This concludes the definition of the reduction. Since the resulting graph $\hat G$ is a subgraph of $G$, the following observation is immediate.

\begin{observation}\label{obs: EDP-NI}
If $\phi$ is a \ni, then for every level $i$, for every instantiation $(G,\mset)$ of the level-$i$ instance $\iset$ of \NDP, the corresponding instance $(\hat G,\mset)$ of \EDP has optimal solution value at most $N'_i$.
\end{observation}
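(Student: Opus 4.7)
The plan is to derive this directly from Theorem~\ref{thm: ni} by arguing that any \EDP routing in $\hat G$ is in fact a \emph{node}-disjoint routing in $G$. First I would observe that $\hat G$ was constructed from $G$ by only deleting edges and pruning vertices of degree $1$, so $\hat G\subseteq G$, and hence any set of node-disjoint paths in $\hat G$ is also a set of node-disjoint paths in $G$ routing the same demand pairs. Thus it suffices to prove that in $\hat G$, edge-disjoint routings of $\mset$ are automatically node-disjoint.

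The central step is a short degree argument based on two structural features of $\hat G$. First, $\hat G$ is a subgraph of a wall, so every vertex of $\hat G$ has degree at most $3$. Second, by construction every terminal in $\hat G$ has degree at most $2$: the alternating deletions on the top row of $G_i$ bound the degree of each source, and the removal of every horizontal edge incident to a destination bounds the degree of each destination. Given any edge-disjoint collection $\tpset$ of paths in $\hat G$ routing some subset $\tmset\subseteq \mset$, suppose two distinct paths $P, P'\in \tpset$ shared a vertex $v$. If $v$ were internal to both, then $P$ and $P'$ would together demand $4$ distinct edges incident to $v$, contradicting $\deg_{\hat G}(v)\leq 3$. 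If $v$ were internal to one path and an endpoint of the other, then $v$ would be a terminal requiring $3$ incident edges, contradicting $\deg_{\hat G}(v)\leq 2$. Finally, $v$ cannot be an endpoint of both, since that would force $v$ to serve as a terminal in two different demand pairs, contradicting the distinctness of the terminals in the construction.

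Combining these observations, every edge-disjoint routing of $\tmset$ in $\hat G$ lifts to a node-disjoint routing of $\tmset$ in $G$, so applying Theorem~\ref{thm: ni} to the instantiation $(G,\mset)$ of the level-$i$ \NDP instance $\iset$ gives $|\tmset|\leq N'_i$, which is exactly the claim. There is no serious obstacle: the proof is essentially the single degree-counting remark above, which matches the author's description of the observation as ``immediate.''
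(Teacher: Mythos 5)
Your proposal is correct and follows the same route as the paper: observe $\hat G\subseteq G$, use the sub-cubic degree bound together with the degree-$2$ bound on terminals to conclude that edge-disjointness in $\hat G$ forces node-disjointness, and then invoke Theorem~\ref{thm: ni}. You simply spell out the case analysis behind the step the paper marks as ``immediate to verify.''
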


\begin{proof}
Assume otherwise. Consider an instantiation $(G,\mset)$ of a level-$i$ instance $\iset$ of \NDP, for some $i\geq 0$, and let $(\hat G,\mset)$ be the corresponding instance of \EDP. Let $\pset$ be a set of edge-disjoint paths in $\hat G$, routing more than $N'_i$ demand pairs. Since graph $\hat G$ is sub-cubic, and the degree of every terminal in $\hat G$ is at most $2$, it is immediate to verify that the paths in $\pset$ are also vertex-disjoint. Since $\hat G\subseteq G$, there is a set $\pset$ of node-disjoint paths routing more than $N'_i$ demand pairs in the instance $(G,\mset)$ of \NDP, contradicting Theorem~\ref{thm: ni}.
\end{proof}

It is now enough to show that, if $\phi$ is a \yi, then for every level $i$, for every instantiation $(G,\mset)$ of the level-$i$ instance $\iset$ of \NDP, the corresponding instance $(\hat G, \mset)$ of \EDP has a solution of value at least $N_i/2$. Before we do so, we need several definitions.

Suppose we are given some set $\pset$ of node-disjoint paths in some wall $\hat H$, and assume that every path in $\pset$ connects some vertex on a row $R$ of $\hat H$ to a vertex on a row $R'$ of $\hat H$, where $R\neq R'$. As before, we say that the paths in $\pset$ are \emph{order-preserving} iff the left-to-right ordering of their endpoints on $R$ is the same as the left-to-right ordering of their endpoints on $R'$.

A subset $U$ of the vertices lying on a row $R$ of a wall $\hat H$ is called \emph{well-spread} iff $U$ does not contain a pair of vertices connected by an edge in $\hat H$. Notice that if $U$ is well-spread, then no two vertices of $U$ may lie on the same column of $\hat H$.

We now define an analogue of box-respecting paths. Consider some level-$i$ instance $\iset$ of \NDP, for $i\geq 0$, and an instantiation $(G,\mset)$ of this instance. Let $(\hat G,\mset)$ be the corresponding instance of $\EDP$, and let $\pset$ be a set of node-disjoint paths routing some subset of the demand pairs in $\hat G$. Let $A$ be the top boundary of $\hat B(\iset)$ (that is, $A$ is the subgraph of $G''_i$, that contains all the edges and the vertices of its top row that belong to $\hat B(\iset)$). We say that set $\pset$ of paths is \emph{canonical with respect to the box $\hat B(\iset)$} iff for every path $P\in \pset$, $P\cap A$ is a single edge, and the following holds. Denote $\pset=(P_1,\ldots,P_r)$, and denote, for every path $P_i$, its source vertex by $s_i$, and the unique edge of $P_i\cap A$ by $e_i$, such that $s_1,\ldots,s_r$ appear on the top row of $\hat G$ in this left-to-right order. Then the edges of $e_1,\ldots,e_r$ must appear in this left-to-right order on $A$, and for each $1\leq j\leq r$, $e_j$ is the $(2j)$th edge of $A$ from the left.

Assume that $\phi$ is a \yi.
Recall that for each $i\geq 0$, for every level-$i$ instance $\iset$ of \NDP, we have defined a collection $\mset^*(\iset)$ of demand pairs, such that for every instantiation of $\iset$, there is a set $\pset$ of node-disjoint paths, that respect the box $B(\iset)$, and route the set $\mset^*(\iset)$ of demand pairs. Denote the pairs in $\mset^*(\iset)$ by $(s_1,t_1),\ldots,(s_p,t_p)$, and assume that the vertices $s_1,\ldots, s_p$ appear in this left-to-right order on $Z(\iset)$. We partition $\mset^*(\iset)$ into two subsets: set $\mset^*_1(\iset)$ contains all demand pairs $(s_j,t_j)$ where $j$ is odd, and set $\mset^*_2(\iset)$ contains all remaining demand pairs. Notice that each of the sets $S(\mset^*_1(\iset)),S(\mset^*_2(\iset))$ is well-spread, for any instantiation of $\iset$ and its corresponding instance of \EDP. It is now enough to prove the following lemma.

\begin{lemma}\label{lem: EDP-YI}
For all $i\geq 0$, for every level-$i$ instance $\iset$ of \NDP, for every instance $(\hat G, \mset)$ of \EDP corresponding to an instantiation $(G,\mset)$ of $\iset$, there is a set of node-disjoint paths in $\hat G$ that route all demand pairs in $\mset^*_1(\iset)$, such that the paths are canonical with respect to $\hat B(\iset)$, and the same holds for $\mset^*_2(\iset)$.
\end{lemma}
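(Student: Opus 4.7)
I would prove the lemma by induction on $i$, paralleling the proof of Theorem~\ref{thm: yi} but accounting for the edges removed to form the wall. The base case $i=0$ is immediate: $\mset^*(\iset)$ consists of a single demand pair $(s,t)$, so each of $\mset^*_1(\iset),\mset^*_2(\iset)$ contains at most one pair, and a canonical path from $s$ to $t$ in $\hat G$ exists since $\hat B(\iset)$ is connected and $t$ lies well below its top boundary. For the inductive step, suppose the lemma holds at level $i$ and let $\iset$ be a level-$(i+1)$ instance. I would route $\mset^*_1(\iset)$ (the argument for $\mset^*_2(\iset)$ is symmetric) by recreating the five-stage routing $\pset_0,\ldots,\pset_4$ from Lemma~\ref{lem: routing yi} inside the wall $\hat G$, but using only the odd-indexed demand pairs. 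For the recursive routings inside each $c_{i+1}$-wide level-$i$ sub-instance $\iset'$, I would apply the inductive hypothesis to obtain canonical wall-routings of $\mset^*_1(\iset')$.

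The central technical ingredient would be a wall analogue of Claim~\ref{claim: routing in a snake}: for a snake $\yset=(\Y_1,\ldots,\Y_\ell)$ of width $w$ in $G$, viewed as a sequence of sub-walls, any pair of well-spread boundary vertex sets $A,A'$ with $|A|=|A'|\le w/4$ can be connected by node-disjoint paths in $\hat G$ contained in the snake. This would follow by the same induction on $\ell$ as the original claim, using the observation that a sub-wall of width $w$ admits roughly $w/2$ vertex-disjoint vertical crossings (instead of $w$ in the grid), which suffices to carry at most $w/4$ well-spread paths with a column of slack on each side. Since each snake used in the YI construction has width at least $N_{i+1}$ while we only route $|\mset^*_1(\iset)|=N_{i+1}/2$ paths, and since the well-spreadness of $S(\mset^*_1(\iset))$ propagates through each stage $\pset_k$ (the order-preserving paths carry well-spread endpoints to well-spread endpoints on each successive boundary edge), the inductive construction goes through.

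The principal obstacle is maintaining the canonical condition that the edges $e_1,\ldots,e_r$ along which paths enter $\hat B(\iset)$ are exactly the every-other edges of the top boundary $A$. To ensure this, I would choose the entry vertex set $Z'$ (analogue of the set used in $\pset_0$) at precisely the odd-indexed positions of the opening of $\hat B(\iset)$, matched with the pattern of $E^*(G)$ on the top row. Since the sources in $S(\mset^*_1(\iset))$ sit at odd positions within $S(\mset^*(\iset))$, the order-preserving paths of $\pset_0$ and $\pset_1$ will land in the correct parity classes. A secondary delicate point is the snake-like routing $\pset_4$ inside $B^C$, where the wall's alternating-edge pattern must be verified column by column to ensure that the vertical connectors between successive boxes $B'(C_q)$ and $B'(C_{q+1})$ actually exist in $\hat G$ at the needed positions; this again reduces to checking that we have factor-$2$ slack in snake width combined with well-spread endpoints, so the snake can be shifted by at most one column wherever the underlying grid edge happens to have been deleted.

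The hardest bookkeeping will be in the variable gadget's ``switching'' step (where $\pset^X$ and $\pset^F$ are swapped when $x$ is assigned \false), since the snake $\yset$ constructed from the four corridors there has its width consumed twice over; I expect to need a slightly more careful choice of the four corridors' rows and columns, still shrunk by a factor of $2$, together with the observation that the opening vertices $A'(\iset')$ used for each inner level-$i$ box $B(\iset')$ can be chosen to lie at positions consistent with the canonical condition required by the inductive hypothesis applied to $\iset'$.
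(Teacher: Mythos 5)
Your proposal follows the same high-level plan as the paper's proof: induction on $i$, a wall analogue of the snake-routing claim applied to the five-stage routing from Lemma~\ref{lem: routing yi}, and well-spread target sets chosen at every stage to accommodate the canonical condition. The paper does indeed state exactly such an analogue (Claim~\ref{claim: routing in a wall-snake}), with corridors required to meet only along top or bottom boundary edges, and the remainder of the argument is what you describe.

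There is, however, a genuine quantitative gap in your key lemma, and by your own accounting it is fatal. You assert that a wall snake derived from a grid snake of (grid) width $w$ can carry at most $w/4$ well-spread paths. But a sub-wall built from $w$ grid columns has about $w/2$ wall columns, a well-spread boundary set has at most one vertex per wall column, and one can match them to the $w/2$ vertex-disjoint vertical crossings; the correct capacity is therefore roughly $w/2 - O(1)$, not $w/4$. The extra factor of two you lose (your ``column of slack on each side'' only shaves off a constant, it does not halve) is exactly the slack the construction gives you: you route $|\mset^*_1(\iset)| = \lceil N_{i+1}/2\rceil$ paths through snakes of grid width about $N_{i+1}$, so a $w/4$ bound delivers capacity $\approx N_{i+1}/4 < N_{i+1}/2$ and the inductive step does not close. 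This is the bound the paper's Claim~\ref{claim: routing in a wall-snake} gets right (stated as $|A| \le w-2$ with $w$ the wall width, i.e.\ half the grid width), and without it the routing through the $\pset_1$, $\pset_3$ and $\pset_4$ snakes fails.

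A smaller imprecision: when you appeal to the inductive hypothesis inside a sub-instance $\iset'$, you say you obtain a canonical routing of $\mset^*_1(\iset')$, but the odd-indexed demand pairs of $\iset$ restricted to a given $\iset'$ may equal either $\mset^*_1(\iset')$ or $\mset^*_2(\iset')$ depending on how many sources from other sub-instances precede it; the paper makes this observation explicit, and it is why the lemma must be stated for both $\mset^*_1$ and $\mset^*_2$ simultaneously. Your induction hypothesis does cover both, so this is a gap only in exposition, not in substance.
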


\begin{proof}
The proof is by induction on $i$. The lemma is clearly true for $i=0$. We now assume that it holds for some $i\geq 0$ and prove it for a level-$(i+1)$ instance $\iset$ of \NDP. Let $(G,\mset)$ be any instantiation of $\iset$, and let $(\hat G,\mset)$ be the corresponding instance of \EDP. We show that the set  $\mset^*_1(\iset)$ of demand pairs can be routed in $\hat G$ via node-disjoint paths, that are canonical with respect to $\hat B(\iset)$; the proof for set $\mset^*_2(\iset)$ is identical.

Consider some level-$i$ instance $\iset'$ of \NDP that was used in the construction of the level-$(i+1)$ instance $\iset$. The current instantiation of $\iset$ also defines an instantiation of $\iset'$. From the induction hypothesis, for all $z\in \set{1,2}$, there is a set $\pset_z(\iset')$ of node-disjoint paths in $\hat G$, routing the demand pairs in $\mset^*_z(\iset')$, that respect the box $\hat B(\iset')$. Moreover, the subset of all demand pairs in $\mset^*_1(\iset)$ whose destinations lie in $\hat B(\iset')$ is either equal to $\mset^*_1(\iset')$, or to $\mset^*_2(\iset')$.

The routing that we define for the set $\mset^*_1(\iset)$ of demand pairs in graph $\hat G$ is very similar to the one used in the proof of Lemma~\ref{lem: routing yi}. However, we need to define corridors and snakes slightly differently.

Recall that the graph $\hat G$ is a subgraph of a wall $\hat G_i$, that is obtained from $G_i$, by deleting the set $E^*(G_i)$ of its edges, the vertices of $U(B(\iset))$, and all vertices whose degree in the resulting graph becomes $1$. Given a sub-wall $\Y$ of $\hat G_i$, spanned by a subset $\rset$ of its rows and a subset $\wset$ of its columns, we say that $\Y$ is a \emph{corridor} iff $\Y\subseteq \hat G$. We say that two corridors $\Y,\Y'$ are \emph{internally disjoint} iff every vertex in $\Y\cap \Y'$ belongs to a single boundary edge of each corridor, and this boundary edge must be either the top or the bottom edge. We say that $\Y$ and $\Y'$ are \emph{neighbors} iff they are internally disjoint and $\Y\cap \Y'\neq \emptyset$.
As before, a snake of length $\ell$ is a sequence $\Y_1,\ldots,\Y_{\ell}$ of corridors that are internally disjoint, such that for all $1\leq j,j'\leq \ell$, corridors $\Y_j$ and $\Y_{j'}$ are neighbors iff $|j-j'|=1$. We say that the width of the snake is $w$ iff each of its corridors is spanned by at least $w$ rows and at least $w$ columns of $\hat G_i$, and for all $1\leq j<\ell$, $\Y_j\cap \Y_{j+1}$ contains at least $2w$ vertices. Following is an analogue of Claim~\ref{claim: routing in a snake} for wall graphs; its proof is almost identical and is omitted here.

%------------------------------------------------
 \begin{claim}\label{claim: routing in a wall-snake}
 Let $\yset=(\Y_1,\ldots,\Y_{\ell})$ be a snake of width $w$, and let $A,A'$ be two sets of vertices with $|A|=|A'|\leq w-2$, such that the vertices of $A$ either all lie on the top boundary edge of $\Y_1$, or they all lie on the bottom boundary edge of $\Y_1$, and the vertices of $A'$ either all lie on the top boundary edge of $\Y_{\ell}$, or they all lie on the bottom boundary edge of $\Y_{\ell}$. Assume further that both $A$ and $A'$ are well-spread.
 Then there is a set $\qset$ of node-disjoint paths contained in $\bigcup_{\ell'=1}^{\ell}\Y_{\ell'}$, that  connect every vertex of $A$ to a distinct vertex of $A'$. 
 \end{claim}
%----------------------------------------------------------------------------------
We fix some assignment $\aset$ to the variables of $\phi$ that satisfies all clauses.
For every clause $C_q\in \cset$, we select a set $A(C_q)$ of vertices on the top boundary of $B(C_q)$ exactly as before, except that now we discard every other vertex from this set, in order to guarantee that these vertices are well-spread. Similarly, for every variable $x$ of $\phi$, we select the set $A(x)$ of vertices on the top boundary of $B(x)$ exactly as before, and then discard every other vertex from this set. %; we are again guaranteed that these vertices are well-spread. If $x$ is assigned the value \false, then the set $J$ of vertices on the top boundary of $B(x)$ is selected as before, while set $J'$ only contains every other vertex of the original set $J'$, to ensure that the resulting vertices are well-spread. 

We select the sets $\Gamma$ and $\Gamma''$ of vertices on the top and the bottom boundaries of $B^V$ exactly as before, but now we are guaranteed that the vertices in each set are well-spread. The vertices of $Z'$ and $\Gamma'''$ are selected like before, except that now discard every other vertex in each set to ensure that the resulting sets are well-spread.

The remainder of the routing in Lemma~\ref{lem: routing yi} relies on the constructions of various snakes. For every snake $\yset=(\Y_1,\ldots,\Y_r)$ that the construction uses, it is easy to modify the corresponding corridors, such that for all $1\leq r'<r$, $\Y_{r'}\cap \Y_{r'+1}$
is contained in either the top or the bottom boundary edge of each corridor. This allows us to use Claim~\ref{claim: routing in a wall-snake} instead of the original Claim~\ref{claim: routing in a snake} in order to route inside the snakes. Even though each resulting snake in the wall graph will only be able to route half the number of paths it routed before, the amount of horizontal space that we left between the various boxes in the construction of the level-$(i+1)$ instance is sufficient to ensure that all demand pairs are routed. As observed above, for each level-$i$ instance $\iset'$ used in the construction of $\iset$, the subset of all demand pairs in $\mset^*_1(\iset)$, whose destinations lie in $\hat B(\iset')$, is either equal to $\mset^*_1(\iset')$ or to $\mset^*_2(\iset')$, and from the induction hypothesis, each one of these sets of demand pairs has a set of node-disjoint paths routing it in $\hat G$, that is canonical with respect to $\hat B(\iset')$. We use this fact to complete the routing inside the boxes $\hat B(\iset')$ of the corresponding level-$i$ instances $\iset'$.
\end{proof}

\paragraph{Hardness of \NDP on Sub-Cubic Planar Graphs} Consider the instances of \EDP constructed above. Each such instance is defined on a sub-cubic planar graph, where the degree of every terminal is at most $2$. It is easy to see that, if we are given a graph $G$ with the above properties, and any set $\pset$ of paths whose endpoints are distinct terminals, the paths in $\pset$ are mutually edge-disjoint iff they are mutually node-disjoint. Therefore, the number of the demand pairs that can be routed in the \yi and the \ni via node-disjoint paths remains the same as for edge-disjoint paths. This completes the proof of Theorem~\ref{thm: main}.

%----------------------------------------------------------------------------------
%----------------------------------------------------------------------------------
%----------------------------------------------------------------------------------
%----------------------------------------------------------------------------------
\label{--------------------------------------sec: large degree NDP---------------------------------------}
\section{Maximum Vertex Degree in \NDP Instances}\label{sec: large-degree-NDP}

%----------------------------------------------------------------------------------
%----------------------------------------------------------------------------------
%----------------------------------------------------------------------------------
%----------------------------------------------------------------------------------

 Given an instance $(G,\mset)$ of the \NDP problem, we denote by $\tset(\mset)$ the set of all vertices that participate in the demand pairs in $\mset$, and we refer to them as \emph{terminals}. We start by defining equivalence between \NDP instances.

\begin{definition}
We say that two instances $(G,\mset)$, $(G',\mset')$ of the \NDP problem are \emph{equivalent} iff there is a bijection $f: \tset(\mset) \rightarrow \tset(\mset')$, such that, for every subset $\tmset\subseteq \mset$ of demand pairs, there is a set $\pset$ of node-disjoint paths in graph $G$ routing all  pairs in $\tmset$ iff there is a set $\pset'$ of node-disjoint paths in graph $G'$ routing all demand pairs in set $\tmset'=\set{(f(s),f(t))\mid (s,t)\in \tmset}$.
\end{definition}

%------------------------------------
\iffalse
%------------------------------------
\begin{definition}
Given two families $\fset'$ and $\fset''$ of graphs, we say that $\fset''$ generalizes $\fset'$ iff the following holds: For every graph $G' \in \fset'$ and set $\mset'$ of demand pairs over vertices of $G'$, there is a graph $G'' \in \fset''$ and a set $\mset''$ of demand pairs over vertices of $G''$ such that the instances $\iset'=(G',\mset')$ and $\iset''=(G'',\mset'')$ are equivalent.
\end{definition}

In the rest of the section, we will prove the following theorem:
%------------------------------------
\fi
%------------------------------------

The goal of this section is to prove the following theorem.

\begin{theorem} \label{thm: planar neq subgraph of grids}
For any integer $d>3$, there is an instance $(G,\mset)$ of the \NDP problem, where $G$ is a planar graph, such that for every instance $(G',\mset')$ of \NDP that is equivalent to $(G,\mset)$, there is a vertex in $G'$ of degree at least $d$.
\end{theorem}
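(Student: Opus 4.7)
The plan is to exhibit, for each $d>3$, a concrete planar instance $(G,\mset)$ whose pattern of jointly routable subsets of demand pairs cannot be realized by any graph of maximum degree less than $d$, and then argue via a structural invariant preserved under equivalence.

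First I would construct $(G,\mset)$ built around a single vertex $v$ of degree exactly $d$. Let $v$ have neighbors $u_1,\ldots,u_d$ appearing in this cyclic order in a fixed planar embedding. I would attach each $u_i$ to a source $s_i$ through a private pendant path, and place the destinations $t_1,\ldots,t_d$ on a single designated face of $G$ in a cyclic order that is a non-identity permutation of the cyclic order of the sources. The auxiliary planar structure around $v$ would be chosen so that (i) every non-full subset of the demand pairs is routable in $G$ by bypassing $v$, but (ii) a simultaneous routing of all $d$ demand pairs is forced to use each of the $d$ edges incident to $v$ exactly once. The choice of permutation and of the bridging structure around $v$ is the engineering content of the construction.

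Next I would catalogue the lattice $\rset(G,\mset)\subseteq 2^{\mset}$ of subsets that are jointly routable in $G$, and isolate a combinatorial invariant of this lattice --- roughly, ``the unique cyclic pattern in which the $d$ paths fit together whenever the full set $\mset$ is routed.'' The key lemma I would aim to prove is that this invariant is determined by $\rset(G,\mset)$ alone, hence preserved under equivalence, but that any realization of it requires some vertex of the host graph to be incident to all $d$ of the routing paths.

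For the main argument, I would assume for contradiction that $(G',\mset')$ is equivalent to $(G,\mset)$ and has maximum vertex degree less than $d$. Since $\mset\in \rset(G,\mset)$, by equivalence there is a family $\pset'$ of $d$ node-disjoint paths in $G'$ routing all pairs, and by an exchange argument on sub-families we can ensure that $\pset'$ exhibits the invariant ``cyclic pattern'' described above. I would then analyze the induced incidence at each vertex of $G'$: since every vertex has degree $<d$, no single vertex of $G'$ can simultaneously carry or be adjacent to $d$ vertex-disjoint paths realizing the required cyclic pattern, giving the contradiction.

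The hardest step, and the one I would spend most care on, is the middle one: extracting a graph-theoretic invariant of $\rset(G,\mset)$ that is sensitive to the presence of a ``single degree-$d$ bottleneck,'' rather than to gross routability numbers. Coarse invariants --- the total number of routable subsets, or the maximum number of simultaneously routable pairs --- are all easy to realize in sub-cubic graphs (as the $C_{2d}$ example shows), so the invariant must capture the rotation system inherited from the planar embedding of $G$ and survive the move to a possibly non-planar $G'$. It is in making this invariant precise, and showing both that it is an equivalence invariant and that it forces maximum degree at least $d$, that the real work of the proof lies.
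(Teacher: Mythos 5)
Your proposal has a genuine gap, and the strategy it commits to is actually harder than what is needed; the paper's approach goes in almost the opposite direction.

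The immediate problem is in step one: you want all $d$ demand pairs to be simultaneously routable in $G$, with the full routing ``forced to use each of the $d$ edges incident to $v$ exactly once.'' But node-disjoint paths cannot all contain $v$: if each of the $d$ paths uses an edge incident to $v$, they all pass through $v$, which is impossible for $d\ge 2$. If you instead mean that the paths pass through the $d$ distinct neighbors $u_1,\dots,u_d$ while avoiding $v$ itself, then $v$ does not actually bottleneck anything, and there is no reason the equivalent instance $G'$ must re-create it. Either way, the construction you describe does not furnish the obstruction you need.

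The second and deeper issue is the middle step. Equivalence in this paper is defined purely by the family of routable subsets $\rset(G,\mset)\subseteq 2^{\mset}$. You want to extract a ``cyclic pattern'' invariant from $\rset(G,\mset)$ that encodes the rotation system at $v$ and is preserved under equivalence, but nothing in the routability data records a cyclic order --- it is just a collection of subsets. Your own worry here is well placed, and the plan never explains how this invariant would be read off $\rset(G,\mset)$ or why ``realizing the cyclic pattern'' would force a high-degree vertex in a possibly non-planar $G'$. A vertex of degree $<d$ can perfectly well be adjacent to $d$ vertex-disjoint paths; to get a contradiction you need an argument that the paths must funnel through a single vertex, and that argument is absent.

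What the paper does instead is much more economical and shows that your worry about ``coarse'' invariants was misplaced. The instance is a star $K_{1,d+1}$ with \emph{all} pairs of its $d+2$ vertices as demand pairs, so the routability lattice is extremely coarse: only the empty set and singletons are routable. This over-constrained lattice already forces the bottleneck: in any equivalent $G'$, take a shortest terminal-to-terminal path $P$ (say from $v_1$ to $v_2$, with no internal terminals); each of the other $d$ terminals must lie in its own component of $G'\setminus P$ (otherwise two pairs would be simultaneously routable); each such component attaches to $P$ at some vertex $u_i$; and if any two attachment points $u_i\ne u_j$ were distinct, one could route $(v_i,v_1)$ and $(v_j,v_2)$ disjointly. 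Hence all $u_i$ coincide, and that common vertex has degree at least $d$. No embedding information, no cyclic invariant, and the routable set $\mset$ itself is never routable in full --- the opposite of what you were engineering for.
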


In a sense, the above theorem shows that the class of all planar graphs is strictly more general than the class of all planar graphs with maximum vertex degree at most $3$ in the context of the \NDP problem, and so our hardness result in Theorem~\ref{thm: main} holds even for a restricted family of planar graphs.

\begin{proof}
%We can assume w.l.o.g. that $d\geq 6$.
Instance $(G,\mset)$ of \NDP is defined as follows. Graph $G$ is simply a star graph with $d+2$ vertices (see Figure~\ref{fig: star}), so $V(G)=\set{v_1,\ldots,v_{d+2}}$, and $E(G)=\set{e_1,\ldots,e_{d+1}}$, where for $1\leq i\leq d+1$, $e_i=(v_i,v_{d+2})$. We let $\mset=\set{(v_i,v_j)\mid 1\leq i<j\leq d+2}$. It is easy to verify that a subset $\mset'\subseteq \mset$ of demand pairs can be routed via node-disjoint paths in $G$ if and only if $|\mset'|=1$. Note that $G$ is planar.

\begin{figure}[h!]
\centering
\subfigure[Graph $G$ for the case where $d=5$]{\scalebox{0.45}{\includegraphics{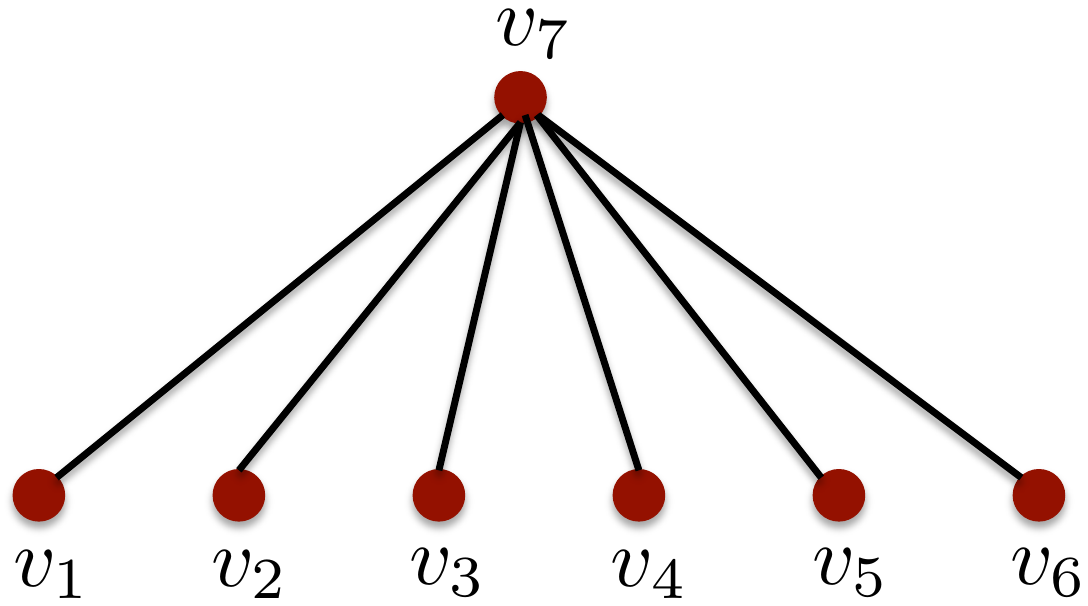}}\label{fig: star}}
\hspace{1cm}
\subfigure[Constructing two paths in the case where $u_3\neq u_4$.]{\scalebox{0.4}{\includegraphics{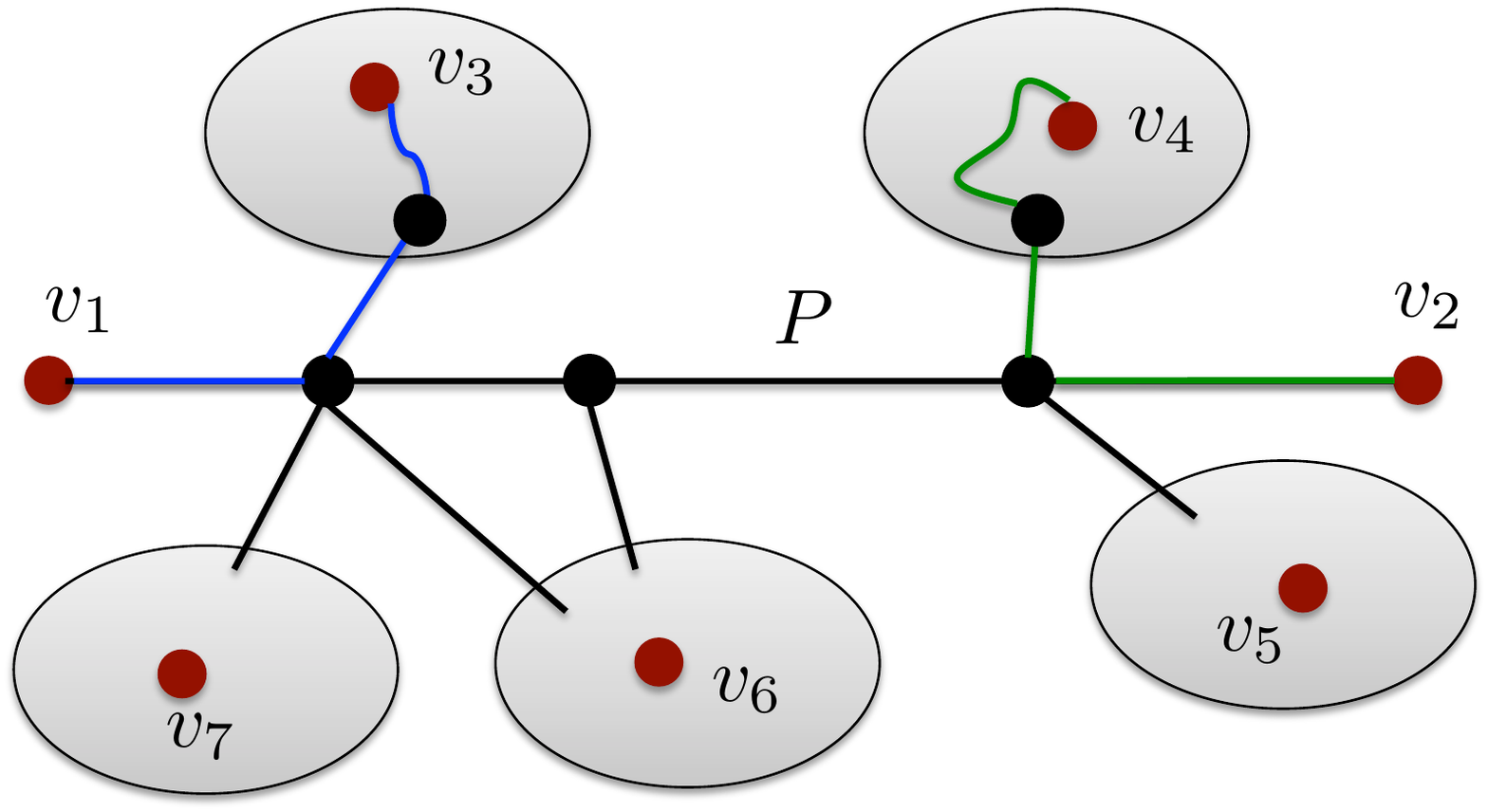}}\label{fig: two paths}}
%\hspace{1cm}
%\subfigure[Case 2]{\scalebox{0.2}{\includegraphics{file3.pdf}}\label{fig: label3}}
\caption{Illustration to the proof of Theorem~\ref{thm: planar neq subgraph of grids}}
\end{figure}

Assume for contradiction that there is an instance $(G',\mset')$ of \NDP that is equivalent to $(G,\mset)$, such that the maximum vertex degree in $G'$ is less than $d$. For convenience, for $1\leq i\leq d+2$, we denote the vertex $f(v_i)$ of $G'$ by $v_i$.
Let $P$ be the shortest path in $G'$ connecting a pair of terminals. We assume without loss of generality that $P$ connects $v_1$ to $v_2$; so $P$ does not contain any terminals as inner vertices. Consider the graph $H=G'\setminus P$. We claim that every vertex of $\set{v_3,\ldots,v_{d+2}}$ must belong to a distinct connected component of $H$. Indeed, if, for example, $v_3$ and $v_4$ belong to the same connected component, then we can simultaneously route $(v_3,v_4)$ and $(v_1,v_2)$ in $G'$. Since we cannot do the same in $G$, the two instances are not equivalent. For $3\leq i\leq d+2$, let $C_i$ be the connected component of $H$ containing $v_i$, and let $e_i$ be any edge connecting a vertex of $C_i$ to a vertex of $P$ (such an edge must exist as there must be a path connecting $v_i$ to $v_1$ in $G$). Let $u_i$ denote the endpoint of $e_i$ that lies on $P$.

Assume first that for some $3\leq i\neq j\leq d+2$, $u_i\neq u_j$. Assume without loss of generality that $u_i$ lies closer to $v_1$ on $P$ than $u_j$. Then we can simultaneously route two pairs $(v_i,v_1)$ and $(v_j,v_2)$ via node-disjoint paths, as follows (see Figure~\ref{fig: two paths}). The first path uses the segment of $P$ from $v_1$ to $u_i$, the edge $e_i$, and some path connecting an endpoint of $e_i$ to $v_i$ inside $C_i$. The second path similarly uses the segment of $P$ from $v_2$ to $u_j$, the edge $e_j$, and some path connecting an endpoint of $e_j$ to $v_j$ inside $C_j$. Clearly, these two paths are disjoint. But we cannot route the pairs $(v_i,v_1)$ and $(v_j,v_2)$ simultaneously in $G$, and so the two instances are not equivalent.

We conclude that $u_3=u_4=\cdots=u_{d+2}$ must hold. But then $G'$ must contain a vertex of degree at least $d$, a contradiction.
\end{proof}

%----------------------------------------------------------------------------------
%----------------------------------------------------------------------------------
%----------------------------------------------------------------------------------
%----------------------------------------------------------------------------------
\label{--------------------------------------sec: large degree EDP---------------------------------------}
\section{Maximum Vertex Degree in Planar \EDP Instances}\label{sec: EDP-degree-reduction}

%----------------------------------------------------------------------------------
%----------------------------------------------------------------------------------
%----------------------------------------------------------------------------------
%----------------------------------------------------------------------------------
In this section we show that there is an instance $(G, \mset)$ of the \EDP problem, where $G$ is a planar graph, such that for every instance $(G',\mset')$ of \EDP that is equivalent to $(G,\mset)$, where $G'$ is a planar graph, at least one vertex of $G'$ must have degree at least $4$. As before, given an instance $(G,\mset)$ of the \EDP problem, the set of all vertices participating in the demand pairs is denoted by $\tset(\mset)$, and the vertices in this set are called terminals. We say that a subset $\tmset\subseteq \mset$ of demand pairs is \emph{routable} in $G$ iff there is a set $\pset$ of edge-disjoint paths in $G$, routing every demand pair in $\tmset$.

The equivalence between the \EDP instances is defined similarly to the equivalence between the \NDP instances, except that now we only consider restricted subsets of demand pairs: a subset $\tmset\subseteq \mset$ of demand is called \emph{restricted} iff every terminal participates in at most one demand pair of $\tmset$. We note that in the \NDP problem, if a set $\tmset$ of demand pairs is routable via node-disjoint paths in a graph $G$, then $\tmset$ must be a restricted set; this is not necessarily true for \EDP instances. We are now ready to define the equivalence between the instances of \EDP.

\begin{definition}
Two instances $(G,\mset)$ and $(G',\mset')$ of the \EDP problem are equivalent iff there is a bijection $f : \tset(\mset) \rightarrow \tset(\mset')$, such that for every restricted subset $\tilde {\mset} \subseteq \mset$ of demand pairs, $\tmset$ is routable in $G$ iff the set $\tmset'=\{(f(s),f(t)) \mid (s,t) \in \tilde \mset\}$ of demand pairs is routable in $G'$.
\end{definition}

The goal of this section is to prove the following theorem.

\begin{theorem}
There is an instance $(G,\mset)$ of the \EDP problem, where $G$ is a planar graph, such that for every instance $(G',\mset')$ of \EDP that is equivalent to $(G,\mset)$, where $G'$ is planar, some vertex of $G'$ has degree at least 4.
\end{theorem}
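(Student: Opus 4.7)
The plan is to exhibit a specific planar instance $(G,\mset)$ whose equivalence class, restricted to planar graphs, contains no graph with maximum degree at most $3$. The proof will rest on the following structural observation: if $G'$ has maximum vertex degree at most $3$ and $\tmset$ is a restricted subset of demand pairs routed in $G'$ by edge-disjoint paths, then those paths are internally vertex-disjoint. Indeed, since $\tmset$ is restricted, no two of its demand pairs share an endpoint, so any vertex shared as an internal vertex of two routing paths must be non-terminal; each of the two paths would then use two of the at most three edges incident to that vertex, for a total of four, contradicting the degree bound. Consequently, in any sub-cubic graph, edge-disjoint routing coincides with node-disjoint routing for restricted subsets.

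I will first construct $G$ around a planar ``crossing'' gadget: a vertex $v$ of degree $4$ whose four neighbors $a_1,a_2,a_3,a_4$ are terminals cyclically placed around $v$ on the outer face of $G$, together with the two crossing demand pairs $(a_1,a_3)$ and $(a_2,a_4)$. These two pairs are simultaneously edge-disjointly routable in $G$ by letting both paths pass through $v$. To force any equivalent planar $G'$ to embed $a_1,a_2,a_3,a_4$ on a common face in the cyclic order $a_1,a_2,a_3,a_4$, I will attach to $G$ an auxiliary planar ``rigidity'' subgraph contributing further terminals, together with demand pairs supported on these new terminals (hence automatically restricted relative to $\set{(a_1,a_3),(a_2,a_4)}$), whose routable restricted subsets pin down a $3$-connected planar frame containing $a_1,\ldots,a_4$ on a single face in the claimed cyclic order. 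Since a $3$-connected planar graph has, by Whitney's theorem, an essentially unique combinatorial embedding, enforcing the frame in $G'$ yields the desired rigidity.

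With these two ingredients in place, the conclusion will follow by contradiction. Suppose $G'$ is a planar sub-cubic instance equivalent to $(G,\mset)$. By the rigidity gadget, $a_1,a_2,a_3,a_4$ appear in this cyclic order on a common face of $G'$; by the structural observation, the restricted pair $\set{(a_1,a_3),(a_2,a_4)}$ is routable in $G'$ by two internally vertex-disjoint paths. But the Jordan curve theorem, applied to the planar embedding of $G'$, forbids two internally vertex-disjoint paths from connecting the crossing pairs when $a_1,a_2,a_3,a_4$ alternate on a common face, giving the required contradiction.

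The main obstacle is the construction and verification of the rigidity gadget. The \EDP definition of equivalence only constrains \emph{restricted} subsets, within which each terminal participates in at most one demand pair; this rules out the star-based argument used for \NDP in Section~\ref{sec: large-degree-NDP} and instead calls for gadgets in which many terminals contribute collectively through matching-style demand pairs. I expect to build the gadget as a $3$-connected planar frame on a sizable terminal set, together with demand pairs corresponding to suitable long matchings across the frame; showing that the routability signature of these matchings forces any planar $G'$ equivalent to $G$ to contain the frame with the same combinatorial embedding, and thus fixes the cyclic order of $a_1,\ldots,a_4$, will be the technical heart of the argument.
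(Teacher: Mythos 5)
Your plan takes a fundamentally different route from the paper's and has two genuine gaps, one of which is fatal to the Jordan-curve step as stated.

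First, the Jordan-curve conclusion does not follow from your structural observation. You prove only that no vertex is \emph{internal to both} routing paths; nothing prevents a terminal from being an internal vertex of one path while serving as the endpoint of the other. Concretely, suppose $a_1,a_2,a_3,a_4$ lie on the outer face of a planar sub-cubic $G'$ in this cyclic order, $a_2$ has degree $3$ with neighbors $a_1$, $a_3$, and an interior vertex $b$, and $b$ is adjacent to $a_4$. Then $P_1=a_1\text{--}a_2\text{--}a_3$ and $P_2=a_2\text{--}b\text{--}a_4$ are edge-disjoint, respect the degree bound, and route the two crossing pairs. The ``crossing'' happens at the terminal $a_2$, which your observation does not exclude. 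To close this you would additionally have to force the relevant terminals to have degree at most $2$ in \emph{every} equivalent $G'$, something the equivalence notion (which only constrains routability of restricted demand-pair subsets) does not hand you for free.

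Second, the rigidity gadget is entirely unconstructed, and I do not see how the approach would get off the ground. Equivalence gives you only a routability signature over restricted subsets; turning that into a guarantee that $a_1,\ldots,a_4$ lie on a common face of $G'$ in a prescribed cyclic order requires pinning down a $3$-connected spanning frame and an embedding of it in $G'$, which is a much stronger structural control than the signature directly provides. The paper's proof sidesteps embedding rigidity entirely: it builds an instance around five terminal groups of size five so that Menger / max-flow-min-cut forces a cut of exactly $4$ around each group, submodularity makes the five cut-sides $A_1,\ldots,A_5$ disjoint, routability forces each $G'[A_i]$ to be connected, and a $K_5$-matching of demand pairs then gives paths realizing $K_5$ as a minor of $G'$ --- contradicting planarity without ever invoking a face or a Jordan curve. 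That route buys you immunity from both of the issues above: it does not need to control the planar embedding of $G'$, and it does not need to control terminal degrees, because sub-cubicity is used only to convert edge-disjoint paths into (almost) vertex-disjoint ones at the very end of the minor construction. I would recommend abandoning the rigidity-gadget plan in favor of a cut-based argument of this kind.
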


 We note that if we did not require that the sets $\tmset$ are restricted in the definition of the equivalence between \EDP instances, then the theorem would be much easier to prove --- in fact one could prove that for every $d$, there is an instance $(G,\mset)$ of \EDP where $G$ is planar, such that for every instance $(G',\mset')$ of \EDP equivalent to $(G,\mset)$, the maximum vertex degree in $G'$ must be at least $d$. This can be done by letting the demand pairs in $\mset$ correspond to the edges of the star graph with $d$ leaves. 

\begin{proof}
We construct an instance $(G,\mset)$ of \EDP with a set $\tset$ of terminals that consists of $25$ vertices, partitioned into $5$ subsets $\tset_1,\ldots,\tset_5$ of $5$ vertices each. For each $1\leq i\leq 5$, we let $\tset_i = \set{t^i_{j} \mid 1\leq j\leq 5}$, and we let $\tset=\bigcup_{i=1}^5\tset_i$. In order to construct the graph $G$, we start with the set $\tset$ of terminals. For each subset $\tset_i$, for $1\leq i\leq 5$, we add a vertex $a_i$, that connects to every terminal in $\tset_i$ with an edge. For all $1\leq i<i'\leq 5$, we also add the edge $(a_i,a_{i'})$ to the graph. To summarize, so far, $V(G)=\tset\cup\set{a_1,\ldots,a_5}$, and $E(G)=\set{(t, a_i) \mid 1\leq i\leq 5, t \in \tset_i} \cup \set{(a_i, a_j)\mid 1\leq i<j\leq 5}$.
We fix a drawing of $G$ in the plane, such that there is exactly one crossing of the edges in $\set{(a_i, a_j)\mid 1\leq i<j\leq 5}$ and no other crossings. We then replace this crossing with a vertex $b$, so that $G$ becomes a planar graph (see Figure~\ref{fig: edp-hard-instance}). Let $\mset$ contain all pairs $(t,t')$ of terminals, where $t\neq t'$. This finishes the definition of the instance $(G,\mset)$.

\begin{figure}[h]
\centering
\scalebox{0.33}{\includegraphics{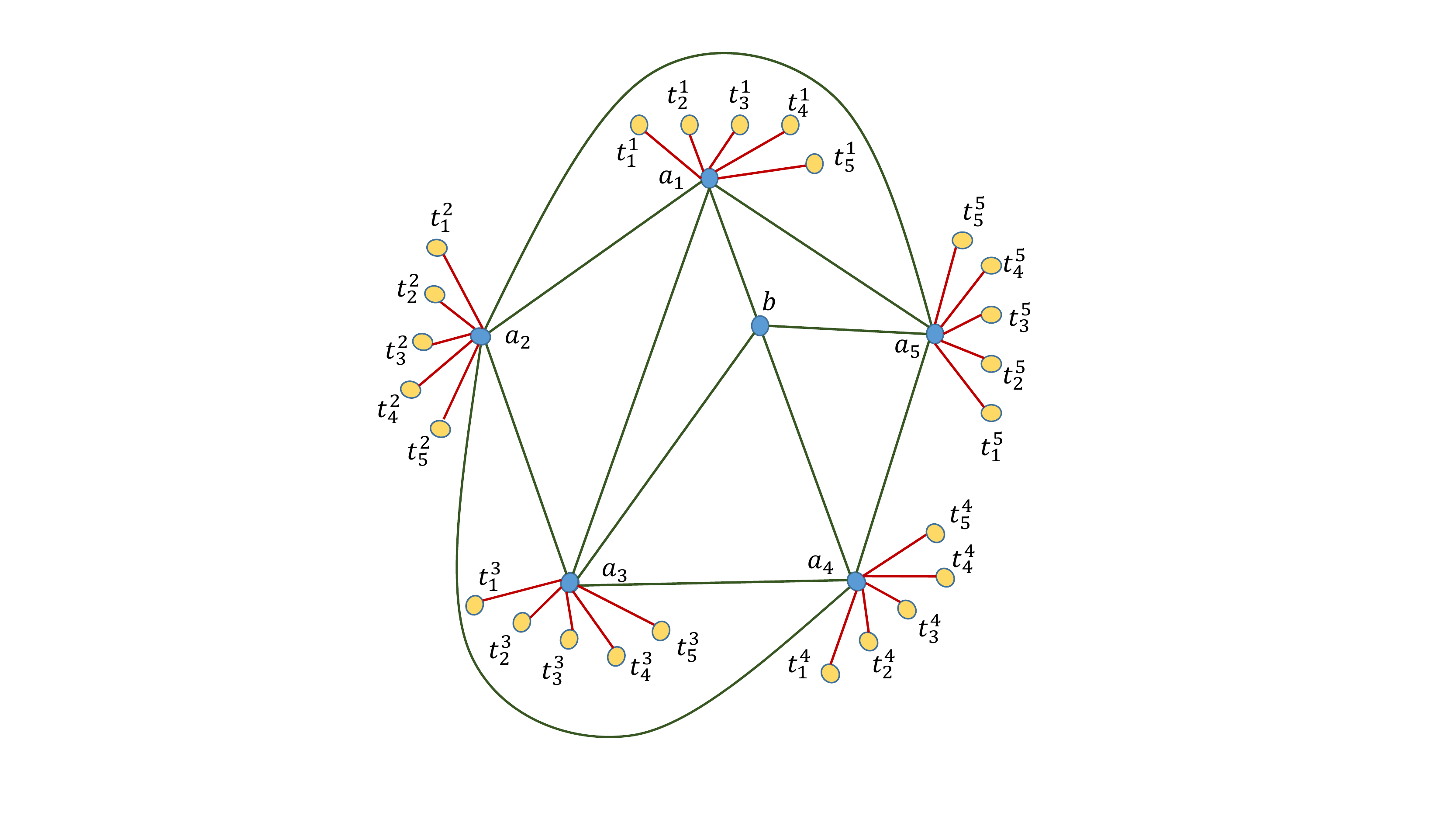}}
\caption{Graph $G$. \label{fig: edp-hard-instance}}
\end{figure}

The following two observations are immediate.

\begin{observation}\label{obs: not routable}
Let $\hmset\subseteq \mset$ be a restricted set of demand pairs, such that for some $1\leq i\leq 5$, each demand pair in $\hmset$ contains a single distinct terminal of $\tset_i$ (and another terminal in $\tset\setminus \tset_i$). Then $\hmset$ is routable in $G$ iff $|\hmset|\leq 4$.
\end{observation}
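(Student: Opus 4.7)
The plan is to prove both directions of the biconditional separately, with the forward implication carrying most of the substance.

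For the forward direction, assume $\hmset$ is routable and let $\pset$ be a collection of edge-disjoint paths realizing this routing. I would exploit the fact that every terminal $t^i_j\in \tset_i$ has degree exactly one in $G$, its unique incident edge being $(t^i_j, a_i)$. Thus each path $P\in \pset$ with source in $\tset_i$ must begin with such an edge, and by the restrictedness of $\hmset$ combined with edge-disjointness these initial edges are pairwise distinct. Each path then enters $a_i$ and must leave along some edge which is \emph{not} of the form $(t, a_i)$ with $t\in \tset_i$: any other such edge is already reserved as the initial edge of some other path, and using it as an exit edge would either violate edge-disjointness or terminate the path at a $\tset_i$-terminal, which is ruled out since every pair in $\hmset$ has its second terminal in $\tset\setminus \tset_i$. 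There are exactly four edges incident to $a_i$ of the remaining type: three going directly to those $a_j$'s not involved in the chosen crossing, plus the single edge $(a_i, b)$ replacing a crossed edge. These four edges must be assigned distinctly to paths in $\pset$, forcing $|\hmset|\le 4$.

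For the reverse direction, I would construct a routing explicitly whenever $|\hmset|\le 4$. Each demand pair gets a path of the shape (source) -- $a_i$ -- (backbone path to $a_k$) -- (destination), where $k$ is the index with the destination in $\tset_k$. The first and last edges are forced and are automatically mutually edge-disjoint across the paths by distinctness of terminals in the restricted set $\hmset$. The real content is producing, for any multiset $a_{k_1},\ldots, a_{k_r}$ of at most four target vertices with each $k_l\neq i$, a collection of $r$ pairwise edge-disjoint paths from $a_i$ to these targets (with the prescribed multiplicities) inside the backbone subgraph $H$ induced by $\{a_1,\ldots, a_5, b\}$. I would first verify by direct inspection that every vertex of $H$ has degree exactly $4$ and that $H$ is $4$-edge-connected, by listing four edge-disjoint paths between each relevant pair of vertices. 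The desired routing then follows from a single max-flow/Menger application: every cut separating $a_i$ from a non-empty subset of $V(H)\setminus\{a_i\}$ has size at least $4$, hence $r\le 4$ edge-disjoint $a_i$-to-target paths exist regardless of multiplicities.

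The main subtlety is the case in which two or more demand pairs of $\hmset$ have destinations in the same $\tset_k$: this forces several backbone paths to terminate at the same $a_k$, and the $4$-edge-connectivity of $H$ is exactly what accommodates this, while the distinctness of destination terminals in $\tset_k$ supplies distinct final edges $(a_k, t^k_m)$. Apart from this small case analysis the argument is purely structural, so I do not anticipate any genuine technical obstacle.
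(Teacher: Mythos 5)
Your proof is correct. The paper itself does not give an argument for this observation (it is prefaced by ``The following two observations are immediate''), so there is no paper proof to contrast against; but your argument is exactly the natural one: bound $|\hmset|$ by the number of edges leaving $a_i$ toward non-$\tset_i$ vertices for the forward direction, and appeal to the $4$-edge-connectivity of the backbone (which is the octahedron $K_{2,2,2}$ on $\{a_1,\ldots,a_5,b\}$) together with Menger/max-flow for the reverse direction.

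One small slip worth flagging in the forward direction: your description of the four available exit edges (``three going directly to those $a_j$'s not involved in the chosen crossing, plus the single edge $(a_i,b)$'') only matches the case where $a_i$ is an endpoint of one of the two crossed edges. If $a_i$ is the fifth vertex not incident to the crossing, then $a_i$ retains all four direct edges $(a_i,a_j)$, $j\neq i$, and has no edge to $b$. In either case the count is $\deg_G(a_i)-|\tset_i| = 9-5 = 4$, which is all your argument actually needs, so the conclusion is unaffected; but the stated enumeration of the four edges is not universally accurate as written.
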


\begin{observation}\label{obs: routable}
Let $\hmset\subseteq \mset$ be a set of demand pairs, defined as follows. Consider the graph $H=K_5$ (a clique on $5$ vertices), and denote its vertices by $v_1,\ldots,v_5$. Set $\hmset$ of demand pairs contains, for every edge $e=(v_i,v_j)\in E(H)$, a pair $(s_e,t_e)$, such that $s_e\in \tset_i$ and $t_e\in \tset_j$. The vertices $s_e,t_e$ for all $e\in E(H)$ are chosen in way that ensures that $\hmset$ is a restricted set of demand pairs. Then $\hmset$ is routable in $G$.
\end{observation}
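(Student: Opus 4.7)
The plan is to first exhibit an explicit choice of restricted $\hmset$ satisfying the hypothesis, and then construct an edge-disjoint routing for it in $G$. For the terminal assignment, for each edge $e=(v_i,v_j)\in E(K_5)$ I would set $s_e=t^i_j$ and $t_e=t^j_i$. Under this choice, for every $i\neq j$ the terminal $t^i_j$ participates in exactly one pair of $\hmset$ (namely, the one corresponding to edge $(v_i,v_j)$), so $\hmset$ is a restricted set of $|E(K_5)|=10$ demand pairs of the form required by the statement.

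Next, I would describe the routing via the \emph{backbone} subgraph $B$ of $G$ induced on $\{a_1,\ldots,a_5,b\}$. By construction, $B$ is obtained from a planar drawing of $K_5$ with a single edge crossing by replacing that crossing with the vertex $b$. Equivalently, $B$ contains $8$ direct edges $(a_i,a_j)$ corresponding to the non-crossing edges of $K_5$, together with $4$ edges incident to $b$ that form two length-$2$ paths $a_i\text{-}b\text{-}a_j$ and $a_k\text{-}b\text{-}a_l$ corresponding to the two crossing edges. For each edge $e=(v_i,v_j)$ of $K_5$, let $Q_e$ be the associated path in $B$ from $a_i$ to $a_j$ (either the direct edge, or the two-edge path through $b$), and route the pair $(s_e,t_e)$ by prepending the spoke edge $(s_e,a_i)$ and appending the spoke edge $(a_j,t_e)$ to $Q_e$.

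The remaining verification is that the ten resulting paths are pairwise edge-disjoint. Each of the $8$ non-crossing backbone edges lies on a unique $Q_e$ by construction; the two length-$2$ paths through $b$ correspond to distinct edges of $K_5$ and use disjoint pairs among the four edges at $b$, so the backbone portions are edge-disjoint. The spoke edges $(s_e,a_i)$ and $(a_j,t_e)$ are edge-disjoint across different pairs because $\hmset$ is restricted: distinct pairs use distinct terminals, and hence distinct spokes. There is no substantive obstacle here; the content of the observation is simply that the canonical planar embedding of $K_5$ with one crossing furnishes an edge-disjoint routing of all ten $K_5$-edges through the contracted backbone $B$, which then lifts to an edge-disjoint routing of $\hmset$ in $G$.
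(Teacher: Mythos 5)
Your routing is correct, and since the paper labels this observation ``immediate'' with no written proof, the intended argument is exactly yours: the backbone $B = G\bigl[\{a_1,\ldots,a_5,b\}\bigr]$ is a one-crossing drawing of $K_5$ with the crossing subdivided into $b$, so the ten $K_5$-edges lift to ten pairwise edge-disjoint $a_i$--$a_j$ paths in $B$ (eight direct edges and two length-two paths through $b$), and prepending/appending the spoke edges yields edge-disjoint routes because $\hmset$ is restricted and every terminal has degree one.

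One small remark on framing rather than substance: the observation is invoked later after pulling back an $\hmset'\subseteq\mset'$ through the terminal bijection $f^{-1}$, so what is actually needed is routability of \emph{every} restricted $\hmset$ of the stated form, not just one you get to construct. Your opening sentence (``exhibit an explicit choice of restricted $\hmset$'') suggests you are only handling a particular instance, but your actual argument never uses the choice $s_e=t^i_j$, $t_e=t^j_i$ --- it uses only that the set is restricted (so the spokes are distinct). Dropping the specific assignment and stating directly that the construction works for any restricted choice would match the generality the observation (and its later use) requires.
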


%It is immediate to verify that we can route all possible partial-matching between terminals where at most $4$ terminals from a cluster are connected to terminals of another cluster, via edge-disjoint paths. All now remains is to modify $G$ such that the degree of all of its vertices is bounded by $4$. We can do so by replacing each vertex $v$ with degree $d_v >4$ by a $d_v \times d_v$ grid where all the edges incident to $v$ are now incident on the top row of the grid. It is easy to verify that $G$ is now a planar graph of maximum degree $4$.

Assume now for contradiction that there is an instance $(G',\mset')$ of \EDP that is equivalent to $(G,\mset)$, such that $G'$ is planar, and the maximum vertex degree in $G'$ is at most $3$. 
%---------------------------------------
\iffalse
%---------------------------------------
We replace each terminal of degree $3$ in $G'$ by the gadget as shown in Figure \ref{fig: gadget-to-reduce-terminal-degree}. It is easy to verify that the resulting instance is still equivalent to $(G,\mset)$. Thus, we may assume that terminals in $G'$ have degree bounded by $2$, or any edge-disjoint routing in $G'$ is also an node-disjoint one. We denote the set of vertices $V(G')$ by $V'$.

\begin{figure}[h]
\centering
\scalebox{0.3}{\includegraphics{gadget-to-reduce-terminal-degree}}
\caption{Degree reduction step for temrinal $t$ with degree $3$ and neighbors $v_1,v_2,v_3$ \label{fig: gadget-to-reduce-terminal-degree}}
\end{figure}
%---------------------------------------
\fi
%---------------------------------------

For convenience, for every terminal $t\in \tset$, we denote the corresponding terminal $f(t)$ of $G'$ by $t$. We also define the partition $\set{\tset_i}_{i=1}^5$ of the new terminals as before. 
Fix some index $1\leq i\leq 5$, and let $(A_i,B_i)$ be a partition of $V(G')$ with $\tset_i\subseteq A_i$ and $\tset\setminus\tset_i\subseteq B_i$ that minimizes $|E(A_i,B_i)|$. Among all such partitions, choose the one where $|A_i|$ is the smallest. Since instances $(G,\mset)$, $(G',\mset')$ are equivalent, from Observation~\ref{obs: not routable} and the max-flow min-cut theorem, $|E(A_i,B_i)|= 4$. We now claim that all sets $A_i$ are mutually disjoint.

\begin{claim}\label{claim: disj}
For all $1\leq i<j\leq 5$, $A_i\cap A_j=\emptyset$.
\end{claim}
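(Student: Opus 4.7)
The plan is to assume, for contradiction, that $A_i\cap A_j\neq\emptyset$ and derive a numerical contradiction from the minimality of $|A_i|$ and $|A_j|$ together with a simple edge-counting identity for cuts.

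First I would pin down where the terminals sit. Because $\tset_i\subseteq A_i$ and $\tset_i\subseteq \tset\setminus\tset_j\subseteq B_j = V(G')\setminus A_j$, we get $\tset_i\subseteq A_i\setminus A_j$; symmetrically, $\tset_j\subseteq A_j\setminus A_i$. For any $k\notin\{i,j\}$, $\tset_k$ is contained in both $B_i$ and $B_j$, and therefore in $V(G')\setminus(A_i\cup A_j)$. In particular, $A_i\cap A_j$ is terminal-free.

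Next I would observe that the shrunken partition $(A_i\setminus A_j,\,V(G')\setminus(A_i\setminus A_j))$ still separates $\tset_i$ from $\tset\setminus\tset_i$: $\tset_i\subseteq A_i\setminus A_j$, while $\tset_j\subseteq A_j$ is disjoint from $A_i\setminus A_j$, and for each $k\neq i,j$, $\tset_k\subseteq V(G')\setminus A_i$ is disjoint from $A_i\setminus A_j$ as well. Hence $|\delta(A_i\setminus A_j)|\geq 4$ by the minimum-cut value. Since $A_i\cap A_j\neq\emptyset$, we have $|A_i\setminus A_j|<|A_i|$, so the minimality of $|A_i|$ among all minimum cuts separating $\tset_i$ from $\tset\setminus\tset_i$ forbids $|\delta(A_i\setminus A_j)|=4$; thus $|\delta(A_i\setminus A_j)|\geq 5$. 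A symmetric argument gives $|\delta(A_j\setminus A_i)|\geq 5$.

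To finish, partition $V(G')$ into the four pieces $A_i\cap A_j$, $A_i\setminus A_j$, $A_j\setminus A_i$, $V(G')\setminus(A_i\cup A_j)$, and let $c=|E(A_i\cap A_j,\,V(G')\setminus(A_i\cup A_j))|$. Tallying the edges contributing to each of the four cut quantities gives the identity
\[
|\delta(A_i)|+|\delta(A_j)|\;-\;|\delta(A_i\setminus A_j)|\;-\;|\delta(A_j\setminus A_i)|\;=\;2c\;\geq\;0,
\]
so $|\delta(A_i\setminus A_j)|+|\delta(A_j\setminus A_i)|\leq 4+4=8$, contradicting the lower bound $5+5=10$. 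The only genuine obstacle is verifying that the symmetric minimality argument goes through cleanly for both $i$ and $j$; beyond that, neither planarity of $G'$ nor its degree bound is used in this step, both being invoked elsewhere in the proof of the main theorem.
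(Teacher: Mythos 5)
Your proof is correct and takes essentially the same approach as the paper: both rest on the posimodularity-type identity relating $\delta(A_i)+\delta(A_j)$ to $\delta(A_i\setminus A_j)+\delta(A_j\setminus A_i)$, together with the observation that $A_i\setminus A_j$ and $A_j\setminus A_i$ still separate the relevant terminal classes and the size-minimality in the choice of $A_i,A_j$. You merely package the same argument slightly differently, deriving the strict lower bounds $\delta(A_i\setminus A_j),\delta(A_j\setminus A_i)\geq 5$ before invoking the identity, whereas the paper applies the identity first to force equality and then contradicts minimality.
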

\begin{proof}
For every set $U\subseteq V(G')$ of vertices, let $\delta(U)=|E(U,V(G')\setminus U)|$. From the sub-modularity of cuts:

\[\delta(A_i)+\delta(A_j)\geq \delta(A_i\setminus A_j)+\delta(A_j\setminus A_i).\]

Recall that $\delta(A_i)=\delta(A_j)=4$. From our definition of the sets $\set{A_k}_{k=1}^5$, $\tset_i\subseteq A_i\setminus A_j$ and $\tset_j\subseteq A_j\setminus A_i$,  so $\delta(A_i\setminus A_j),\delta(A_j\setminus A_i)\geq 4$. This can only happen if both values are equal to $4$, contradicting our choice of $A_i$ and $A_j$.
\end{proof}

We will show that $G'$ contains the graph $H=K_5$ as a minor, which will lead to a contradiction, since $G'$ is a planar graph. We will embed every vertex $v_i\in V(H)$ into the subgraph $G'[A_i]$ of $G'$. In order to be able to do so, we need to show each such subgraph is connected.

%Since each matching between terminals is routable in $G'$, we can construct a matching such that for each pair of clusters $T^{(i)}, T^{(j)}$, there are terminals $t(i,j) \in T^{(i)}, t'(i,j) \in T^{(j)}$ such that $t(i,j)$ is mapped to $t'(i,j)$. Since at most $4$ terminals from each $T^{(i)}$ can be routed to terminals in another cluster, the number of edges in each minimum cut $C_i$ is exactly $4$. Moreover, we can find a set $\tilde \pset = \set{P_{i,j}}_{i,j \in [5], i \neq j}$ of edge-disjoint paths connecting each pair of vertices $t(i,j), t'(i,j)$. Recall that maximum vertex degree in $G'$ is $3$ and all terminals have vertex degree bounded by $2$. Hence, $\tilde \pset$ is an node-disjoint set of paths as well.

%Our goal is now to show a $K_5$ minor in $G'$. To do so, we will need to prove the following two observations:
\begin{observation}
For all $1\leq i\leq 5$, $G'[A_i]$ is connected.
\end{observation}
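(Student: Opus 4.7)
The approach is proof by contradiction. I would suppose $G'[A_i]$ has connected components $C_1,\ldots,C_k$ with $k\ge 2$, and write $c_j=|C_j\cap\tset_i|$ and $d_j=|E(C_j,B_i)|$. Since distinct components of $G'[A_i]$ share no edges, $\sum_j d_j = |E(A_i,B_i)| = 4$, while $\sum_j c_j = |\tset_i| = 5$.

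The first case, which is straightforward, is when some component $C_j$ contains no terminals of $\tset_i$, i.e. $c_j=0$. Moving $C_j$ from $A_i$ to $B_i$ preserves $\tset_i\subseteq A_i\setminus C_j$ and $\tset\setminus\tset_i\subseteq B_i\cup C_j$, and since $E(C_j,A_i\setminus C_j)=\emptyset$, the new cut has value $4-d_j$. If $d_j>0$ this contradicts the minimality of $(A_i,B_i)$; if $d_j=0$ then the cut value is unchanged while $|A_i|$ strictly decreases, contradicting the choice of $A_i$ with minimum cardinality among minimum cuts.

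The main case, which I expect to be the real obstacle, is when every $c_j\ge 1$. The tool is Observation~\ref{obs: not routable}: in $G$, and therefore also in $G'$ by the equivalence of the two instances, any four demand pairs pairing distinct terminals of $\tset_i$ with distinct terminals of $\tset\setminus\tset_i$ form a routable restricted set, so there exist four edge-disjoint paths in $G'$ realizing such a routing. For each $j^*\in[k]$, I would select the four sources to consist of all $c_j$ terminals of $C_j$ for every $j\ne j^*$, together with any $c_{j^*}-1$ terminals of $C_{j^*}$ (the total is $\sum_j c_j -1 = 4$, valid since $c_{j^*}\ge 1$). Because each such path starts inside $C_j$ and ends in $\tset\setminus\tset_i\subseteq B_i\subseteq V(G')\setminus C_j$, and the paths are edge-disjoint, they together use at least $c_j$ distinct edges of $E(C_j,V(G')\setminus C_j)=E(C_j,B_i)$, forcing $c_j\le d_j$ for every $j\ne j^*$. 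Since $k\ge 2$, for each index $j$ I can choose $j^*\ne j$, obtaining $c_j\le d_j$ for all $j$ simultaneously. Summing yields $5=\sum_j c_j\le\sum_j d_j=4$, the desired contradiction.
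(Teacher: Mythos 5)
Your proposal is correct, and it does diverge from the paper's argument in a way worth noting. For the degenerate case (some component has no terminal of $\tset_i$), both you and the paper appeal to the minimality of $(A_i,B_i)$, though you spell out the two sub-cases $d_j>0$ and $d_j=0$ whereas the paper compresses this to one line; that part is essentially the same.

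For the main case the two arguments genuinely differ. The paper picks two terminals $t_1^i,t_2^i$ of $\tset_i$ lying in different components, routes the \emph{intra-$\tset_i$} pair $(t_1^i,t_2^i)$ together with three cross pairs $(t_j^i,t_j^{i'})$, and observes that the intra pair must cross $E(A_i,B_i)$ at least twice while each cross pair crosses at least once, giving $\geq 5 > 4$ edges used in a cut of size $4$. You instead use only cross pairs --- four of them at a time, omitting one terminal from a chosen component $C_{j^*}$ --- and count per component: the $c_j$ edge-disjoint paths leaving $C_j$ must each cross $\delta(C_j)=E(C_j,B_i)$, giving $c_j\le d_j$ for $j\neq j^*$. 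Varying $j^*$ gives $c_j\le d_j$ for all $j$, and summing yields $5\le 4$. Both are correct. The paper's version is shorter and exploits the routability of an intra-group pair directly; your version is purely a counting argument over the components and never needs to route a pair internal to $\tset_i$, which makes it slightly more robust (it would survive, for example, if Observation~\ref{obs: not routable} were stated only for cross pairs) at the cost of a little extra bookkeeping in varying $j^*$.
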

\begin{proof}
Assume otherwise, and let $\cset$ be the set of all connected components of $G'[A_i]$. By the definition of $A_i$, each such connected component must contain at least one terminal of $\tset_i$. Assume for contradiction that $|\cset|>1$. Then there are two terminals of $\tset_i$ that lie in different connected components of $\tset_i$. Assume w.l.o.g. that these terminals are $t_1^i$ and $t_2^i$. Consider the following set $\mset''$ of demand pairs: $\mset''$ contains the pair $(t_1^i,t_2^i)$, and for all $3\leq j\leq 5$, it contains a pair $(t_j^i,t_{j}^{i'})$ for some index $i'\neq i$. It is easy to verify that the set $\mset''$ of demand pairs is restricted and it is routable in $G$, and so it must be routable in $G'$. Each of the pairs $\set{(t_j^i,t_{j}^{i'})}_{j=3}^5$ is routed on a path that must use at least one edge of $E(A_i,B_i)$. But the path routing the pair $(t_1^i,t_2^i)$ must use two edges of $E(A_i,B_i)$, which is impossible since $|E(A_i,B_i)|=4$.\end{proof}

Let $H=K_5$, and denote $V(H)=\set{v_1,\ldots,v_5}$. We define a set $\hmset\subseteq\mset'$ of demand pairs exactly as in Observation~\ref{obs: routable}. Since the set $\hmset$ of demand pairs is routable in $G$, it must also be routable in $G'$. Let $\pset$ be the set of edge-disjoint paths routing the set $\hmset$ of demand pairs in $G'$. For each edge $e=(v_i,v_{i'})\in E(H)$, we let $P_{i,i'}$ be the path in $\pset$, routing the pair $(s_e,t_e)$. Assume that $i<i'$, and think of the path $P_{i,i'}$ as directed from a terminal in $\tset_i$ to a terminal in $\tset_{i'}$. Let $u$ be the last vertex of $P_{i,i'}$ that belongs to $A_i$, and let $u'$ be the first vertex of $P_{i,i'}$ that appears after $u$ on the path and belongs to $A_{i'}$. We then let $P'_{i,i'}$ be the sub-path of $P_{i,i'}$ from $u$ to $u'$. Note that $P'_{i,i'}$ does not contain any vertex of $A_i\cup A_{i'}$ as an inner vertex. We need the following stronger claim.

\begin{observation}
Path $P_{i,i'}$ does not contain any vertex of $\bigcup_{k=1}^5A_k$ as an inner vertex.
\end{observation}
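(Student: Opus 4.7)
The plan is to argue by contradiction using the cut bound $|E(A_k,B_k)| = 4$ established earlier, together with the fact that the paths in $\pset$ are pairwise edge-disjoint. Specifically, suppose toward contradiction that for some fixed $k \in \{1,\ldots,5\} \setminus \{i,i'\}$, the path $P_{i,i'}$ contains a vertex of $A_k$ as an inner vertex. I will show that this forces the paths in $\pset$ to use strictly more than $4$ edges of the cut $E(A_k, B_k)$, contradicting $|E(A_k, B_k)| = 4$.

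First I will enumerate the paths in $\pset$ that must cross the cut $E(A_k, B_k)$. Since $H = K_5$, there are exactly $4$ edges of $H$ incident to $v_k$, and so there are exactly $4$ paths in $\pset$ of the form $P_{j,k}$ with $j \neq k$. For each such path, one endpoint lies in $\tset_j \subseteq A_j \subseteq B_k$ (using Claim~\ref{claim: disj}, the sets $A_1,\ldots,A_5$ are pairwise disjoint, hence $A_j \subseteq V(G') \setminus A_k = B_k$ for $j \neq k$), while the other endpoint lies in $\tset_k \subseteq A_k$. Therefore each of these $4$ paths must use an odd, and hence at least one, edge of $E(A_k, B_k)$. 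Since the paths in $\pset$ are edge-disjoint and $|E(A_k, B_k)| = 4$, each of these $4$ paths uses exactly one edge of $E(A_k, B_k)$, and together they exhaust all $4$ edges of the cut.

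Now consider the path $P_{i,i'}$ with $i, i' \neq k$. Both of its endpoints lie in $\tset_i \cup \tset_{i'} \subseteq A_i \cup A_{i'} \subseteq B_k$. If $P_{i,i'}$ contains an inner vertex in $A_k$, then $P_{i,i'}$ must cross the cut $E(A_k, B_k)$ at least twice (an even positive number of times, since its two endpoints lie on the same side). Hence $P_{i,i'}$ uses at least one edge of $E(A_k, B_k)$ in addition to the $4$ edges already consumed by the paths $\{P_{j,k}\}_{j \neq k}$. By edge-disjointness of $\pset$, this yields at least $5$ distinct edges of $E(A_k, B_k)$, contradicting $|E(A_k, B_k)| = 4$.

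The step that requires slightly more care than the rest is the disjointness of the sets $A_i$ and the containment $A_j \subseteq B_k$ for $j \neq k$, but this is already supplied by Claim~\ref{claim: disj}. Everything else is a direct application of the cut bound and the edge-disjoint routing property of $\pset$, so once the bookkeeping is set up the argument concludes immediately. The observation then feeds back into the main proof, because the stronger non-entry property, together with the already-established fact that $G'[A_k]$ is connected, allows us to contract each $A_k$ into a single vertex in $G'$ and reroute each path $P_{i,i'}$ into a single edge in the minor, thereby exhibiting $K_5$ as a minor of the planar graph $G'$ and yielding the desired contradiction.
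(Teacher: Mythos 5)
Your proof is correct and follows essentially the same approach as the paper's: the four paths $P_{j,k}$ (for $j\neq k$) incident to $\tset_k$ each cross $E(A_k,B_k)$, and since the cut has exactly $4$ edges and the paths of $\pset$ are edge-disjoint, there is no cut edge left for $P_{i,i'}$ to enter $A_k$. The extra parity observation (that $P_{i,i'}$ would have to cross the cut at least twice) is correct but not needed — one crossing already yields the contradiction, which is how the paper phrases it.
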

\begin{proof}
Assume otherwise, and let $w\in \bigcup_{k=1}^5A_k$ be some vertex that belongs to $P_{i,i'}$ as an inner vertex. Assume that $w\in A_k$. From the above discussion, $k\not\in \set{i,i'}$. But there are four paths in $\pset$ that terminate at the vertices of $\tset_k$: the paths routing the demand pairs corresponding to the edges of $H$ that are incident to $v_k$. Each such path contains an edge of $E(A_k,B_k)$, and $|E(A_k,B_k)|=4$. Therefore, path $P_{i,i'}$ cannot contain an edge of $E(A_k,B_k)$ and thus it cannot contain a vertex of $A_k$.
\end{proof}

Let $\pset'=\set{P'_{i,i'}\mid P_{i,i'}\in \pset}$.
Since the maximum vertex degree in $G'$ is $3$ and the paths in $\pset$ are edge-disjoint, they are almost vertex-disjoint: the only way for two paths in $\pset$ to share a vertex is when that vertex is an endpoint of one of these paths. It is then easy to see that whenever two paths in $\pset'$ share a vertex, that vertex must be an endpoint of each of these paths.

It is now immediate to show that $H$ is a minor of $G'$: we map every vertex $v_i$ of $H$ to the connected subgraph $G'[A_i]$ of $G'$, and every edge $e$ of $H$ to the path $P'_e$ of $\pset'$. But $G'$ is a planar graph and cannot contain a $K_5$-minor, a contradiction.
\end{proof}

%---------------------------------------------------------------------------------------------------------------------
%---------------------------------------------------------------------------------------------------------------------
%---------------------------------------------------------------------------------------------------------------------
%---------------------------------------------------------------------------------------------------------------------

\bibliography{NDP-hardness.v4}
 \bibliographystyle{alpha}
\end{document}

\section{Pre-Processing: Ordering the Variables}
Intuitively, in our reduction, we create three families of demand pairs corresponding to every clause, where every family represents one literal that belongs to the clause. We also create ``variable gadgets'' by introducing a number of the demand pairs representing every variable, so that for each variable, whenever we route a large enough number of demand pairs representing that variable, we can interpret this routing as a \true or \false assignment to that variable. Our intent is that for each clause $C$, we should route at most one family of the demand pairs - the one corresponding to a literal satisfying this clause (even if several literals satisfy the clause, we need to ensure that only one of the families is routed). Recall that the source vertices all appear on the top boundary of the grid. We will place the destination vertices corresponding to a clause $C$ close to each other, inside a cut-out box $B(C)$, so intuitively, for now we can assume that in any good enough routing, no path of the routing intersects $B(C)$, except for the paths whose destination vertices lie in $B(C)$ -- that is, the paths representing the clause $C$ (we will need to set our construction up carefully to ensure this later). In the following, we will select an ordering of the variable gadgets along the first row of the grid, in a way that ensures that for most clauses, only one of the three families of paths can be routed. In fact, as we show, a random ordering of the variables will achieve this property with high probability. (We note that this is the only step in our reduction that uses randomness). With this motivation in mind, we now describe the pre-processing step.

We construct a graph $H$ on $n$ vertices, where each vertex $v_i$ represents a variable $x_i$. We add an edge $e=(v_i,v_{i'})$ iff there is a clause $C_j$ containing both $x_i$ and $x_{i'}$ (or their negations). We say that edge $e$ belongs to clause $C_j$. Notice that every clause gives rise to $3$ exactly edges, so $H$ contains $5m$ edges, and every vertex is incident on $10$ edges. For each clause $C_j$, we let $E(C_j)$ denote the set of edges that belong to $C_j$.

Consider now a some ordering $\sigma$ of the vertices of $H$ along some line $L$. Every edge $e$ of $H$ defines an interval $I(e)$ on line $L$ -- this is the interval between its two endpoints. We say that two edges $e,e'$ \emph{cross} iff their endpoints are all distinct, and $I(e)$ contains exactly one endpoint of $I(e')$. In other words, if $e=(v,v')$ and $e'=(u,u')$, then all four vertices $v,v',u,u'$ are distinct, and moreover exactly one of the vertices $\set{u,u'}$ appears between $v$ and $v'$ on $L$. Notice that if $e,e'$ do not cross, then one of the following must happen: either (i) the two edges share at least one endpoint; or (ii) one of the two intervals $I(e),I(e')$ is contained in the other; or (iii) the two intervals are disjoint.

\begin{definition}
Given an ordering $\sigma$ of $V(H)$, we say that a subset $E'\subseteq E(H)$ of edges is \emph{bad} for $\sigma$ iff for every pair $e,e'\in E'$ of edges, $e$ and $e'$ do not cross.
\end{definition}

\begin{theorem}\label{thm: main for pre-processing}
Let $\sigma$ be a random ordering of $V(H)$. Then with heigh probability, the cardinality of the largest bad set of edges is at most $n/\log n$.
\end{theorem}

\begin{proof}

We say that a collection $E'$ of $r$ edges  of $H$ is \emph{nested}, iff their endpoints are all distinct, and there is an ordering $(e_1,\ldots,e_r)$ of the edges in $E'$, so that $I(e_r)\subsetneq I(e_{r-1})\subsetneq\cdots\subsetneq I(e_1)$.

We say that a set $E'$ of edges of $H$ is \emph{$h$-nested} iff there is a partition $E_1,\ldots,E_h$  of $E'$ into $h$ disjoint subsets of cardinality $\ceil{\frac{n}{80h\log^3n}}$ each, such that: (i) for each $1\leq i\leq h$, each set $E_i$ is nested; and (ii) for all $1\leq h'\neq h''\leq h$, for all $e\in E_{h'},e'\in E_{h''}$, $I(e)\cap I(e')=\emptyset$.

We need the following observation.

\begin{observation}
Let $E'$ be any bad set of edges, with $|E'|\geq n/\log n$. Then for some $1\leq h\leq n/\log n$, there is an $h$-nested subset $E''\subseteq E'$.
\end{observation}

\begin{proof}
If $E'$ contains two edges $e,e'$ that share endpoints, we keep only one of them. Since the degree of every vertex of $H$ is $10$, at least $n/(20\log n)$ edges remain in $E'$ after this step. We now build a new directed graph $F$ that contains a vertex $v_e$ for every edge $e\in E'$, and an edge $(v_e,v_{e'})$ iff $I(e')\subsetneq I(e)$. Since the edges in $E'$ are non-crossing, it is immediate to see that $F$ is a forest. Using standard techniques, we can decompose the vertices of $F$ into $\ceil{\log n}$ subsets $U_1,\ldots,U_{\ceil{\log n}}$, such that for each $1\leq r\leq \ceil{\log n}$, $F[U_r]$ is a collection of disjoint paths, that we denote by $\qset_r$. Moreover, if $v,v'\in U_r$, and one is a descendant of the other in $F$, then both these vertices lie on the same path in $\qset_r$. Let $r$ be the index maximizing the cardinality of $U_r$, so $|U_r|\geq n/(20\log^2n)$, and consider the set $\qset_r$ of paths in graph $F[U_r]$. We say that a path $Q\in \qset_r$ is a type-$i$ path, iff $2^i\leq |V(Q)|<2^{i+1}$. Let $U_r^i\subseteq U_r$ be the set of all vertices lying on type-$i$ paths. Then for some $1\leq i\leq \ceil{\log n}$, $|U_r^i|\geq \frac{|U_r|}{\ceil{\log n}}\geq \frac{n}{80\log^3n}$. We let $h$ denote the number of type-$i$ paths in $\qset_r$.
For each type-$i$ path $Q\in \qset_r$, we define a set $E(Q)\subseteq E'$ of edges, containing $2^i$ edges that correspond to the vertices of $Q$. From our construction, the edges of $E(Q)$ are nested, and for $Q\neq Q'$, for any pair $e\in E(Q)$, $e'\in E(Q')$ of edges, $I(e)\cap I(e')=\emptyset$. This defines the collection $E_1,\ldots,E_h\subseteq E'$ of edges. Each set contains $2^i$ edges, and in total they contain at least $n/80\log^3n$ edges, so $h\geq \frac{n}{80\cdot 2^i\log^3 n}$, and $2^i\geq \frac{n}{80h\log^3n}$, as required. We set $E''=\bigcup_{i=1}^h E_i$.
\end{proof}

 From the above discussion, if a bad set of at least $n/\log n$ edges exists, then for some $1\leq h\leq n/(4\log n)$, there is an $h$-nested set of edges in $H$. Therefore, it is now enough to prove that for each $1\leq h\leq n/(4\log n)$, the probability that there is an $h$-nested set of edges is less than $1/n^2$.

Suppose we fix a collection $E'$ of $m'=\ceil{\frac{n}{80\log^3n}}$ edges of $E(H)$, so that all endpoints of these edges are distinct, and we fix their ordering $\pi=(e_1,e_2,\ldots,e_{m'})$. For every edge $e_i$, we also fix an ordering $\tau(e_i)$ of its two endpoints. We partition the edges in $E'$ into $h$ consecutive sets $E_1,\ldots,E_h$, containing $\rho=\ceil{\frac n{80h\log^3n}}$ edges each. For convenience, we denote the edges of $E_i$ by $e^i_1,\ldots,e^i_{\rho}$ according to their ordering in $\pi$. We say that a bad event $\event(E',\pi,\tau)$ happens iff:

\begin{itemize}
\item For all $1\leq h'<h''\leq h$, all endpoints of edges in $E_{h'}$ appear before all endpoints of the edges in $E_{h''}$;

\item For each $1\leq h\leq h''$, $I(e_{\rho}^i)\subseteq I(e_{\rho-1}^i)\subseteq\cdots\subseteq I(e_1^i)$; and

\item For each edge $e_j$, its endpoints appear in the order determined by $\tau(e_j)$.
\end{itemize}

In other words, there is a single ordering of the $2m'$ endpoints of the edges of $E'$ under which the bad event  $\event(E',\pi,\tau)$ happens. Therefore, the probability of this fixed bad event happening is $\frac{1}{(2m')!}$.

Notice that if there is an $h$-nested set of edges, then at least one such bad event must happen. The number of such bad events is at most:

\[{5n\choose m'}\cdot (m')!\cdot 2^{m'}\]

(we need to choose $m'$ edges that participate in the bad event, fix their ordering, and fix the ordering of the endpoints of every edge). Now we only need to show that:

\[\frac{{5n\choose m'}\cdot (m')!\cdot 2^{m'}}{(2m')!}\leq 1/n^2.\]

We know that $\binom{n}{r} < \frac{n^r}{r!}$ and from Sterling approximation, $n! > (\frac{n}{e})^n$.

Thus,
\[\frac{{5n\choose m'}\cdot (m')!\cdot 2^{m'}}{(2m')!} < {\left (\frac{10n}{(\frac{2m'}{e})^2}\right )}^{m'} < 2^{-\Omega(n/\log^2n)} < 1/n^2\] for sufficiently large $n$.
\end{proof}

%--------------------------------------------------------------
%--------------------------------------------------------------
%--------------------------------------------------------------
%--------------------------------------------------------------
%--------------------------------------------------------------
OLD EDP proof

%--------------------------------------------------------------
%--------------------------------------------------------------
%--------------------------------------------------------------
%--------------------------------------------------------------
%--------------------------------------------------------------

In this section we prove Theorem \ref{thm: main-EDP} by performing a reduction from the 3SAT(5) problem.  Suppose we are given a 3SAT(5) formula $\varphi$. Let $\rho=\Theta(\log n)$, and consider any instantiation of the level-$\rho$ instance $\iset=(G,\mset)$ of the \NDP problem that we have constructed in Section~\ref{sec: level i} from formula $\phi$. We will modify the instance $\iset$ to obtain an instance $\hat \iset$ of the \EDP problem on a sub-cubic planar graph (in fact, a subgraph of the wall graph), with all source vertices lying on the boundary of a single face.

Before we do so, we need to revisit the construction of the instance $\iset$. Recall that graph $G$ is a subgraph of some large enough grid $G_{\rho}$,
and there is a cut-out box $B(\iset)$ containing all destination vertices of $\mset$. Recall that we have deleted all vertices on the left, right and bottom boundaries of $B(\iset)$, and the remaining vertices on the top boundary of $B(\iset)$ are its opening. Let $\bset_{\rho}=\set{B_{\rho}}$, where $B_{\rho}$ is  the sub-grid of $G_{\rho}$ corresponding to the box $B(\iset)$. Instance $\iset$ is constructed by combining a number of level-$(\rho-1)$ instances $\iset'$. For each such instance $\iset'$, there is a cut-out box $B(\iset')$ contained in $B(\iset)$. We denote by $\bset_{\rho-1}$ the set of all sub-grids of $G_{\rho}$ corresponding to all such boxes $B(\iset')$, for all level-$(\rho-1)$ instances $\iset'$. We can continue this process and obtain, for each $0\leq i\leq \rho$, a set $\bset_i$ of disjoint sub-grids of $G_{\rho}$, corresponding to all boxes $B(\iset')$ for all level-$i$ instances $\iset'$ that where used in the construction of $\iset$. Let $\bset=\bigcup_{i=0}^{\rho}\bset_i$. Graph $G$ can be equivalently defined as follows: start with the grid $G_{\rho}$; for every sub-grid $B\in \bset$ of $G_{\rho}$, delete all vertices on the left, bottom, and right boundaries of $B$. For each $B\in \bset$, we let $A(B)$ denote the set of all vertices on the top boundary of $B$, excluding the fist and the last vertex --- that is, all the vertices on the opening of the corresponding box. %We define a set $\Gamma(B)$ to contain two vertices of $G_{\rho}$, as follows. 
Let $v$ be the vertex that serves as the top left corner of $B$, and let $C(v)$ be the cell of the grid $G_{\rho}$ for which $v$ serves as its right bottom corner. We let $\gamma(B)$ be the unique vertex that serves as the left top corner of $C(v)$ (see Figure~\ref{fig: gamma def}). Let $\Gamma=\set{\gamma(B)\mid B\in \bset}$.
%Let $v_1,v_2$ be the top left and the top right corners of $B$. Consider any one of these vertices $v\in \set{v_1,v_2}$, and let $C(v)$ be the cell of the grid $G_{\rho}$ for which $v$ serves as the right bottom corner. We then add to $\Gamma(B)$ the unique vertex that serves as the left top corner of $C(v)$. Let $\Gamma=\bigcup_{B\in \bset}\Gamma(B)$.

\begin{figure}[h]
\scalebox{0.4}{\includegraphics{gamma-def.pdf}}
\caption{Definition of vertex $\gamma(B)$\label{fig: gamma def}}
\end{figure}

Suppose we are given any set $\pset$ of node-disjoint paths routing some set $\mset'\subseteq \mset$ of demand pairs in $G$. We say that the set $\pset$ is \emph{canonical} iff the following hold:

\begin{itemize}
\item For each box $B\in \bset$ and path $P\in \pset$, if $P\cap B\neq \emptyset$ then one endpoint of $P$ must lie in $B$, and $P\cap B$ must consist of a single vertex that belongs to $A(B)$;

\item For each box $B\in \bset$, let $\pset_B\subseteq \pset$ denote the set of all paths that intersect $B$. For every path $P\in \pset_B$, let $s(P)$ be its source vertex and let $a(P)$ be the unique vertex in $P\cap A(B)$. Assume that $\pset_B=\set{P_1,P_2,\ldots,P_r}$, where $s(P_1),s(P_2),\ldots,s(P_r)$ appear on the top row of $G$ in this order. Then $a(P_1),a(P_2),\ldots,a(P_r)$ must appear on the top row of $B$ in this order; 

\item If $e$ is a horizontal edge of the grid $G_{\rho}$ and one of its endpoints is a terminal, then no path of $\pset$ contains $e$; and

\item The paths in $\pset$ are disjoint from the vertices of $\Gamma$.
\end{itemize}

It is easy to verify that, if $\phi$ is a \yi, then the routing that we construct in Section~\ref{sec: YI} can be turned into a canonical one. In fact, the routing that we constructed is already almost canonical: the only difficulty is that the paths in the routing may use the vertices of $\Gamma$, but it is easy to modify them so that they avoid these vertices. Therefore, we obtain the following observation.

\begin{observation}\label{obs: NDP-summary}
If $\phi$ is a \yi, then there is a canonical set $\pset$ of node-disjoint paths routing $N_{\rho}$ demand pairs in $G$. If $\phi$ is a \ni, then no set of node-disjoint paths can route more than $N_{\rho}/g$ demand pairs, for $g=2^{\Omega(\sqrt {n'})}$, where $n'=|V(G)|=n^{O(\log n)}$.
\end{observation}

(Recall that $n$ is the number of variables of $\phi$).

 We now construct an instance $(\hat G,\mset)$ of the \EDP problem, by starting with $V(\hat G)=V(G)\setminus\Gamma$ and $E(\hat G)=\emptyset$. 
 Consider now any vertex $v$ of $G$ whose degree is $4$. Let $v_T$ denote the vertex lying directly above $v$ in $G_{\rho}$, and let $\evtop$ be the edge $(v,v_T)$. Similarly, we let $v_B,v_R$ and $v_L$ denote the vertices lying directly below, to the right, and to the left of $v$ in $G_{\rho}$, respectively, and we denote the corresponding edges by $\evbot=(v,v_B)$, $\evright=(v,v_R)$, and $\evleft=(v,v_L)$.
 We replace $v$ with two vertices $v_1, v_2$ in $V(\hat G)$, and add the edge $(v_1,v_2)$ to $E(\hat G)$. If the degree of $v$ is less than $4$ in $G$, then for convenience we denote $v_1=v_2=v$.
 
 Consider now some horizontal edge $e=(v,v')$ of $G_{\rho}\cap G$, such that $e$ is not incident on any terminal vertex, and assume that $v$ lies to the left of $v'$. Then we add the edge $(v_2,v'_1)$ to the new graph. Given a vertical edge $(v,v')$ of $G_{\rho}\cap G$, such that $v$ lies above $v'$, we add the edge $(v_2,v'_1)$ to the new graph. In other words, we ensure that for each original vertex $v$, edges $\evtop$ and $\evleft$ are incident to $v_1$, while edges $\evbot$, $\evright$ are incident to $v_2$ (see Figure~\ref{fig: EDP construction}). This completes the description of the instance $(\hat G,\mset)$ of the \EDP problem. Observe that all vertex degrees in $\hat G$ are at most $3$, and all source vertices lie on the boundary of a single face of $\hat G$. It is easy to verify that $\hat G$ is a subgraph of the wall graph. The degrees of all terminals in $\hat G$ are $1$.

\begin{figure}[h]
\centering
\scalebox{0.3}{\includegraphics{EDP-construction}}
\caption{A transformation of degree-$4$ vertices to obtain graph $\hat G$. \label{fig: EDP construction}}
\end{figure}

In order to analyze the new \EDP instance, we employ the following two lemmas.

\begin{lemma}\label{lem: EDP-yi}
If $\phi$ is a \yi, then there is a set $\pset'$ of edge-disjoint paths in $\hat G$, routing $N_{\rho}$  demand pairs in $\mset$.
\end{lemma}

\begin{proof}
Let $\pset$ be the canonical set of node-disjoint paths in $G$, routing $N_{\rho}$ demand pairs. The paths in $\pset$ naturally define a set $\pset'$ of edge-disjoint paths in $\hat G$, routing the same set of demand pairs.
\end{proof}

\begin{lemma}\label{lem: EDP-ni}
If $\phi$ is a \ni, then no solution can route more than $N_{\rho}/g$ demand pairs in $\hat G$, where $g$ is the parameter from Observation~\ref{obs: NDP-summary}.
\end{lemma}

Notice that Theorem~\ref{thm: main-EDP} follows immediately from Lemmas~\ref{lem: EDP-yi} and \ref{lem: EDP-ni}, as the size of the graph $\hat G$ is $n^{O(g^2)}$, where $n$ is the number of the variables in $\phi$. It now remains to prove Lemma~\ref{lem: EDP-ni}.

\begin{proofof}{Lemma~\ref{lem: EDP-ni}}
Assume for contradiction that there is a set $\hpset$ of edge-disjoint paths in $\hat G$, routing a set $\hmset$ of more than $N_{\rho}/g$ demand pairs. We construct a set $\pset$ of node-disjoint paths in graph $G$, routing more than $N_{\rho}/g$ demand pairs, reaching a contradiction. 

Notice that set $\hpset$ of paths in $\hat G$ naturally defines a set $\hpset'$ of paths in graph $G$ (that are obtained by merging the pairs $(v_1,v_2)$ of vertices of $\hat G$, for every vertex $v$ of $G$). The main difficulty is that these paths are not necessarily node-disjoint. Since the paths in $\hmset$ are edge-disjoint in graph $\hat G$, for each non-terminal vertex $v\in V(G)$, at most two paths in $\hpset'$ contain $v$; one of these paths must contain the edges $\evtop$ and $\evleft$, while the other path must contain the edges $\evright$ and $\evbot$. In each such case we will perform a local re-routing of the latter path.

In order to do so, we first define a notion of nice sets of paths. We will ensure that $\hpset'$ is a nice set, and so are all sets of paths that we will obtain throughout the algorithm.

\begin{definition}
Let $\pset$ be a set of simple paths in $G$, routing the set $\hmset$ of demand pairs. We say that $\pset$ is a \emph{nice} path set iff every vertex of $G$ belongs to at most two paths in $\pset$, and the following hold:

\begin{properties}{P}

\item every edge of $G$ participates in at most one path in $\pset$;

%\item if $v$ is a vertex of degree less than $4$ in $G$,  then $v$ belongs to at most one path in $\pset$; 

\item if $v\in \Gamma$, then $v$ belongs to at most one path in $\pset$, and that path does not contain the edges $\evbot,\evright$; and

\item if $v$ belongs to two paths in $\pset$, then the degree of $v$ in $G$ is $4$; one path in $\pset$ contains the edges $\evtop$ and $\evleft$, and another path in $\pset$ contains the edges $\evbot$ and $\evright$.
\end{properties}
\end{definition}

It is immediate to verify that our current set $\hpset'$ of paths is a nice set. Given a nice set $\pset$ of paths, we let $U(\pset)$ denote the set of all vertices $v$ of $G$, such that $v$ belongs to two paths in $\pset$, and we say that these vertices are bad for $\pset$. 

Consider now some vertex $v\in V(G)$. Assume that $v$ lies in the $i$th row and the $j$th column of $G_{\rho}$, that is, $v=v(i,j)$. We define a set $Q(v)$ of vertices as follows:

\[Q(v(i,j))=\set{v(i',j')\mid i'\geq i, j'\geq j, \mbox{ and } v(i',j')\neq v(i,j)}\cap V(G).\]

Notice that, given any nice set  $\pset$ of paths with $U(\pset)\neq \emptyset$, there is a bad vertex $v\in U(\pset)$, such that no vertex in $Q(v)$ is bad with respect to $\pset$. Indeed, the following algorithm can find such a vertex $v$: start from any bad vertex $v\in U(\pset)$. As long as $Q(v)$ contains any bad vertex $v'$,  set $v=v'$ and continue to the next iteration.
The following claim is central to our proof.

\begin{claim}\label{claim: rerouting}
Let $\pset$ be a nice set of paths routing the demand pairs in $\hmset$ in $G$, and let $v$ any vertex, such that none of the vertices in $Q(v)$ is bad with respect to $\pset$. Then there is a nice set $\pset'$ of paths, routing the set $\hmset$ of demand pairs in $G$, such that $U(\pset')\subseteq U(\pset)$, and $v\not\in U(\pset')$.
\end{claim}

In order to complete the proof of Lemma~\ref{lem: EDP-ni}, we start with the set $\pset=\hpset'$ of paths in $G$ routing the set $\hmset$ of demand pairs that we have constructed above; as already observed, this set is nice. We then repeatedly select a vertex $v$ that is bad with respect to the current set $\pset$ of paths, such that no vertex in $Q(v)$ is bad and apply Claim~\ref{claim: rerouting} to it, until we obtain a set $\pset$ of paths routing the set $\hmset$ of demand pairs with $U(\pset)=\emptyset$. It now remains to prove Claim~\ref{claim: rerouting}.

\begin{proofof}{Claim~\ref{claim: rerouting}}
We assume that vertex $v$ lies in the $i$th row and $j$th column of $G_{\rho}$, that is, $v=v(i,j)$. The proof is by induction on $i$ and $j$, where we go from higher to lower values of $i,j$. When $v$ is a vertex on the rightmost column of $G_{\rho}$, or it is a vertex on its bottom row, there is nothing to prove, as $v$ cannot be a bad vertex, and so we can return $\pset'=\pset$. We now assume that the claim holds for all vertices lying in columns $W_{j'}$ where $j'>j$, and for all vertices 
lying in rows $R_{i'}$, for $i'>i$.

If $v\not\in U(\pset)$, then we return $\pset'=\pset$. Therefore, we assume from now on that $v\in U(\pset)$.
Denote $v'=v_B$, and let $v_D$ be the vertex of $G_{\rho}$ lying immediately to the right of $v'$, so $v_D=v'_R$. Notice that $v_D$ does not necessarily belong to $G$. We next show that if $v\in U(\pset)$, then $v_D$ must belong to $G$.

\begin{observation} \label{obs: bad vertex has diagonal vertex}
If $v\in U(\pset)$, then vertex $v_D$ belongs to $G$.
\end{observation}
\begin{proof}
Since $v\in U(\pset)$, vertex $v$ has degree $4$ in $G$, and so all of its neighbors in $G_{\rho}$ belong to $G$. The only way that $v_D$ does not belong to $G$ is when it is the top left corner of some box $B\in \bset$. But then $v\in \Gamma$ and it cannot belong to $U(\pset)$ if $\pset$ is a nice set of paths.
\end{proof}

Since $v$ is a bad vertex, there are two paths $P_1, P_2 \in \pset$ that contain $v$, with $\evtop,\evleft\in P_1$ and $\evbot,\evright\in P_2$. 
We modify $P_2$, by replacing the edges $\evbot,\evright$ with the edges $\etop{v_D},\eleft{v_D}$ (see Figure~\ref{fig: local rerouting}). If this creates a loop on path $P_2$, we remove this loop to keep the path simple.

\begin{figure}[h]
\centering
\scalebox{0.3}{\includegraphics{EDP-local-rerouting}}
\caption{Rerouting $P_2$ at vertex $v$ \label{fig: local rerouting}}
\end{figure}

Let $\pset''$ be the resulting set of paths. We claim that $\pset''$ is a nice set of paths. First, we claim that no path in set $\pset\setminus\set{P_2}$ uses the edge $\etop{v_D}$: otherwise, we would have two paths of $\pset$ passing through the vertex $v_R$, which must then be a bad vertex. But this is impossible, since $v_R\in Q(v)$ and we have assumed that no vertex of $Q(v)$ is bad. Similarly, edge $\eleft{v_D}$ may not belong to any path of $\pset\setminus\set{P_2}$, as otherwise the vertex $v_B$ is bad for $\pset$. Therefore, each edge of $G$ still belongs to at most one path in $\pset''$. The only vertex whose congestion may have increased is the vertex $v_D$. If $v_D$ participates in one path in $\pset''$, then we are done and we return $\pset''$. Otherwise, the other path in which $v_D$ participates must contain edges $\ebot{v_D}$ and $\eright{v_D}$. In particular, $v_D\not \in \Gamma$ must hold. Therefore, set $\pset''$ of paths is nice, and $U(\pset'')=(U(\pset)\setminus\set{v})\cup \set{v_D}$. We now use the induction hypothesis to obtain a nice set $\pset'$ of paths, routing the set $\hmset$ of demand pairs with $U(\pset')\subseteq U(\pset'')$ and $v_D\not\in U(\pset')$. Therefore, $U(\pset')\subseteq U(\pset)$ and $v\not \in U(\pset')$.
\end{proofof}
\end{proofof}